\newcolumntype{P}[1]{>{\centering\arraybackslash}m{#1}}
\newcolumntype{U}[1]{>{\centering\arraybackslash}p{#1}}
\newcolumntype{L}[1]{>{\centering\arraybackslash}l{#1}}
\renewcommand{\arraystretch}{1.4}
\def\autorefapp#1{\hyperref[#1]{Appendix~\ref{#1}}}
\newcommand{\killpunct}[1]{}
\newcommand{\NP}{\textsf{NP}{}}
\newcommand{\PH}{\textsf{PH}{}}
\renewcommand{\bra}[1]{\langle #1 |}
\renewcommand{\ket}[1]{|#1\rangle}
\renewcommand{\braket}[2]{\langle#1 |  #2\rangle}
\def\ketbra#1{ |{#1}\rangle\!\langle{#1}| }
\def\ketAbraB#1#2{|{#1}\rangle\!\langle{#2}| }
\DeclareMathOperator*{\EV}{\mathbb{E}}
\newcommand{\pwhitenoise}{p_{\text{wn}}}
\newcommand{\pideal}{p_{\text{ideal}}}
\newcommand{\pnoisy}{p_{\text{noisy}}}
\newcommand{\punif}{p_{\text{unif}}}
\newtheorem{lemma}{Lemma}
\newtheorem{definition}{Definition}
\newtheorem{corollary}{Corollary}
\newtheorem{theorem}{Theorem}
\newtheorem{conjecture}{Conjecture}
\newenvironment{customthm}[1]
  {\innercustomthm}
  {\endinnercustomthm}
\DeclareMathOperator{\poly}{poly}
\definecolor{orange}{rgb}{1,0.5,0}
\definecolor{purple}{rgb}{0.5,0,0.5}
\definecolor{dark green}{rgb}{0,0.4,0}
\begin{document}

\title{Random quantum circuits transform local noise into global white noise}

\author{Alexander M. Dalzell\thanks{Institute for Quantum Information and Matter, Caltech, Pasadena, CA 91125\newline
\hspace*{16pt}AWS Center for Quantum Computing, Pasadena, CA 91125\newline
\indent \indent \indent \smaller{This work was done prior to AD joining the AWS Center for Quantum Computing.}}
\and Nicholas Hunter-Jones 
\thanks{Stanford Institute for Theoretical Physics, Stanford, CA 94305\newline
\hspace*{16pt}Perimeter Institute for Theoretical Physics, Waterloo, ON N2L 2Y5}
\and Fernando G. S. L. Brand\~ao\thanks{Institute for Quantum Information and Matter, Caltech, Pasadena, CA 91125\newline \hspace*{16pt}AWS Center for Quantum Computing, Pasadena, CA 91125}
}

\date{}

\maketitle

\begin{abstract}
  
  We study the distribution over measurement outcomes of noisy random quantum circuits in the low-fidelity regime. We show that, for local noise that is sufficiently weak and unital, correlations (measured by the linear cross-entropy benchmark) between the output distribution $\pnoisy$ of a generic noisy circuit instance and the output distribution $\pideal$ of the corresponding noiseless instance shrink exponentially with the expected number of gate-level errors, as $F=\text{exp}(-2s\epsilon \pm O(s\epsilon^2))$, where $\epsilon$ is the probability of error per circuit location and $s$ is the number of two-qubit gates. Furthermore, if the noise is incoherent, the output distribution approaches the uniform distribution $\punif$ at precisely the same rate and can be approximated as $\pnoisy \approx F\pideal + (1-F)\punif$---that is, local errors are scrambled by the random quantum circuit and contribute only white noise (uniform output). Importantly, we upper bound the total variation error (averaged over random circuit instance) in this approximation as $O(F\epsilon \sqrt{s})$, so the ``white-noise approximation'' is meaningful when $\epsilon \sqrt{s} \ll 1$, a quadratically weaker condition than the $\epsilon s\ll 1$ requirement to maintain high fidelity. The bound applies when the circuit size satisfies $s \geq \Omega(n\log(n))$, 
  which corresponds to only {\it logarithmic depth} circuits,
  and the inverse error rate satisfies $\epsilon^{-1} \geq \tilde{\Omega}(n)$, which is needed to ensure errors are scrambled faster than $F$ decays.  The white-noise approximation is useful for salvaging the signal from a noisy quantum computation; for example, it was an underlying assumption in complexity-theoretic arguments that noisy random quantum circuits cannot be efficiently sampled classically, even when the fidelity is low. Our method is based on a map from second-moment quantities in random quantum circuits to expectation values of certain stochastic processes for which we compute upper and lower bounds. 

\end{abstract}

\section{Introduction}

There is a fundamental trade-off in quantum computation between computation size and error rate. Naturally, the longer the computation, the lower the physical error rate must be to maintain a high probability of an errorless computation. Once the error rate is beneath a constant threshold, the theory of fault tolerance and quantum error correction \cite{Shor1996FaultTolerantQC,Aharonov2008FaultTolerant} may be employed to push the probability of a \textit{logical} error arbitrarily close to zero, despite the prevalence of many physical errors during the computation; however, error correction comes at the cost of additional qubits and gates. These overheads, while acceptable in an asymptotic sense, are likely to be overwhelming in the near and intermediate term. This inspires the idea of an upcoming Noisy Intermediate-Scale Quantum (NISQ) era \cite{Preskill2018NISQ}, where hardware capabilities are good enough to perform non-trivial quantum tasks on dozens or hundreds of qubits, but quantum error correction, which might require thousands or millions of qubits, remains beyond reach. 

In this paper, we study a model of NISQ devices performing random computations and prove a precise sense in which, for typical circuit instances, local errors are quickly scrambled and can be treated as white noise. For some applications, this phenomenon makes it possible for the signal of the noiseless computation to be extracted by repetition despite a large overall chance that at least one error occurs. 

Our local error model assumes that each two-qubit gate in the quantum circuit is followed by a pair of gate-independent single-qubit unital noise channels acting on the two qubits involved in the gate. For simplicity and ease of analysis, we assume each of these noise channels is identical, but we fully expect the takeaways from our work to apply when the noise strength is allowed to vary from location to location.  
For concreteness in this introduction, we can consider the depolarizing channel with error probability $\epsilon$.
In this case, the fidelity of the noisy computation with respect to the ideal computation is expected to be roughly equal to the probability that no errors occur. We see that, for a circuit with $s$ two-qubit gates, this quantity, denoted here by $F = (1-\epsilon)^{2s}$, is close to 1 only if the quantity $2\epsilon s$---the average number of errors---satisfies $2\epsilon s \ll 1$. 

However, this high-fidelity requirement is quite restrictive in practice. Already for circuits with 50 qubits at depth 20, the error rate $\epsilon$ must be on the order of $10^{-4}$ for the whole computation to run without error at least 90\% of the time; this error rate is more than an order of magnitude smaller than what has been achievable in recent experiments on superconducting qubit systems of that size \cite{Arute2019GoogleQuantumSupremacy,USTC2021StrongQCompAdv,USTC2021Zhuchongzhi2.1}.  Indeed, in their landmark 2019 quantum computational supremacy experiment \cite{Arute2019GoogleQuantumSupremacy}, a group at Google performed random circuits on 53 qubits of depth 20, but the fidelity of the computation was $F \approx 0.002$, meaning at least one error occurs in all but a tiny fraction of the trials. Similar experiments at the University of Science and Technology of China on 56 \cite{USTC2021StrongQCompAdv} and 60 \cite{USTC2021Zhuchongzhi2.1} qubits reported even smaller fidelities of $0.0007$ \cite{USTC2021StrongQCompAdv} and $0.0004$ \cite{USTC2021Zhuchongzhi2.1}.  This would not be an issue if one could determine when a trial is errorless. (In this case, one could just repeat the experiment $1/F$ times.) However, error-detection requires overheads similar to error-correction. 

Rather, low-fidelity random circuit sampling experiments and their claim of quantum computational supremacy benefit from a key assumption \cite{Boixo2018CharacterizingNearTerm,Arute2019GoogleQuantumSupremacy}: when at least one error does occur, the output of the experiment is well approximated by \emph{white noise}, that is, the output is random and uncorrelated with the ideal (noiseless) output. When this is the case, the signal of diminished size $F$ can, at least for some applications, be extracted from the white noise using $O(1/F^2)$ trials, as we explain later. Specifically, for quantum computational supremacy, the \textit{white-noise assumption} is that the  distribution $\pnoisy$ over measurement outcomes of their noisy device is close to what we call the ``white-noise distribution'' 
\begin{equation}\label{eq:whitenoisedefinition}
    \pwhitenoise = F \pideal + (1-F)\punif\,,
\end{equation} 
with $\pideal$ the ideal distribution and $\punif$ the uniform\footnote{In Google's experiment, there was biased noise during readout (they measure $\ket{0}$ more often than $\ket{1}$) that would lead the appropriate definition of white noise to be slightly non-uniform (see Supplementary Material of \cite{Arute2019GoogleQuantumSupremacy}). We believe most of our analysis could be straightforwardly generalized to account for this kind of end-of-circuit non-unital error (although mid-circuit non-unital errors would likely complicate our method). However, the goal of our work is to study the complexity and behavior of low-fidelity random circuit experiments in an idealized sense, rather than the actual implementation of such ideas in recent superconducting experiments specifically. } distribution. In particular, for the approximation to be non-trivial, we demand that the total variation distance between $\pnoisy$ and $\pwhitenoise$ be a small fraction of $F$, that is
\begin{equation}\label{eq:whitenoisecondition}
     \hspace{72 pt}\frac{1}{2}\lVert \pwhitenoise-\pnoisy \rVert_1 \ll F \,.  \qquad \text{(white-noise assumption)}
\end{equation}
This demand is necessary because we expect that $\pnoisy$ also decays toward $\punif$ such that $\frac{1}{2}\lVert \pnoisy - \punif \rVert_1 = \Theta(F)$, and thus $\punif$ is a trivial approximation for $\pnoisy$ with error $\Theta(F)$.

Prior to their experiment, the Google group provided numerical evidence \cite{Boixo2018CharacterizingNearTerm} in favor of the white-noise assumption\footnote{Note that Ref.~\cite{Boixo2018CharacterizingNearTerm} proposed the stronger ansatz that the output quantum state is a combination of the ideal output state and the maximally mixed state, which implies (but is not necessary for) the statement $\pnoisy \approx \pwhitenoise$ about classical probability distributions over measurement outcomes.}
for randomly chosen circuits by showing that the output distribution of random circuits of depth 40 on 20 qubits (arranged in a 2D lattice) subject to a local Pauli error model approaches the uniform distribution, and that the fidelity of $\pnoisy$ with respect to $\pideal$ appears to decay exponentially, consistent with $\pnoisy \approx \pwhitenoise$. However, their analysis
did not specifically estimate the distance\footnote{Ref.~\cite{Boixo2018CharacterizingNearTerm} did not specifically formulate the assumption as in Eq.~\eqref{eq:whitenoisecondition}, where we demand that the allowed approximation error decrease with the fidelity, but we argue that the approximation is only meaningful when this is true. For example, in \autorefapp{app:complexitytheorywhitenoise} we argue that such precision is necessary to make a stronger complexity-theoretic argument for quantum computational supremacy.}  between $\pnoisy$ and $\pwhitenoise$.
The white-noise condition in Eq.~\eqref{eq:whitenoisecondition} requires that the distance between $\pnoisy$ and $\pwhitenoise$ decrease as the expected number of errors increases and $F$ decays, so quantifying the differences between the distributions is vital for determining how well the white-noise approximation is obeyed.

Here we prove rigorous bounds on the error in the white-noise approximation, averaged over circuits with randomly chosen gates. Our results fully apply in two random quantum circuit architectures: first, the 1D architecture with periodic boundary conditions, where qubits are arranged in a ring and alternating layers of nearest-neighbor gates are applied; and second, the complete-graph architecture, where each gate is chosen to act on a pair of qubits chosen uniformly at random among all $n(n-1)/2$ pairs.\footnote{Additionally, our results would fully apply to architectures in $D$ spatial dimensions for any $D$ under a conjecture from Ref.~\cite{dalzell2020anticoncentration} that these architectures anti-concentrate in $O(\log(n))$ depth. Without that conjecture, a weaker result is shown.} We show that, for Pauli noise channels, the error in the white-noise approximation is small as long as (1) $\epsilon^2 s \ll 1$, (2) $s \geq \Omega(n\log(n))$, and (3) $\epsilon \ll 1/(n\log(n)) $. We believe that condition (3) could be relaxed to read $\epsilon < c/n$ for some universal constant $c = O(1)$ (numerics suggest $c=0.3$ for the complete-graph architecture). Condition (1) is a quadratic improvement over the condition $\epsilon s \ll 1$ needed for high fidelity. For circuits with $\epsilon < 0.005$, as is the case in recent experiments \cite{Arute2019GoogleQuantumSupremacy,USTC2021StrongQCompAdv,USTC2021Zhuchongzhi2.1}, thousands of gates could potentially be implemented before condition (1) fails. Note that our technical statements hold for general (non-Pauli) error channels as well, but we find that the error in the white-noise approximation is small only for incoherent noise channels. We complement this analysis with numerical results that confirm the picture presented by our theoretical proofs for the complete-graph architecture, and demonstrate that realistic NISQ-era values of the error rate and circuit size can lead to a good white-noise approximation. 

By putting the white-noise approximation for random quantum circuits on stronger theoretical footing, our work has several applications. First, the white-noise assumption is an ingredient in formal complexity-theoretic arguments that the task accomplished on noisy devices running random quantum circuits is hard for classical computers (allowing the declaration of quantum computational supremacy) \cite{Arute2019GoogleQuantumSupremacy}. We complement our main result by showing in \autoref{app:complexitytheorywhitenoise} that classically sampling from the white-noise distribution within total variation distance $\eta F$ is, in a certain complexity-theoretic sense, equivalent up to a factor of $F$ (which is optimal) to sampling from the ideal output distribution within total variation distance $O(\eta)$. This makes low-fidelity experiments where errors are common nearly as defensible for quantum computational supremacy as high-fidelity experiments where errors are rare, at least in principle. Second, our result lends theoretical justification to the usage \cite{Arute2019GoogleQuantumSupremacy, USTC2021StrongQCompAdv,USTC2021Zhuchongzhi2.1} of the linear cross-entropy metric proposed in Ref.~\cite{Arute2019GoogleQuantumSupremacy} to benchmark noise in random circuit experiments and verify that hardware has correctly performed the quantum computational supremacy task. Indeed, as a side result, we show that, for both incoherent and coherent noise, the metric decays precisely as $e^{-2s\epsilon \pm O(s\epsilon^2)}$ when $\epsilon$ is sufficiently small; this also suggests that the linear cross entropy benchmark could be reliably used to accurately estimate the underlying local noise rate $\epsilon$ \cite{Liu2021BenchmarkingRCS}. 

Beyond random circuit experiments for quantum computational supremacy, our work suggests that other scenarios where the white-noise assumption holds may be advantageous in the NISQ era, as one can eschew error-correction and nonetheless perform a fairly long quantum computation, as long as one is willing to repeat the experiment $O(1/F^2)$ times. One example of a scenario where the assumption may hold is quantum simulation of fixed chaotic Hamiltonians, since they are also believed to be efficient at scrambling errors. 

The remainder of the paper is structured as follows: in \autoref{sec:noisemodel}, we describe our setup and in particular our model for local noise within a random quantum circuit; in \autoref{sec:whitenoiseresults}, we precisely state our results; in \autoref{sec:whitenoiseimplications}, we discuss further implications and how our results fit in with prior work; in \autoref{sec:whitenoisemethodandintuition}, we give an overview of the intuition behind our result and the method we use in our proofs, which is based on a map from random quantum circuits to certain stochastic processes, which can also be interpreted as partition functions of statistical mechanical systems. This method might be regarded as an extension of the method in Ref.~\cite{dalzell2020anticoncentration}, where we studied anti-concentration in random quantum circuits. In \autoref{sec:numerics}, we present a numerical calculation of our bound for the realistic values of the circuit parameters informed by the experiments in Refs.~\cite{Arute2019GoogleQuantumSupremacy,USTC2021StrongQCompAdv,USTC2021Zhuchongzhi2.1} (although for the complete-graph architecture, rather than 2D). We conclude the main text with an outlook in \autoref{sec:whitenoiseoutlook}. The rigorous proofs and details behind the map to stochastic processes then appear in the appendices.

\section{A model of noisy random quantum circuits}\label{sec:noisemodel}

Here we describe our model of noisy random quantum circuits. Let the circuit consist of $s$ two-qudit gates acting on $n$ qudits, each with local Hilbert space dimension $q$. We follow Ref.~\cite{dalzell2020anticoncentration} in defining a \textit{random quantum circuit architecture} as an efficient algorithm that takes the circuit specifications $(n,s)$ as input and outputs a quantum circuit diagram with $s$ two-qudit gates, that is, a length-$s$ sequence of qudit pairs (without specifying the actual gates that populate the diagram).  Our results fully apply for two specific architectures: the 1D architecture with periodic boundary conditions, and the complete-graph architecture, which were previously shown in Ref.~\cite{dalzell2020anticoncentration} to have the anti-concentration property as long as $s \geq \Omega(n\log(n))$, with a particular constant prefactor. Our results would also fully apply for standard architectures in $D$ spatial dimensions (with periodic boundary conditions) if it could be proved that they also achieve anti-concentration whenever $s \geq \Omega(n \log(n))$, as was conjectured in Ref.~\cite{dalzell2020anticoncentration}. 

Given an architecture and parameters $(n, s)$, we can generate a circuit instance by choosing the circuit diagram according to the architecture and then choosing each of the unitary gates in the diagram at random according to the Haar measure. Each instance is associated with an output probability distribution $\pideal$ over $q^n$ possible computational basis measurement outcomes $x \in [q]^n$ (where $[q] = \{0,1,\ldots,q-1\}$) that would be sampled if the circuit were implemented noiselessly.  Note that in the formal analysis we include a layer of $n$ (also Haar-random) single-qudit gates at the beginning and end of the circuit without counting these $2n$ gates toward the circuit size; these might be regarded as fixing the local basis for the input product state and the measurement of the output. 

\subsection{Local noise model}

We augment this setup by inserting single-qudit noise channels into the circuit diagram, which act on qudits involved in a multi-qudit gate immediately following the gate, as shown in the example in \autoref{fig:noisycircuitdiagram}. In our model, the single-qudit gates remain noiseless and measurements are assumed to be perfect.\footnote{In the experiments of Refs.~\cite{Arute2019GoogleQuantumSupremacy,USTC2021StrongQCompAdv,USTC2021Zhuchongzhi2.1}, single-qubit gates had significantly smaller (but still non-zero) error rates compared to two-qubit gates. However, readout error rates were significantly larger than gate error rates, something that is not incorporated into our model. Our simplified noise model aims to capture the spirit of a noisy random quantum circuit experiment and show that the white-noise phenomenon can be proved in an idealized setting. We do not aim to specifically model all of the details of the experimental setups in Refs.~\cite{Arute2019GoogleQuantumSupremacy,USTC2021StrongQCompAdv,USTC2021Zhuchongzhi2.1}.}
\begin{figure}[h]
\centering
\scalebox{0.9}{
\begin{tikzpicture}[scale=1.0,thick]
\foreach[count=\i] \y in {1,2,3,4}
{\tikzmath{\j=int(\i-4);\jx=int(10-\i);\xoff = 0.2;}
\draw (0.5,\y) -- (12.2,\y);
\draw[fill=white,rounded corners] (0.8,\y-0.35) rectangle (1.6,\y+0.35);
\node at (1.2,\y) {\footnotesize $U^{\scalebox{0.6}{(\j)}}$};
\draw[fill=white,rounded corners] (11,\y-0.35) rectangle (11.8,\y+0.35);
\node at (11.4,\y) {\footnotesize $U^{\scalebox{0.6}{(\jx)}}$};
\node at (0.18,\y) {$\ket{0}$};
\draw[fill=white,rounded corners] (12+\xoff,\y-0.3) rectangle (12.6+\xoff,\y+0.3);
\draw[-latex,semithick] (12.3+\xoff,\y-0.22) -- (12.46+\xoff,\y+0.22);
\draw[semithick] (12.1+\xoff,\y-0.16) to[out=60,in=120] (12.5+\xoff,\y-0.16);
}
\foreach \x/\y in {
3.3/4,3.3/3,
5.1/2,5.1/3,
6.9/4,6.9/3,
8.7/1,8.7/2,
10.5/2,10.5/3}
{\draw[fill=white] (\x-0.05, \y) circle (0.25);
\node at (\x-0.05,\y) {\scalebox{0.65}{$\mathcal{N}$}};}
\foreach[count=\i] \x/\y in {2/3, 3.8/2, 5.6/3, 7.4/1, 9.2/2}
{\draw[fill=white,rounded corners] (\x,\y-0.3) rectangle (\x+0.8,\y+1.3);
\node at (\x+0.42,\y+0.5) {\scalebox{0.86}{$U^{(\i)}$}};}
\end{tikzpicture}
}
\caption[Example of a noisy quantum circuit diagram on $n=4$ qudits and $s=5$ two-qudit gates]{Example of a noisy quantum circuit diagram on $n=4$ qudits with $s=5$ two-qudit gates. A pair of single-qudit noise channels $\mathcal{N}$ follow each two-qudit gate. The circuit begins and ends with a layer of noiseless single-qudit gates.}
\label{fig:noisycircuitdiagram}
\end{figure}
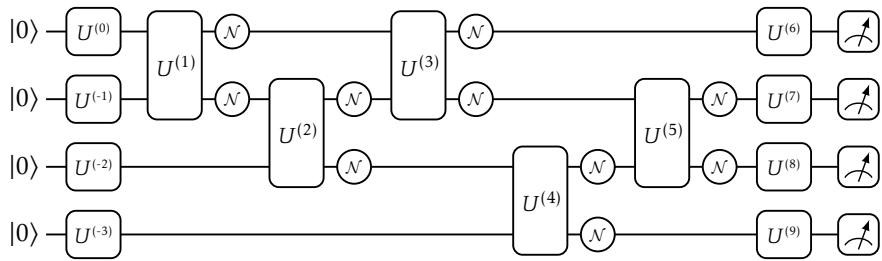
Thus, the core assumption is that the noise is local, i.e.~independent from qudit to qudit. We assume each noise channel $\mathcal{N}$ is a unital and completely positive trace-preserving map. 

For a given noise channel, there are only two parameters that matter for our analysis, the \textit{average infidelity} and the \textit{unitarity} of the channel. The average infidelity for a channel $\mathcal{N}$ is defined as
\begin{equation}\label{eq:averageinfidelity}
    r = 1-\int dV \tr\left[ V \ketbra{\psi} V^{\dagger}\mathcal{N}(V \ketbra{\psi} V^{\dagger})\right]\,,
\end{equation}
where the integral is over the Haar-measure on $q \times q$ unitary matrices $V$ and $\ketbra{\psi}$ is any pure state. The average infidelity is one measure of the overall noise strength of the channel $\mathcal{N}$. Following Refs.~\cite{Wallman2015EstimatingCoherence,Carignan-Dugas2019Unitarity}, the \textit{unitarity} is defined for unital channels as
\begin{equation}\label{eq:unitarity}
    u = \frac{q}{q-1}\left(\int dV \tr\left[ \mathcal{N}\left(V \ketbra{\psi} V^{\dagger}\right)^2\right]-\frac{1}{q}\right)\,.
\end{equation}
The unitarity is the expected purity of the output state under random choice of input state, scaled to have minimum value of 0 and maximum value of 1.

\paragraph{Examples: depolarizing, dephasing, and rotation channels}
It is helpful to consider explicitly the following three channels. First, the depolarizing channel 
\begin{align}\label{eq:depolarizing}
\mathcal{N}_{depo}(\rho) &= (1-\gamma)\rho + \gamma \frac{I}{q}  = (1-\epsilon)\rho +\frac{\epsilon}{q^2-1} \sum_{i=1}^{q^2-1}P_i \rho P_i^\dagger\,,
\end{align}
where $\gamma = \epsilon q^2/(q^2-1)$, $\{P_i\}_{i=1}^{q^2-1}$ is the set of single-qudit Pauli matrices (appropriately generalized to higher $q$), and $I$ is the $q \times q$ identity matrix. There are two ways to think of the channel: first, with probability $1-\gamma$ doing nothing and with probability $\gamma$ resetting the state to the maximally mixed state on that qudit; second, with probability $1-\epsilon$ doing nothing and with probability $\epsilon$ choosing a Pauli operator at random to apply to the qudit. 

We can also consider the dephasing channel
\begin{align}\label{eq:dephasing}
\mathcal{N}_{deph}(\rho) &= (1-\frac{q}{q-1}\epsilon)\rho +\frac{q}{q-1}\epsilon \sum_{i=0}^{q-1} \ketbra{i} \rho \ketbra{i}\,,
\end{align}
which represents doing nothing with probability $1-q\epsilon/(q-1)$ and performing a measurement in the computational basis with probability $q\epsilon/(q-1)$. 

Finally, we can consider a coherent noise channel, for example the rotation channel
\begin{align}\label{eq:rotation}
\mathcal{N}_{rot}(\rho) &= e^{-i\theta \ketbra{0}}\rho e^{i\theta \ketbra{0}}\,,
\end{align}
which applies a small unitary rotation by angle $\theta$ to the state.

The average infidelity and unitary of these channels are given in \autoref{tab:channelexamples}.
\begin{table*}[ht]
\centering
\normalsize
{\renewcommand{\arraystretch}{1.5}
 \begin{tabular}{|c|c|c|}
         \hline
         \textbf{channel}& \textbf{avg.~infidelity} $r$ & \textbf{unitarity} $u$ \\
         \hline
         depolarizing, Eq.~\eqref{eq:depolarizing} & $\frac{q}{q+1}\epsilon$  & $(1-\frac{q^2}{q^2-1}\epsilon)^2$ \\
         \hline
         dephasing, Eq.~\eqref{eq:dephasing} & $\frac{q}{q+1}\epsilon$ & $1-\frac{q^2}{q^2-1}(2\epsilon-\frac{q}{q-1}\epsilon^2)$ \\
         \hline
         rotation, Eq.~\eqref{eq:rotation} & $\frac{2(q-1)}{q(q+1)}(1-\cos(\theta))$ & $1$\\
        \hline
\end{tabular}
}
\caption{\label{tab:channelexamples} Average infidelity and unitarity for three different single-qudit noise channels, where $q$ denotes the local dimension of the qudits ($q=2$ for qubits).  }
\end{table*}

\subsection{Output distributions of the quantum circuit}

Suppose the locations of the $s$ two-qudit gates have been fixed, with gate $t$ acting on qudits $\{i_t,j_t\}$. Then a circuit instance is specified by a sequence $(U^{(-n+1)},\ldots, U^{(s+n)})$, where $U^{(t)}$ is a $q^2 \times q^2$ (two-qudit) unitary matrix if $1\leq t \leq s$ and a $q \times q$ (single-qudit) unitary matrix otherwise. Accordingly, for each $t$, let 
\begin{equation}\label{eq:U^t}
    \mathcal{U}^{(t)}(\sigma) = \left(\mathcal{I}_{[n]\setminus\{i_t,j_t\}} \otimes U_{\{i_t,j_t\}}^{(t)} \right) \sigma \left(\mathcal{I}_{[n]\setminus\{i_t,j_t\}} \otimes U_{\{i_t,j_t\}}^{
    \dagger(t)} \right)
\end{equation} 
denote the unitary channel that acts as $U^{(t)}$ on qudits $i_t$ and $j_t$ and as the identity channel (denoted by $\mathcal{I}$) on the other qudits.
To account for noise, let
\begin{equation}\label{eq:tildeU^t}
    \widetilde{\mathcal{U}}^{(t)} = 
    \begin{cases} 
    \left(\mathcal{I}_{[n]\setminus\{i_t,j_t\}}\otimes \mathcal{N}_{\{i_t\}} \otimes \mathcal{N}_{\{j_t\}}\right)\circ \mathcal{U}^{(t)} & \text{if } 1 \leq t \leq s \text{ (two-qudit)}\\
    \mathcal{U}^{(t)} & \text{otherwise (single-qudit)}
    \end{cases}
\end{equation} 
be the channel that applies noise channels after applying the unitary gate. 
Now we can define the ideal and noisy output distributions by
\begin{align}
    \pideal(x) &= \tr\left[\ketbra{x}\; \mathcal{U}^{(s+n)}\circ \cdots \circ \mathcal{U}^{(-n+1)}\left(\ketbra{I^n}\right)\right] \label{eq:pideal}\\
    \pnoisy(x) &= \tr\left[\ketbra{x}\; \widetilde{\mathcal{U}}^{(s+n)}\circ \cdots \circ \widetilde{\mathcal{U}}^{(-n+1)}\left(\ketbra{I^n}\right)\right]\,. \label{eq:pnoisy}
\end{align}

Our work compares the distribution $\pnoisy$ to the white-noise distribution $\pwhitenoise$ (defined in Eq.~\eqref{eq:whitenoisedefinition} and repeated here)
\begin{equation}
    \pwhitenoise(x) = F\pideal(x) + (1-F)q^{-n}
\end{equation}
for some choice of $F$. 
The white-noise distribution is a mixture of the ideal distribution and the uniform distribution. Note that $\pideal$, $\pnoisy$, and $\pwhitenoise$ all depend implicitly on the circuit instance $U$.  In the analysis we treat $F$ as a free parameter, and we choose it such that our bound on the distance between $\pnoisy$ and $\pwhitenoise$ is minimized. The total variation distance between two distributions $p_1$ and $p_2$ is defined as
\begin{equation}
    \text{TVD}(p_1,p_2) = \frac{1}{2}\lVert p_1 - p_2 \rVert_1 = \frac{1}{2}\sum_{x}|p_1(x)-p_2(x)| \,.
\end{equation}

\paragraph{Comment on randomness in our setup}

There are multiple types of randomness in our analysis, and in understanding our result it is important to keep track of how they interplay. First of all, the noiseless circuit instance $U$ is generated randomly by choosing each gate to be Haar random. The choice of $U$ determines an ideal \textit{pure} output state. Second of all, for each fixed choice of $U$, the noise channels may introduce randomness that makes the noisy output state \textit{mixed}. When the noise is depolarizing noise, this might be regarded as the insertion of a randomly chosen pattern of Pauli errors. Lastly, the measurement of the state in the computational basis gives rise to a random measurement outcome drawn from a certain classical probability distribution: $\pideal$ if we are considering the noiseless circuit, and $\pnoisy$ if we are considering the noisy circuit. The important thing to remember is that we are primarily concerned with thinking about \textit{fixed} instances $U$ and the interplay between the resulting probability distributions $\pideal$, $\pnoisy$ and $\pwhitenoise$ for that instance. Then, we make a statement about these distributions that holds in expectation over random choice of $U$. If desired, one could then use Markov's inequality to form bounds on the fraction of instances $U$ for which the white-noise approximation must be good. 

\paragraph{Comment on more general (universal) gate sets}
We consider random quantum circuits built from local 2-site unitary gates drawn randomly with respect to the Haar measure. As our analysis involves only second moment quantities, our results therefore directly apply to any gate set (or distribution on the 2-site unitary group) that forms an exact unitary 2-design, e.g.\ random Clifford circuits with gates drawn from the Clifford group. Furthermore, circuits constructed with gates drawn randomly from universal gate sets should give rise to similar scrambling phenomena and we expect that our results hold for such circuits, including the actual random circuit experiments performed in Refs.~\cite{Arute2019GoogleQuantumSupremacy,USTC2021StrongQCompAdv,USTC2021Zhuchongzhi2.1}. While our method is not directly generalizable to other gate sets, we anticipate that if our analysis were extendable to such gate sets, the results would only change by constant factors. 

Some evidence for this is provided by the independence of the spectral gap for universal gate sets \cite{BGgap}. This implies that the depth at which random quantum circuits scramble (and converge to approximate unitary designs) only changes by a constant factor when one considers circuits comprised of gates drawn randomly for any universal gate set \cite{BHH2016RQCtdesign}.

\section{Overview of contributions}\label{sec:whitenoiseresults}

The main result of this paper is a proof that, for typical random circuits, the output distribution $\pnoisy$ of the quantum circuit with local noise is very close to the white-noise distribution $\pwhitenoise$ if the noise is sufficiently weak. Specifically, we prove an upper bound on the expectation value of the total variation distance between the two distributions. In proving that result, we also prove a statement about the expected fidelity in noisy random quantum circuits, and another statement about the speed at which $\pnoisy$ approaches the uniform distribution. 
For all statements, the notation $\EV_U$ denotes expectation over choice of Haar-random single-qudit and two-qudit gates. 

In the rest of this section, we state our results for general noise channels, deferring the proofs to \autorefapp{app:rigorousproofs}, but first we summarize the contributions specifically applied to the depolarizing channel in \autoref{tab:resultsdepolarizing}.

\begin{table*}[ht]
\centering
\normalsize
{\renewcommand{\arraystretch}{1.75}
 \begin{tabular}{|c|c|}
         \hline
         \textbf{Fidelity decay} & $\bar{F} = e^{-2s\epsilon \; \pm \; O(s\epsilon^2)}$   \\
         \hline
         \textbf{Approach to uniform} & $\EV_U\left[\frac{1}{2}\lVert \pnoisy - \punif\rVert_1\right] \leq e^{-2s\epsilon \; + \; O(s\epsilon^2)}$ \\
         \hline
         \textbf{Distance from} $\pwhitenoise$ \textbf{for} $F = \bar{F}$ & $\EV_U\left[\frac{1}{2}\lVert \pnoisy - \pwhitenoise \rVert_1\right] \leq O(F \epsilon \sqrt{s})$ \\
        \hline
\end{tabular}
}
\caption[Summary of results in the case of depolarizing noise]{\label{tab:resultsdepolarizing} Summary of results when the noise is depolarizing (Eq.~\eqref{eq:depolarizing}) with error parameter $\epsilon$. The quantity $\bar{F}$, given in Eq.~\eqref{eq:defAveOutFidelity}, is the expectation of the linear cross entropy metric using noisy samples, normalized by its expectation using ideal samples.
These statements apply for the 1D and complete-graph architectures when the circuit size is larger than $\Omega(n\log(n))$ (corresponding to the regime where the anti-concentration property has been achieved), and assuming that the quantity $\epsilon n \log(n)$ is small enough to be neglected. We believe that this condition can be relaxed to $\epsilon < c/n$ for some constant $c$.}
\end{table*}

\paragraph{Comment on architectures} The theorem statements below are expressed only for the 1D and complete-graph architectures, which are known to anti-concentrate after circuit size $\Theta(n\log(n))$. In the appendix, we prove slightly more general statements that also hold for any architecture consisting of layers and satisfying a natural connectivity property (this includes standard architectures in $D$ spatial dimensions with periodic boundary conditions). These statements depend on the anti-concentration size $s_{AC}$ of these architectures, which is conjectured to be $\Theta(n\log(n))$ but for which the best known upper bound is $O(n^2)$ \cite{dalzell2020anticoncentration}. 

\subsection{Fidelity decay}

Define the quantity
\begin{equation}\label{eq:defAveOutFidelity}
    \bar{F} = \frac{\EV_U\Big[ \sum_x \pnoisy(x)(q^n\pideal(x)-1)\Big]}{\EV_U\Big[ \sum_x \pideal(x)(q^n\pideal(x)-1)\Big]}\,.
\end{equation} 
The quantity $\bar{F}$ may be regarded as an estimate of the fidelity of the noisy quantum device with respect to the ideal computation;
when $\pnoisy(x)$ and $\pideal(x)$ are viewed as random variables in the instance $U$, $\bar{F}$ is equal to their covariance, normalized by the variance of $\pideal$. Note also that the numerator of $\bar{F}$ is the expected score on the linear cross-entropy benchmark (as proposed in  Ref.~\cite{Arute2019GoogleQuantumSupremacy}) using samples from the noisy device, and the denominator is the expected score using samples from the ideal output distribution. Refs.~\cite{Liu2021BenchmarkingRCS, Rinott2021Statistical} studied a similar quantity, the difference being that the $\EV_U$ appears outside the fraction in their case. Additionally, note that the denominator is given by $q^n Z-1$, where $Z$ is the collision probability studied in Refs.~\cite{HarrowMehraban2018tdesign,dalzell2020anticoncentration}. The results of Ref.~\cite{dalzell2020anticoncentration} imply that the denominator becomes within a small constant factor of $(q^n-1)/(q^n+1) \approx 1$ (and can therefore be essentially ignored) after $\Theta(n\log(n))$ gates.

\begin{theorem}\label{thm:fidelitydecay}
    Consider either the complete-graph architecture or the 1D architecture with periodic boundary conditions on $n$ qudits of local Hilbert space dimension $q$ and comprised of $s$ gates. Let $r$ be the average infidelity of the local noise channels. Then there exists constants $c$ and $n_0$ such that whenever $r \leq c/n$ and $n \geq n_0$, the following holds:
\begin{align}
    \bar{F} &\geq \exp\left(-2sr(1+q^{-1})\right)e^{-O(sr^2)-O(sq^{-2n})} \label{eq:FXEBlowerbound}\\
    \bar{F} &\leq\exp\left(-2sr(1+q^{-1})\right) Q_1 \,, \label{eq:FXEBupperbound}
\end{align}
where
\begin{equation}\label{eq:Q1}
    Q_1 = \exp\left(O(sr^2) + O(rn\log(n)) + e^{O(\log(n))-\Omega(s/n)} + O(nr \log(1/(nr)))\right)\,.
\end{equation}
\end{theorem}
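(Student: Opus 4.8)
The plan is to express $\bar F$ directly in terms of second-moment quantities of the random circuit and then route those quantities through the stochastic-process (statistical-mechanics) map that the paper announces as its main technical tool. First I would observe that both the numerator and denominator of $\bar F$ in Eq.~\eqref{eq:defAveOutFidelity} are linear in $\EV_U$ applied to a product of two copies of the circuit (schematically, $\EV_U[\pnoisy \cdot \pideal]$ and $\EV_U[\pideal^2]$), so after averaging over the Haar-random gates each becomes a contraction of tensors living on the two-copy replica space. On each two-qudit gate the Haar average projects onto the span of the identity permutation $\mathbb{1}$ and the swap permutation $\mathrm{SWAP}$; tracking which of these two "spins'' is selected at every gate turns $\EV_U[\pideal^2]$ into a partition function of an Ising-like model and $\EV_U[\pnoisy\cdot\pideal]$ into the same partition function with a per-noise-channel weight inserted. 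The key point is that a unital single-qudit channel $\mathcal N$ acts diagonally in the $\{\mathbb{1},\mathrm{SWAP}\}$ basis with eigenvalues $1$ on $\mathbb{1}$ and exactly the unitarity $u$ on the traceless part of $\mathrm{SWAP}$ — and $u = (1-r(1+q^{-1}))^2 \pm O(r^2)$ for weak noise by the relations in \autoref{tab:channelexamples} (more precisely by the general identity relating $u$ and $r$ for unital channels). Hence each of the $2s$ noise insertions multiplies a configuration's weight by $1$ or by roughly $\mathrm{e}^{-2r(1+q^{-1})}$ depending on whether the adjacent domain is "swap'' or "identity,'' so $\bar F$ becomes $\EV_{\text{stat mech}}[\mathrm{e}^{-2r(1+q^{-1}) (\#\text{ of swap-adjacent noise sites})}]$ normalized appropriately.

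The next step is to control this expectation. The lower bound \eqref{eq:FXEBlowerbound} follows from Jensen's inequality applied to the convex function $\mathrm{e}^{-\lambda(\cdot)}$: pulling the exponent out gives $\bar F \geq \exp(-2r(1+q^{-1})\,\EV[\#\text{swap-adjacent noise sites}]) \cdot (1 \pm \text{corrections})$, and the expected number of swap-adjacent sites is at most $2s$ because there are $2s$ noise channels total, giving the leading $\exp(-2sr(1+q^{-1}))$; the $\mathrm{e}^{-O(sr^2)}$ factor absorbs the $O(r^2)$ slack between $u$ and $\mathrm{e}^{-2r(1+q^{-1})}$ and from normalizing by the denominator, while $\mathrm{e}^{-O(sq^{-2n})}$ accounts for finite-$q^n$ corrections (the difference between the genuine Haar two-copy projector and the orthonormalized $\{\mathbb{1},\mathrm{SWAP}\}$ basis, and between the denominator and its $q^n\to\infty$ value $(q^n-1)/(q^n+1)$). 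For the upper bound \eqref{eq:FXEBupperbound} I would argue the complementary direction: because the circuit anti-concentrates for $s \geq \Omega(n\log n)$ (this is exactly the regime assumption, and is where the results of Ref.~\cite{dalzell2020anticoncentration} kick in to pin the denominator near $1$), the stat-mech measure is dominated by configurations that are "all swap'' except for an $O(1/n)$-density of defects, so almost all $2s$ noise sites are swap-adjacent; the deficit from the ideal $\mathrm{e}^{-2sr(1+q^{-1})}$ is bounded by $r$ times the expected number of non-swap-adjacent sites, which is controlled by the expected size of the defect regions. Estimating that defect cost is what produces the catch-all factor $Q_1$ in \eqref{eq:Q1}: the $O(rn\log n)$ term comes from the $\Theta(n\log n)$-gate "burn-in'' before anti-concentration where the domain structure is not yet swap-dominated; the $\mathrm{e}^{O(\log n)-\Omega(s/n)}$ term is the exponentially small probability of a large spurious domain wall surviving in the bulk (a Peierls-type / transfer-matrix estimate, with the $\Omega(s/n)$ being the effective "length'' in units of $n$ gates); the $O(sr^2)$ term is again the $u$-vs-exponential slack; and the $O(nr\log(1/(nr)))$ term bounds the entropic contribution of summing over the possible locations of a bounded number of defects, each weighted by $r$, which is where the hypothesis $r \leq c/n$ (so that $nr$ is small and the $\log$ is well-defined and positive) is used.

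The main obstacle, and the step I expect to require the most care, is the upper bound on the defect cost — i.e.\ showing rigorously that the stat-mech partition function is swap-dominated with only $O(1/n)$-density defects \emph{and} that the noise weighting cannot conspire with atypical configurations to inflate $\bar F$. This is genuinely delicate because the noise insertions reweight the ensemble in a configuration-dependent way, so one cannot simply quote an unmodified anti-concentration bound; one must show the reweighting is a benign perturbation. I would handle this by bounding the ratio of the noisy partition function to the noiseless one configuration-class by configuration-class (grouping configurations by their domain-wall structure), using the connectivity property of the architecture to argue that any domain wall of "area'' $a$ costs a factor $\mathrm{e}^{-\Omega(a)}$ in the noiseless weight while gaining at most $\mathrm{e}^{O(ra)}$ from dropped noise factors — so for $r$ small the wall is still suppressed — and then summing the resulting geometric series over wall configurations, which is exactly the source of the $\log$ factors and the $\mathrm{e}^{-\Omega(s/n)}$ tail in $Q_1$. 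The 1D and complete-graph cases need slightly different combinatorics here (domain walls are points on a ring versus the mean-field-like structure of the complete graph), but both reduce to the same kind of geometric-sum estimate.
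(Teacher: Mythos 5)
Your overall route (map the second moments to the identity/swap stat-mech model, then show the swap-dominated sector controls everything up to defects whose cost produces the $O(rn\log n)$ burn-in, the $e^{-\Omega(s/n)}$ tail, and the need for $r\leq c/n$) is the same route the paper takes, and you correctly locate the hard step. But there is a genuine gap at the very first quantitative step: in the numerator of $\bar F$, namely $Z_1=q^{n}\EV_U[\sum_x \pnoisy(x)\pideal(x)]$, the noise acts on only \emph{one} of the two replicas, so the two-copy operator inserted at each noise location is $\mathcal{I}\otimes\mathcal{N}$, not $\mathcal{N}\otimes\mathcal{N}$. After Haar averaging, $\mathcal{I}\otimes\mathcal{N}$ acts on the swap component with retention factor $1-\tfrac{qr}{q-1}$, i.e.\ \emph{linearly} in the average infidelity $r$, for every unital channel. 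The unitarity $u$ only governs $Z_2$ (both copies noisy). Your proposal instead assigns each insertion the eigenvalue $u$ and then invokes a ``general identity'' $u=(1-r(1+q^{-1}))^2\pm O(r^2)$; no such identity exists — only a one-sided inequality — and it fails badly for coherent channels, e.g.\ the rotation channel of \autoref{tab:channelexamples} has $u=1$ with $r>0$. Run through your argument, such a channel would give no decay of $\bar F$ at all, contradicting the theorem, which is stated for general unital noise and whose very point is that the cross-entropy decay rate is set by $r$ even when $u=1$.

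Even restricted to depolarizing-like channels where your substitute relation is approximately valid, the bookkeeping is inconsistent: $2s$ insertions each damping by $e^{-2r(1+q^{-1})}$ would yield $e^{-4sr(1+q^{-1})}$ — the decay rate of $Z_2$, not of $Z_1$ — and the step where you nevertheless report $e^{-2sr(1+q^{-1})}$ is an arithmetic slip masking the replica mix-up. A secondary (fixable) imprecision: the averaged noise is not diagonal in the $\{I,S\}$ basis but upper-triangular (it leaks $S\to I$), so $\bar F$ is not literally an expectation of $e^{-\lambda(\#\text{swap-adjacent sites})}$ under the noiseless measure; the paper instead treats it as a Markov chain with a metastable $S^n$ fixed point and tracks the ``$S$-destined'' mass, proving an exponential-clustering statement (valid when $\sigma\leq c/n$) that plays exactly the role of your Peierls estimate. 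With the correct single-replica weight $\sigma=qr/(q-1)$ in place, your defect-counting sketch is a reasonable skeleton of that analysis, but as written the proof does not establish the stated rate or its validity for coherent noise.
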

Note that the relationship $\epsilon= r(q+1)/q$ holds for the depolarizing channel as defined in Eq.~\eqref{eq:depolarizing}, so, ignoring the $O(q^{-2n})$ corrections,
\begin{equation}
    e^{-2s\epsilon - O(s\epsilon^2) } \leq \bar{F} \leq e^{-2s\epsilon + O(s\epsilon^2) + O(\epsilon n\log(n))+O(n\epsilon\log(1/(n\epsilon))}\,,
\end{equation}
indicating that the fidelity decreases exponentially with the expected number of Pauli errors $2s\epsilon$, as long as the noise is sufficiently weak that the other terms can be ignored. In particular, three conditions must be met to approximate $Q_1$ by 1 in Eq.~\eqref{eq:FXEBupperbound}: (1) $\epsilon^2 s \ll 1$, (2) anti-concentration has been reached, i.e.~$s \geq \Omega(n\log(n))$, and (3) $\epsilon \ll 1/(n\log(n))$. One implication of \autoref{thm:fidelitydecay} is that the same kind of decay extends to general noise channels and is observed even for coherent noise channels like the rotation channel.

\subsection{Convergence to uniform }

We show an upper bound on the expected total variation distance between the output of the noisy quantum device $\pnoisy$ and the uniform distribution. Our bound decays exponentially in the number of error locations, under certain circumstances. In particular, it decays exponentially in $(1-u)(1-q^{-2})s$ where $u$ is the unitarity of the local noise channels. 

\begin{theorem}\label{thm:convergenceToUniform}
   Consider either the complete-graph architecture or the 1D architecture with periodic boundary conditions on $n$ qudits of local Hilbert space dimension $q$ and $s$ gates. Let $u$ be the unitarity of the local noise channels (and define $v=1-u$). Then there exist constants $c$ and $n_0$ such that as long as $v \leq c/n$ and $n \geq n_0$
    \begin{equation}
        \EV_U \left[\frac{1}{2}\lVert \pnoisy - \punif \rVert_1 \right] \leq \exp(-sv(1-q^{-2})) Q_2 \,,
    \end{equation}
    where $\punif$ is the uniform distribution, and 
    \begin{equation}\label{eq:Q2}
        Q_2 = \exp\left(O(sv^2) + O(v n \log(n)) + e^{O(\log(n))-\Omega(s/n)}+O(nv \log(1/(nv))\right)\,.
    \end{equation}
\end{theorem}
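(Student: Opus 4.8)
To prove \autoref{thm:convergenceToUniform}, the plan is to bound the expected total variation distance in terms of the expected collision probability of $\pnoisy$, and then evaluate that second-moment quantity with the stochastic-process map, running in close parallel to \autoref{thm:fidelitydecay}. For the first step, Cauchy--Schwarz gives, for any distribution $p$ on $[q]^n$, the inequality $\tfrac12\lVert p - \punif\rVert_1 \le \tfrac12\sqrt{q^n\sum_x p(x)^2 - 1}$; applying this to $p = \pnoisy$, taking $\EV_U$, and using concavity of the square root (Jensen) yields
\begin{equation}
\EV_U\left[\tfrac12\lVert \pnoisy - \punif\rVert_1\right] \le \tfrac12\sqrt{\, q^n\,\EV_U\left[\sum_x \pnoisy(x)^2\right] - 1\,}\,.
\end{equation}
It therefore suffices to show that the noisy collision probability obeys $q^n\,\EV_U[\sum_x \pnoisy(x)^2] - 1 \le \exp(-2sv(1-q^{-2}))\,Q_2^2$; note that the ``$-1$'' is exactly the quantity that the leading term of $q^n\EV_U[\sum_x\pnoisy(x)^2]$ must cancel.

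For the second step, $\EV_U[\sum_x\pnoisy(x)^2]$ is an order-two quantity in the Haar-random gates, so averaging each two-qudit gate via the Weingarten calculus collapses it to a sum over assignments of an element of $S_2$---written $I$ and $\mathrm{SWAP}$---to every qudit in every layer, equivalently the expectation of a stochastic process on subsets of $[n]$ (the qudits carrying $\mathrm{SWAP}$) that evolves layer by layer, exactly as in Ref.~\cite{dalzell2020anticoncentration}. Since the input is a product state and the measurement is in a product basis, the boundary conditions are free, and there are two ``vacuum'' trajectories, all-$I$ and all-$\mathrm{SWAP}$, each contributing $q^{-n}$ in the noiseless case. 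The one new ingredient is the noise. Because $\mathcal N$ is unital, $\mathcal N^{\otimes 2}$ fixes the all-$I$ vacuum, so that trajectory still contributes exactly $q^{-n}$, cancelling the $-1$ above; on a $\mathrm{SWAP}$ site, a short two-copy computation gives $\mathcal N^{\otimes 2}(\mathrm{SWAP}) = (v/q)\,I^{\otimes 2} + u\,\mathrm{SWAP}$, so the all-$\mathrm{SWAP}$ vacuum acquires a local weight on each of the $2s$ noise channels. Summing these weights---while tracking how the leaked $(v/q)\,I^{\otimes 2}$ part is reprocessed by subsequent gates---produces the suppression factor $\exp(-2sv(1-q^{-2}))$, up to a multiplicative $\exp(O(sv^2))$ error. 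This is the analogue, for noise acting on both replica pairs, of the $\exp(-2sr(1+q^{-1}))$ factor in \autoref{thm:fidelitydecay}, where the relevant cross-collision quantity has noise on only one replica pair and hence a rate proportional to $r$ rather than $v$.

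It then remains to bound every trajectory other than the two vacua. Any such trajectory contains a domain wall separating an $I$-region from a $\mathrm{SWAP}$-region, and the essential input---available for the 1D and complete-graph architectures from the anti-concentration analysis of Ref.~\cite{dalzell2020anticoncentration}---is that each domain wall is suppressed exponentially in the circuit depth $\Theta(s/n)$, which produces the $e^{O(\log n)-\Omega(s/n)}$ term in $Q_2$. The noise perturbs the transfer matrix of the process, but for $v \le c/n$ the all-$I$ and all-$\mathrm{SWAP}$ directions remain near-invariant, and bounding this perturbation---in particular the $O(nv)$-per-layer rate at which noise injects $I$-bubbles into the $\mathrm{SWAP}$ sea during the $O(\log n)$ layers before anti-concentration is reached---contributes the remaining $O(vn\log n)$ and $O(nv\log(1/(nv)))$ terms of $Q_2$. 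Combining the vacuum contribution, the excited-trajectory bound, and the displayed Cauchy--Schwarz inequality then gives the theorem.

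I expect the last step to be the main obstacle: one has to show that the noise-induced leakage out of the all-$\mathrm{SWAP}$ vacuum and the domain-wall excitations do not combine to give corrections larger than those collected in $Q_2$. Concretely, an $I$-bubble created by a noise channel in the $\mathrm{SWAP}$ sea either heals or grows until it takes over the circuit, and one must sum over all such histories and verify that the resulting perturbation series converges. The hypothesis $v \le c/n$ is precisely what keeps the total leakage rate below the ``restoring'' rate supplied by the circuit's mixing, so this estimate is where that condition gets used.
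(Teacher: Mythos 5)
Your reduction step is exactly the one the paper uses: the bound $\tfrac12\lVert \pnoisy-\punif\rVert_1 \le \tfrac12\sqrt{q^n\sum_x \pnoisy(x)^2-1}$ together with Jensen reduces the theorem to an upper bound on $Z_2-1=\mathcal{Z}_{1-u}-1$, and your identification of the decay rate $\exp(-2sv(1-q^{-2}))$ (from the $S\to I$ leakage probability $1-u=v$ per noise channel, via $N_2[S/q]=(1-u)I/q^2+uS/q$) and of the qualitative origin of each term in $Q_2$ is consistent with the paper. However, the part you defer as ``the main obstacle'' is not a technical afterthought—it \emph{is} the theorem. The statement you still need, namely $\mathcal{Z}_v-1\le \frac{q^n-1}{q^n+1}(1-f'_v)^s e^{K'_v}$ with $K'_v=O(sv^2)+O(vs_{AC})+e^{O(s_{AC}/n)-\Omega(s/n)}+O(nv\log(1/nv))$, is the content of \autoref{lem:mainlemmaZbound} and \autoref{lem:mainlemmaZboundCG}, and your proposal gives no argument for it beyond the expectation that a ``perturbation series'' over noise-created $I$-bubbles in the $\mathrm{SWAP}$ sea converges when $v\le c/n$. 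In particular, the danger is not only that a bubble ``grows until it takes over the circuit''; the quantitatively delicate effect is that once the $S$-destined mass is spread over configurations with some $I$ assignments, subsequent noise channels flip $S\to I$ \emph{less} often than rate $v$ per location, so the surviving weight of the metastable $S^n$ sector could exceed $(1-v(1-q^{-2}))^{2s}$ by a factor $e^{\Theta(sv)}$ rather than $e^{O(sv^2)}$—which would destroy the claimed $Q_2$—unless one proves that the conditioned ($S$-destined) mass stays exponentially clustered within $O(1)$ bit flips of $S^n$.

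The paper's route to this control is not a bubble/domain-wall expansion but a coupling argument: it runs a noiseless copy $X$ and a noisy copy $Y$ of the walk jointly (the matrices $R^{(t)}_\sigma$), splits the probability mass into $I$-destined and $S$-destined parts via $L_I,L_S$, decomposes the $I$-destined mass by the time step at which noise redirected it ($\ket{v_{SI}^{(t',t)}}$), and proves the exponential clustering statement (\autoref{lem:expclustering}) by combining the per-location redirection bound (\autoref{lem:11vSI}), the slow decay of $S$-destined mass (\autoref{lem:decayOfvSS}), and the regularly-connected convergence rate (\autoref{lem:ACconvergenceUnderPI}); the clustering is then converted into the upper bound on $\mathcal{Z}_\sigma$ (\autoref{lem:Zsigmaupperlowerbound}) and into the per-layer recursion for the $S$-destined mass (\autoref{lem:boundSdestinedmassLAYERS}, \autoref{lem:boundSdestinedmassCG}), which is where $v\le c/n$, the anti-concentration size $s_{AC}=\Theta(n\log n)$, and the $O(nv\log(1/nv))$ term actually enter. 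Until you either carry out your bubble-history resummation with explicit convergence estimates or reproduce an argument of this kind, the proposal establishes only the (easy) reduction to $Z_2-1$ and the leading-order heuristic, not the theorem.
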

Note that $Q_2$ is small under a similar three conditions as in the fidelity decay result: (1) $s(1-u)^2 \ll 1$, (2) anti-concentration has been reached, and (3) $n\log(n)(1-u) \ll 1$. 

For the depolarizing channel, $u = 1-2\epsilon(1-q^{-2})^{-1}$ up to first order in $\epsilon$, so the distance to uniform decays like $e^{-2s\epsilon}$, which is identical to the rate of fidelity decay.  On the other hand, the unitarity of the rotation channel is $u=1$, so our upper bound does not decay with $s$, even though $\bar{F}$ does decay for the rotation channel. This is expected because the rotation channel is coherent; indeed, unlike the other two examples, it sends pure states to pure states. The ideal pure state and the noisy pure state will become less and less correlated as more noise channels act, which explains why $\bar{F}$ decays, but the output distribution for the noisy pure state will not converge to uniform. 

\subsection{Distance to white-noise distribution}

We show a stronger statement that is meaningful when the noise is incoherent. Not only does the output distribution decay to uniform, it does so in a very particular way, preserving an uncorrupted signal from the ideal distribution. We show that $\pnoisy$ is close to $\pwhitenoise$ by upper bounding the expected total variation distance between the two distributions. 

\begin{theorem}\label{thm:whitenoisebound}
    Consider either the complete-graph architecture or the 1D architecture with periodic boundary conditions on $n$ qudits of local Hilbert space dimension $q$ and $s$ gates.  Let $r$ be the average infidelity and $u$ the unitarity of the local noise channels (and define $v=1-u$). Let
    \begin{equation}\label{eq:delta}
        \delta=2r(1+q^{-1})-(1-u)(1-q^{-2}) \,.
    \end{equation}
    Then, when we choose $F = \bar{F}$ as in Eq.~\eqref{eq:defAveOutFidelity}, there exist constants $c_1$, $c_2$, and $n_0$ such that as long as $v \leq c_1/n$, $r \leq c_2/n$, and $n \geq n_0$,
    \begin{align}
        \EV_U \left[\frac{1}{2}\lVert \pnoisy - \pwhitenoise \rVert_1 \right]
        \leq{}& \bar{F}\sqrt{s}\left(\sqrt{\delta}+O(v)+O(r)\right) + O\big(\bar{F}\sqrt{vn\log(n)}\big) \nonumber\\
        &\qquad +O\big(\bar{F}\sqrt{nv\log(1/nv)}\big)+ \bar{F}e^{O(\log(n))-\Omega(s/n)}\,, \label{eq:WNdistance}
    \end{align}
    whenever the right-hand side of Eq.~\eqref{eq:WNdistance} is less than $\bar{F}$.  
\end{theorem}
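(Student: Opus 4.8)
The plan is to replace the $\ell_1$ distance by an $\ell_2$ distance, reduce the latter to three second-moment ``collision-type'' quantities that the stochastic-process machinery of this paper already evaluates, and then exploit the fact that $F=\bar F$ is precisely the value at which the leading $\Theta(q^{-n})$-scale contributions cancel.

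First I would apply Cauchy--Schwarz with the uniform weight, followed by Jensen's inequality (concavity of $\sqrt{\,\cdot\,}$):
\begin{equation*}
  \EV_U\Big[\tfrac12\lVert\pnoisy-\pwhitenoise\rVert_1\Big] \;\le\; \tfrac12\sqrt{\,q^n\,\EV_U\big[\lVert\pnoisy-\pwhitenoise\rVert_2^2\big]\,}\,.
\end{equation*}
Expanding the square with $\pwhitenoise=F\pideal+(1-F)q^{-n}$ and $\sum_x\pnoisy(x)=\sum_x\pideal(x)=1$, the quantity inside is a fixed combination of $Z_{\mathrm{nn}}:=\EV_U[\sum_x\pnoisy(x)^2]$, $Z_{\mathrm{ni}}:=\EV_U[\sum_x\pnoisy(x)\pideal(x)]$, $Z_{\mathrm{ii}}:=\EV_U[\sum_x\pideal(x)^2]$, and explicit powers of $q^{-n}$, each of which is a second-moment (degree-$(2,2)$) quantity in the Haar gates. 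Crucially, $\EV_U[\lVert\pnoisy-\pwhitenoise\rVert_2^2]$ is a quadratic in $F$ whose minimizer is exactly $F=(q^nZ_{\mathrm{ni}}-1)/(q^nZ_{\mathrm{ii}}-1)=\bar F$ from Eq.~\eqref{eq:defAveOutFidelity}, and there
\begin{equation*}
  q^n\,\EV_U\big[\lVert\pnoisy-\pwhitenoise\rVert_2^2\big]\Big|_{F=\bar F} \;=\; \big(q^nZ_{\mathrm{nn}}-1\big)\;-\;\bar F^{\,2}\big(q^nZ_{\mathrm{ii}}-1\big)\,.
\end{equation*}
At zero noise the two terms on the right coincide ($Z_{\mathrm{nn}}=Z_{\mathrm{ni}}=Z_{\mathrm{ii}}$, $\bar F=1$), so this difference is morally $\sqrt s$-smaller than either term alone: it measures only how fast the noisy process has \emph{separated} from the ideal one, which is why one can hope for an $\epsilon\sqrt s$ rather than an $\epsilon s$ bound.

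Next I would evaluate $Z_{\mathrm{nn}},Z_{\mathrm{ni}},Z_{\mathrm{ii}}$ via the map to stochastic processes / statistical-mechanics partition functions, reusing the estimates behind \autoref{thm:fidelitydecay} and \autoref{thm:convergenceToUniform}. Here $q^nZ_{\mathrm{ii}}-1$ is the anti-concentration quantity of Ref.~\cite{dalzell2020anticoncentration}, noise-independent and within a constant of $1$ once $s\ge\Omega(n\log n)$; $q^nZ_{\mathrm{ni}}-1=\bar F\,(q^nZ_{\mathrm{ii}}-1)$ by definition of $\bar F$, with $\bar F=e^{-2sr(1+q^{-1})\,\pm\,\cdots}$ by \autoref{thm:fidelitydecay}; and $q^nZ_{\mathrm{nn}}-1=q^n\,\EV_U[\lVert\pnoisy-\punif\rVert_2^2]$ is precisely what the proof of \autoref{thm:convergenceToUniform} bounds, decaying like $e^{-2s(1-u)(1-q^{-2})}$ up to $Q_2$-type corrections. (In the two-copy picture, a noise channel on a qudit carrying the swap label $S$ acts, after re-averaging the adjacent gate, as $S\mapsto u\,S+\tfrac{1-u}{q}I$, which generates the $(1-u)(1-q^{-2})$ decay rate in the doubly-noisy process and, analogously, the $2r(1+q^{-1})$ rate in the singly-noisy one.) Plugging these in, the exponent gap between $q^nZ_{\mathrm{nn}}-1$ and $\bar F^{\,2}(q^nZ_{\mathrm{ii}}-1)$ is exactly $4r(1+q^{-1})-2(1-u)(1-q^{-2})=2\delta$, with $\delta$ as in Eq.~\eqref{eq:delta}, so that
\begin{equation*}
  q^n\,\EV_U\big[\lVert\pnoisy-\pwhitenoise\rVert_2^2\big]\Big|_{F=\bar F}\;\le\;\bar F^{\,2}\big(e^{2s\delta}Q'-1\big)
\end{equation*}
for a $Q'$ of the same shape as $Q_1Q_2$. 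Since the right side is $\le\bar F^{\,2}\big(2s\delta+O(sr^2)+O(sv^2)+O(vn\log n)+O(nv\log(1/nv))+e^{O(\log n)-\Omega(s/n)}\big)$ whenever that bracket is $O(1)$, taking square roots, using $\sqrt{a+b}\le\sqrt a+\sqrt b$, and combining with the first display yields Eq.~\eqref{eq:WNdistance}. Non-negativity of the left side forces $\delta\ge0$, so $\sqrt\delta$ is well defined, and the bracket being $O(1)$ is exactly the hypothesis that the right-hand side of Eq.~\eqref{eq:WNdistance} is less than $\bar F$.

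I expect the main obstacle to be that last step: evaluating $Z_{\mathrm{nn}},Z_{\mathrm{ni}},Z_{\mathrm{ii}}$ with enough precision that the leading $\Theta(q^{-n})$ pieces of $q^nZ_{\mathrm{nn}}-1$ and $\bar F^{\,2}(q^nZ_{\mathrm{ii}}-1)$ cancel, leaving an honest $O(s\delta\,\bar F^{\,2})$ remainder rather than $O(\bar F^{\,2})$. This forces one to track, simultaneously across all three partition functions, (i) the pre-anti-concentration transient (source of the $\sqrt{vn\log n}$ and $e^{O(\log n)-\Omega(s/n)}$ terms), (ii) the multiplicative corrections to the pure-exponential decay of each partition function (source of the $O(r)+O(v)$ per-gate corrections living inside the $\sqrt s$), and (iii) the competition between the uniform ``floor'' $q^{-n}$ and the decaying ``droplet'' contributions; in effect one must redo the analyses behind \autoref{thm:fidelitydecay} and \autoref{thm:convergenceToUniform} in parallel and subtract them without the cancellation becoming lossy.
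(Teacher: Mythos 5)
Your proposal is correct and follows essentially the same route as the paper's own proof: the $1$-norm-to-$2$-norm bound plus Jensen, expansion into the second-moment quantities $Z_0,Z_1,Z_2$, minimization at $F=\bar F$ giving $(Z_2-1)-\bar F^2(Z_0-1)$, and then the matching upper/lower partition-function bounds (the machinery behind \autoref{thm:fidelitydecay} and \autoref{thm:convergenceToUniform}, i.e.\ \autoref{lem:mainlemmaZbound}/\autoref{lem:mainlemmaZboundCG}) so that the exponents cancel to $e^{2s\delta}$ and linearization yields Eq.~\eqref{eq:WNdistance}. The only nit is a factor-of-two bookkeeping slip after applying $e^{x}-1\le 2x$ (the leading term should read $4s\delta$, which still gives exactly the stated $\bar F\sqrt{s}\sqrt{\delta}$).
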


We make a couple of comments. First, we emphasize how small the right-hand side of Eq.~\eqref{eq:WNdistance} is. The quantity $\bar{F}$ is decaying exponentially in the number of expected errors, as shown in \autoref{thm:fidelitydecay}. We showed in \autoref{thm:convergenceToUniform} that $\pnoisy$ converges to uniform at roughly the same rate. However, the distance between $\pnoisy$ and $\pwhitenoise$ is much smaller than $\bar{F}$ if the parameters are sufficiently weak, demonstrating that the noisy and white-noise distribution are much closer to each other than either are to uniform. 

Second, let us examine the quantity $\delta$. For the depolarizing channel and the dephasing channel, the leading term in $\delta$ cancels out leaving $\delta = O(\epsilon^2)$, so the $\sqrt{\delta}$ term in Eq.~\eqref{eq:WNdistance} is on the same order as the other terms. This is a signature of incoherent noise. The coherent rotation channel, which has $u=1$ and $r = O(\theta^2)$, has $\delta = O(\theta^2)$, so $\sqrt{\delta}$ is large compared to the other terms in the expression. In this case, we would need $sr \ll 1$ for the approximation to be good, but if this is true, then $\bar{F} \approx 1$ and the white-noise approximation is trivial.

Relatedly, the parameter $\delta$ can be connected to the diamond distance $D$ of the channel $\mathcal{N}$, which is the maximum amount action by $\mathcal{N}$ can change an input state (which might be entangled with an auxiliary system) as measured by the trace norm. If $\mathcal{N}$ is applied $2s$ times, the total deviation in trace norm from the ideal output can be as large as $2sD$ in the worst case. It was shown in Ref.~\cite{Kueng2016Comparing} that $D = O(\sqrt{\delta})$, specifically
\begin{equation}
    \frac{1}{2}\sqrt{\delta}\leq D \leq \frac{q^2}{2}\sqrt{\delta}\,.
\end{equation}
It is also known that $r \leq O(D)$ and $1-u \leq O(D)$. Thus, if we ignore the final three terms in Eq.~\eqref{eq:WNdistance}, we can write our result as
\begin{equation}
    \EV_U\left[\frac{1}{2}\lVert \pnoisy - \pwhitenoise \rVert_1\right] \leq O(F D\sqrt{s})\,.
\end{equation}
This emphasizes that the fundamental result is an improved trade-off between noise and circuit size; the strength of the signal decays exponentially, but the error on the signal (after renormalization) grows quadratically slower (as $O(D\sqrt{s})$) in the case of random quantum circuits with incoherent noise than it does in the worst case (as $O(Ds)$, for arbitrary circuits and arbitrary noise channels with diamond distance $D$). 

\section{Related work and implications}\label{sec:whitenoiseimplications}

\subsection{Quantum computational supremacy}

A central motivation for our work has been recent quantum computational supremacy experiments \cite{Arute2019GoogleQuantumSupremacy, USTC2021StrongQCompAdv} that sampled from the output of noisy random quantum circuits on superconducting devices. In this context, the main claim is that no classical computer could have performed the same feat in any reasonable amount of time. While no efficient classical algorithms to simulate the quantum device performing this task are known, there is a lack of concrete theoretical evidence that no such algorithm exists.

Our work bolsters the theory behind these experiments in two ways, assuming noise in the device is sufficiently well described by our local noise model. First, our fidelity decay result validates using the linear cross-entropy metric to benchmark the overall noise rate in the device, and quantify the amount of signal from the ideal computation that survives the noise. Second, convergence to the white-noise distribution has theoretical benefits with respect to a potential proof that the random circuit sampling task accomplished by the device is actually hard for classical computers. 

\subsubsection{Linear cross-entropy benchmarking}

Quantum computational supremacy experiments are complicated by the fact that since (by definition) they cannot be replicated on a classical computer, it is non-trivial to classically verify that they actually performed the correct computational task. A partial solution to this issue has been the proposal of linear cross-entropy benchmarking, whereby a sample $x$ is generated by the device according to the noisy output distribution $\pnoisy$, and a classical supercomputer is used to compute $\pideal(x)$.\footnote{This requires exponential time but can be tractable for circuit sizes up to $n=50$ or so (in the case of a 2D architecture, the computational cost also depends on the depth of the circuit).} When $T$ samples $\{x_1,\ldots,x_T\}$ are chosen, the average
\begin{equation}
    \mathcal{F} = \frac{1}{T}\sum_{i=1}^T (q^n\pideal(x) -1)
\end{equation}
is calculated, which is an empirical measure of the circuit fidelity. We can see that the expected value of $\mathcal{F}$ is precisely $\sum_x \pnoisy(x)(q^n \pideal(x) - 1)$, which is the numerator of the quantity $\bar{F}$ defined in Eq.~\eqref{eq:defAveOutFidelity}. Meanwhile, the denominator of $\bar{F}$ becomes close to 1, so long as the output is anti-concentrated. In \autoref{thm:fidelitydecay}, we show that if the depolarizing error rate $\epsilon$ satisfies $\epsilon \ll 1/(n\log(n))$ and as long as $\epsilon^2 s \ll 1$, then there are matching upper and lower bounds on the expected value of $\mathcal{F}$, which decays with the circuit size like $e^{-2\epsilon s}$. Thus, assuming our local noise model, we prove that one can infer $\epsilon$ given $\mathcal{F}$ and $s$. The inferred value of $\epsilon$ can then be compared to the noise strength estimated when testing each circuit component individually, thus providing one method of verification that the components are behaving as expected during the experiment.

Indeed, the idea of using random circuit sampling as an alternative to randomized benchmarking was formally proposed in Ref.~\cite{Liu2021BenchmarkingRCS}, a work that has certain similarities to ours. In particular, like us, they find that the condition $1/\epsilon \geq \Omega(n)$ appears necessary for controlled decay of the fidelity. (Our result can be expressed as requiring $1/\epsilon \geq \tilde{\Omega}(n)$, where the tilde hides log factors, and we believe those log factors are not necessary for our result.) They give analytical and numerical evidence that the fidelity decays as $e^{-2 \epsilon s}$.  Additionally, like us, they use a map from random quantum circuits to identity-swap configurations to motivate their results. However, they only analytically study the fidelity decay up to first order in the error rate for a 1D architecture; that is, they compute the expected fidelity due to contributions with an error at only one location (or a correlated set of locations at the same depth).  
On the other hand, their error model is more general than ours as we do not consider correlated errors (their theoretical analysis handles Pauli errors of up to weight three); in the context of noise characterization, this is important as correlated errors are often the most difficult to diagnose. On this point, we believe correlated errors could be handled by our method with a more intricate analysis, but we leave that for future work. 
Relatedly, exponential decay of fidelity in noisy systems has been proposed \cite{slagle2021testing} as an experimentally detectable signature of quantum mechanics that distinguishes it from theories where quantum mechanics emerges from an underlying classical theory. Our work may help justify these proposals. 

Note that as the fidelity decays, more samples must be generated to form a good estimate of the mean of $\mathcal{F}$. Since $\pideal(x)$ for uniformly random $x$ has standard deviation on the order of $q^{-n}$ (assuming anti-concentration), the standard deviation of $\mathcal{F}$ is expected to decay with the number of samples like $1/\sqrt{T}$. Thus, resolving the mean of $\mathcal{F}$ with enough precision to differentiate it from 0 requires $T = \Omega(1/\mathcal{F}^2)$ samples. 

We comment that while our analysis assumes that each noise location has the same value of $\epsilon$, this is not essential to our method. We expect it could be shown that the expected value of $\mathcal{F}$ decays like $\exp(-\sum_i \epsilon_i)$ where $i$ runs over all possible noise locations. Moreover, our analysis works for any kind of local noise, not just depolarizing noise; the only relevant parameter is the average infidelity of the noise channels. This includes coherent noise; for example, the average infidelity of the coherent rotation channel given in Eq.~\eqref{eq:rotation} is less than 1 and thus leads  to exponential decay of $\mathcal{F}$. This is consistent with Ref.~\cite{Liu2021BenchmarkingRCS}, which previously showed that from the perspective of fidelity decay, every channel is equivalent to an (incoherent) Pauli noise channel.

\subsubsection{Classical hardness of sampling from the noisy output distribution}

To claim to have achieved quantum computational supremacy, the low-fidelity random circuit sampling experiments in Refs.~\cite{Arute2019GoogleQuantumSupremacy,USTC2021StrongQCompAdv} must define a concrete computational problem that their device solved, but a classical device could not also solve. Here there are a couple of options. One option is to simply rely directly on the linear cross-entropy benchmark and define the task to be generating a set of samples that scores at least
$\mathcal{F} \geq 1/\poly(n)$. A related idea is the task of Heavy Output Generation (HOG) \cite{AaronsonChen2017ComplexityTheoreticFoundations}, which is to generate outputs $x$ for which $\pideal(x)$ is large (i.e.~``heavy outputs'') significantly more often than a uniform generator. The upshot of these definitions is that in the regime where $\pideal(x)$ can be calculated classically with an exponential-time algorithm, it can be verified that the quantum device successfully performed the task. Their main drawback is that it is not clear whether running a (noisy) quantum computation is the only way to perform these tasks. Perhaps a (yet-to-be-discovered) classical algorithm can score well on the linear cross-entropy benchmark without performing an actual random circuit simulation; for example, this was the goal in Ref.~\cite{Barak2020Spoofing}.

Another option is to define the task specifically in terms of the white-noise distribution. Namely, one must produce samples from a distribution $\pnoisy$ for which $\frac{1}{2}\lVert \pnoisy - \pwhitenoise \rVert_1 \leq \eta F$ for some choice of $F$ not too small (ideally at least inverse polynomial\footnote{Inverse polynomial fidelity could be achieved while the white-noise assumption holds if, for example, the physical error rate decreases as $\Theta(1/n)$ and the circuit size grows as $\Theta(n\log(n))$ (corresponding to logarithmic depth). Deeper circuits would lead to exponentially small fidelity, although note that $\Theta(n\log(n))$ gates are sufficient for white-noise in most architectures (including 2D) assuming an anti-concentration conjecture from Ref.~\cite{dalzell2020anticoncentration}. Even if the fidelity is exponentially small, it could be argued that a (diminished) quantum speedup can survive asymptotically, but formally connecting such tasks to standard statements in complexity theory (such as the collapse of the \textit{polynomial} hierarchy) becomes more difficult.}
in $n$) and some small constant $\eta$. We refer to this task as ``white-noise random circuit sampling (RCS).'' A downside of this option is that even with unlimited computational power, an exponential number of samples from the device would be needed to definitively verify that the distribution is close to $\pwhitenoise$ in total variation distance. Our work provides a partial solution here, as we show that a local error model allows a device to accomplish the white-noise RCS task, as long as the error rate is sufficiently weak compared to the number of qubits. Thus, if the experimenters are sufficiently confident in the error model that describes their device, they can rely on our work to be confident they are performing the white-noise RCS task. 

The major upside of the white-noise RCS task is that one can give stronger evidence that it is classically hard to perform. For example, in the Supplementary Material of Ref.~\cite{Arute2019GoogleQuantumSupremacy}, it was shown that \textit{exactly} (i.e.~$\eta = 0$) sampling from $\pwhitenoise$ (a task they called ``unbiased noise $F$-approximate random circuit sampling'') in the \textit{worst case} is a hard computational task in the sense that an efficient classical algorithm for it would cause the collapse of the polynomial hierarchy ($\PH$), and further that its computational cost should be at most a factor of $F$ smaller than sampling exactly from $\pideal$.  In that spirit, we show in \autoref{thm:hardnessreduction}, in the appendix, that the more realistic task of sampling \textit{approximately} from $\pwhitenoise$ is essentially just as hard as sampling \textit{approximately} from $\pideal$, up to a linear factor of $F$ in the classical computational cost. This is important because some mild progress has been made toward establishing that approximately sampling from $\pideal$ is hard for the polynomial hierarchy, through a series of work that reduce the task of computing $\pideal(x)$ in the worst case to the task of computing $\pideal(x)$ in the average case up to some small error \cite{Bouland2019RCSComplexity,Movassagh2019QSandRQC,Bouland2021NoiseQuantumSupremacy,Kondo2021ImprovedRobustness}. Weaknesses in this result as evidence for hardness of approximate sampling were discussed in more detail in Refs.~\cite{Napp2019SEBD,Bouland2021NoiseQuantumSupremacy}, but it remains true that the white-noise-centered definition of the computational task is the likeliest route to a more robust version of quantum computational supremacy that can be grounded in well-studied complexity theoretic principles. 

\subsection{Convergence to uniform with circuit size}

It is widely understood that incoherent and uncorrected unital noise in quantum circuits should typically lead the output of a quantum circuit to lose all correlation with the ideal circuit and become nearly uniform. It is further asserted that the decay to uniform should scale with the circuit size; however, rigorous results have only shown a decay in total variation distance to uniform with the circuit depth $d$, following the form $e^{-\Omega(\epsilon d)}$. In particular, Ref.~\cite{Aharonov1996LimitationsNoisy} showed that any (even non-random) circuit with interspersed local depolarizing noise approaches uniform at least this quickly. Later, Ref.~\cite{GaoDuan2018NoisySimulation} showed the same is true for any Pauli noise model, at least for most circuits chosen from a particular random ensemble. However, in Ref.~\cite{Bouland2021NoiseQuantumSupremacy}, a stronger convergence at the rate of $e^{-\Omega(\epsilon s)}$ in random quantum circuits like ours was desired in order to show a barrier on further improvements of their worst-to-average-case reduction for computing entries of $\pideal$. To that end, they showed that exponential convergence in circuit size occurs in a toy model where each layer of unitary evolution enacts an exact global unitary $2$-design, and they conjectured the same is true in the local noise model we consider in this paper. Thus, our result in \autoref{thm:convergenceToUniform} gets close to providing the missing ingredient for their claim; for their application, we would need to extend our result to show $e^{-\Omega(\epsilon s)}$ even in the regime where $\epsilon = O(1)$, independent of $n$. Our result applies only for $\epsilon = O(1/n)$, but we believe the extension to $\epsilon = O(1)$ might also be provable with our method.

\subsection{Signal extraction in noisy experiments}

One implication of our work is that, in the parameter regime where our results apply, the signal from the noiseless random circuit experiment can be extracted by taking many samples. To illustrate this, suppose we are interested in some classical function $f(x)$ for $x\in [q]^n$ that takes values between $-1$ and $+1$. Choosing $x$ randomly from $\pideal$ induces a probability distribution over the resulting values of $f(x)$. To understand this distribution empirically (e.g.,~estimate its mean or variance), samples $x_i$ might be generated on a quantum device, but if the device is noisy, these samples will be drawn from $\pnoisy$ instead of $\pideal$. However, if $\pnoisy \approx \pwhitenoise$, then the sampled distribution over $f(x)$ will be a mixture of the ideal with weight $F$, and the distribution that arises from uniform choice of $x$ with weight $1-F$. Supposing the latter is well understood, inferences can be made about the former by repetition. For example, if $\sum_x \pideal(x) f(x) = \mu = O(1)$ and $\sum_x f(x)/q^n = 0$,\footnote{In a sense, the white-noise assumption is overkill for this application; a similar signal extraction could be performed even if $\pnoisy = F\pideal + (1-F)p_{\text{err}}$ for some non-uniform $p_{\text{err}}$ as long as drawing samples $x$ from $p_{\text{err}}$ lead to a mean for $f(x)$ that can be easily calculated in advance (when this is possible one can subtract a constant from $f$ and assume the mean is zero). However, the white-noise assumption certainly makes this process easier as it will typically be easy to calculate the mean of $f(x)$ under uniform choice of $x$. }
then the mean of $f$ under samples from $\pwhitenoise$ is $F\mu$. Meanwhile, the standard deviation of $f$ can be as large as $O(1)$, indicating that $O(1/F^2)$ samples from $\pwhitenoise$ are required to compute the mean $F\mu$ up to $O(F)$ precision. Generally, this procedure requires knowing the value of $F$. 

A concrete example of such a situation is the Quantum Approximate Optimization Algorithm (QAOA) \cite{Farhi2014QAOA}, where samples $x$ from the output of a parameterized quantum circuit are used to estimate the expectation of a classical cost function $C(x)$. The parameters can then be varied to optimize the expected value of the cost function. Our work is for Haar-random local quantum circuits, which are, in a sense, very different from QAOA circuits. For example, the marginal of typical random circuits on any constant number of qubits is very closed to maximally mixed, whereas QAOA circuits optimized for local cost functions will, by design, not have this property. Nevertheless, it is plausible that generic QAOA circuits might respond to local noise in a similar way as random quantum circuits. Indeed, in Refs.~\cite{Xue2021QAOANoise,Marshall2020LocalNoiseQAOA,Wang2021NoiseBarrenPlateau}, numerical and analytic evidence was given for the conclusion that the expectation value of the cost function and its gradient with respect to the circuit parameters decay toward zero when local noise is inserted into a QAOA circuit. This behavior would be consistent with a stronger conclusion that the output is well described by $\pwhitenoise$.

\section{Summary of method and intuition}\label{sec:whitenoisemethodandintuition}

In this section, we present a heuristic argument about why the technical statements above should hold. Then we give an overview of how we actually show it using our method, which analyzes certain Markov processes derived from the quantum circuits, extending our previous work in Ref.~\cite{dalzell2020anticoncentration}.

\subsection{Intuition behind error scrambling and error in white-noise approximation}

Our result that $\pnoisy$ is very close to $\pwhitenoise$ requires three conditions to be satisfied: (1) $\epsilon^2 s \ll 1$; (2) anti-concentration has been achieved, i.e.~$s \geq \Omega(n\log(n))$; and (3) $\epsilon n\log(n) \ll 1$.
Here, we try to motivate why these conditions should be sufficient and speculate about whether they are also necessary. In particular, we believe condition (3) can be significantly relaxed. 

For simplicity, lets restrict to qubits ($q=2$). Let $U$ denote the unitary enacted by the noiseless quantum circuit instance, so the ideal output state is the pure state $\rho_{\text{ideal}} = U\ketbra{0^n}U^{\dagger}$. If a location somewhere in the middle of the circuit experiences a Pauli error, then we could write the output state as $U_2 P U_1 \ketbra{0^n}U_1^{\dagger} P^\dagger U_2^{\dagger}$, where $P$ is a Pauli operator with support on only one qubit, and $U=U_2U_1$ is a decomposition of the unitary into gates that act before and after the error location. If we like, we can commute $P$ to act at the end of the circuit, giving $O_PU\ketbra{0^n}U^{\dagger}O_P^{\dagger}$ where $O_P = U_2 P U_2^\dagger$. Unlike $P$, the operator $O_P$ will likely have support over many qubits. Indeed, this is what we mean by scrambling; the portion of the circuit acting after the error location scrambles the local noise $P$ into more global noise $O_P$. We can handle error patterns $E$ with multiple Pauli errors similarly, by commuting each to the end one at a time and forming an associated global noise operator $O_E$.  

Next, we expand the output quantum state $\rho_{\text{noisy}}$ of the noisy circuit as a sum over all possible Pauli error patterns, weighted by the probability that each pattern occurs. Assuming the local noise is depolarizing, the probability of a pattern $E$ depends only on the number of non-identity Pauli operators in the error pattern, denoted by $|E|$.
\begin{equation}
    \rho_{\text{noisy}} = \sum_E \left(\frac{\epsilon}{3}\right)^{|E|}(1-\epsilon)^{2s-|E|} O_E \rho_{\text{ideal}} O_E^{\dagger} \,.
\end{equation}
The classical probability distribution $\pnoisy$ is then given by $\pnoisy(x) = \bra x \rho_{\text{noisy}} \ket x$ for each measurement outcome $x$. Observe that for the error pattern with $|E| = 0$ (no errors), we have $\rho_E = \rho_{\text{ideal}}$. There can be other error patterns for which $O_E\rho_{\text{ideal}}O_E^{\dagger} = \rho_{\text{ideal}}$; for example, when a lone Pauli-$Z$ error acts prior to any non-trivial gates, the state is unchanged since the initial state $\ket{0^n}$ is an eigenstate of all the Pauli-$Z$ operators. However, these error patterns are rare and for the sake of intuition we ignore this possibility. In essence, the white-noise assumption is the claim that when we take the mixture over output states for all of the error patterns, we arrive at a state $\rho_{\text{err}}$ that produces measurement outcomes that are very close to uniform. (Note that in general $\rho_{\text{err}}$ need not be close to maximally mixed to yield uniformly random measurement outcomes.) Letting $F = (1-\epsilon)^{2s}$, we may write
\begin{align}
    \rho_{\text{noisy}} &= F \rho_{\text{ideal}} + F \sum_{E: |E| > 0} \left(\frac{\epsilon/3}{1-\epsilon}\right)^{|E|} O_E \rho_{\text{ideal}} O_E^{\dagger} \\
    &= F \rho_{\text{ideal}} + (1-F) \frac{I}{2^n} + F \sum_{E: |E| > 0} \left(\frac{\epsilon/3}{1-\epsilon}\right)^{|E|} \left(O_E \rho_{\text{ideal}} O_E^{\dagger}- \frac{I}{2^n}\right)\,,
\end{align}
where $I/2^n$ denotes the maximally mixed state. This final term gives the deviations of the noisy output state $\rho_{\text{noisy}}$ from a linear combination of the ideal state and $I/2^n$. 

This allows us to state more clearly the intuition for our result. Since the circuit is randomly chosen and scrambles the local error patterns, the operators $O_E$ generally have large support and are essentially uncorrelated for different choices of error pattern $E$. Suppose we measure in the computational basis, and examine the probability of obtaining the outcome $x$. 
We can calculate the squared deviation between this value and the white-noise value under expectation over instance $U$. 
\begin{align}
    \EV_U[(\pnoisy(x) - \pwhitenoise(x))^2]
    ={}&\EV_U\left[\left(\bra x \rho_{\text{noisy}} \ket x - (F \bra x \rho_{\text{ideal}} \ket x + (1-F) 2^{-n} )\right)^2\right] \\
    ={}& F^2 \sum_{\substack{E,E'\\|E|,|E'| > 0}} \left(\frac{\epsilon/3}{1-\epsilon}\right)^{|E|+|E'|}\EV_U\left[\left(p_E(x)- 2^{-n}\right)\left(p_{E'}(x)- 2^{-n}\right) \right] \,,
\end{align}
where $p_{E}(x) = \bra{x}O_E \rho_{\text{ideal}} O_E^{\dagger} \ket{x}$.
Suppose we now make the approximation that the quantities $p_E(x)$ and $p_{E'}(x)$, when considered as functions of the random instance $U$, are independently distributed unless $E = E'$. Their mean is $2^{-n}$ and, assuming anti-concentration (condition (2)), their standard deviation is $O(2^{-n})$. Then we have
\begin{align}
      \EV_U[(\pnoisy(x) - \pwhitenoise(x))^2]
    \approx{}& F^2\sum_{E: |E| > 0} \left(\frac{\epsilon/3}{1-\epsilon}\right)^{2|E|}\EV_U \left[\left(p_E(x)- 2^{-n}\right)^2 \right]
    ={} F^2\sum_{E: |E| > 0} \left(\frac{\epsilon/3}{1-\epsilon}\right)^{2|E|}O(2^{-2n}) \\
    ={}& F^2\cdot O(2^{-2n}) \cdot\left((1+O(\epsilon^2))^{2s}-1\right) \\
    \approx{}& O(F^2 2^{-2n}\epsilon^2 s)\,
\end{align}
where the last line is true when $\epsilon^2 s \ll 1$. This implies that the deviation of each entry in the probability distribution $\pnoisy$ from the white-noise distribution is on the order of $F2^{-n}\epsilon \sqrt{s}$, and since there are $2^n$ entries, we have
\begin{equation}
    \EV_U\left[\frac{1}{2}\lVert \pwhitenoise-\pnoisy \rVert_1 \right] \approx O(F \epsilon \sqrt{s})\,.
\end{equation}
In other words, the total variation distance is much smaller than $F$ when $\epsilon^2 s \ll 1$, giving an intuitive reason for condition (1).  Moreover, without condition (2), the contribution of each term would be much larger than $O(2^{-2n})$, which illustrates why condition (2) is necessary. 

The key step in this analysis was the assumption of independence between $p_E$ and $p_{E'}$ when $E \neq E'$. This is only approximately true; indeed for a circuit that does not scramble errors, this will be a bad approximation because it might be common to have different error patterns $E$, $E'$ that produce the same (or approximately the same) effective error $O_E = O_{E'}$. However, for random quantum circuits, this outcome is unlikely for the vast majority of error pairs. Our rigorous proof, later, might be regarded as a justification of this intuition above.

Condition (3) is more subtle to motivate. In our analysis we require $\epsilon \ll 1/(n\log(n))$ so that the chance an error occurs while the circuit is still anti-concentrating (which takes $\Omega(n\log(n))$ gates) is small. This is helpful in the analysis because it allows us to essentially ignore the possibility that an error $P$ occurs near the beginning or end of the circuit, where there is insufficient time to scramble the error (either forward or backward in time). However, a finer-grained analysis might be able to handle these kinds of errors:  we believe condition (3) can be improved from $\epsilon^{-1} \gg \Omega(n\log(n)) = \tilde{\Omega}(n)$ to simply $\epsilon^{-1} \geq n/c$ for some constant $c$ that depends only on the architecture (1D vs.~complete-graph etc.). However, we do not believe that improvement beyond this point would be possible; there is a fundamental barrier that requires $\epsilon$ to scale as $O(1/n)$. 

The reason for this is essentially that if the white-noise approximation is to hold, the errors need to be scrambled at least as fast as they appear. 
The fidelity $F$ decreases like $(1-\epsilon)^{2s} = \exp(-2s\epsilon - O(s\epsilon^2))$, so each layer of $O(n)$ gates causes a decrease by a factor $\exp(-O(n\epsilon))$. Recall that we demand that the total variation distance between $\pnoisy$ and $\pwhitenoise$ be much smaller than $F$, so as $F$ decreases, this condition becomes increasingly stringent. Meanwhile, scrambling is fundamentally happening at the rate of increasing circuit depth, not size. One way to see this is simply that local Pauli errors $P$ that appear at a certain circuit location are expected to be scrambled into larger operators that grow ballistically with the depth \cite{NahumVijayHaah2018OperatorSpreading,VonKeyserlingk2018OperatorHydrodynamics}; each layer of $O(n)$ gates yields a constant amount of operator growth. Another way to see this is to consider a pair of error patterns $E$ and $E'$, where $E$ consists of a single Pauli error on qudit $j$ at layer $d$ and $E'$ consists of a single Pauli error on qudit $j$ at layer $d + \Delta$. The correlation between $p_E(x)$ and $p_{E'}(x)$, as a function of the random instance $U$, which is roughly speaking the chance that the random circuit transforms the first error into something resembling the second error, will decay exponentially with $\Delta$, the separation in depth between the two errors.\footnote{This is particularly clear if the random circuits are Clifford circuits (for which our results also apply since random Clifford gates form an exact 2-design). Clifford circuits transform the error $E$ at layer $d$ more or less uniformly at random into one of the roughly $4^\Delta$ possible Pauli operators at layer $d+\Delta$. The probability that this operator is $E'$ is exponentially small in $\Delta$.} Yet a third way to see this fact is to notice that, after a circuit has initially reached anti-concentration, convergence of the collision probability $Z=\EV_U[\sum_x \pideal(x)^2]$ to its limiting value $Z_H$ occurs like $Z = Z_H + O(Z_H)\exp(-O(s/n))$ \cite{dalzell2020anticoncentration}. Each additional layer of $O(n)$ gates only decreases the deviation of $Z$ from $Z_H$ by a constant factor. The terms $\EV_U[(p_E - 2^{-n})(p_{E'}-2^{-n})]$ for $E \neq E'$ that were ignored above are expected to obey a similar kind of decay to the value 0 for most choices of $(E,E')$, but if $F$ is decaying too fast, we are not able to neglect these terms. Each layer of $O(n)$ gates must incur at most a constant-factor decay in fidelity to not exceed the rate of scrambling; equivalently, $n\epsilon < c$ must hold for some constant $c$. 

\subsection{Noisy random quantum circuits as a stochastic process}\label{sec:overviewnoisywalk}

Our method is a manifestation of the ``stat mech method'' for random quantum circuits, developed in Refs.~\cite{Hayden2016HolographicDualityTN,NahumVijayHaah2018OperatorSpreading,VonKeyserlingk2018OperatorHydrodynamics,ZhouNahum2019EmergentStatMech} and further utilized in Refs.~\cite{Hunter-Jones2019StatMechDesign,BertiniPiroli2020ScramblingExactResults,JianYouVasseur2020MeasurementInduced,BaoChoiAltman2020TheoryPhaseTransition,Napp2019SEBD,dalzell2020anticoncentration,Liu2021BenchmarkingRCS,LiFisher2020StatMechQEC,Gullans2020QuantumCodingLowDepth,Barak2020Spoofing}, whereby averages over $k$ copies of random quantum circuits are mapped to partition functions of classical statistical mechanical systems. The mapping for $k=2$, corresponding to second-moment quantities, is particularly simple and amenable to analysis \cite{ZhouNahum2019EmergentStatMech,Hunter-Jones2019StatMechDesign,Napp2019SEBD,dalzell2020anticoncentration}. 

In Ref.~\cite{dalzell2020anticoncentration}, we analyzed the collision probability $Z = \EV_U[\sum_x \pideal(x)^2]$, a second-moment quantity, using the stat mech method, although we found it more useful to interpret the result as the expectation value of a certain stochastic process, rather than as a partition function. As we will see, this work is essentially an extension of the analysis in Ref.~\cite{dalzell2020anticoncentration} to account for the action of the single-qudit noise channels $\mathcal{N}$ that act after two-qudit gates. We explain the steps in this analysis below, and leave the formal proofs for the appendices. 

\paragraph{Expressing the total variation distance in terms of second-moment quantities} 

To apply this method, the first step is to express $\frac{1}{2}\lVert \pnoisy-\pwhitenoise \rVert_1$ in terms of
second-moment quantities. To do so, we use the general 1-norm to 2-norm bound: when $p_1$ and $p_2$ are vectors on a $q^n$-dimensional vector space, then
\begin{equation}\label{eq:1normto2norm}
    \lVert p_1-p_2 \rVert_1 \leq q^{n/2} \lVert p_1-p_2 \rVert_2 \,,
\end{equation}
where $\lVert p_1-p_2 \rVert_2 = \sqrt{\sum_x (p_1(x)-p_2(x))^2}$. Applying this identity with $p_1 = \pwhitenoise$ and $p_2 = \pnoisy$ and invoking Jensen's inequality for the concave function $\sqrt{\cdot}$, we find
\begin{equation}
    \EV_U\left[ \frac{1}{2}\lVert \pwhitenoise-\pnoisy \rVert_1\right] \leq q^{n/2} \EV_U\left[\frac{1}{2}\lVert \pwhitenoise-\pnoisy\rVert_2\right] \leq  \frac{1}{2}\sqrt{q^n\EV_U\left[\lVert \pwhitenoise-\pnoisy \rVert^2_2\right]}\,.
\end{equation}
Now we can expand
\begin{align}
   q^n\EV_U \left[\lVert \pwhitenoise - \pnoisy \rVert_2^2 \right] &=  q^n\EV_U\left[\sum_x\left( \left(F \pideal(x) + (1-F)q^{-n}\right)-\pnoisy(x)\right)^2 \right] \\
   &= (Z_2-1)-2F (Z_1-1) + F^2 (Z_0-1) \,, \label{eq:Z2Z1Z01-F^2}
\end{align}
where 
\begin{align}
    Z_0 &= q^{n}\EV_U\left[\sum_x\pideal(x)^2\right] = q^{2n}\EV_U\left[\pideal(0^n)^2\right]  \label{eq:Z0}\\
    Z_1 &= q^{n}\EV_U\left[\sum_x\pnoisy(x)\pideal(x)\right] = q^{2n}\EV_U\left[\pnoisy(0^n)\pideal(0^n)\right] \label{eq:Z1} \\
    Z_2 &= q^{n}\EV_U\left[\sum_x\pnoisy(x)^2\right] = q^{2n}\EV_U\left[\pnoisy(0^n)^2\right]  \label{eq:Z2}
\end{align}
are second-moment quantities (the second equality holds since by symmetry each term in the sum has the same value under expectation), with $Z_w$ containing $w$ copies of the noisy output and $2-w$ copies of the ideal output for each $w \in \{0,1,2\}$. Note that $Z_0 =q^n Z$ with $Z$ the collision probability studied in Refs.~\cite{dalzell2020anticoncentration,HarrowMehraban2018tdesign}. Furthermore, note that $F$ is a free parameter, and we may choose it so that it minimizes the right-hand side\footnote{Alternatively, one could choose $F$ to minimize the total variation distance bound \textit{relative} to the value of $F$, i.e.~the right-hand size of Eq.~\eqref{eq:Z2Z1Z01-F^2} divided by $F$. This minimization yields $F=(Z_2-1)/(Z_1-1)$, which is larger than $\bar{F}$. This might be the better option in some applications, but we do not choose it here because $F=(Z_2-1)/(Z_1-1)$ can be larger than 1 for some choices of noise channel $\mathcal{N}$ (in particular, coherent channels), which makes the definition of $\pwhitenoise$ meaningless.} of Eq.~\eqref{eq:Z2Z1Z01-F^2}, which occurs when
\begin{equation}
   F=\bar{F} = \frac{Z_1-1}{Z_0-1}\,,
\end{equation}
matching the definition for $\bar{F}$ in Eq.~\eqref{eq:defAveOutFidelity}.
Plugging in $F=\bar{F}$ yields
\begin{equation}\label{eq:boundUsingOptimalF}
    \EV_U\left[ \frac{1}{2}\lVert \pwhitenoise-\pnoisy \rVert_1\right] \leq \frac{1}{2} \bar{F} \sqrt{(Z_0-1)\left(\frac{(Z_0-1)(Z_2-1)}{(Z_1-1)^2}-1\right)}\,.
\end{equation}

\paragraph{Mapping second-moment quantities to stochastic processes} We bound the quantities $Z_0$, $Z_1$, and $Z_2$ by mapping them to stochastic processes.  These stochastic processes are the same as the stochastic process we studied in Ref.~\cite{dalzell2020anticoncentration}, except that the noise channels introduce slightly modified transition rules, as we now discuss. 

Second moment quantities include two copies of each random unitary gate in the circuit. The idea in Ref.~\cite{dalzell2020anticoncentration} was to perform the expectation over the two copies of each gate independently, using Haar-integration techniques. For a density matrix $\rho$ on two copies of a Hilbert space of dimension $q$, let
\begin{equation}\label{eq:Mrho}
    M[\rho] = \EV_{V}\left[V^{\otimes 2} \rho {V^{\dagger}}^{\otimes 2}\right] \,,\\
\end{equation}
where $\EV_V$ denotes expecation over choice of $V$ from the Haar measure over $q \times q$ matrices. Then, we have the following well-known formula (for which a derivation is provided in Ref.~\cite{dalzell2020anticoncentration})
\begin{equation}\label{eq:Midentityswap}
    M[\rho] =  \frac{\tr(\rho) - q^{-1}\tr(\rho S)}{q^2-1}I + \frac{\tr(\rho S) - q^{-1}\tr(\rho)}{q^2-1}S\,,
\end{equation}
where $I$ is the identity operation and $S$ is the swap operation on two copies of the single-qudit system. The equation above states that, after Haar averaging, the state of the system is simply a linear combination of identity and swap, with certain coefficients that can be readily calculated. For an $n$-qudit system acted upon by a sequence of single and two-qudit gates, this formula can be applied sequentially to each gate. After $t$ gates have been applied, the Haar-averaged state of the system can be expressed as a linear combination of $n$-fold tensor products of $I$ and $S$ (e.g.~for $n=3$, the state would be given by $c_1 I \otimes I \otimes I + c_2 I \otimes I \otimes S + c_3 I \otimes S \otimes I + \ldots +c_8 S \otimes S \otimes S$).

The important takeaway from Ref.~\cite{dalzell2020anticoncentration} was to interpret the coefficients of these $2^n$ terms as probabilities of a certain stochastic process over the set of length-$n$ bit strings $\{I,S\}^n$, which were called ``configurations.'' The stochastic process generates a sequence of $s+1$ configurations $\gamma =\smash{(\vec{\gamma}^{(0)},\ldots,\vec{\gamma}^{(s)})}$, which was called a ``trajectory,'' where the probabilistic transition from $\vec{\gamma}^{(t-1)}$ to $\vec{\gamma}^{(t)}$ depends only on the value of $\vec{\gamma}^{(t-1)}$ (Markov property).

The transition rules of the stochastic process are calculated by computing the coefficients in Eq.~\eqref{eq:Midentityswap}; here we state the result\footnote{In Ref.~\cite{dalzell2020anticoncentration}, two equivalent stochastic processes were formulated, an ``unbiased random walk'' and a ``biased random walk.'' In this paper we build from the formalism of the biased random walk.} of that calculation; more details can be found in \autorefapp{app:noiselessanalysisframework}.  First of all, the initial configuration $\vec{\gamma}^{(0)}$ is chosen at random by independently choosing each of the $n$ bits to be $I$ with probability $q/(q+1)$ and $S$ with probability $1/(q+1)$. Then,  for each time step $t$, if the $t$th gate acts on qudits $i_t$ and $j_t$, then the transition from $\vec{\gamma}^{(t-1)}$ to $\vec{\gamma}^{(t)}$ can involve a bit flip at position $i_t$, at position $j_t$, or neither (but not at both), and no bit can flip at any other position. Moreover, $\smash{\gamma_{i_t}^{(t)}}=\smash{\gamma_{j_t}^{(t)}}$ must hold, so if $\smash{\gamma_{i_t}^{(t-1)}} \neq \smash{\gamma_{j_t}^{(t-1)}}$, then one of the two bits \textit{must} be flipped. In this situation, when one bit is assigned $I$ and one is assigned $S$, the $S$ is flipped to $I$ with probability $q^2/(q^2+1)$, and the $I$ is flipped to $S$ with probability $1/(q^2+1)$. Thus, there is a bias toward making more of the assignments $I$.  The quantity $Z_0$ is given exactly by the expectation value of the quantity $q^{|\vec{\gamma}^{(s)}|}$ when trajectories $\gamma$ are generated in this fashion, where $|\vec{\nu}|$ denotes the Hamming weight of the bit string $\vec{\nu}$, that is, the number of $S$ assignments out of $n$. 
\begin{equation}\label{eq:EV0}
    Z_0 = \mathbb{E}_0\left[q^{|\vec{\gamma}^{(s)}|}\right]\,,
\end{equation}
where here $\EV_0$ denotes evolution by the stochastic process described above. 

With the stochastic process now defined, a vital observation is that the process has two fixed points, the $I^n$ configuration and the $S^n$ configuration, since whenever all the bits agree, none can be flipped. In Ref.~\cite{dalzell2020anticoncentration}, we could precisely compute the fraction of the probability mass that eventually reaches each of these fixed points if the circuit is infinitely long. Specifically, $q^n/(q^n+1)$ of the probability mass converges to $I^n$ and $1/(q^n+1)$ converges to $S^n$.\footnote{This can be straightforwardly derived by letting $Q(x)$ be the probability a configuration with $x$ $S$ assignments eventually converges to the $S^n$ fixed point and noting that it satisfies the recursion relation $Q(x) = q^2Q(x-1)/(q^2+1) + Q(x+1)/(q^2+1)$, for which the solution is $Q(x) = A q^{2x} + B$ for constants $A$ and $B$ determined by enforcing boundary conditions $Q(0)=0$ and $Q(n)=1$. The fraction of probability mass that begins at a configuration with $x$ $S$ assignments is $\binom{n}{x}q^{n-x}/(q+1)^n$, allowing the total amount of mass that reaches $S^n$ to be computed.} Then, since the $S^n$ fixed point receives a weighting of $q^n$ and the $I^n$ fixed point receives a weighting of 1 in Eq.~\eqref{eq:EV0}, we find that $Z_0 \rightarrow 2q^n/(q^n+1)$.

Noise introduces new rules into this stochastic process. Suppose the configuration immediately after the $t$th two-qudit gate is $\vec{\nu}$, and a noise channel $\mathcal{N}$ acts on qudit $i_t$. Since the noise channel is unital, if $\nu_{i_t} = I$, representing the identity operator on a two-qudit system, then the configuration is left unchanged. However, if $\nu_{i_t} = S$, then the action of the noise may cause a flip from $S$ to $I$. For the calculation of $Z_0$, there is no noise, so this happens with probability 0.
For the calculation of $Z_1$, where there is one copy of the noisy distribution and one copy of the ideal, we can again use the formula in Eq.~\eqref{eq:Midentityswap} to compute the $S \rightarrow I$ transition probability to be $rq/(q-1)$, where $r$ is the average infidelity given in Eq.~\eqref{eq:averageinfidelity}. This is explained in \autorefapp{app:effectofnoise}. 
For $Z_2$, where there are two copies of the noisy distribution, the probability of an $S \rightarrow I$ transition is calculated to be $1-u$, where $u$ is the unitarity of the noise channel given in Eq.~\eqref{eq:unitarity}. The values of $Z_1$ and $Z_2$ are thus given by
\begin{align}
    Z_1 &= \mathbb{E}_{rq/(q-1)}\left[q^{|\vec{\gamma}^{(s)}|}\right] \\
    Z_2 &= \mathbb{E}_{1-u}\left[q^{|\vec{\gamma}^{(s)}|}\right]\,,
\end{align}
where $\EV_\sigma$ denotes the stochastic process where $S\rightarrow I$ bit flips occur at each noise location with probability $\sigma$, generalizing Eq.~\eqref{eq:EV0}. 

Since noise can flip an $S$ to an $I$ but not vice versa, $I^n$ is the only fixed point of the stochastic processes for $Z_1$ and $Z_2$; the $S^n$ fixed point is only metastable: eventually, the action of noise will flip one of the $S$ bits to an $I$, and the trajectory might re-equilibrate to the $I^n$ fixed point. Our analysis consists of a careful accounting of the leakage of probability mass away from the metastable $S^n$ fixed point.

\paragraph{Analyzing the stochastic processes for a toy example} Now, we consider a toy example which captures the essence of our analysis. Suppose a circuit consists of alternating rounds of (1) a global Haar-random transformation and (2) a depolarizing noise channel on a single qudit, as depicted in \autoref{fig:noisycircuittoymodel}. Step (1) can be approximately accomplished by performing a very large number of two-qudit gates. 
\begin{figure}[h]
\centering
\scalebox{0.9}{
\begin{tikzpicture}[scale=1.0,thick]
\foreach[count=\i] \y in {1,2,3,4,5}
{\tikzmath{\j=int(\i-4);\jx=int(10-\i);\xoff = 0.2;}
\draw (1.6,\y) -- (12.2,\y);
\node at (1.28,\y) {$\ket{0}$};
\draw[fill=white,rounded corners] (12+\xoff,\y-0.3) rectangle (12.6+\xoff,\y+0.3);
\draw[-latex,semithick] (12.3+\xoff,\y-0.22) -- (12.46+\xoff,\y+0.22);
\draw[semithick] (12.1+\xoff,\y-0.16) to[out=60,in=120] (12.5+\xoff,\y-0.16);
}
\foreach \x/\y in {
3.3/4,
5.1/2,
6.9/3,
8.7/1,
10.5/5}
{\draw[fill=white] (\x, \y) circle (0.25);
\node at (\x,\y) {\scalebox{0.65}{$\mathcal{N}$}};}
\foreach[count=\i] \x in {2, 3.8, 5.6, 7.4, 9.2,11.0}
{\draw[fill=white,rounded corners] (\x,0.7) rectangle (\x+0.8,5.3);
\node at (\x+0.42,3) {\scalebox{0.86}{$U^{(\i)}$}};
}
\end{tikzpicture}
}
\caption[Simplified example of a noisy random quantum circuit that is easier to analyze]{Toy example where global Haar-random gates $U^{(t)}$ act in between a depolarizing noise channel on a single qudit. In this model we can exactly compute quantities $Z_0$, $Z_1$, and $Z_2$ because the global Haar-random gates cause the probability mass in the stochastic process to fully re-equilibrate to one of the fixed points, $I^n$ or $S^n$. 
}
\label{fig:noisycircuittoymodel}
\end{figure}
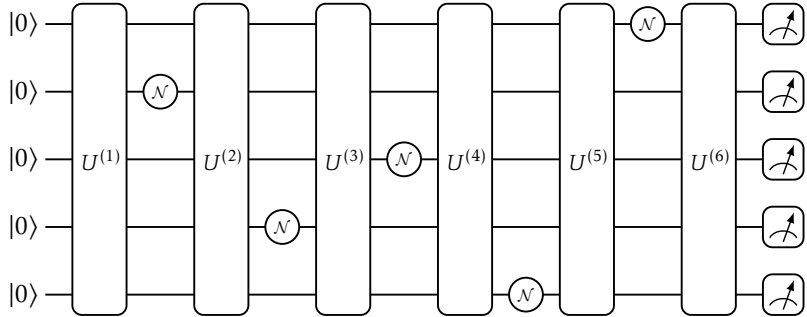
This model is similar to the toy model considered in Ref.~\cite{Bouland2021NoiseQuantumSupremacy} (the difference being that they considered single-qudit noise channels on all $n$ qudits in step (2)), which they analyzed using the Pauli string method of Refs.~\cite{DahlstenOliveiraPlenio2007TypicalEntanglement,HarrowLow2009RQC2design}.

The initial global Haar-random transformation induces perfect equilibration to the two fixed points, with $q^n/(q^n+1)$ mass reaching the $I^n$ fixed point and $1/(q^n+1)$ mass reaching the (metastable) $S^n$ fixed point. This is already sufficient to compute $Z_0-1$, which is not sensitive to the noise.
\begin{equation}
Z_0-1 = \frac{q^n-1}{q^n+1}    \,.
\end{equation}
Now suppose we want to calculate $Z_1$. Consider a piece of probabiltiy mass that is part of the $1/(q^n+1)$ fraction at the $S^n$ fixed point. The single-qudit depolarizing noise channel will flip one of the $S$ assignments to an $I$ assignment with probability $rq/(q-1) = \epsilon(1-q^{-2})^{-1}$. If this happens, there are $n-1$ $S$ assignments and 1 $I$ assignment. While it may seem that this new configuration is still close to the $S^n$ fixed point, we must remember that the random walk is biased in the $I$ direction. When we perform the next global Haar-random transformation, we get perfect re-equilibration back to the two fixed points; with probability $\frac{1-q^{-2}}{1-q^{-2n}}$ we end at the $I^n$ fixed point, and with probability $\frac{q^{-2}-q^{-2n}}{1-q^{-2n}}$ we end at the $S^n$ fixed point. These probabilities were derived in Ref.~\cite{dalzell2020anticoncentration}, and are a basic consequence of Eq.~\eqref{eq:Midentityswap}. Now, the total mass that remains at the $S^n$ fixed point is the $\frac{1}{q^n+1}(1-\frac{\epsilon}{1-q^{-2}})$ that never left and the $\frac{\epsilon}{1-q^{-2}}\frac{q^{-2}-q^{-2n}}{1-q^{-2n}}$ that left and returned, which comes out to $\frac{1}{q^n+1}(1-\frac{\epsilon}{1-q^{-2n}})$. After $2s$ single-qudit error channels have been applied, the probability mass remaining at the $S^n$ fixed point is precisely
\begin{equation}
        \substack{ \text{probability mass at } S^n \\ \text{after } 2s \text{ noise locations}} = \frac{1}{q^n+1}\left(1-\frac{\epsilon}{1-q^{-2n}} \right)^{2s} \approx \frac{1}{q^n+1}e^{-2\epsilon s}\,.
\end{equation}
This mass receives weighting of $q^n$ toward $Z_1$. Meanwhile the rest of the mass is at the $I^n$ fixed point and receives weighting of 1. This tells us that
\begin{equation}
    Z_1-1 = \frac{q^n-1}{q^n+1}\left(1-\frac{\epsilon}{1-q^{-2n}} \right)^{2s}\,.
\end{equation}
We see that in this toy model, the quantity $\bar{F} = (Z_1-1)/(Z_0-1)$ is precisely given by the fraction of probability mass originally destined for the $S^n$ fixed point that remains at the $S^n$ fixed point even after the noise locations have acted. Thus, the leakage of probability mass from $S^n$ to $I^n$ in the calculation of $Z_1$ corresponds exactly to the decay of fidelity. 

Calculating $Z_2-1$ is just as easy. Here transitions due to noise occur with probability $1-u$ where $u$ is the unitarity of the noise channel. For depolarizing noise, we have $1-u = 2\epsilon(1-q^{-2})^{-1} - O(\epsilon^2)$, so $Z_2-1$ is the same as $Z_1-1$ with the replacement $\epsilon \rightarrow 2\epsilon - O(\epsilon^2)$, giving
\begin{equation}
    Z_2-1 = \frac{q^n-1}{q^n+1}\left(1-\frac{2\epsilon}{1-q^{-2n}} +O(\epsilon^2)\right)^{2s} = \frac{q^n-1}{q^n+1}\left(1-\frac{\epsilon}{1-q^{-2n}} \right)^{4s} e^{O(s\epsilon^2)}\,.
\end{equation}

We can plug these calculations into Eq.~\eqref{eq:boundUsingOptimalF} to find that
\begin{equation}
    \EV_U\left[ \frac{1}{2}\lVert \pwhitenoise-\pnoisy \rVert_1\right] \leq \frac{1}{2} \bar{F} \sqrt{\frac{q^n-1}{q^n+1}\left(e^{O(\epsilon^2s)}-1\right)} = O(\bar{F}\epsilon \sqrt{s})\,.
\end{equation}

\paragraph{Extending the analysis to a full proof} In the proofs of our theorems, the difficulty is that the probability mass does not fully equilibrate to a fixed point before the next error location acts. Nonetheless, we manage to calculate tight bounds on $Z_1$ and $Z_2$ by keeping track of the amount of probability mass that \textit{would} re-equilibrate back to $S^n$ and $I^n$ if the rest of the gates were noiseless, which we refer to as $S$-destined and $I$-destined probability mass. We show that, as long as $\epsilon < c/n$ for some constant $c$, the $S$-destined probability mass is exponentially clustered near the $S^n$ fixed point in the sense that the probability of being $x$ bit flips away from $S^n$ conditioned on being $S$-destined decays exponentially in $x$. Thus, for a piece of $S$-destined probability mass, nearly all the bits will be assigned $S$, and the action of a noise channel reduces the amount of $S$-destined mass by a factor of roughly $1-\epsilon$. If it were the case that a constant fraction of bits were assigned $I$, then the noise would cause a flip from $S \rightarrow I$ less frequently and the fraction of the $S$-destined mass that stays $S$-destined after each noise channel would be larger than $1-\epsilon$ by an $O(\epsilon)$ amount, which would ruin the analysis.

The reason $\epsilon < c/n$ is required for the exponential clustering effect is that errors need to be rare enough for the $S$-destined mass to \textit{mostly} re-equilibrate back to $S^n$ before new errors pop up; to say it another way, the errors must get scrambled at a faster rate than they appear. If a configuration has $n-1$ $S$ assignments and 1 $I$ assignment, it will take $O(n)$ gates before the single $I$-assigned qudit participates in a gate. Thus, if errors occur at a slower rate than one per $O(n)$ gates, full re-equilibration will happen before a new error pops up most of the time. It is not clear if this condition is truly necessary for the clustering statement to hold, but we show at the very least that it is sufficient.

However, we need $\epsilon < c/n$ to hold for another (related) reason: the leakage from $S^n$ to $I^n$ must occur more slowly than the anti-concentration rate, which corresponds to the speed at which the probability mass initially equilibrates to $I^n$ and $S^n$. After all, even though the stochastic process is $I$-biased, the $I$-destined mass does not make it to the $I^n$ fixed point instantaneously. After $s$ gates, there will be some residual contribution from the not-yet-equilibrated $I$-destined mass to the calculation of quantities $Z_0-1$, $Z_1-1$, and $Z_2-1$; this contribution decays by a constant factor with every additional $O(n)$ gates. If $\epsilon =O(1/n)$, a constant fraction of the $S$-destined mass will leak away with each set of $O(n)$ gates, and if the constant prefactor on this leakage is too large, the $I$-destined mass will contribute more than the $S$-destined mass to the expectation values; as a result, the right-hand-side of Eq.~\eqref{eq:boundUsingOptimalF} will not exhibit the same kind of cancellations observed for the toy example.

In our formal analysis, we actually assume something even stronger: we require that $\epsilon \ll 1/(n\log(n))$, which essentially means that very few errors occur \textit{during} the initial anti-concentration period. However, this is done to make the analysis easier, and we do not believe this condition is necessary.

\section{Numerical estimates of error in white-noise approximation}\label{sec:numerics}

In principle, it would be possible to determine the constant factors under the big-$O$ notation in our proofs, but the result of this exercise  would likely yield extremely unfavorable numbers due to our lack of optimization throughout, and the fact that it might be possible to eliminate some of the terms in our error expression altogether with a more fine-grained analysis. The goal of this section is to provide a numerical assessment of the bound on the error in the white-noise approximation for realistic values of the circuit parameters. We find that realistic NISQ-era values of the circuit parameters \textit{can} lead to a small upper bound on the white-noise approximation error, even for circuits with several thousand gates, but we confirm that the noise rate needs to decrease like $O(1/n)$ as the system size scales up for our upper bound to be meaningful.

\subsection{Numerical method}
The numerics we present are for the complete-graph architecture. In general, the stochastic process underlying our method (described in \autoref{sec:overviewnoisywalk} and presented formally in the appendix) is a random walk over $2^n$ possible configurations of a length-$n$ bit string. However, for the complete-graph architecture there is an equivalence between all configurations with the same Hamming weight. Thus, the state space for the stochastic process is reduced to $n+1$ distinct groups of configurations (associated with Hamming weights $0,1,\ldots,n$). The quantities $Z_0$, $Z_1$, and $Z_2$, as defined in Eqs.~\eqref{eq:Z0}, \eqref{eq:Z1}, and \eqref{eq:Z2} can then be precisely computed by multiplying the (sparse) $(n+1) \times (n+1)$ transition matrices for the stochastic process. This allows us to compute the right-hand-side of Eq.~\eqref{eq:boundUsingOptimalF} for $n$ substantially large, giving a bound on  $\EV_U[\frac{1}{2}\lVert \pnoisy-\pwhitenoise \rVert_1]$. 

In our analysis below, we suppose all noise locations are subject to depolarizing noise with error probability $\epsilon$, given as in Eq.~\eqref{eq:depolarizing}. We also restrict to $q=2$ (qubits). We do not model readout errors, which are a large source of error in the actual experiments of Refs.~\cite{Arute2019GoogleQuantumSupremacy,USTC2021StrongQCompAdv,USTC2021Zhuchongzhi2.1}. We plug in specifications $(n,\epsilon,s)$ and exactly compute the quantity
\begin{equation}
    \frac{1}{2}\sqrt{(Z_0-1)\left(\frac{(Z_0-1)(Z_2-1)}{(Z_1-1)^2}-1\right)}
\end{equation}
which gives the ratio of the bound in Eq.~\eqref{eq:boundUsingOptimalF} to the fidelity $\bar{F}$. 

\subsection{Numerical bound for realistic circuit parameters}

 We first examine the bound using the circuit parameters of existing experimental setups. The Google experiment \cite{Arute2019GoogleQuantumSupremacy} ran $s=430$ gates on their $n=53$ qubit processor called \textit{Sycamore}, and their error rate per cycle, which is the analogous quantity to the total error in a two-qubit gate in our setup, was reported to be $0.9\%$. This corresponds to $\epsilon\approx0.0045$ in our model where separate noise channels act on each of the two qubits. Meanwhile, the largest experiment from USTC \cite{USTC2021Zhuchongzhi2.1} ran $s=594$ gates on their $n=60$ qubit processor called \textit{Zuchongzhi}, with a similar overall error rate per cycle.  In \autoref{fig:GoogleNumerics}, we plot the numerically calculated bound on $\frac{1}{\bar{F}}\EV_U[\frac{1}{2}\lVert \pnoisy-\pwhitenoise\rVert_1]$ as a function of circuit size for complete-graph circuits with $n=53$ and $n=60$ at $\epsilon=0.0045$. The circuit sizes $s=430$ and $s=594$ appear as large dots. 
 \begin{figure}
     \centering
     \includegraphics[width=0.7\textwidth]{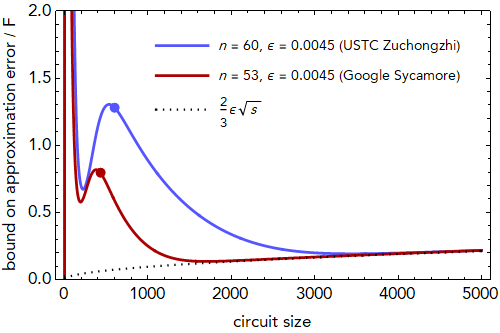}
     \caption{Plot of the numerically calculated upper bound on the expected total variation distance between $\pnoisy$ and $\pwhitenoise$ divided by $F$ for a complete-graph version of recent random quantum circuit experiments by Google (53 qubits) \cite{Arute2019GoogleQuantumSupremacy} and USTC (60 qubits) \cite{USTC2021Zhuchongzhi2.1}. The large dots represent the circuit sizes (number of two-qubit gates) implemented in those experiments. The dotted black line is the function $2\epsilon\sqrt{s}/3$ for each experiment. }
     \label{fig:GoogleNumerics}
 \end{figure}
 We find that, as expected, the bound is bad if the circuit size is too small. There is an initial spike in the bound due to the first few layers of noisy gates, which subsides quickly as those initial errors are scrambled. The behavior that follows reflects the race between fidelity decay and anti-concentration. For these values of the error rate, the fidelity decay is happening at a slower rate than anti-concentration, but it has a head start, since it takes $\Theta(n\log(n))$ gates for anti-concentration to initially be reached \cite{dalzell2020anticoncentration}; this explains why the bound is decreasing (relative to $F$) even as the circuit size passes 1000. For large $s$, both curves approach the function $2\epsilon\sqrt{s}/3$. This indicates that the constant factor underneath the $O(\epsilon \sqrt{s})$ is less than 1, at least for depolarizing noise in the complete-graph architecture. The point at which we expect the $O(\epsilon \sqrt{s})$ behavior to take over will generally be $\Theta(n\log(n)) + \Theta(n)$, where the first term corresponds to the initial anti-concentration period, and the second term corresponds to the additional time needed for anti-concentration to catch up to the fidelity. The constant prefactor under the second term will be larger when $\epsilon$ is larger and the fidelity decays more rapidly.
 
 Interestingly, the circuit size actually implemented in both of the experiments falls in a region where the bound on approximation error relative to fidelity is decreasing with circuit size, suggesting the white-noise approximation would become more meaningful if more gates were applied (at the expense of smaller fidelity). In fact, for Google's experiment, the upper bound yields a value close to 1, and for USTC, it yields a value larger than 1, indicating that, in this idealized complete-graph version of their experiments, the white-noise assumption may not hold (we would need a lower bound to know for sure). 
 
 There are a few caveats to these conclusions. First, what we plot is only an upper bound, and it is not clear whether this upper bound is tight. Second, this is for the complete-graph architecture, but the experiments of Refs.~\cite{Arute2019GoogleQuantumSupremacy,USTC2021StrongQCompAdv,USTC2021Zhuchongzhi2.1} had a 2D architecture (although one might speculate that a 2D architecture would only scramble less efficiently than the complete-graph architecture). Third, we have not modeled readout errors in the device. Fourth, we have an idealized error model of depolarizing single-qubit noise. As has been mentioned in footnotes throughout this paper, the goal of our work is not to justify the claims of quantum computational supremacy by specific noisy random quantum circuit experiments. Rather, we aim to show that the white-noise phenomenon is possible and can be proved analytically, and that this adds some justification to claims that a low-fidelity random quantum circuit experiment could in principle accomplish quantum computational supremacy. 
 
 \subsection{Threshold error rate for good white-noise bound}

 \begin{figure}
     \centering
     \subfloat[$n=53$]{{\includegraphics[width=0.47\textwidth]{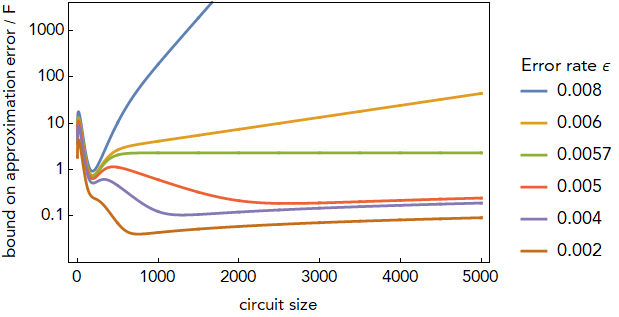}}}  \qquad
     \subfloat[$n=106$]{{\includegraphics[width=0.47\textwidth]{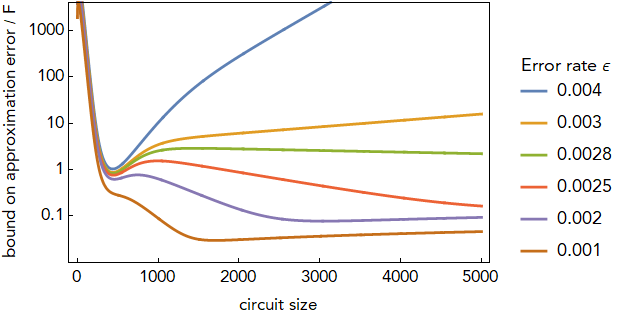}}} \\
     \subfloat[$n=159$]{{\includegraphics[width=0.47\textwidth]{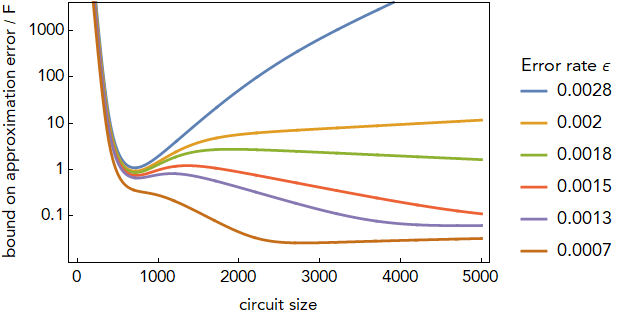}}} \qquad
     \subfloat[$n=212$]{{\includegraphics[width=0.47\textwidth]{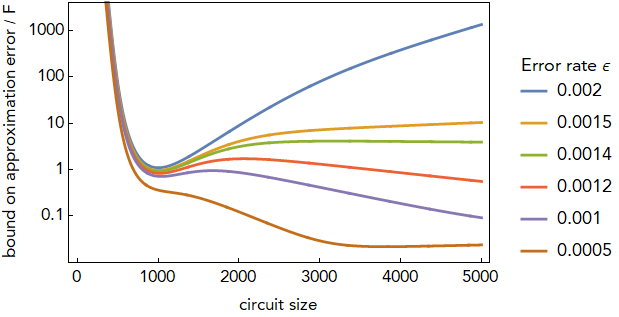}}}
     \caption{Plot of the numerically calculated upper bound on the expected total variation distance between $\pnoisy$ and $\pwhitenoise$ divided by $F$ for the complete-graph architecture at various values of $n$, $\epsilon$ and $s$. For each value of $n$, a threshold in $\epsilon$ is observed where error rates above the threshold lead to a bad approximation, while error rates below the threshold lead the approximation to become $O(F\epsilon \sqrt{s})$ once $s$ is sufficiently large. The threshold value of $\epsilon$ appears to be roughly $0.3/n$. }
     \label{fig:thresholdplots}
 \end{figure}

A key feature we observed in our theoretical analysis was the need for the error rate $\epsilon$ to decrease with $n$.  For each value of $n$, we observe a threshold error rate such that, if $\epsilon$ is beneath the threshold, our upper bound on the total variation distance follows $O(F\epsilon \sqrt{s})$ at large values of $s$, and if $\epsilon$ is above the threshold, our bound becomes (empirically) $O(Fe^{\Theta(s)})$. Without a lower bound, we cannot be sure if this is the actual behavior of the approximation error.

In \autoref{fig:thresholdplots}, we present a log plot of the numerically calculated bound on the approximation error (relative to $F$) for different values of $\epsilon$ at system sizes $n=53, 106, 159, 212$ (corresponding to integer multiples of the size of Google's 53-qubit experiment).  For $n=53$, we see that choices of $\epsilon$ beneath roughly $0.0057$ appear to approach $O(\epsilon\sqrt{s})$ scaling at large $s$, while choices of $\epsilon$ above that threshold increase exponentially with $s$. For $n=106$, $n=159$, and $n=212$, the apparent threshold decreases to roughly  $\epsilon=0.0028$, $\epsilon=0.0019$, and $\epsilon=0.0014$, respectively. This is consistent with a general threshold of roughly $\epsilon = 0.3/n$. We expect the $\epsilon = O(1/n)$ threshold to exist in other architectures as well, but with a modified constant prefactor. Architectures with a faster anti-concentration rate should have larger thresholds.

\section{Outlook}\label{sec:whitenoiseoutlook}

We have presented a comprehensive picture of how the output distribution of typical random quantum circuits behaves under a weak incoherent local noise model. As more gates are applied, the output distribution decays toward the uniform distribution in total variation distance like $e^{-2\epsilon s}$ where $\epsilon$ is the local noise strength in a Pauli error model (for non-Pauli models, this can be expressed in terms of the average infidelity $r$) and $s$ is the number of gates. Moreover, we show that the convergence to uniform happens in a very special way: the residual non-uniform component of the noisy distribution is approximately in the direction of the ideal distribution. The random quantum circuits scramble the errors that occur locally during the evolution so that they can ultimately be treated as global white noise, allowing some signal of the ideal computation to be extracted even from a noisy device. While this property had previously been conjectured---it was an underlying assumption of quantum computational supremacy experiments \cite{Arute2019GoogleQuantumSupremacy,USTC2021StrongQCompAdv}---it had not received rigorous analytical study. Basic questions like how the error in the white-noise approximation scales with $\epsilon$ and $s$ had not been investigated.

Our theorem statements are given for general, possibly coherent, noise channels. While we show that local coherent noise channels lead the output distribution to exhibit exponential decay in the linear cross-entropy benchmark for the fidelity, there is not generally also a decay toward the uniform distribution. As a result, the white-noise approximation is not good for coherent noise channels. Moreover, even for incoherent noise channels, our technical statements are only applicable if the Pauli noise strength $\epsilon$ (or for non-Pauli noise channels, the average infidelity) is beneath a threshold that shrinks with system size like $O(1/n)$ and if the circuit size is at least $\Omega(n\log(n))$. Furthermore, our bound on error in the white-noise approximation is only meaningful if $\epsilon \ll 1/(n\log(n))$. We believe the $\epsilon \ll 1/(n\log(n))$ requirement is merely a result of suboptimal analysis, but that the assumption $\epsilon < O(1/n)$ is fundamentally necessary for the approximation to be good: errors must be scrambled faster than the fidelity $F \approx e^{-2\epsilon s}$ decays. 

One implication of our result is to put low-fidelity random-circuit-based quantum computational supremacy experiments on stronger theoretical footing by showing that, as long as our local noise model is a reasonable approximation of noise in actual devices, the device produces samples from a well-understood output distribution, which can subsequently be argued is hard to classically sample. Indeed, in \autorefapp{app:complexitytheorywhitenoise}, we combine observations from previous work to show that the task of classically sampling from the white-noise distribution with fidelity $F$ up to $\eta F$ error is essentially just as hard, in a certain complexity-theoretic sense, as the task of classically sampling from the ideal distribution up to a $O(\eta)$ error. This is important because the latter task  (and variants of it in other computational models \cite{Aaronson2011BosonSampling,Bremner2016AverageCaseIQP}) has previously garnered significant theoretical scrutiny  \cite{Bouland2019RCSComplexity,Movassagh2019QSandRQC,Bouland2021NoiseQuantumSupremacy}, although it is still not known whether it is hard in a formal complexity-theoretic sense. 

These results are good news for the utility of NISQ devices more broadly. In order to perform a larger and more interesting computation, noise rates must become smaller; our work shows that, in some applications, for circuits with $s$ gates, noise rates need only decrease like $1/\sqrt{s}$, rather than $1/s$, as long as one is willing to repeat the experiment many times to extract the signal from the global white noise. A natural next question is when, besides the case of random quantum circuits, do we expect a similar white-noise phenomenon to occur? Our result shows that convergence to white-noise is a \textit{generic} property, occurring for a large fraction of randomly chosen circuits. Heuristically, this is because random quantum circuits are known to be good scramblers. However, most interesting quantum circuits are non-generic in some way. An extreme example is quantum error-correcting circuits, which are specifically designed \textit{not} to scramble errors (so that they can be corrected). The output of these circuits will not be close to the white-noise distribution. A fascinating follow-up question is whether other computations proposed for NISQ devices appear to scramble errors well enough that a similar approximation can be made. One leading candidate with relevance for many-body physics is circuits that simulate evolution by fixed chaotic Hamiltonians, since these systems are thought to scramble information efficiently. Indeed, a central motivation for studying random quantum circuits in the first place has been to model the scrambling properties of chaotic many-body systems \cite{NahumRuhmanVijayHaah2017EntanglementGrowth,NahumVijayHaah2018OperatorSpreading,VonKeyserlingk2018OperatorHydrodynamics}.

\subsection*{Acknowledgments}
We thank Adam Bouland, Bill Fefferman, Zeph Landau, Yunchao Liu, Oskar Painter, John Preskill, and Thomas Vidick for helpful feedback about this work. 
AD and FB acknowledge funding provided by the Institute for Quantum Information and Matter, an NSF Physics Frontiers Center (NSF Grant PHY-1733907). This material is also based upon work supported by the NSF Graduate Research Fellowship under Grant No.~DGE‐1745301. 
NHJ is supported in part by the Stanford Q-FARM Bloch Fellowship in Quantum Science and Engineering. NHJ would like to thank the Aspen Center for Physics for its hospitality during the completion of part of this work.
Research at Perimeter Institute is supported in part by the Government of Canada through the Department of Innovation, Science and Economic Development Canada and by the Province of Ontario through the Ministry of Colleges and Universities.

\appendix

\section{Framework for noisy circuit analysis}\label{app:noisyframework}

\subsection{Action of averaged noiseless gate on identity and swap}\label{app:noiselessanalysisframework}

The contents of this subsection contain analysis from Ref.~\cite{dalzell2020anticoncentration}, which we include again here for completeness. We also slightly modify the notation from Ref.~\cite{dalzell2020anticoncentration} so that the two-qudit identity operator $I$ is always normalized by $q^2$ and the two-qudit swap operator $S$ is always normalized by $q$, such that their traces are one. 

Since we study second-moment properties, we work with two copies of the $n$-qudit state. The initial state is $\ketbra{0^n}^{\otimes 2}$. Suppose the gate at time step $t$ acts on qudits in the set $A^{(t)} \subset [n]$ (of size either 1 or 2), and let
\begin{equation} \label{eq:M(t)}
M^{(t)}[\rho] = \EV_{U^{(t)}}\left[{U^{(t)}_{A^{(t)}}}^{\otimes 2} \rho\, {{U^{(t)}_{A^{(t)}}}^\dagger}^{\otimes 2}\right]\,,
\end{equation}
where the average is over Haar-random choice of $U^{(t)}$ and $U^{(t)}_{A^{(t)}}$ denotes the operation that acts as $U^{(t)}$ on qudits in region $A^{(t)}$ and as identity on all other qudits. 

Application of the first layer of $n$ single-qudit gates in \autoref{fig:noisycircuitdiagram} corresponds to application of $M^{(-n+1)}\circ \cdots \circ M^{(0)}$ to the initial state $\ketbra{0^n}^{\otimes 2}$. Applying the Haar integration formula in Eq.~\eqref{eq:Midentityswap} to each qubit, we find
\begin{equation}
    M^{(-n+1)}\circ \cdots \circ M^{(0)}[\ketbra{0^n}] = \frac{1}{q^n(q+1)^n}\bigotimes_{j=0}^{n-1} \left(I + S\right)_{\{j\}} = \bigotimes_{j=0}^{n-1} \left(\frac{q}{q+1}\frac{I}{q^2} + \frac{1}{q+1}\frac{S}{q}\right)_{\{j\}}\,,
\end{equation}
where the second equality expresses the formula as a linear combination of $I/q^2$ and $S/q$, both of which have trace one. The coefficients $q/(q+1)$ and $1/(q+1)$ are interpreted as probabilities that each bit of the initial configuration $\vec{\gamma}^{(0)}$ as described in \autoref{sec:overviewnoisywalk} is $I$ or $S$, respectively. 

Since the averaged state is a linear combination of tensor products of $I$ and $S$ already after the first layer, we need only compute the action of an averaged two-qudit gate on $I \otimes I$, $I \otimes S$, $S\otimes I$, and $S \otimes S$, properly normalized. Suppose gate $t$ acts on qudits $\{i_t,j_t\}$. Then $M^{(t)}$ acts trivially on all qudits outside of $\{i_t,j_t\}$ and its action on $\{i_t,j_t\}$ is computed using the Haar integration formula in Eq.~\eqref{eq:Midentityswap} (note that since the gates are $q^2 \times q^2$ matrices, we replace $q$ by $q^2$, $I$ by $I \otimes I$, and $S$ by $S \otimes S$), yielding
\begin{align}
    M^{(t)}\left[\frac{I}{q^2} \otimes \frac{I}{q^2}\right] &= \frac{I}{q^2} \otimes \frac{I}{q^2} \label{eq:MII} \\
    M^{(t)}\left[\frac{S}{q} \otimes \frac{S}{q}\right] &= \frac{S}{q} \otimes \frac{S}{q} \label{eq:MSS}\\
    M^{(t)}\left[\frac{I}{q^2} \otimes \frac{S}{q}\right] = M^{(t)}\left[\frac{S}{q} \otimes \frac{I}{q^2}\right] &=  \frac{q^2}{q^2+1}\frac{I}{q^2} \otimes \frac{I}{q^2} + \frac{1}{q^2+1}\frac{S}{q} \otimes \frac{S}{q} \label{eq:MIS}
\end{align}
The above equations correspond to the transition rules for the noiseless stochastic process mentioned in \autoref{sec:overviewnoisywalk}: if both bits are $I$ or both are $S$, then there is no change, but if one is $I$ and one is $S$, they are both set to $I$ with probability $q^2/(q^2+1)$ and both set to $S$ with probability $1/(q^2+1)$. 

This illustrates that sequential application of $M^{(t)}$ on the state will map linear combinations of tensor products of $I/q^2$ and $S/q$ to other linear combinations of tensor products of $I/q^2$ and $S/q$. The coefficients of these linear combinations transform linearly. When written in terms of the trace-one operators $I/q^2$ and $S/q$, this linear transformation will be stochastic, i.e.~the sum of the coefficients of the linear combination over tensor products will be conserved (note that the sum of coefficients in Eqs.~\eqref{eq:MII}, \eqref{eq:MSS}, and \eqref{eq:MIS} is one). 
Now, let us associate the configuration $\vec{\nu} \in \{I,S\}^n$ by the tensor product $\bigotimes_{j=0}^{n-1} \frac{\nu_j}{\tr(\nu_j)}$, which is a basis state for the vector space acted upon by $M^{(t)}$. For configurations $\vec{\nu}, \vec{\gamma} \in \{I,S\}^n$, denote the matrix elements of this (stochastic) transformation by  $M^{(t)}_{\vec{\nu}\vec{\gamma}}$, that is
\begin{equation}
    M^{(t)}\left[\bigotimes_{j=0}^{n-1} \frac{\gamma_j}{\tr(\gamma_j)}\right] = \sum_{\vec{\nu} \in \{I,S\}^n}M_{\vec{\nu}\vec{\gamma}}^{(t)}\bigotimes_{j=0}^{n-1} \frac{\nu_j}{\tr(\nu_j)} \,.
\end{equation}
The matrix elements are given explicitly by
\begin{equation}
    M^{(t)}_{\vec{\nu}\vec{\gamma}}=\begin{cases}
    1 & \text{if } \gamma_{i_t} = \gamma_{j_t} \text{ and } \vec{\gamma} = \vec{\nu} \\
    \frac{q^2}{q^2+1} & \text{if } \gamma_{i_t} \neq \gamma_{j_t} \text{ and } \nu_{i_t} = \nu_{j_t}=I \text{ and } \gamma_c = \nu_c \;\forall c \in [n]\setminus\{i_t,j_t\} \\
     \frac{1}{q^2+1} & \text{if } \gamma_{i_t} \neq \gamma_{j_t} \text{ and } \nu_{i_t} = \nu_{j_t}=S \text{ and } \gamma_c = \nu_c \;\forall c \in [n]\setminus\{i_t,j_t\} \\
    0 & \text{otherwise}
    \end{cases}
\end{equation}
Now, note that
\begin{align}
    \tr\left[\ketbra{0}^{\otimes 2} \frac{I}{q^2}\right] &= \frac{1}{q^2} \\
    \tr\left[\ketbra{0}^{\otimes 2} \frac{S}{q}\right] &= \frac{1}{q}
\end{align}
so, for $\vec{\nu} \in \{I,S\}^n$,
\begin{align}
    \tr\left[\ketbra{0^n}^{\otimes 2} \bigotimes_{j=0}^{n-1}\frac{\nu_j}{\tr(\nu_j)} \right]= \frac{q^{|\vec{\nu}|}}{q^{2n}}\,,
\end{align}
where $|\vec{\nu}|$ denotes the Hamming weight of the bit string $\vec{\nu}$, that is, the number of $S$ assignments. Working now from the definition of $Z_0$ in Eq.~\eqref{eq:Z0} and $\pideal$ in Eq.~\eqref{eq:pideal}, we have the matrix equation
\begin{equation}\label{eq:Z0asmatrixproduct}
    Z_0 = q^{2n}\tr\left[\ketbra{0^n} M^{(s)} \circ \cdots M^{(-n+1)} (\ketbra{0^n})\right] = \sum_{\gamma \in \{I,S\}^{n\times(s+1)}} \frac{q^{n-|\vec{\gamma}^{(0)}|}}{(q+1)^n}\left(\prod_{t=1}^s M^{(t)}_{\vec{\gamma}^{(t)}\vec{\gamma}^{(t-1)}}\right) q^{|\vec{\gamma}^{(s)}|}\,.
\end{equation}
The $q^{n-|\vec{\gamma}^{(0)}|}/(q+1)^n$ factor is the probability of starting in $\vec{\gamma}^{(0)}$. Thus, this can be re-expressed as
\begin{equation}
    Z_0 = \mathbb{E}_0\left[q^{|\vec{\gamma}^{(s)}|}\right]\,,
\end{equation}
where $\mathbb{E}_0$ denotes expectation over the stochastic process that generates the trajectory $\gamma = (\gamma^{(0)},\ldots,\gamma^{(s)})$, as described above, and as concluded in Eq.~\eqref{eq:EV0} of \autoref{sec:overviewnoisywalk}.  In Ref.~\cite{dalzell2020anticoncentration}, this stochastic process was termed the ``biased random walk.''

\subsection{Action of averaged noise channel on identity and swap}\label{app:effectofnoise}

Since every single-qudit noise channel is followed by a Haar-random (either single-qudit or two-qudit) gate in the circuit diagram, we are free to add a single-qudit Haar-random gate immediately after every noise channel without changing the overall circuit ensemble (the Haar measure is invariant under multiplication by any unitary). Denote this single-qudit Haar-random matrix by $V$. 
There will be a difference in the analysis between the calculation of $Z_0$, $Z_1$ and $Z_2$, where $Z_w$ contains $w$ copies of the noisy output as defined in Eqs.~\eqref{eq:Z0}, \eqref{eq:Z1}, \eqref{eq:Z2}. Define
\begin{align}
    \mathcal{N}_0 &= \mathcal{I} \otimes \mathcal{I} \\
    \mathcal{N}_1 &= \mathcal{I} \otimes \mathcal{N} \\
    \mathcal{N}_2 &= \mathcal{N} \otimes \mathcal{N}
\end{align}
with $\mathcal{I}$ denoting the single-qudit identity channel. Let $\rho$ be a state on two copies of a single-qudit Hilbert space. Then for $w \in \{0,1,2\}$, let
\begin{align}
    N_w[\rho] &= \EV_V\left[V^{\otimes 2} \; \mathcal{N}_w(\rho) V^{\dagger\otimes 2}\right]
\end{align}
be the Haar-averaged noise channel. 

We will only need to compute the action of $N_w$ on input states $\rho = I/q^2$ (here $I$ is the two-qudit identity operator) or $\rho = S/q$ since, as shown above, the random gates turn the initial state $\ketbra{0^n}$ into a linear combination of tensor products of $I/q^2$ or $S/q$ on each qudit. Note that since $\mathcal{N}$ is assumed to be unital, we have
\begin{align}
    N_w\left[\frac{I}{q^2}\right] &= \frac{I}{q^2}
\end{align}
for all $w \in \{0,1,2\}$. 
However, computing the action on $S/q$ is not as simple. Let
\begin{align}
    Y_w &= \tr\left(S \mathcal{N}_w(S)\right)\,.
\end{align}
(Note that $Y_0 = q^2$ since $\mathcal{N}_0$ is the identity channel.) Then, use Eq.~\eqref{eq:Midentityswap} and the fact that $\mathcal{N}$ is trace-preserving to show
\begin{align}\label{eq:NwintermsofY}
    N_w\left[\frac{S}{q}\right] &= \frac{q^2-Y_w}{q^2-1} \frac{I}{q^2} + \frac{Y_w-1}{q^2-1}\frac{S}{q} \,.
\end{align}
Now we relate the quantities $Y_1$ and $Y_2$ to the average infidelity and the unitarity, respectively. Recall that $\tr(AB) = \tr(S(A \otimes B))$. 
Using this trick and Eq.~\eqref{eq:Midentityswap}, the average infidelity from Eq.~\eqref{eq:averageinfidelity}, can be evaluated as follows:
\begin{align}
    r &= 1-\int dV \tr\left[ V \ketbra{\psi} V^{\dagger}\mathcal{N}(V \ketbra{\psi} V^{\dagger})\right] \\
    &= 1-\int dV \tr\left[S\left( V \ketbra{\psi} V^{\dagger}\otimes \mathcal{N}(V \ketbra{\psi} V^{\dagger})\right)\right] \\
    &= 1-\int dV \tr\left[S(\mathcal{I} \otimes \mathcal{N})\left(\left( V \ketbra{\psi} V^{\dagger}\right)^{\otimes 2}\right)\right] \\
    &= 1-\tr\left[S\mathcal{N}_1 \left(\frac{I+S}{q(q+1)}\right)\right]\\
    &= 1-\frac{1-q^{-1}Y_1}{q+1} = \frac{q-q^{-1}Y_1}{q+1}\,.
\end{align}
The unitarity from Eq.~\eqref{eq:unitarity}, can be evaluated in a similar way.
\begin{align}
    u&=\frac{q}{q-1}\left(\int dV \tr\left[ \mathcal{N}\left(V \ketbra{\psi} V^{\dagger}\right)^2\right]-\frac{1}{q}\right) \\
    &=\frac{q}{q-1}\int dV \tr\left[ S\left(\mathcal{N}\left(V \ketbra{\psi} V^{\dagger}\right)\right)^{\otimes 2}\right]-\frac{1}{q-1} \\
    &= \frac{q}{q-1}\int dV \tr\left[ S(\mathcal{N} \otimes \mathcal{N})\left(\left(V \ketbra{\psi} V^{\dagger}\right)^{\otimes 2}\right)\right]-\frac{1}{q-1} \\
    &=\frac{q}{q-1} \tr\left[ S \mathcal{N}_2\left(\frac{I+S}{q(q+1)}\right)\right]-\frac{1}{q-1} \\
    &=\frac{q +Y_2}{(q-1)(q+1)}-\frac{1}{q-1} \\
    &= \frac{Y_2 -1}{q^2-1} \,.
\end{align}
Plugging these relations back into Eq.~\eqref{eq:NwintermsofY} gives us
\begin{align}
    N_0\left[ \frac{S}{q} \right] &= \frac{S}{q} \\
    N_1\left[ \frac{S}{q} \right] &= \frac{qr}{q-1} \frac{I}{q^2} + \left(1-\frac{qr}{q-1}\right) \frac{S}{q}  \\
    N_2\left[ \frac{S}{q} \right] &= (1-u) \frac{I}{q^2} + u \frac{S}{q} \,.
\end{align}
For weak noise channels, $r$ is close to 0 and $u$ is close to 1. In this case we see that the noise causes some small amount of leakage from the $S$ state to the $I$ state, but no leakage from the $I$ state to the $S$ state, introducing an asymmetry into the problem that did not exist in the noiseless analysis.

For $t=1,\ldots, s$, let $N_w^{(t)} = \mathcal{I}_{[n]\setminus\{i_t\}} \otimes N_{w,\{i_t\}}$ be the channel that acts with the averaged noise channel on site $i_t$ and identity elsewhere, and let $N_w^{\prime\,(t)} = \mathcal{I}_{[n]\setminus\{j_t\}} \otimes N_{w,\{j_t\}}$ be the same for site $j_t$. For $t\leq 0$ and $t > s$, let $N_w^{(t)}$ be the identity channel. If $\rho$ is a linear combination of tensor products of $I/q^2$ and $S/q$, $N_w^{(t)}(\rho)$ and $N_w^{\prime\,(t)}(\rho)$ will be as well, with coefficients that transform linearly (and stochastically). For configurations $\vec{\gamma}, \vec{\nu} \in \{I,S\}^n$, let $N_{w,\vec{\nu}\vec{\gamma}}^{(t)}$ denote the matrix elements of this transformation, that is 
\begin{equation}
    N_w^{(t)}\left[\bigotimes_{j=0}^{n-1} \frac{\gamma_j}{\tr(\gamma_j)}\right] = \sum_{\vec{\nu} \in \{I,S\}^n}N_{w,\vec{\nu}\vec{\gamma}}^{(t)}\bigotimes_{j=0}^{n-1} \frac{\nu_j}{\tr(\nu_j)} \,,
\end{equation}
where for $1 \leq t \leq s$,
\begin{align}
    N_{0,\vec{\nu}\vec{\gamma}}^{(t)} &= 
    \begin{cases}
        1 & \text{if } \vec{\gamma} = \vec{\nu} \\
        0 & \text{otherwise}
    \end{cases} \label{eq:N0matrixelts} \\
    N_{1,\vec{\nu}\vec{\gamma}}^{(t)} &= 
    \begin{cases}
        1 & \text{if } \gamma_{i_t} = \nu_{i_t} = I \text{ and } \vec{\gamma} = \vec{\nu}\\
        1-\frac{qr}{q-1} & \text{if } \gamma_{i_t} = S \text{ and } \nu_{i_t} = S \text{ and } \vec{\gamma} = \vec{\nu}\\
        \frac{qr}{q-1} & \text{if } \gamma_{i_t} = S \text{ and } \nu_{i_t} = I \text{ and } \gamma_a = \nu_a \forall a \neq i_t\\
        0 & \text{otherwise}
    \end{cases} \label{eq:N1matrixelts} \\
    N_{2,\vec{\nu}\vec{\gamma}}^{(t)} &= 
    \begin{cases}
        1 & \text{if } \gamma_{i_t} = \nu_{i_t} = I \text{ and } \vec{\gamma} = \vec{\nu}\\
        u & \text{if } \gamma_{i_t} = S \text{ and } \nu_{i_t} = S \text{ and } \vec{\gamma} = \vec{\nu}\\
        1-u & \text{if } \gamma_{i_t} = S \text{ and } \nu_{i_t} = I \text{ and } \gamma_a = \nu_a\; \forall a \neq i_t\\
        0 & \text{otherwise}\,,
    \end{cases}\label{eq:N2matrixelts}
\end{align}
and $N_w^{\prime\,(t)}$ are given by the same equations, with $j_t$ replacing $i_t$. 

\subsection{Mapping noisy circuits to stochastic processes}

Define
\begin{align}
    \mathcal{U}_0^{(t)} &= \mathcal{U}^{(t)} \otimes \mathcal{U}^{(t)} \\
    \mathcal{U}_1^{(t)} &= \widetilde{\mathcal{U}}^{(t)} \otimes \mathcal{U}^{(t)}\\
    \mathcal{U}_2^{(t)} &= \widetilde{\mathcal{U}}^{(t)} \otimes \widetilde{\mathcal{U}}^{(t)}\,,
\end{align}
where $\mathcal{U}^{(t)}$ and $\widetilde{\mathcal{U}}^{(t)}$ are given in Eqs.~\eqref{eq:U^t} and \eqref{eq:tildeU^t}. Then we may write, for $w \in \{0,1,2\}$
\begin{align}
    Z_w = q^{2n} \EV_U \left[\tr\left[\ketbra{0^n}^{\otimes 2}\;\mathcal{U}_w^{(n+s)} \circ \cdots \circ \mathcal{U}_w^{(-n+1)}\left(\ketbra{0^n}^{\otimes 2}\right) \right]\right]\,.
\end{align}
Since each $U^{(t)}$ is chosen independently, we are free to perform the expectation value individually over each $\mathcal{U}_w^{(t)}$ channel. The noiseless channel $\mathcal{U}^{(t)}_0 = \mathcal{U}^{(t)\otimes 2}$ averages to $M^{(t)}$, where $M^{(t)}$ is given in Eq.~\eqref{eq:M(t)}.
The action of the noise may also be averaged, since, as discussed in \autorefapp{app:effectofnoise}, we may pull out a single-qudit Haar random gate to act after each noise location. Thus, the noiseless single qudit gates at the end of the circuit may be dropped as they are being absorbed into the noise.
Let
\begin{equation}
    M_w^{(t)} = N_w^{\prime\,(t)} \circ N_w^{(t)} \circ M^{(t)}
\end{equation} 
so that 
\begin{align}
    Z_w = q^{2n} \tr\left[\ketbra{0^n}^{\otimes 2}\;M_w^{(s)} \circ \cdots \circ M_w^{(-n+1)}\left(\ketbra{0^n}^{\otimes 2}\right) \right]\,.
\end{align}
Following the noiseless analysis of \autorefapp{app:noiselessanalysisframework}, we may now write $Z_w$ as a product of matrices
\begin{align}\label{eq:Zwmatrixproduct}
    Z_w &= \sum_{\gamma \in \{I,S\}^{n \times (3s+1)}}\frac{q^{n-|\vec{\gamma}^{(0)}|}}{(q+1)^n} \left(\prod_{t=1}^{s}  N^{\prime\,(t)}_{w,\vec{\gamma}^{(t)}\vec{\gamma}^{(t-1/3)}}
    N^{(t)}_{w,\vec{\gamma}^{(t-1/3)}\vec{\gamma}^{(t-2/3)}}
    M^{(t)}_{\vec{\gamma}^{(t-2/3)}\vec{\gamma}^{(t-1)}}\right) q^{|\vec{\gamma}^{(s)}|}
\end{align}
generalizing Eq.~\eqref{eq:Z0asmatrixproduct}. In the notation of \autoref{sec:overviewnoisywalk}, for $w=1$ this can be expressed as $Z_1=\mathbb{E}_{rq/(q+1)}[q^{|\vec{\gamma}^{(s)}|}]$ where the expectation is over the stochastic process that generates a trajectory with $3s+1$ configurations (at time values $t=0,1/3,2/3,1,\ldots,s$). For $w=2$, it reads $Z_2=\mathbb{E}_{1-u}[q^{|\vec{\gamma}^{(s)}|}]$.

The expressions for $Z_w$ as weighted sums over trajectories can alternatively be interpreted as partition functions of an Ising-like stat mech model where each $\gamma_a^{(t)}$ is an Ising variable $\{+1,-1\}$. There are interactions between adjacent Ising variables whenever a gate or noise location acts between them; the associated interaction strengths can be calculated from the matrix elements listed above. 

\subsection{Bra-ket notation for the stochastic process}
We now write the above insights in a notation that offers slightly more flexibility, which we will utilize in our proofs. The reader need only read this section to verify the proofs that appear later.  Consider a $2^n$-dimensional vector space, where orthonormal basis states are labeled by configurations $\ket{\vec{\nu}}$ for each $\vec{\nu} \in \{I,S\}^n$. 
Define the vectors
\begin{align}
    \ket{\mathbf{1}} &= \sum_{\vec{\nu} \in \{I,S\}^n} \ket{\vec{\nu}} \\
    \ket{\mathbf{q}} &= \sum_{\vec{\nu} \in \{I,S\}^n}  q^{|\vec{\nu}|}\ket{\vec{\nu}}\\
    \ket{\Lambda} &= \frac{1}{(q+1)^n}\sum_{\vec{\nu} \in \{I,S\}^n}  q^{n-|\vec{\nu}|}\ket{\vec{\nu}}\,.
\end{align}
Then we may define $2^n \times 2^n$ transition matrices $P^{(t)}$, which enact the $t$th step of the noiseless stochastic process, as well as matrices $Q_{\sigma}^{(t)}$ and $Q'^{(t)}_{\sigma}$ which enact the $S \rightarrow I$ transition with probability $\sigma$ on qudits $i_t$ and $j_t$, respectively. Explicitly we let
\begin{align}
    P^{(t)} &= \mathcal{I}_{[n]\setminus \{i_t,j_t\}} \otimes P_{\{i_t,j_t\}} \\
    Q_\sigma^{(t)} &= \mathcal{I}_{[n]\setminus \{i_t\}} \otimes \Big(\ketbra{I} + (1-\sigma) \ketbra{S} + \sigma \ketAbraB{I}{S}\Big)_{\{i_t\}} \\
    Q'^{(t)}_{\sigma}  &= \mathcal{I}_{[n]\setminus \{j_t\}} \otimes \Big(\ketbra{I} + (1-\sigma) \ketbra{S} + \sigma \ketAbraB{I}{S}\Big)_{\{j_t\}}\,,
\end{align}
where the subscripts on the right-hand side denote which bits are acted upon by which operators, and
\begin{align}
      D ={} &\ketbra{II} + \ketbra{SS} \label{eq:Dmatrix}\\
      T ={} &\frac{q^2}{q^2+1} \ketAbraB{II}{IS} +\frac{q^2}{q^2+1} \ketAbraB{II}{SI} + \frac{1}{q^2+1} \ketAbraB{SS}{SI}+\frac{1}{q^2+1} \ketAbraB{SS}{IS}\label{eq:Tmatrix}\\
     P ={} &D+ T   \label{eq:Pmatrix}\,.
\end{align}
Note that $P$ is a stochastic $4 \times 4$ matrix. 
Then, define
\begin{equation}\label{eq:Zsigma}
    \mathcal{Z}_\sigma = \bra{\mathbf{q}}\left(\prod_{t=1}^s Q'^{(t)}_{\sigma}Q^{(t)}_{\sigma} P^{(t)}\right)\ket{\Lambda}\,,
\end{equation}
If the circuit diagram is generated randomly, as is the case for the complete-graph architecture, then $\mathcal{Z}_\sigma$ is defined instead as the mean of the above expression over choice of circuit diagram.  For the specific case of the complete-graph architecture (where the pair of qudits acted upon by each gate is chosen independently from all other gates), the average of $\mathcal{Z}_\sigma$ over different circuit diagrams can be accomplished by averaging the matrix $Q'^{(t)}_{\sigma}Q^{(t)}_{\sigma} P^{(t)}$ over all choices of $\{i_t,j_t\}$. This is the convention we follow when analyzing the complete-graph architecture.  

The $\ket{\Lambda}$ in the equation above represents the distribution over the initial configuration $\vec{\gamma}^{(0)}$, and the $\bra{\mathbf{q}}$ represents the weighting given to the final configuration $\vec{\gamma}^{(s)}$. Thus, the equation for $Z_w$ in Eq.~\eqref{eq:Zwmatrixproduct} implies that
\begin{align}
    Z_0 &= \mathcal{Z}_0 \\
    Z_1 &= \mathcal{Z}_{r q/(q-1)} \\
    Z_2 &= \mathcal{Z}_{1-u}\,.
\end{align}

\section{Detailed proofs}\label{app:rigorousproofs}

The statements of our main theorems in the appendix are slightly more general than in the main text: we consider a general class of architectures that are both ``layered'' and ``regularly connected,'' which we define below. The theorem statements are in terms of the anti-concentration size $s_{AC}$ of the architecture, which is defined \cite{dalzell2020anticoncentration} to be the minimum circuit size $s$ such that $Z_0 \leq 4q^n/(q^n+1)$. The 1D architecture and complete-graph architecture are the only architectures known to have $s_{AC} = \Theta(n\log(n))$, so for clarity, we previously restricted our statements to those architectures. 

First, in \autorefapp{app:defsandmainlemmas}, we present definitions and our main lemmas, which are themselves dependent on more minor lemmas. Then, in \autorefapp{app:pfsmaintheorems}, we prove a slightly generalized version of our theorems from the main text, based on the main lemmas. Afterward, in \autorefapp{app:machinery}, we develop some more machinery and state the minor lemmas, deferring their proofs to \autorefapp{app:deferredproofslemmas}.

\subsection{Definitions and main lemmas}\label{app:defsandmainlemmas}
Our proofs apply to architectures that are layered and $h$-regularly connected for some constant $h=O(1)$. The regularly connected property was defined in Ref.~\cite{dalzell2020anticoncentration}, where it was conjectured to imply anti-concentration after $\Theta(n\log(n))$ gates, and we repeat its definition here.

First, define an architecture as in Ref.~\cite{dalzell2020anticoncentration} to be an efficient (possibly randomized) algorithm that takes as input circuit parameters $(n,s)$ and outputs a length-$s$ sequence of size-2 subsets $ (A^{(1)},\ldots,A^{(s)})$, where $A^{(t)} \subset [n]$ and $|A^{(t)}| = 2$ for each $t$. The subsets $A^{(t)}$ correspond to the pair of qudits acted upon by a gate at time step $t$. 
\begin{definition}[Regularly connected \cite{dalzell2020anticoncentration}]\label{def:regularlyconnected}
    We say an random quantum circuit architecture is $h$-\textit{regularly connected} if for any $n$, any $t$, any subsequence $A=(A^{(1)},\ldots,A^{(t)})$ and any proper subset $R \subset [n]$ of qudit indices, there is at least a $1/2$ probability that, conditioned on the first $t$ gates in the gate sequence being $A$, there exists some index $t'$ for which $t< t' \leq t+hn$, $A^{(t')} \cap R \neq \emptyset$, and $A^{(t')} \not\subset R$.
\end{definition}
If $h=O(1)$, we often simply call the architecture regularly connected, without specifying $h$. This property is a precise way of saying that the circuit does not break into multiple distinct parts that rarely interact with each other (a feature that would prevent scrambling): for any bipartition, there is usually a gate that couples one qubit from each half at least once every $O(n)$ time steps. Nearly all natural architectures are regularly connected (a notable exception being the hypercube architecture \cite{dalzell2020anticoncentration}). 

Next, we define layered, which simply means that the gates can always be neatly arranged into layers of $n/2$ non-overlapping gates. 
\begin{definition}\label{def:layered}
    An architecture is layered if any sequence of gates $(A^{(1)},\ldots,A^{(s)})$ it generates with non-zero probability has the property that for any integer $d\geq 0$, and any pair of gates in the same ``layer''
    \begin{equation}
        t_1,t_2 \in \{dn/2+1, dn/2+2,\ldots, (d+1)n/2\} 
    \end{equation}
    with $t_1 \neq t_2$, we have $A^{(t_1)} \cap A^{(t_2)} = \emptyset$. Thus, all $n$ qudits are acted upon by exactly one gate out of every $n/2$ gates. 
\end{definition}
For layered architectures we can speak clearly about the depth $d = 2s/n$. The anti-concentration depth is then defined as $d_{AC} = 2s_{AC}/n$.  We will generally require $s$ be a multiple of $n/2$ so that there are an integer number of layers.  Regular lattice architectures in $D$ spatial dimensions are typically layered, although adhering strictly to the definition would require applying periodic boundary conditions. We do not expect this condition is actually necessary for our results, but it is analytically convenient. The only place we need it is in \autoref{lem:boundSdestinedmassLAYERS}. 

Our theorems are corollaries of the following lemmas. Recall the definition of $\mathcal{Z}_\sigma$ from Eq.~\eqref{eq:Zsigma}. Note that in these proofs, all constants are dependent on $q$ as well as $h$ (the regularly connected parameter), but independent of $n$ and the noise parameters. 

%
%
\begin{lemma}\label{lem:mainlemmaZbound}
If the random quantum circuit architecture is $h$-regularly connected and layered with anti-concentration depth $d_{AC}$, then there exist constants $c_0$, $c_1$, $c_2$, $c_3$, $c_4$, $c_5$, and $n_0'$ that depend on $h$ and $q$ but not on $n$ or $\sigma$, such that as long as $\sigma \leq c_5/n$ and $n \geq n_0'$, for any value of the circuit depth $d$,
\begin{align}
     \frac{q^n-1}{q^n+1}\left(1-f_\sigma\right)^{d}\leq{} &\mathcal{Z}_\sigma-1
    \leq  \frac{q^n-1}{q^n+1}\left(1-f_\sigma \right)^{d} e^{K_\sigma} \,,
\end{align}
where
\begin{align}
    f_\sigma &=  \frac{1-(1-\sigma(1-q^{-2}))^n}{1-q^{-2n}} \label{eq:fsigma}\\
    K_\sigma &= c_0 n d \sigma^2 + c_1 n \sigma d_{AC}+c_2e^{-c_3(d-d_{AC})+2\sigma d n} + c_4 n \sigma \log(1/(n\sigma))\,. \label{eq:Ksigma}
\end{align}
\end{lemma}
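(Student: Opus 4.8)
\textbf{Plan of proof for \autoref{lem:mainlemmaZbound}.}

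The plan is to reduce $\mathcal{Z}_\sigma$ of Eq.~\eqref{eq:Zsigma} to a statement about where probability mass ends up in the biased random walk, and then to compare the true walk against an idealized caricature in which the mass perfectly re-equilibrates to the two (meta)stable configurations $I^n$ and $S^n$ after every layer of $n/2$ gates. First I would introduce the hitting probability $Q(\vec\nu)$, defined as the probability that the \emph{noiseless} biased walk started at configuration $\vec\nu$ is eventually absorbed at $S^n$ rather than $I^n$. The transition rules in Eqs.~\eqref{eq:MII}--\eqref{eq:MIS} make $q^{2|\vec\gamma^{(t)}|}$ a martingale along the noiseless walk, so optional stopping gives the closed form $Q(\vec\nu)=(q^{2|\vec\nu|}-1)/(q^{2n}-1)$, a function of Hamming weight only (this is also where one uses regular connectivity, to know the noiseless walk reaches a fixed point almost surely). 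Writing $p_s$ for the distribution over configurations produced by the full noisy walk and using $q^{|\vec\nu|}=1+(q^n-1)R(\vec\nu)$ with $R(\vec\nu)=(q^{|\vec\nu|}-1)/(q^n-1)$, one gets the exact identity $\mathcal{Z}_\sigma-1=(q^n-1)\,\mathbb{E}_{p_s}[R(\vec\gamma^{(s)})]$. Since $R(\vec\nu)/Q(\vec\nu)=(q^n+1)/(q^{|\vec\nu|}+1)\ge 1$, we have $R\ge Q$ pointwise, so the lower bound on $\mathcal{Z}_\sigma$ reduces to a lower bound on the ``$S$-destined mass'' $\mu_s:=\mathbb{E}_{p_s}[Q(\vec\gamma^{(s)})]$, while the upper bound additionally requires controlling $\mathbb{E}_{p_s}[R-Q]$.

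Next I would track the evolution of $\mu_t=\mathbb{E}_{p_t}[Q(\vec\gamma^{(t)})]$ through the interleaved gate/noise steps. Because $Q$ is harmonic for noiseless gates, $\mu_t$ is unchanged at gate steps and decreases only at noise steps; from the explicit form of $Q$, a noise location on qudit $a$ decreases $\mu$ by exactly $\sigma\,\frac{q^2-1}{q^{2n}-1}\,\mathbb{E}_{p_t}\!\big[\mathbf 1[\gamma_a=S]\,q^{2|\vec\gamma|-2}\big]$. In the caricature, where between consecutive layers all $S$-destined mass sits exactly at $S^n$, one layer flips each of the $n$ qudits independently to $I$ with probability $\sigma$ and (harmlessly, since $Q$ is harmonic) re-equilibrates, so the per-layer survival factor of $\mu$ is $\mathbb{E}[Q(\mathrm{Bin}(n,1-\sigma))]=\tfrac{(q^2-(q^2-1)\sigma)^n-1}{q^{2n}-1}$, which is exactly $1-f_\sigma$ with $f_\sigma$ as in Eq.~\eqref{eq:fsigma}. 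Since $\mu_0=\mathbb{E}_\Lambda[Q]=1/(q^n+1)$ (a one-line computation from the definition of $\ket{\Lambda}$), iterating gives $\mu_s=\tfrac{1}{q^n+1}(1-f_\sigma)^d$ in the caricature, which is precisely the claimed leading factor. The whole technical content of the lemma is to bound the multiplicative discrepancy between the true walk and this caricature by $e^{K_\sigma}$, with $K_\sigma$ as in Eq.~\eqref{eq:Ksigma}.

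For the \emph{lower bound} I would argue that the true walk loses $S$-destined mass no faster, per layer, than the caricature: $S^n$ is the configuration from which noise extracts the most $\mu$-weight per unit of $\mu$-weight present, because it has the most $S$ bits available to be flipped, while noiseless gates never change $\mu$; hence every configuration's $\mu$-weight survives one clean layer with factor $\ge 1-f_\sigma$, so $\mu_s\ge(1-f_\sigma)^d\mu_0$ and $\mathcal{Z}_\sigma-1\ge(q^n-1)\mu_s\ge\tfrac{q^n-1}{q^n+1}(1-f_\sigma)^d$. The \emph{upper bound} is the hard direction. The central ingredient is an exponential-clustering estimate for $S$-destined mass --- the role of \autoref{lem:boundSdestinedmassLAYERS}, which is the one place the layered hypothesis enters --- stating that, provided $\sigma\le c_5/n$, the probability of being $k$ bit-flips from $S^n$ conditioned on being $S$-destined decays geometrically in $k$ with ratio $O(n\sigma)$. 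Granting this: (i) the end-of-circuit gap obeys $\mathbb{E}_{p_s}[R-Q]\le\sum_k(S\text{-destined mass at weight }n{-}k)\,O(q^k)\lesssim\mu_s\cdot O(n\sigma)$ (the geometric sum converges precisely because $c_5<1/q$), and with the weaker-but-uniform form of the clustering bound this produces the $c_4\,n\sigma\log(1/(n\sigma))$ term; (ii) on each post-anti-concentration layer the $S$-destined mass is multiplied by a factor within $e^{O(n\sigma^2)}$ of $1-f_\sigma$, since nearly all of it is at $S^n$ and the non-$S^n$ part contributes only second-order-in-$\sigma$ corrections, summing over $d$ layers to $c_0\,nd\sigma^2$; (iii) errors during the first $d_{AC}$ layers, before anti-concentration, are handled by the crude count that there are $O(n\,d_{AC})$ of them each degrading the bound by $O(\sigma)$, giving $c_1\,n\sigma\,d_{AC}$; and (iv) the residual not-yet-equilibrated $I$-destined mass at time $s$ --- which still contributes to $\mathbb{E}_{p_s}[R]$ --- decays by a constant factor per layer past $d_{AC}$ (this is the convergence rate of the noiseless walk, i.e.\ the anti-concentration estimate of Ref.~\cite{dalzell2020anticoncentration}), but is continually replenished by fresh errors at rate $\Theta(n\sigma)$ per layer and is measured relative to the decaying leading term, giving the $c_2\,e^{-c_3(d-d_{AC})+2\sigma dn}$ term.

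The step I expect to be the main obstacle is the exponential-clustering estimate (and the tied-in requirement $\sigma=O(1/n)$). The point is that after a single $S\to I$ flip the walk needs $\Theta(n)$ gates before the stray $I$-qudit participates in a gate that begins to restore the configuration toward $S^n$, so for leaked mass to re-cluster before the next error strikes one needs errors rarer than one per $\Theta(n)$ gates; $h$-regular connectivity is exactly what guarantees re-clustering on the $\Theta(n)$-gate timescale (for any bipartition a coupling gate appears within $hn$ steps with probability at least $1/2$). Turning this heuristic into a geometric-decay bound that holds \emph{uniformly} over configurations and layers, with enough slack to survive the $d$-fold iteration, is the crux and is presumably the job of the machinery and minor lemmas of \autorefapp{app:machinery}; the remainder of the proof of \autoref{lem:mainlemmaZbound} is the bookkeeping that assembles these pieces into the stated two-sided inequality.
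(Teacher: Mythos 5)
Your overall route is the same as the paper's: your hitting probability $Q(\vec{\nu})=(q^{2|\vec{\nu}|}-1)/(q^{2n}-1)$ is exactly the diagonal of the paper's $L_S$, your $\mu_t$ is the paper's $S$-destined mass $\braket{\mathbf{1},\mathbf{1}}{v_{SS}^{(t)}}$, the identity $\mathcal{Z}_\sigma-1=(q^n-1)\EV[R]$ together with $R\geq Q$ pointwise is the content of the lower bound of \autoref{lem:Zsigmaupperlowerbound}, and your per-layer worst-case survival factor (commuting each layer's noise to the end of the layer and checking that the survival fraction $E_w$ is minimized at $w=n$, where it equals $1-f_\sigma$) is precisely the lower-bound half of \autoref{lem:boundSdestinedmassLAYERS}, including the correct initial mass $\mu_0=1/(q^n+1)$. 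So the lower bound in your plan is essentially complete, modulo the monotonicity check $E_w\geq E_n$, which the paper verifies by an explicit computation rather than the "more $S$ bits to flip" heuristic.

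The genuine gap is in the upper bound, and it is exactly the ingredient you defer: the exponential-clustering estimate. As you state it --- geometric decay in the distance from $S^n$ of the noisy walk's mass \emph{conditioned on being $S$-destined} --- the statement is false uniformly in time: at $t=0$ the $S$-destined mass under $\ket{\Lambda}$ sits at Hamming weight concentrated near $nq/(q+1)$, a constant fraction of bits away from $S^n$, and it takes $\Theta(s_{AC})$ gates to migrate there. The paper does not prove your statement directly; it introduces a coupled pair of walks (noiseless copy $X$, noisy copy $Y$) and proves clustering for the difference system $W=\Delta(X,Y)$, which starts exactly at $S^n$ (\autoref{lem:expclustering}). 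That proof in turn needs the decomposition $\ket{v_{SI}^{(t',t)}}$ of redirected mass by the time of the error that created it, the contraction of such mass toward $I^n$ at rate $e^{-\Omega((t'-t)/n)}$ under $P_I^{(t)}$ (\autoref{lem:ACconvergenceUnderPI}, where regular connectivity enters), and the fixed ratio $\approx q^{-2(n-w)}$ between $S$-destined and $I$-destined mass at weight $w$; the clustering statement you actually need for the noisy copy is then recovered only after anti-concentration, with an additive $\eta'_t$ correction from \autoref{lem:anticoncentrationInNoisySection} and \autoref{lem:ACconvergencetoSn}. None of this coupling machinery, nor any substitute for it, is present in your proposal, and the heuristic "errors are rarer than one per $\Theta(n)$ gates, so leaked mass re-clusters" does not by itself yield a bound uniform over configurations and layers with the $n\sigma$ prefactor your steps (i)--(ii) require. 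A secondary bookkeeping point: the $c_4\,n\sigma\log(1/(n\sigma))$ term does not arise from the end-of-circuit gap $\EV[R-Q]$ (with the clustering bound that gap costs only $e^{O(n\sigma)}$); in the paper it comes from the extra $\Theta(\log(1/(n\sigma)))$ layers after $d_{AC}$ during which the unequilibrated correction $\eta'$ still exceeds $n\sigma$ and only the trivial per-layer bound is available. This misattribution does not threaten the stated bound, but it indicates the assembly of your items (i)--(iv) has not actually been carried out.
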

\begin{proof}
The lower bound is an immediate consequence of two lemmas that appear later, \autoref{lem:Zsigmaupperlowerbound} and \autoref{lem:boundSdestinedmassLAYERS}. The upper bound is also an immediate consequence, with the constant $c_1$ absorbing an $O(n\sigma)$ term since $d_{AC} = 2s_{AC}/n \geq \Omega(\log(n))$ by the results of Ref.~\cite{dalzell2020anticoncentration}.
\end{proof}

We show the analogous statement for the complete-graph architecture. 

\begin{lemma}\label{lem:mainlemmaZboundCG}
If the random quantum circuit architecture is the complete-graph architecture, then there exist constants $c'_0$, $c'_1$, $c'_2$, $c'_3$, $c'_4$, $c'_5$, and $n_0$ that depend on $q$ but not on $n$ or $\sigma$, such that as long as $\sigma \leq c'_5/n$ and $n \geq n_0$, for any value of the circuit size $s$,
\begin{align}
     \frac{q^n-1}{q^n+1}\left(1-f'_\sigma\right)^{s}\leq{} &\mathcal{Z}_\sigma-1
    \leq  \frac{q^n-1}{q^n+1}\left(1-f'_\sigma \right)^{s} e^{K'_\sigma} \,,
\end{align}
where
\begin{align}
    f'_\sigma &=  \frac{1-(1-\sigma(1-q^{-2}))^2}{1-q^{-2n}} \label{eq:fprimesigma}\\
    K'_\sigma &= c'_0 s \sigma^2 + c'_1 \sigma s_{AC}+c'_2e^{-c'_3(s-s_{AC})/n + 4 \sigma s} + c'_4 n \sigma \log(1/(n\sigma)) \label{eq:Kprimesigma}\,,
\end{align}
and $s_{AC} = \Theta(n\log(n))$ is the anti-concentration size for the complete-graph architecture. 
\end{lemma}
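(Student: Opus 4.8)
\emph{Proof plan.} The plan is to establish \autoref{lem:mainlemmaZboundCG} by the same two-step decomposition that yields \autoref{lem:mainlemmaZbound}: first, the architecture-independent reduction of \autoref{lem:Zsigmaupperlowerbound}, which sandwiches $\mathcal{Z}_\sigma - 1$ between the amount of ``$S$-destined'' probability mass surviving at time $s$ and that same quantity inflated by a controlled multiplicative factor; and second, a \emph{complete-graph} analogue of \autoref{lem:boundSdestinedmassLAYERS} that tracks how fast the $S$-destined mass decays and how tightly it clusters near the $S^n$ configuration. Recall from Eq.~\eqref{eq:Zsigma} that for the complete-graph architecture $\mathcal{Z}_\sigma$ is computed with each factor $Q'^{(t)}_\sigma Q^{(t)}_\sigma P^{(t)}$ replaced by its average over the uniformly random choice of the pair $\{i_t,j_t\}$. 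The key simplification is that this averaged one-step operator depends on a configuration only through its Hamming weight, so the whole process collapses to a Markov-type chain on the $n+1$ Hamming-weight classes $0,1,\dots,n$, with an explicit (essentially tridiagonal) transition matrix read off from Eqs.~\eqref{eq:Tmatrix}--\eqref{eq:Pmatrix} and the noise rules of \autorefapp{app:effectofnoise}.

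First I would set up this reduced $(n+1)$-state chain, identify its two ``slow'' directions in the noiseless ($\sigma = 0$) case---the span of the $I^n$ and $S^n$ absorbing states---and invoke the anti-concentration estimates of Ref.~\cite{dalzell2020anticoncentration} to conclude that all other modes have decayed by a factor $e^{-\Omega((s-s_{AC})/n)}$ after $s$ gates, with $s_{AC} = \Theta(n\log n)$. Turning on the noise makes the $S^n$ state only metastable: since the noise rules induce the transition $S \to I$ but never $I \to S$, the $I^n$ state remains exactly absorbing, while the $S^n$-overlap of the $S$-destined mass shrinks per gate by the factor $1 - f'_\sigma$, where $f'_\sigma$ is precisely the net leakage obtained by composing the per-channel flip probability $\sigma$ with the re-equilibration probabilities $\tfrac{1-q^{-2}}{1-q^{-2n}}$ (to $I^n$) and $\tfrac{q^{-2}-q^{-2n}}{1-q^{-2n}}$ (back to $S^n$) from the toy-model computation, applied to the two noise channels attached to each gate. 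Matching this bookkeeping to the stated $f'_\sigma = \tfrac{1 - (1-\sigma(1-q^{-2}))^2}{1-q^{-2n}}$ is a short but slightly delicate calculation, since the two noise channels on $\{i_t,j_t\}$ act back to back before the next gate rather than with equilibration in between.

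The heart of the argument is the clustering estimate. I would show that, provided $\sigma \le c'_5/n$, the $S$-destined mass is exponentially concentrated on configurations with few $I$-assignments: conditioned on being $S$-destined at time $t$, the probability of being $k$ bit-flips from $S^n$ decays geometrically in $k$, uniformly in $t$. On the reduced chain this is a competition of rates: a configuration with $k$ $I$'s is hit by a gate touching one $I$ and one $S$ with probability $\tfrac{2k(n-k)}{n(n-1)} = \Theta(k/n)$, after which the $I$-bias pulls the weight toward $k-1$ with probability $\tfrac{q^2}{q^2+1}$, whereas the noise can only increase $k$ (by at most two per gate) and does so with probability $O(\sigma) = O(1/n)$; for $k \ge 1$ the restoring rate $\Omega(k/n)$ dominates the excitation rate $O(1/n)$, which forces geometric tails. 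A consequence is that each gate reduces the $S$-destined mass by the factor $1 - f'_\sigma$ up to a multiplicative $(1 + O(k\sigma))$ error, and summing the expected excess $I$-count over the $s$ gates produces the $c'_0 s \sigma^2$ term of $K'_\sigma$. The remaining error terms are accounted for as in the layered case: $c'_1 \sigma s_{AC}$ absorbs the crudely bounded effect of the $O(s_{AC})$ noise channels acting before anti-concentration (where the clustering has not yet set in); $c'_2 e^{-c'_3(s-s_{AC})/n + 4\sigma s}$ is the residual of not-yet-equilibrated ``$I$-destined'' mass, decaying at the post-anti-concentration rate of Ref.~\cite{dalzell2020anticoncentration} but re-inflated by the $e^{O(\sigma s)}$ factor divided out front; and $c'_4 n\sigma\log(1/(n\sigma))$ controls the atypical trajectories in which a lone $I$ lands on a qudit not selected by any gate for an anomalously long stretch.

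The main obstacle I anticipate is making the clustering estimate quantitative and uniform in $t$ even though the chain never fully equilibrates between noise events: one must show that the geometric tail constants and, crucially, the per-gate multiplicative correction $1 + O(k\sigma)$ are good enough that their product over all $s$ gates is only $e^{K'_\sigma}$ of the stated form---in particular that the expected excess $I$-count summed over $s$ steps is $O(s\sigma)$ rather than something larger. I would carry this out via a coupling (or a Lyapunov / generating-function) argument on the reduced $(n+1)$-state chain that simultaneously tracks the $S$-destined component, the $I$-destined component, and the leakage between them. A secondary check is the eigenvalue-matching noted above, i.e.\ verifying that the $S^n$-overlap contracts by exactly the factor in $f'_\sigma$ and not merely by something agreeing with it to leading order in $\sigma$.
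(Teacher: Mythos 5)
Your plan has the same skeleton as the paper's proof: \autoref{lem:Zsigmaupperlowerbound} reduces $\mathcal{Z}_\sigma-1$ to the surviving $S$-destined mass (up to the factor $e^{O(n\sigma)+O(\exp(-\Omega((s-s_{AC})/n)+4\sigma s))}$), and the only new ingredient is the complete-graph analogue of the layered mass-decay bound, which is exactly the paper's \autoref{lem:boundSdestinedmassCG}. Your identification of the per-gate factor is right ($J_n=1-f'_\sigma$), and the ``delicate'' point you flag about the two noise channels acting back-to-back is benign: the destination probabilities encoded in $L_S$ are harmonic for the noiseless dynamics, so $\bra{\mathbf{1}}L_S Q'^{(t)}_{\sigma}Q^{(t)}_{\sigma}L_S^{-1}\ket{S^n}$ evaluates directly to $1-f'_\sigma$ without any intermediate re-equilibration.

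Two steps of your accounting would fail as literally stated. First, the per-gate multiplicative correction for a configuration whose noisy copy has $k=n-w$ sites assigned $I$ is $1+O(\sigma k/n)$, not $1+O(k\sigma)$: in the complete-graph architecture the noise attached to a single gate touches only the two uniformly random sites $\{i_t,j_t\}$, so the deviation from the ideal factor is weighted by the $O(k/n)$ chance of hitting an $I$ site (the paper's bound is $G_w\leq J_n\bigl(1+2b\sigma\tfrac{n-w}{n}\bigr)$). With your literal $1+O(k\sigma)$ and the clustered expectation $\E[k]=O(n\sigma)$, the product over $s$ gates is $e^{O(sn\sigma^2)}$, which is \emph{not} of the form $K'_\sigma$; with the correct $1+O(\sigma k/n)$ it becomes $e^{O(s\sigma^2)}$ as required. (Your own rate computation with $\tfrac{2k(n-k)}{n(n-1)}$ already contains the needed $1/n$, so this is fixable, but it is precisely the point you identify as crucial.) Second, the clustering estimate cannot hold ``uniformly in $t$'' for the conditioned noisy copy: at $t=0$ the $S$-destined portion of $\ket{\Lambda}$ sits at Hamming-weight deficit $\Theta(n)$ (each bit is $I$ with probability $q/(q+1)$), and it takes $\Theta(n\log n)$ gates to cluster. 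The paper avoids this by proving \autoref{lem:expclustering} for the difference ($W$) system, which starts at $S^n$ and hence is clustered at all times; if you instead run a Lyapunov/coupling argument on the noisy copy itself, you must launch it only after the conditioned law is provably clustered and carry along the not-yet-equilibrated residual $\eta'_t\sim e^{-\Omega((t-s_{AC})/n)}$ — in the paper this bookkeeping, not ``lone-$I$ never hit by a gate'' events, is what generates the $c'_1\sigma s_{AC}$, $c'_4 n\sigma\log(1/(n\sigma))$, and $c'_2 e^{-c'_3(s-s_{AC})/n+4\sigma s}$ terms. You also need the lower-bound direction $G_w\geq J_n$ for all $w\geq 1$ (configurations away from $S^n$ lose at most the $f'_\sigma$ fraction), which your sketch asserts only for the $S^n$ overlap and which requires the short explicit check the paper performs.
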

\begin{proof}
    The proof is the same as \autoref{lem:mainlemmaZbound} except using \autoref{lem:boundSdestinedmassCG} in place of \autoref{lem:boundSdestinedmassLAYERS}.
\end{proof}
Note that in the regime $\sigma \leq O(1/n)$, we can bound $1-\sigma(1-q^{-2}) \geq e^{-\sigma(1-q^{-2})}e^{-O(\sigma^2)}$ and the following holds
\begin{align}\label{eq:fsigmaapprox}
    e^{-n\sigma(1-q^{-2})} e^{-O(n\sigma^2)-O(q^{-2n})} &\leq 1-f_{\sigma} \leq e^{-n\sigma(1-q^{-2})} \\
    e^{-2\sigma(1-q^{-2})} e^{-O(\sigma^2)-O(q^{-2n})} &\leq 1-f'_{\sigma} \leq e^{-2\sigma(1-q^{-2})} \label{eq:fprimesigmaapprox}\,.
\end{align}
The upper bound in Eqs.~\eqref{eq:fsigmaapprox} and \eqref{eq:fprimesigmaapprox} actually holds generally for all $\sigma$.

\subsection{Proofs of main theorems from main lemmas}\label{app:pfsmaintheorems}

\subsubsection{Proof of \autoref{thm:fidelitydecay}: fidelity decay}

\begin{customthm}{\autoref{thm:fidelitydecay}}[generalized and restated]

Consider either the complete-graph architecture or a regularly connected, layered random quantum circuit architecture with $n$ qudits of local Hilbert space dimension $q$ and $s$ gates, where the anti-concentration size is given by $s_{AC}$. Let $r$ be the average infidelity of the local noise channels. Then there exists constants $c$ and $n_0$ such that whenever $r \leq c/n$ and $n \geq n_0$, the following holds:
\begin{align}
    \bar{F} &\geq \exp\left(-2sr(1+q^{-1})\right)e^{-O(sr^2)-O(sq^{-2n})} \label{eq:FXEBlowerboundAPP}\\
    \bar{F} &\leq\exp\left(-2sr(1+q^{-1})\right) Q_1 \,, \label{eq:FXEBupperboundAPP}
\end{align}
where $\bar{F}$ is given in Eq.~\eqref{eq:defAveOutFidelity}, and
\begin{equation}\label{eq:Q1app}
    Q_1 = \exp\left(O(sr^2) + O(s_{AC}r) + e^{O(s_{AC}/n)}e^{-\Omega(s/n)} + O(nr \log(1/(nr)))\right)\,.
\end{equation}
\end{customthm}
\begin{proof}
The quantity $\bar{F}$ is precisely $(Z_1-1)/(Z_0-1) = (\mathcal{Z}_{\sigma}-1)/(\mathcal{Z}_{0}-1)$ with $\sigma=r q/(q-1)$. The statements are then direct consequences of \autoref{lem:mainlemmaZbound} for layered architectures and \autoref{lem:mainlemmaZboundCG} for the complete-graph architecture, combined with the observation in Eqs.~\eqref{eq:fsigmaapprox} and \eqref{eq:fprimesigmaapprox}. Note also that $nd=2s$. 
\end{proof}

\subsubsection{Proof of \autoref{thm:convergenceToUniform}: convergence to the uniform distribution}

\begin{customthm}{\autoref{thm:convergenceToUniform}}[generalized and restated]
Consider either the complete-graph architecture or a regularly connected, layered random quantum circuit architecture with $n$ qudits of local Hilbert space dimension $q$ and $s$ gates, where the anti-concentration size is given by $s_{AC}$. Let $u$ be the unitarity of the local noise channels (and define $v=1-u$). Then there exist constants $c$ and $n_0$ such that as long as $v \leq c/n$ and $n \geq n_0$
    \begin{equation}
        \EV_U \left[\frac{1}{2}\lVert \pnoisy - \punif \rVert_1 \right] \leq \exp(-sv(1-q^{-2})) Q_2 \,,
    \end{equation}
    where $\punif$ is the uniform distribution and 
    \begin{equation}\label{eq:Q2app}
        Q_2 = \exp\left(O(sv^2) + O(s_{AC}v) + e^{O(s_{AC}/n)}e^{-\Omega(s/n)}+O(nv \log(1/(nv))\right)\,.
    \end{equation}
\end{customthm}
\begin{proof}
    We can use the 1-norm to 2-norm inequality in Eq.~\eqref{eq:1normto2norm}, along with Jensen's inequality for the concave $\sqrt{\cdot}$ function to say
    \begin{align}
        \EV_U \left[\frac{1}{2}\lVert \pnoisy - \punif \rVert_1 \right] &\leq \frac{1}{2}\sqrt{q^n\EV_U\left[\sum_x\left(\pnoisy(x) - q^{-n}\right)^2\right]} \\
        &= \frac{1}{2}\sqrt{q^{2n}\EV_U\left[\pnoisy(0^n)^2\right]-1} = \frac{1}{2}\sqrt{Z_2-1}\\
        &= \frac{1}{2}\sqrt{\mathcal{Z}_v-1} 
    \end{align}
    Then, the theorem follows from the upper bound in \autoref{lem:mainlemmaZbound} for layered architectures and \autoref{lem:mainlemmaZboundCG} for the complete-graph architecture, with $\sigma = v$, combined with the observation in Eqs.~\eqref{eq:fsigmaapprox} and \eqref{eq:fprimesigmaapprox}. Note also that $nd=2s$.
\end{proof}

\subsubsection{Proof of \autoref{thm:whitenoisebound}: approximation by white noise}

\begin{customthm}{\autoref{thm:whitenoisebound}}[generalized and restated]
 Consider either the complete-graph architecture or a regularly connected, layered random quantum circuit architecture with $n$ qudits of local Hilbert space dimension $q$ and $s$ gates, where the anti-concentration size is given by $s_{AC}$.  Let $r$ be the average infidelity and $u$ the unitarity of the local noise channels (and define $v=1-u$). Let
    \begin{equation}\label{eq:deltaAPP}
        \delta=2r(1+q^{-1})-(1-u)(1-q^{-2}) \,.
    \end{equation}
    Then, when we choose $F = \bar{F}$ as in Eq.~\eqref{eq:defAveOutFidelity}, there exist constants $c_1$, $c_2$, and $n_0$ such that as long as $v \leq c_1/n$, $r \leq c_2/n$, and $n \geq n_0$,
    \begin{align}
        \EV_U \left[\frac{1}{2}\lVert \pnoisy - \pwhitenoise \rVert_1 \right]
        \leq{}& \bar{F}\sqrt{s}\left(\sqrt{\delta}+O(v)+O(r)\right) + O(\bar{F}\sqrt{s_{AC}v}) \nonumber\\
        &\qquad +O(\bar{F}\sqrt{nv\log(1/nv)})+ \bar{F}e^{O(s_{AC}/n)-\Omega(s/n)}\,, \label{eq:WNdistanceAPP}
    \end{align}
    whenever the right-hand side of Eq.~\eqref{eq:WNdistanceAPP} is less than $\bar{F}$.  
\end{customthm}
\begin{proof}
    Following \autoref{sec:overviewnoisywalk}, we first use the 1-norm to 2-norm bound and Jensen's inequality, and then we optimize the value of $F$. The bound on the distance between $\pwhitenoise$ and $\pnoisy$ is minimized when we choose $F = \bar{F} = (Z_1-1)/(Z_0-1)$.
    When this value is chosen, the bound can be expressed as
    \begin{equation}\label{eq:nwndistanceIntermediate}
        \EV_U \left[\frac{1}{2}\lVert \pnoisy - \pwhitenoise \rVert_1 \right] \leq \frac{1}{2}\bar{F}\sqrt{\frac{(Z_2-1)(Z_0-1)^2}{(Z_1-1)^2}-(Z_0-1)}
    \end{equation}
    Note that after the anti-concentration size has been surpassed, the quantity $Z_0-1$ rapidly approaches $\frac{q^n-1}{q^n+1}\approx 1$  from above. To evaluate $Z_0$, $Z_1$ and $Z_2$ we use the correspondence $Z_0 = \mathcal{Z}_0$, $Z_1 = \mathcal{Z}_{r q /(q-1)}$ and $Z_2 = \mathcal{Z}_v$.  The bounds from \autoref{lem:mainlemmaZbound} for layered architectures and \autoref{lem:mainlemmaZboundCG} for the complete-graph architecture then allow us to upper bound $(Z_2-1)(Z_0-1)^2/(Z_1-1)^2$, arriving at
    \begin{align}
        \frac{(Z_2-1)(Z_0-1)^2}{(Z_1-1)^2}
        \leq{}& \frac{q^n-1}{q^n+1}e^{2s\left(2r(1+q^{-1})-v(1-q^{-2})\right)}e^{O(sr^2)+O(sq^{-2n})+e^{O(s_{AC}/n)}e^{-\Omega(s/n)}}Q_2  \\
        =& \frac{q^n-1}{q^n+1} e^{2s\delta}e^{O(sr^2+sq^{-2n}+sv^2+s_{AC}v-nv \log (nv))+ e^{O(s_{AC}/n)}e^{-\Omega(s/n)}} \,,
    \end{align}
    where $Q_2$ is given in Eq.~\eqref{eq:Q2app}, and $\delta$ is given in Eq.~\eqref{eq:deltaAPP}.
    Now, working back from Eq.~\eqref{eq:nwndistanceIntermediate}, and noting that $e^x-1 < 2x$ for all $x \leq 1$, we have
    \begin{align}
        \EV_U \left[\frac{1}{2}\lVert \pnoisy - \pwhitenoise \rVert_1 \right] 
        \leq{}& \frac{\bar{F}}{2}\sqrt{4s\delta+O(sr^2+sq^{-2n}+sv^2+s_{AC}v-nv \log (nv))+ e^{O(s_{AC}/n)}e^{-\Omega(s/n)}}\\
        ={}&\bar{F}\sqrt{s}\left(\sqrt{\delta}+O(v)+O(r)\right) + O(\bar{F}\sqrt{s_{AC}v}) \nonumber\\
        &\qquad +O(\bar{F}\sqrt{nv\log(1/nv)})+ \bar{F}e^{O(s_{AC}/n)-\Omega(s/n)}
    \end{align}
    when the quantity under the square root is less than 1 (and using $\sqrt{A+B} \leq \sqrt{A}+\sqrt{B}$).
\end{proof}

\subsection{Machinery for proof}\label{app:machinery}

We now develop some more notation, and we precisely state some of our lemmas. We defer the proofs of these lemmas to \autorefapp{app:deferredproofslemmas}. As we state them, we attempt to give some commentary about the meaning and purpose of the different objects that we define and the related lemmas. 

\begin{figure}[h]
\definecolor{bdr}{HTML}{AF2D0A}
\definecolor{bdb}{HTML}{142194}
\centering
\scalebox{0.9}{
\begin{tikzpicture}[scale=1.0,thick]
\tikzmath{\xsep = 5; \ysep = 2.5; \boxwidth = 1; \boxheight = 1;}

\foreach \i in {0,1,2}
{
%
\draw (\i*\xsep,0) -- (\i*\xsep+2*\boxwidth,0);
\draw (\i*\xsep,\boxheight) -- (\i*\xsep+2*\boxwidth,\boxheight);
\draw (\i*\xsep,2*\boxheight) -- (\i*\xsep+2*\boxwidth,2*\boxheight);
%
\draw (\i*\xsep,0) -- (\i*\xsep,2*\boxheight);
\draw (\i*\xsep+\boxwidth,0) -- (\i*\xsep+\boxwidth,2*\boxheight);
\draw (\i*\xsep+2*\boxwidth,0) -- (\i*\xsep+2*\boxwidth,2*\boxheight);
%
\draw (\i*\xsep,-\ysep)--(\i*\xsep,-\ysep+\boxheight)--(\i*\xsep+2*\boxwidth,-\ysep+\boxheight)--(\i*\xsep+2*\boxwidth,-\ysep)--(\i*\xsep,-\ysep);
\draw (\i*\xsep+\boxwidth,-\ysep)--(\i*\xsep+\boxwidth,-\ysep+\boxheight);
%
\draw[->] (\i*\xsep+0.6*\boxwidth,-0.25*\boxheight)--(\i*\xsep+0.6*\boxwidth,-\ysep+1.25*\boxheight);
\draw[->] (\i*\xsep+1.6*\boxwidth,-0.25*\boxheight)--(\i*\xsep+1.6*\boxwidth,-\ysep+1.25*\boxheight);
\node at (\i*\xsep+0.3*\boxwidth,-0.5*\ysep+0.5\boxheight) {$\Delta_{i_t}$};
\node at (\i*\xsep+1.3*\boxwidth,-0.5*\ysep+0.5\boxheight) {$\Delta_{j_t}$};
%
\node at (\i*\xsep+0.5*\boxwidth,2.35*\boxheight) {\large{$i_t$}};
\node at (\i*\xsep+1.5*\boxwidth,2.35*\boxheight) {\large{$j_t$}};
}
%
\draw [->] (2.5*\boxwidth,\boxheight)--(\xsep-0.5*\boxwidth,\boxheight);
\node at (\boxwidth+0.5*\xsep,1.4*\boxheight) {\Large{$R_0^{(t)}$}};
\draw [->] (\xsep+2.5*\boxwidth,0.4*\boxheight)--(2*\xsep-0.5*\boxwidth,0.4*\boxheight);
\node at (\boxwidth+1.5*\xsep,0.75*\boxheight) {$Q'^{(t)}_\sigma Q^{(t)}_\sigma$};
\draw [->] (\xsep+2.5*\boxwidth,1.4*\boxheight)--(2*\xsep-0.5*\boxwidth,1.4*\boxheight);
\node at (\boxwidth+1.5*\xsep,1.75*\boxheight) {$\mathcal{I}$};
%
\node at (-0.5*\boxwidth,0.5*\boxheight) {\Large{$Y$}};
\node at (-0.5*\boxwidth,1.5*\boxheight) {\Large{$X$}};
\node at (-0.5*\boxwidth,-\ysep+0.5*\boxwidth) {\Large{$W$}};
\node at (-2.0*\boxwidth,0.5*\boxheight) {\large{noisy}};
\node at (-2.0*\boxwidth,1.5*\boxheight) {\large{noiseless}};
\node at (-2.0*\boxwidth,-\ysep+0.5*\boxheight) {\large{difference}};
%
\node at (0.5*\boxwidth,0.5*\boxheight) {\Large{\textcolor{bdb}{$I$}}};
\node at (0.5*\boxwidth,1.5*\boxheight) {\Large{\textcolor{bdb}{$I$}}};
\node at (0.5*\boxwidth,-\ysep+0.5*\boxheight) {\Large{\textcolor{bdr}{$S$}}};
\node at (1.5*\boxwidth,0.5*\boxheight) {\Large{\textcolor{bdr}{$S$}}};
\node at (1.5*\boxwidth,1.5*\boxheight) {\Large{\textcolor{bdr}{$S$}}};
\node at (1.5*\boxwidth,-\ysep+0.5*\boxheight) {\Large{\textcolor{bdr}{$S$}}};
\node at (\xsep+0.5*\boxwidth,0.5*\boxheight) {\Large{\textcolor{bdr}{$S$}}};
\node at (\xsep+0.5*\boxwidth,1.5*\boxheight) {\Large{\textcolor{bdr}{$S$}}};
\node at (\xsep+0.5*\boxwidth,-\ysep+0.5*\boxheight) {\Large{\textcolor{bdr}{$S$}}};
\node at (\xsep+1.5*\boxwidth,0.5*\boxheight) {\Large{\textcolor{bdr}{$S$}}};
\node at (\xsep+1.5*\boxwidth,1.5*\boxheight) {\Large{\textcolor{bdr}{$S$}}};
\node at (\xsep+1.5*\boxwidth,-\ysep+0.5*\boxheight) {\Large{\textcolor{bdr}{$S$}}};
\node at (2*\xsep+0.5*\boxwidth,0.5*\boxheight) {\Large{\textcolor{bdb}{$I$}}};
\node at (2*\xsep+0.5*\boxwidth,1.5*\boxheight) {\Large{\textcolor{bdr}{$S$}}};
\node at (2*\xsep+0.5*\boxwidth,-\ysep+0.5*\boxheight) {\Large{\textcolor{bdb}{$I$}}};
\node at (2*\xsep+1.5*\boxwidth,0.5*\boxheight) {\Large{\textcolor{bdr}{$S$}}};
\node at (2*\xsep+1.5*\boxwidth,1.5*\boxheight) {\Large{\textcolor{bdr}{$S$}}};
\node at (2*\xsep+1.5*\boxwidth,-\ysep+0.5*\boxheight) {\Large{\textcolor{bdr}{$S$}}};
\end{tikzpicture}
}
\caption{Illustration of dynamics of coupled noiseless and noisy stochastic process. The gate at time step $t$ acts on sites $\{i_t,j_t\}$; the transition from time step $t-1$ to time step $t$ can modify the assignment only at these locations. In the example above, at time step $t-1$ (left), both the $X$ and $Y$ systems are assigned $I$ at position $i_t$ and $S$ at position $j_t$. Since $i_t$ and $j_t$ are assigned different values, the transformation $R_0^{(t)}$ forces a bit flip at one of the positions, but the same bit is flipped for the $X$ and $Y$ systems. In this example, the $I$ is flipped to $S$. Then the configuration at time step $t$ (right) is formed by applying noise operators $Q'^{(t)}_\sigma Q^{(t)}_\sigma$ only to the $Y$ copy, which results in a bit flip from $S$ to $I$ independently on each location with probability $\sigma$. In the example above, only the $i_t$ assignment is flipped. The system $W$ captures the difference between the $X$ and $Y$ copies; it is assigned $S$ wherever they agree and $I$ wherever they disagree. This formalism allows us to isolate the impact of the noise on a trajectory of the stochastic process compared to what ``would have'' happened had there been no noise.}
\label{fig:XYWfigure}
\end{figure}
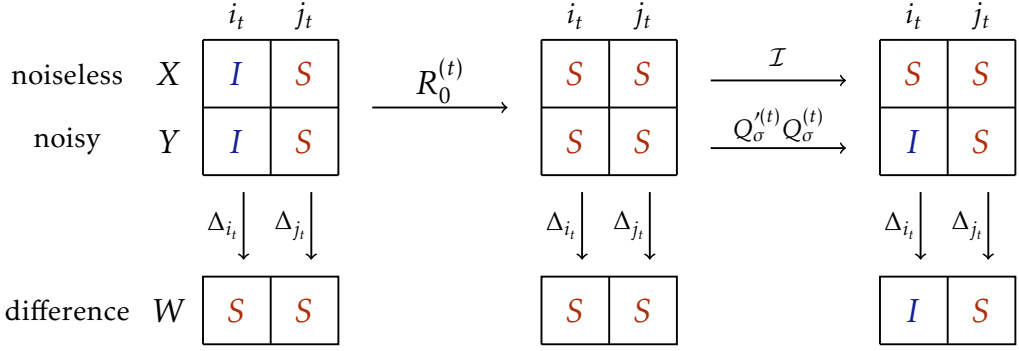

\subsubsection{Coupling a noiseless and noisy copy of the dynamics}

We have a fairly good understanding of the noiseless stochastic process from Ref.~\cite{dalzell2020anticoncentration}. Our strategy here is to examine how introducing noise perturbs that process. To that end, we consider \textit{two} copies of the random walk, where one is noiseless and one is noisy, but where they are correlated so that we can isolate the impact of the noise.

Recall that we have reduced the calculation of $\mathcal{Z}_\sigma$ to the expectation value of a random variable (the configuration) that evolves according to the stochastic transition matrix $P^{(t)}$ (representing the noiseless gate) followed by transition matrices $Q^{(t)}_\sigma$ and $Q'^{(t)}_\sigma$, which represent the impact of noise.

Let $X$ denote the $2^n$-dimensional vector space for the first ``noiseless'' copy and $Y$ for the second ``noisy'' copy.  To define the dynamics formally, recall the definition of $D$ and $T$ from Eqs.~\eqref{eq:Dmatrix} and \eqref{eq:Tmatrix}, and define the following matrix that acts on four bits. 
\begin{align}\label{eq:Rmatrix}
\begin{split}
   R= &D \otimes D + D \otimes T + T \otimes D  + T \otimes T \left(\ketbra{IS,SI} + \ketbra{SI,IS}\right)\\
        &+ \frac{q^2}{q^2+1} \ketAbraB{II,II}{IS,IS}+\frac{1}{q^2+1} \ketAbraB{SS,SS}{IS,IS} \\
        &+\frac{q^2}{q^2+1}\ketAbraB{II,II}{SI,SI}  +\frac{1}{q^2+1} \ketAbraB{SS,SS}{SI,SI}
    \end{split}\,.
\end{align}
The matrix $R$ is stochastic. It should be understood as a correlated bit flip where, if the first and third bits are equal and the second and fourth bits are equal, they are sent to a state where that is still true. However, its marginal on either the first two bits or the last two bits is precisely $P$ from Eq.~\eqref{eq:Pmatrix}. Refer to the $i$th bit of the first random variable as $X_i$ and the $i$th bit of the second random variable as $Y_i$.
Then define
\begin{equation}
    R_\sigma^{(t)} = \left(\mathcal{I}_X \otimes (Q'^{(t)}_{\sigma} Q_\sigma^{(t)})_Y\right)\left(\mathcal{I}_{XY\setminus\{X_{i_t}X_{j_t},Y_{i_t}Y_{j_t}\}} \otimes R_{\{X_{i_t}X_{j_t},Y_{i_t}Y_{j_t}\}}\right)\,.
\end{equation}
In words, what $R_\sigma^{(t)}$ does is first generate a correlated noiseless transition among the bits involved in the gate $\{X_{i_t}X_{j_t},Y_{i_t}Y_{j_t}\}$ for both the first ``noiseless'' $X$ copy and the second ``noisy'' $Y$ copy, and then apply the noise transitions only to the $Y$ copy. Since the marginal dynamics of the matrix $R$ restricted either to the first two bits or to the last two bits is the matrix $P$, the marginal dynamics of $R_\sigma^{(t)}$ are $P^{(t)}$ on the $X$ copy and $Q'^{(t)}_{\sigma} Q_\sigma^{(t)}P^{(t)}$ on the $Y$ copy. The action of $R^{(t)}_\sigma$ on an example configuration is illustrated in \autoref{fig:XYWfigure}.

An additional property of $R_\sigma^{(t)}$ is that it preserves a certain subspace of the $2^n \times 2^n$ Hilbert space. If we define the projector $    \pi_i = \left(\ketbra{II} + \ketbra{SS} + \ketbra{SI}\right)_{\{X_iY_i\}}$, then the support of $\bigotimes_{i=0}^{n-1} \pi_i$ is not coupled with its orthogonal complement by the matrix $R_\sigma^{(t)}$. Let us refer to this subspace as the \textit{accessible subspace}. This corresponds to the fact that the noise can send $S \rightarrow I$ but not vice versa. 

We define the initial state to be the correlated version of $\ket{\Lambda}$
\begin{equation}
    \ket{\Lambda\Lambda} = \frac{1}{(q+1)^n}\sum_{\vec{\nu}}  q^{n-|\vec{\nu}|}\ket{\vec{\nu}}_X \otimes \ket{\vec{\nu}}_Y\,,
\end{equation}
which lies in the accessible subspace, so evolution by $R_\sigma^{(t)}$ is guaranteed to remain within the accessible subspace for the entire evolution.

In terms of $R_\sigma^{(t)}$ we can rewrite Eq.~\eqref{eq:Zsigma} as
\begin{equation}
    \mathcal{Z}_\sigma = \bra{\mathbf{1},\mathbf{q}}\prod_{t=1}^s R_\sigma^{(t)}\ket{\Lambda\Lambda}\,,
\end{equation}
where $\ket{a,b}$ is shorthand for $\ket{a}_X \otimes \ket{b}_Y$. Inner product with $\bra{\mathbf{1}}$ in the equation above simply marginalizes over the noiseless $X$ copy (since the vector is normalized in the $1$-norm), and in our proofs, we will use this notation often. 

Note also that since the marginal dynamics of the $X$ copy is the noiseless dynamics, we can marginalize over the $Y$ copy and conclude that
\begin{equation}
    \mathcal{Z}_0 = \bra{\mathbf{q},\mathbf{1}}\prod_{t=1}^s R_\sigma^{(t)}\ket{\Lambda\Lambda}
\end{equation}
for any $\sigma$. 

In our proof, we find it convenient to define
\begin{equation}
    \ket{v^{(t)}} = \prod_{t'=1}^t R_\sigma^{(t')}\ket{\Lambda\Lambda}\,,
\end{equation}
which represents the joint probability distribution over the $2^n$ configurations after $t$ gates (and their associated noise channels) have been applied. 
Note that for circuit architectures where the circuit diagram is chosen randomly, such as the complete-graph architecture, $\ket{v^{(t)}}$ is defined as the above expression averaged over all circuit diagrams. 

Finally, let $W$ refer to a third copy of the $2^n$-dimensional Hilbert space and define a mapping from the $i$th bits of $X$ and $Y$ to the $i$th bit of $W$, as follows:
\begin{equation}
    \Delta_i = \ket{S}_{W_i}\bra{SS}_{X_iY_i} + \ket{S}_{W_i}\bra{II}_{X_iY_i}  + \ket{I}_{W_i}\bra{IS}_{X_iY_i}  + \ket{I}_{W_i}\bra{SI}_{X_iY_i} \,.
\end{equation}
It maps a bit pair to $\ket{S}$ if they agree and $\ket{I}$ if they disagree. Let
\begin{equation}
    \Delta = \bigotimes_{i=0}^{n-1} \Delta_i
\end{equation}
be the map from $X \otimes Y$ to $W$. Note that $\Delta \ket{\Lambda\Lambda} = \ket{S^n}$. 

\subsubsection{$I$-destined and $S$-destined probability mass}

\begin{figure}[h]
\definecolor{bdr}{HTML}{AF2D0A}
\definecolor{bdb}{HTML}{142194}
\definecolor{bdp}{HTML}{EAAC05}
\centering
\scalebox{0.8}{
\begin{tikzpicture}[scale=1.0,thick]
\tikzmath{\xsep = 2.2; \ysep = 1.5; \radius = 0.5; \n = 6; \factor=1.76;}
\pgfsnakesegmentaspect=0.3
%
\foreach \i in {1,2,3,4,5}
{
\draw[bdr,very thick] (\i*\xsep,0) circle[radius=\radius];
\draw[bdb,very thick] (\i*\xsep,\ysep) circle[radius=\radius];
\draw[dotted] (\i*\xsep-\radius*\factor,-\radius*\factor)--(\i*\xsep+\radius*\factor,-\radius*\factor)--(\i*\xsep+\radius*\factor,\ysep+\radius*\factor)--(\i*\xsep-\radius*\factor,\ysep+\radius*\factor)--(\i*\xsep-\radius*\factor,-\radius*\factor);
\node at (\i*\xsep,\ysep+\radius*\factor*1.75) {\LARGE{$\i$}};
\draw[bdp,<-] (\i*\xsep + \radius*1.2,\radius*0.4)--(\i*\xsep+\xsep - \radius*1.2,\radius*0.4);
\draw[bdp,<-] (\i*\xsep -\xsep + \radius*1.2,\ysep+\radius*0.4)--(\i*\xsep - \radius*1.2,\ysep+\radius*0.4);
}
\foreach \i in {0,1,2,3,4,5}
{
\draw[bdp,<-] (\i*\xsep + \radius,\ysep-\radius*0.75)--(\i*\xsep+\xsep - \radius,\radius*0.75);
}
\foreach \i in {1,2,3,4}
{
\draw[bdr,<->,thick] (\i*\xsep + \radius*1.25,0) -- (\i*\xsep+\xsep - \radius*1.25,0);
\draw[bdb,<->,thick] (\i*\xsep + \radius*1.25,\ysep) -- (\i*\xsep+\xsep - \radius*1.25,\ysep);
}
\draw[bdr,->,thick] (\n*\xsep-\xsep + \radius*1.25,0) -- (\n*\xsep - \radius*1.25,0);
\draw[bdb,<-,thick] (\radius*1.25,\ysep) -- (\xsep - \radius*1.25,\ysep);
\node at (0,\ysep+\radius*\factor*1.75) {\LARGE{$0$}};
\node at (\n*\xsep,\ysep+\radius*\factor*1.75) {\LARGE{$6$}};
\draw[bdb,very thick] (0,\ysep) circle[radius=\radius];
\node at (0,\ysep) {\LARGE{\textcolor{bdb}{$I^n$}}};
\draw[bdr,very thick] (\xsep*\n,0) circle[radius=\radius];
\node at (\xsep*\n,0) {\LARGE{\textcolor{bdr}{$S^n$}}};
\node at (-\xsep*1.2,0) {\Large\textcolor{bdr}{$S$-destined}};
\node at (-\xsep*1.2,\ysep) {\Large\textcolor{bdb}{$I$-destined}};
\node at (-\xsep*1.55,\ysep+\radius*\factor*1.75) {\Large{Hamming weight}};
\end{tikzpicture}
}
\caption{Schematic of the concept of $I$-destined and $S$-destined probability mass in an $n=6$ example. Each of the $2^n$ configurations corresponds to a Hamming weight between $0$ and $n$, that is, the number of $S$ assignments out of $n$. For a given configuration, the mass can be broken into an $I$-destined and an $S$-destined portion corresponding to the fraction that would end at Hamming weight 0 and Hamming weight $n$, respectively, if an infinite number of noiseless gates were applied. In the diagram, this corresponds to a division of the mass into the blue and red circles within each Hamming weight bucket. For each $x$, the ratio of $I$-destined to $S$-destined mass at Hamming weight $x$ is always precisely $(1-q^{-2n+2x})/(q^{-2n+2x}-q^{-2n})$. A portion of probability mass that is conditioned on being $I$-destined or $S$-destined obeys effective transition dynamics (given by transition matrices $P_I^{(t)}$ and $P_S^{(t)}$, respectively) that preserve which fixed-point the portion of probability mass is destined for. The allowed transitions of these conditional noiseless dynamics are given by blue and red arrows in the diagram. The allowed transitions associated with action of a noise location are given by yellow lines: a portion of $S$-destined mass that experiences a $S \rightarrow I$ flip due to noise can remain $S$-destined, or it can become $I$-destined, but $I$-destined mass can never become $S$-destined. The proof decomposes the $I$-destined mass according to which time step it first became $I$-destined. }
\label{fig:destinedmass}
\end{figure}
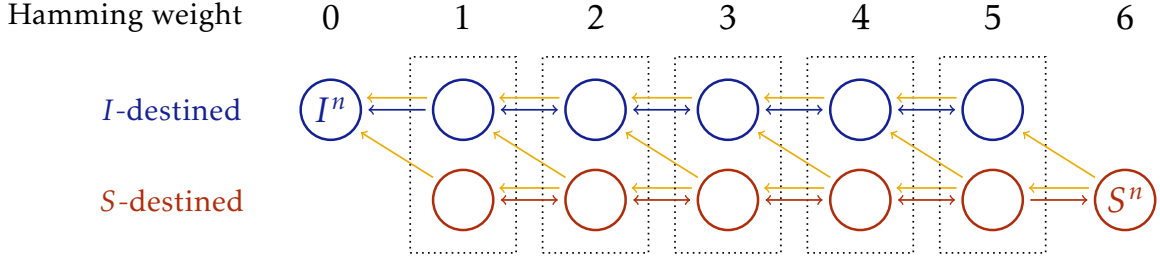

We view $\ket{v^{(t)}}$ as the probability vector for the correlated stochastic process. Suppose starting at timestep $t+1$, we begin running noiseless dynamics on \textit{both} copies, i.e.~we apply $R_0^{(t)}$, and we continue for an infinite number of gates. Then we will get full convergence to the fixed points $\ket{I^n}\otimes \ket{I^n}$, $\ket{S^n}\otimes \ket{S^n}$ and $\ket{S^n} \otimes \ket{I^n}$. The fourth fixed point $\ket{I^n} \otimes \ket{S^n}$ is not in the accessible subspace. We can compute precisely the probability of each of these outcomes. In Ref.~\cite{dalzell2020anticoncentration}, we arrived at an expression for these probabilities by solving a certain recursion relation. Here, we need only the result of that calculation to inform how we define the diagonal matrices $L_I$ and $L_S$:
\begin{align}
    L_I &= \sum_{\vec{\nu}} \frac{1-q^{-2n+2|\vec{\nu}|}}{1-q^{-2n}} \ketbra{\vec{\nu}} \\
    L_S &= \sum_{\vec{\nu}} \frac{q^{-2n+2|\vec{\nu}|}-q^{-2n}}{1-q^{-2n}} \ketbra{\vec{\nu}}\,.
\end{align}
Note that $L_I+L_S$ is the identity matrix $\mathcal{I}$. The coefficient of $\ketbra{\vec{\nu}}$ in $L_I$ gives the probability that a configuration that starts at $\ket{\vec{\nu}}$ ends at the $I^n$ fixed point if it undergoes completely noiseless dynamics, and the coefficient in $L_S$ gives the probability of ending at the $S^n$ fixed point \cite{dalzell2020anticoncentration}. 

Then define
\begin{align}
    L_{II} &= L_I \otimes \mathcal{I} \\
    L_{SS} &= \mathcal{I} \otimes L_S \\
    L_{SI} &= \mathcal{I} \otimes L_I - L_I \otimes \mathcal{I} \,,
\end{align}
which are the analogous matrices for the joint dynamics to end at $\ket{I^n}\otimes \ket{I^n}$, $\ket{S^n}\otimes \ket{S^n}$, and $\ket{S^n} \otimes \ket{I^n}$, respectively. 

Now we may define
\begin{align}
    P^{(t)}_I &= L_I P^{(t)} L_I^{-1} \\
    P^{(t)}_S &= L_S P^{(t)} L_S^{-1}
\end{align}
and
\begin{align}
    R^{(t)}_{II} &= L_{II} R_0^{(t)} L_{II}^{-1} \\
    R^{(t)}_{SS} &= L_{SS} R_0^{(t)} L_{SS}^{-1} \\
    R^{(t)}_{SI} &= L_{SI} R_0^{(t)} L_{SI}^{-1}\,,
\end{align}
where in each case $O^{-1}$ denotes the Moore-Penrose pseudo-inverse of $O$. We interpret these matrices as the transition operators for probability mass that has been conditioned to end up at a certain fixed point. For example, $P^{(t)}_S$ is the transition operator for a single copy conditioned on eventually ending up at the $S^n$ fixed point. Even though the walk is generally biased toward $I$, it will be biased toward $S$ when conditioned on ending at the $S^n$ fixed point. The following lemma asserts that these are indeed stochastic matrices. All lemmas stated here are proved in \autorefapp{app:deferredproofslemmas}.

\begin{lemma}\label{lem:stochasticmatrices}
    The matrices $P^{(t)}_I$, $P^{(t)}_S$, $R^{(t)}_{II}$, $R^{(t)}_{SS}$, $R^{(t)}_{SI}$, restricted to their support, are stochastic matrices.
\end{lemma}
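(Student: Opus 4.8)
The plan is to reduce all five claims to one mechanism: each matrix in the list has the form $L\,M\,L^{-1}$ where $M\in\{P^{(t)},R_0^{(t)}\}$ is stochastic (for $R_0^{(t)}$ this was noted right after \eqref{eq:Rmatrix}, for $P^{(t)}$ after \eqref{eq:Pmatrix}), $L$ is an entrywise-nonnegative diagonal matrix whose diagonal entries form a \emph{harmonic function} of $M$ (preserved in expectation by one step of the dynamics), and $L^{-1}$ is the Moore--Penrose pseudo-inverse. I will use two elementary facts about a nonnegative diagonal $L$: $L^{-1}$ is again nonnegative and diagonal, supported on $\mathrm{supp}(L)$, and $LL^{-1}=\Pi_L$, the orthogonal projector onto $\mathrm{supp}(L)$.

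\emph{The two single-copy matrices.} Write $h_I(\vec\nu)=(L_I)_{\vec\nu\vec\nu}=\tfrac{1-q^{-2n+2|\vec\nu|}}{1-q^{-2n}}$ and $h_S(\vec\nu)=(L_S)_{\vec\nu\vec\nu}$; these are nonnegative, $h_I$ vanishes only at $\vec\nu=S^n$ and $h_S$ only at $\vec\nu=I^n$, and $L_I+L_S=\mathcal{I}$. The key input is the harmonic identity $\bra{\mathbf{1}}L_I P^{(t)}=\bra{\mathbf{1}}L_I$, i.e.\ $\sum_{\vec\mu}h_I(\vec\mu)P^{(t)}_{\vec\mu\vec\nu}=h_I(\vec\nu)$ for every $\vec\nu$: the probability of eventual absorption at the $I^n$ fixed point under noiseless dynamics is unchanged by a single gate. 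This is a consequence of the absorption-probability computation of Ref.~\cite{dalzell2020anticoncentration}, and also follows directly from \eqref{eq:MII}--\eqref{eq:MIS}: a gate on a same-valued pair acts trivially, while a gate on an $(I,S)$ pair in a weight-$x$ configuration goes to weight $x-1$ with probability $q^2/(q^2+1)$ and to weight $x+1$ with probability $1/(q^2+1)$, and indeed $\tfrac{q^2}{q^2+1}h_I^{(x-1)}+\tfrac{1}{q^2+1}h_I^{(x+1)}=h_I^{(x)}$ with $h_I^{(x)}=\tfrac{1-q^{-2n+2x}}{1-q^{-2n}}$; the identity for $L_S=\mathcal{I}-L_I$ then follows by linearity and stochasticity of $P^{(t)}$. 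Given this, $P^{(t)}_I=L_I P^{(t)} L_I^{-1}$ is entrywise nonnegative and supported on $\mathrm{supp}(L_I)\times\mathrm{supp}(L_I)$, and $\bra{\mathbf{1}}P^{(t)}_I=\bra{\mathbf{1}}L_I P^{(t)} L_I^{-1}=\bra{\mathbf{1}}L_I L_I^{-1}=\bra{\mathbf{1}}\Pi_{L_I}$, the indicator of $\mathrm{supp}(L_I)$; moreover no column vanishes, since the $\vec\nu$-column of $P^{(t)}_I$ sums to $\bra{\mathbf{1}}L_I P^{(t)} e_{\vec\nu}/h_I(\vec\nu)=h_I(\vec\nu)/h_I(\vec\nu)=1$ for $\vec\nu\ne S^n$. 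Hence $P^{(t)}_I$ restricted to its support is stochastic, and $P^{(t)}_S$ is identical.

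\emph{The three coupled matrices.} I would run the same argument with $M=R_0^{(t)}$ and $L\in\{L_{II},L_{SS},L_{SI}\}$. All three are diagonal; $L_{II}=L_I\otimes\mathcal{I}$ and $L_{SS}=\mathcal{I}\otimes L_S$ are manifestly nonnegative, while $L_{SI}=\mathcal{I}\otimes L_I-L_I\otimes\mathcal{I}$, with diagonal entry $h_I(\vec\mu)-h_I(\vec\nu)$ on the joint configuration $(\vec\nu,\vec\mu)$, is nonnegative \emph{on the accessible subspace}, because there the $S$-positions of the $Y$-copy $\vec\mu$ are contained in those of the $X$-copy $\vec\nu$, so $|\vec\mu|\le|\vec\nu|$ and $h_I$ is non-increasing in Hamming weight. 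The harmonic identities $\bra{\mathbf{1},\mathbf{1}}L R_0^{(t)}=\bra{\mathbf{1},\mathbf{1}}L$ for these $L$'s reduce to the single-copy one: the marginal of $R_0^{(t)}$ on either copy is $P^{(t)}$, and the diagonals of $L_{II},L_{SS},L_{SI}$ are $h_I$ of the $X$-config, $h_S$ of the $Y$-config, and $h_I(\vec\mu)-h_I(\vec\nu)$, each preserved in expectation. Since $R_0^{(t)}$ preserves the accessible subspace (remark after the definition of $R_\sigma^{(t)}$) and the $L$'s are diagonal, the chain $\bra{\mathbf{1},\mathbf{1}}L R_0^{(t)} L^{-1}=\bra{\mathbf{1},\mathbf{1}}LL^{-1}=\bra{\mathbf{1},\mathbf{1}}\Pi_L$ goes through (using $(A\otimes B)^+=A^+\otimes B^+$), giving nonnegativity and unit column sums on the support.

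\emph{Main obstacle.} I expect only bookkeeping subtleties: (i) confirming that ``restricted to its support'' gives column sums exactly $1$ rather than $0$, which rests on $\bra{\mathbf{1}}L_I P^{(t)} e_{\vec\nu}=h_I(\vec\nu)>0$ for $\vec\nu\ne S^n$ and its coupled analogues; and (ii) for $R^{(t)}_{SI}$, that $L_{SI}$ has mixed sign on the full space and is only nonnegative after restriction to the accessible subspace, so one must first check that $R_0^{(t)}$, $L_{SI}$, and $L_{SI}^{-1}$ all act invariantly on that subspace before composing them. Everything else is routine.
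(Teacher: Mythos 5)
Your proposal is correct, and for $P^{(t)}_I$ it reduces to the same computation the paper performs: the paper verifies directly that each column of $L_I P^{(t)} L_I^{-1}$ sums to $1$ via $\frac{q^2}{q^2+1}\frac{1-q^{-2n+2x-2}}{1-q^{-2n+2x}}+\frac{1}{q^2+1}\frac{1-q^{-2n+2x+2}}{1-q^{-2n+2x}}=1$, which is exactly your harmonicity identity $\bra{\mathbf{1}}L_I P^{(t)}=\bra{\mathbf{1}}L_I$ written columnwise. Where you differ is organization and coverage: the paper proves only the $P^{(t)}_I$ case and leaves the other four matrices ``to be verified in a similar fashion,'' whereas you package the mechanism once (conjugation of a stochastic matrix by a nonnegative diagonal whose entries are harmonic, together with $LL^{-1}=\Pi_L$) and then genuinely discharge the coupled cases by reducing their harmonic identities to the single-copy one through the fact that the marginals of $R_0^{(t)}$ on either copy equal $P^{(t)}$. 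This buys you something real for $R^{(t)}_{SI}$: since $L_{SI}=\mathcal{I}\otimes L_I-L_I\otimes\mathcal{I}$ has entries of both signs on the full space, $L_{SI}R_0^{(t)}L_{SI}^{-1}$ can acquire negative entries outside the accessible subspace --- for instance, a column whose gate sites carry $X_{i_t}X_{j_t}=IS$ and $Y_{i_t}Y_{j_t}=SI$ with $|\vec{\mu}|=|\vec{\nu}|-1$ admits, via the $T\otimes T$ term of Eq.~\eqref{eq:Rmatrix}, a transition sending the $X$ pair to $II$ and the $Y$ pair to $SS$, which flips the sign of $h_I(\vec{\mu})-h_I(\vec{\nu})$ and makes that matrix element negative even though both row and column lie in the support of $L_{SI}$. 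So the stochasticity of $R^{(t)}_{SI}$ genuinely requires the restriction to the accessible subspace, which is also how the lemma is used downstream (Lemma~\ref{lem:WsystemEvolvesByPI} is stated within that subspace); your explicit checks that $R_0^{(t)}$, $L_{SI}$, and $L_{SI}^{-1}$ preserve that subspace, and that rows with $h_I(\vec{\mu}')=h_I(\vec{\nu}')$ contribute nothing to the harmonic sum (so column sums restricted to the support are still exactly $1$), are precisely what is needed and are not surfaced by the paper's one-case proof. In short: same key identity, but your version is more uniform, actually covers the three coupled matrices, and makes precise the support/accessible-subspace caveat that the paper's statement glosses over.
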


The next lemma asserts that if the $X \otimes Y$ system undergoes dynamics under $R_{SI}^{(t)}$, then the $W$ system undergoes dynamics under $P_I^{(t)}$. This makes sense, since conditioning on $X$ to go to $S^n$ and $Y$ to go to $I^n$ should be equivalent to conditioning the $W$ system to go to $I^n$.  

\begin{lemma}\label{lem:WsystemEvolvesByPI}
Within the accessible subspace, the following holds.
    \begin{equation}
        \Delta R^{(t)}_{SI} = P^{(t)}_I \Delta\,.
    \end{equation}
\end{lemma}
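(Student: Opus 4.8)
The plan is to strip the operator identity down, using the Doob $h$-transform (conditioning) structure behind the definitions of $R^{(t)}_{SI}$ and $P^{(t)}_I$, to a purely local ($4$-bit $\to$ $2$-bit) combinatorial identity relating the explicit matrices $R$ and $P$ of Eqs.~\eqref{eq:Rmatrix} and \eqref{eq:Pmatrix}. The first ingredient is a factorization of $L_{SI}$: for every configuration $(\vec x,\vec y)$ in the accessible subspace,
\[
  (L_{SI})_{(\vec x,\vec y),(\vec x,\vec y)} \;=\; q^{-2n+2|\vec x|}\,(L_I)_{\vec w,\vec w}\,, \qquad \vec w := \Delta(\vec x,\vec y)\,.
\]
This is immediate from the definitions of $L_I$, $L_S$, $L_{SI}$ once one observes that, since within the accessible subspace $\vec y$ is obtained from $\vec x$ by flipping some $S$'s to $I$'s, the number of sites on which $\vec x$ and $\vec y$ agree — which is exactly $|\vec w|$ — equals $n-(|\vec x|-|\vec y|)$. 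Writing $D_X$ for the diagonal operator $\mathrm{diag}(q^{-2n+2|\vec x|})$ acting on the first ($X$) copy, this reads $\Delta L_{SI} = L_I\,\Delta\,D_X$ on the accessible subspace.

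Next I would peel off the conjugating $L$-factors. Expanding $R^{(t)}_{SI}=L_{SI}R^{(t)}_0 L_{SI}^{-1}$ and $P^{(t)}_I=L_I P^{(t)}L_I^{-1}$ and applying the factorization above, the claim $\Delta R^{(t)}_{SI}=P^{(t)}_I\Delta$ becomes equivalent, on the accessible subspace, to
\[
  \Delta\, D_X\, R^{(t)}_0 \;=\; P^{(t)}\,\Delta\, D_X\,.
\]
Making this step rigorous requires \autoref{lem:stochasticmatrices} and the invariance of the accessible subspace under $R^{(t)}_0$, which together ensure that the Moore--Penrose pseudo-inverses act as genuine inverses on the relevant supports; one must check that the kernels match up under $\Delta$ (the ``all sites agree'' configurations for $L_{SI}$ correspond to $\ket{S^n}$ for $L_I$), and that the degenerate sector $\{\vec x=\vec y\}$, on which both sides of the original identity vanish, is handled separately and trivially.

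Finally, because $R^{(t)}_0$ acts only on the four bits $X_{i_t},X_{j_t},Y_{i_t},Y_{j_t}$, because $P^{(t)}$ acts only on $W_{i_t},W_{j_t}$, and because both $\Delta$ and $D_X$ factor over sites, this collapses to the gate-local statement
\[
  \sum_{(\vec a',\vec b'):\,\Delta(\vec a',\vec b')=\vec c'} q^{2|\vec a'|}\, R_{(\vec a'\vec b'),(\vec a\vec b)} \;=\; q^{2|\vec a|}\, P_{\vec c',\vec c}\,, \qquad \vec c=\Delta(\vec a,\vec b)\,,
\]
to be checked over the accessible gate configurations (each of the two site-pairs being one of $II$, $SS$, $SI$). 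I would verify it in three cases: when both gate sites are in agreement, $R$ enacts a fully correlated flip with marginal $P$ and the statement reduces to $\Delta R=P\Delta$ (equivalently, to $q^{2|\cdot|}$ being a left $1$-eigenvector of $P$); when both are in disagreement, both sides are the identity; and in the mixed case the weight $q^{2|\vec a'|}$ is exactly what compensates the interchange of the $II$ and $SS$ branches that would otherwise break $\Delta R=P\Delta$. Each case is a one-line computation from Eqs.~\eqref{eq:Rmatrix} and \eqref{eq:Pmatrix}.

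The main obstacle is not conceptual but bookkeeping: keeping the bit-ordering conventions in $R$ consistent with those in $P$ and in $\Delta$, and tracking the supports carefully so that the cancellations of the singular matrices $L_{SI}^{\pm1}$ and $L_I^{\pm1}$ are legitimate. A cleaner-sounding but ultimately equivalent alternative would be the purely probabilistic route: the noiseless joint chain conditioned to be absorbed at $(S^n,I^n)$ is the $h$-transform with $h=L_{SI}$, its absorption event coincides with $\{W_\infty=I^n\}$ (since in the accessible subspace $W_i=I\iff X_i=S,\ Y_i=I$), and one shows this conditioned joint chain is lumpable with respect to $\Delta$ with quotient equal to the $h$-transform of the $W$-chain defining $P^{(t)}_I$ — but this still bottoms out in the same finite local verification.
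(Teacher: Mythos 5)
Your proposal is correct, and it reaches the paper's identity by a somewhat different organization of the same underlying finite verification. The paper proves \autoref{lem:WsystemEvolvesByPI} by brute force: it applies both conditioned operators $R^{(t)}_{SI}$ and $P^{(t)}_I$ directly to each of the nine accessible two-site configurations $\ket{X_{i_t}X_{j_t},Y_{i_t}Y_{j_t}}$, computing the $h$-transformed transition weights explicitly in terms of the global Hamming-weight parameters $a,b,c$ and observing that they collapse to the two coefficients $c_0,c_1$ that also arise from $P^{(t)}_I$ acting on the $W$ configuration. You instead first establish the factorization $\Delta L_{SI}=L_I\,\Delta\,D_X$ on the accessible subspace (which is correct: $(L_{SI})_{(\vec x,\vec y)}=\frac{q^{-2n+2|\vec x|}-q^{-2n+2|\vec y|}}{1-q^{-2n}}=q^{-2n+2|\vec x|}(L_I)_{\vec w,\vec w}$ using $|\vec y|=|\vec x|-n+|\vec w|$), strip off the conjugating $L$-factors, and reduce to the weighted intertwining $\Delta D_X R^{(t)}_0=P^{(t)}\Delta D_X$, checked locally against the bare matrices of Eqs.~\eqref{eq:Rmatrix} and \eqref{eq:Pmatrix}; I verified your three cases (agreement, double disagreement, mixed) and they go through exactly as you describe, including the left-fixed-vector identity $\frac{q^2}{q^2+1}+\frac{q^4}{q^2+1}=q^2$ in the agreement case and the weight compensation in the mixed case. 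What your route buys is a cleaner final check — no $n$-dependent ratios of $L$-entries appear, only the monomial weights $q^{2|\cdot|}$ — and it makes transparent why the lemma is a lumpability statement for the Doob transform. What it costs is the support bookkeeping you flag: one must use the invariance of the accessible subspace under $R^{(t)}_0$ so the pseudo-inverses act as true inverses, and handle the sector $\vec x=\vec y$ separately, where both $\Delta R^{(t)}_{SI}$ and $P^{(t)}_I\Delta$ vanish because that sector is the kernel of $L_{SI}$ and its $\Delta$-image $\ket{S^n}$ is the kernel of $L_I$; this is benign (and, incidentally, your treatment of it is more careful than the paper's, whose coefficient ratios are formally $0/0$ there). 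Since the unconditioned weighted identity in fact holds exactly on the whole accessible subspace, multiplying back by $L_I$ and the nonzero scalars closes the argument with no leftover kernel ambiguity.
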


We now introduce some more notation. For any vector $\ket{x}$ on a single copy of the vector space, let
\begin{align}
    \ket{x_I} &= L_I \ket{x} \\
    \ket{x_S} &= L_S \ket{x}\,,
\end{align}
and for any vector $\ket{v}$ on two copies of the vector space, let 
\begin{align}
    \ket{v_{II}} &= L_{II} \ket{v} \\
    \ket{v_{SS}} &= L_{SS} \ket{v} \\
    \ket{v_{SI}} &= L_{SI} \ket{v} \,.
\end{align}
Thus, if $\ket{x}$ represents a probability distribution over the $2^n$ basis states on a single copy of the Hilbert space, then the vector $\ket{x_I}$ is the portion of $\ket{x}$ that is destined to end at the fixed point $I^n$, and $\ket{x_S}$ is the portion destined to end at $S^n$ (if all future gates are noiseless). The division of probability mass into separate $I$ and $S$-destined parts is depicted schematically in \autoref{fig:destinedmass}.

The amount of probability mass for which the noisy copy is destined for the $S^n$ fixed point cannot decay too quickly with the number of noise locations (note that if the noisy copy ends at $S^n$, the noiseless copy must also end at $S^n$). In \autoref{fig:destinedmass}, this is depicted by the fact that the only way to transition from the $S$-destined to an $I$-destined division of probability mass is due to the action of a noise location, which induces a $S\rightarrow I$ transition with probability $\sigma$. 
\begin{lemma}\label{lem:decayOfvSS}
    The $S$-destined probability mass obeys the following inequality, for any $t' \geq t$.
    \begin{equation}
        \braket{\mathbf{1}, \mathbf{1}}{v_{SS}^{(t')}} \geq (1-\sigma)^{2(t'-t)}\braket{\mathbf{1}, \mathbf{1}}{v_{SS}^{(t)}}\,.
    \end{equation}
\end{lemma}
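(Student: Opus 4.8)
The plan is to reduce the lemma to a single-step entrywise inequality between row vectors and then iterate, using only nonnegativity of all the transition matrices involved. Writing $\ket{v^{(t')}} = \left(\prod_{\tau=t+1}^{t'} R_\sigma^{(\tau)}\right)\ket{v^{(t)}}$ and $L_{SS} = \mathcal{I}_X \otimes L_S$, and decomposing $\bra{\mathbf{1},\mathbf{1}} = \bra{\mathbf{1}}_X \otimes \bra{\mathbf{1}}_Y$, the quantity to bound is $\braket{\mathbf{1},\mathbf{1}}{v_{SS}^{(t')}} = \big(\bra{\mathbf{1}}_X \otimes \bra{\mathbf{1}}_Y L_S\big)\left(\prod_{\tau} R_\sigma^{(\tau)}\right)\ket{v^{(t)}}$. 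Since $\ket{v^{(t)}}$ has nonnegative entries and each $R_\sigma^{(\tau)}$ has nonnegative entries, it suffices to prove the entrywise row-vector bound $\bra{\mathbf{1},\mathbf{1}}L_{SS}\,R_\sigma^{(\tau)} \geq (1-\sigma)^2\,\bra{\mathbf{1},\mathbf{1}}L_{SS}$ for each $\tau$, then multiply these successively (propagating the inequality through the nonnegative matrices), and finally contract against $\ket{v^{(t)}}\geq 0$. Each step $R_\sigma^{(\tau)}$ contributes two noise locations, so $t'-t$ steps give the stated factor $(1-\sigma)^{2(t'-t)}$.

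For the one-step bound I would split $R_\sigma^{(\tau)} = \big(\mathcal{I}_X \otimes (Q'^{(\tau)}_\sigma Q^{(\tau)}_\sigma)_Y\big)\,R_0^{(\tau)}$ and treat the noiseless factor $R_0^{(\tau)}$ first. Using the fact recorded in \autoref{app:machinery} that the $X$-marginal of the correlated noiseless step acts on the $Y$-copy exactly as $P^{(\tau)}$, i.e.\ $(\bra{\mathbf{1}}_X \otimes \mathcal{I}_Y)R_0^{(\tau)} = P^{(\tau)}_Y(\bra{\mathbf{1}}_X \otimes \mathcal{I}_Y)$, one gets $\bra{\mathbf{1},\mathbf{1}}L_{SS}R_0^{(\tau)} = \big(\bra{\mathbf{1}}_Y L_S P^{(\tau)}\big)(\bra{\mathbf{1}}_X \otimes \mathcal{I}_Y)$. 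The needed harmonicity identity $\bra{\mathbf{1}}L_S P^{(\tau)} = \bra{\mathbf{1}}L_S$ follows from \autoref{lem:stochasticmatrices}: $P_S^{(\tau)} = L_S P^{(\tau)} L_S^{-1}$ is stochastic on its support, so $\bra{\mathbf{1}} L_S P^{(\tau)} L_S^{-1}L_S = \bra{\mathbf{1}}L_S$, and since $L_S^{-1}L_S = \mathcal{I}-\ketbra{I^n}$ while $P^{(\tau)}\ket{I^n}=\ket{I^n}$ and $\bra{\mathbf{1}}L_S\ket{I^n}=0$, the correction term vanishes. Hence the noiseless factor preserves the total $S$-destined mass exactly: $\bra{\mathbf{1},\mathbf{1}}L_{SS}R_0^{(\tau)} = \bra{\mathbf{1},\mathbf{1}}L_{SS}$.

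For the noise factor I would use the explicit form of $Q^{(\tau)}_\sigma$: it is the identity on an $I$-assigned bit, and on an $S$-assigned bit it sends $\ket{S}\mapsto(1-\sigma)\ket{S}+\sigma\ket{I}$. Evaluating the entries of $\bra{\mathbf{1}}L_S Q^{(\tau)}_\sigma$ against a basis configuration $\ket{\vec{\mu}}$ gives $(L_S)_{\vec{\mu}\vec{\mu}}$ if $\mu_{i_\tau}=I$, and $(1-\sigma)(L_S)_{\vec{\mu}\vec{\mu}}+\sigma(L_S)_{\vec{\mu}'\vec{\mu}'}$ if $\mu_{i_\tau}=S$ (with $\vec{\mu}'$ the configuration with $i_\tau$ flipped to $I$); in both cases the entry is $\geq (1-\sigma)(L_S)_{\vec{\mu}\vec{\mu}}$, using only $L_S\geq 0$. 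Thus $\bra{\mathbf{1}}L_S Q^{(\tau)}_\sigma \geq (1-\sigma)\bra{\mathbf{1}}L_S$ entrywise, and the same for $Q'^{(\tau)}_\sigma$; composing and using that $Q^{(\tau)}_\sigma$ has nonnegative entries gives $\bra{\mathbf{1}}L_S Q'^{(\tau)}_\sigma Q^{(\tau)}_\sigma \geq (1-\sigma)^2\bra{\mathbf{1}}L_S$. Tensoring with $\bra{\mathbf{1}}_X$ and right-multiplying by $R_0^{(\tau)}\geq 0$ combines the two parts into the one-step inequality, and iterating over $\tau = t+1,\dots,t'$ and contracting against $\ket{v^{(t)}}$ finishes.

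I expect the only delicate point to be the marginal bookkeeping: verifying that $\bra{\mathbf{1}}_X$ truly commutes through $R_0^{(\tau)}$ to leave $P^{(\tau)}$ acting on $Y$ independently of the $X$-state, that the whole trajectory stays in the accessible subspace so these identities are valid, and that the pseudo-inverse subtlety at the $I^n$ configuration (where $L_S$ is singular) is handled cleanly. Once the algebra is arranged correctly, the inequalities themselves are elementary, since everything reduces to monotonicity of multiplication by entrywise-nonnegative stochastic and substochastic matrices together with $L_S\geq 0$.
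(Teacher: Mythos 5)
Your proposal is correct and follows essentially the same route as the paper's proof: marginalize over the noiseless $X$ copy so that one step acts as $Q'^{(t)}_{\sigma}Q^{(t)}_{\sigma}P^{(t)}$ on the noisy copy, use that the $S$-destined weight $\bra{\mathbf{1}}L_S$ is exactly preserved by the noiseless step (your harmonicity identity is equivalent to the stochasticity of $P_S^{(t)}$ from \autoref{lem:stochasticmatrices}, i.e.\ the paper's $\sum_{\vec{\zeta}}G_{\vec{\zeta}\vec{\nu}}=1$), and lose at most a factor $(1-\sigma)$ per noise location using $L_S\geq 0$, then iterate. Your entrywise left-vector formulation is just a repackaging of the paper's one-step recursion via the decomposition $E_{\vec{\mu}\vec{\zeta}}G_{\vec{\zeta}\vec{\nu}}$.
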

\begin{proof}[Proof idea]
Recall that the inner product with $\bra{\mathbf{1},\mathbf{1}}$ gives the sum of the entries of the vector. We interpret $\ket{v_{SS}^{(t)}}$ as the probability vector of mass destined to reach the $S^n$ fixed point on both copies. Each time a noise location acts, it can affect at most a $\sigma$ fraction of the mass, so even after two noise locations act, at least a $(1-\sigma)^2$ fraction of the mass that was $S$-destined before will still be $S$-destined. 
\end{proof}

\subsubsection{Decomposing the $I$-destined probability mass}

The final piece of machinery we need is an accounting of which error leads to each piece of $I$-destined probability mass. To do this, for each $t \geq 1$ define
\begin{align}
    \ket{v_{SI}^{(t,t)}} ={}& \ket{v_{SI}^{(t)}}-(\mathcal{I} \otimes Q'^{(t)}_{\sigma} Q_\sigma^{(t)})R_{SI}^{(t)}\ket{v_{SI}^{(t-1)}} \\
    ={}& \left(L_{SI} \left(\mathcal{I} \otimes Q'^{(t)}_{\sigma} Q_\sigma^{(t)}\right)-\left(\mathcal{I} \otimes Q'^{(t)}_{\sigma} Q_\sigma^{(t)}\right)L_{SI}\right)R_0^{(t)}\ket{v^{(t-1)}}\,,
\end{align}
and define the evolution rule
\begin{equation}
\ket{v_{SI}^{(t'+1,t)}} = Q'^{(t'+1)}_\sigma Q_\sigma^{(t'+1)}R_{SI}^{(t'+1)}\ket{v_{SI}^{(t',t)}} \,.
\end{equation}
The vector $\ket{v_{SI}^{(t',t)}}$ represents the probability mass that would have gone to the $S^n$ fixed point, but the noise at time step $t$ caused it to be redirected to the $I^n$ fixed point, and we have subsequently evolved it forward to timestep $t'$. 

Importantly, we can verify from the definition that
\begin{align}\label{eq:sumSI}
    \sum_{t=1}^{t'} \ket{v_{SI}^{(t',t)}} = \ket{v_{SI}^{(t')} }\,,
\end{align}
indicating that all of the mass at time step $t'$ is accounted for as having originated at some previous time step $t$. 

\begin{lemma}\label{lem:11vSI}
For all $t$ and $t'\geq t$,
    \begin{equation}
        \braket{\mathbf{1},\mathbf{1}}{v_{SI}^{(t',t)}} \leq (1-(1-\sigma)^2) \braket{\mathbf{1},\mathbf{1}}{v_{SS}^{(t-1)}}\,.
    \end{equation}
\end{lemma}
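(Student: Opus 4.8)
The plan is to control the ``newly redirected'' piece of probability mass, $\ket{v_{SI}^{(t,t)}}$, directly from its closed-form expression, and then to argue that the subsequent evolution cannot increase its total weight. Starting from the second expression defining $\ket{v_{SI}^{(t,t)}}$, set $\mathcal{N}_Y=\mathcal{I}_X\otimes(Q'^{(t)}_{\sigma}Q^{(t)}_{\sigma})_Y$ and let $N=Q'^{(t)}_{\sigma}Q^{(t)}_{\sigma}$ be the single-copy noise operator, which acts only on sites $i_t,j_t$. Since $\mathcal{N}_Y$ is trivial on the $X$ factor and $L_{SI}=\mathcal{I}\otimes L_I-L_I\otimes\mathcal{I}$, the $L_I\otimes\mathcal{I}$ term commutes with $\mathcal{N}_Y$, and writing $L_I=\mathcal{I}-L_S$ one gets
\begin{equation}
  \ket{v_{SI}^{(t,t)}}=\bigl(L_{SI}\mathcal{N}_Y-\mathcal{N}_Y L_{SI}\bigr)R_0^{(t)}\ket{v^{(t-1)}}=\bigl(\mathcal{I}_X\otimes(NL_S-L_SN)_Y\bigr)R_0^{(t)}\ket{v^{(t-1)}}\,.
\end{equation}

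Next I would bound the total (signed) mass of the right-hand side configuration by configuration. Let $\ket{w}=R_0^{(t)}\ket{v^{(t-1)}}=\sum_{\vec{\mu},\vec{\nu}}w_{\vec{\mu}\vec{\nu}}\ket{\vec{\mu},\vec{\nu}}\geq 0$; this lies in the accessible subspace, where the $S$-positions of $\vec{\nu}$ are contained in those of $\vec{\mu}$, so $|\vec{\mu}|\geq|\vec{\nu}|$. For any $\vec{\nu}$ one has $(NL_S-L_SN)\ket{\vec{\nu}}=\sum_{\vec{\nu}'}N_{\vec{\nu}'\vec{\nu}}\bigl(L_S(\vec{\nu})-L_S(\vec{\nu}')\bigr)\ket{\vec{\nu}'}$, and since $N$ only flips $S\to I$ at $i_t,j_t$, every $\vec{\nu}'\neq\vec{\nu}$ occurring has $|\vec{\nu}'|<|\vec{\nu}|$; as $L_S$ is nonnegative and strictly increasing in Hamming weight, every coefficient is nonnegative and $\braket{\mathbf{1}}{(NL_S-L_SN)\vec{\nu}}=\sum_{\vec{\nu}'\neq\vec{\nu}}N_{\vec{\nu}'\vec{\nu}}\bigl(L_S(\vec{\nu})-L_S(\vec{\nu}')\bigr)\leq\bigl(1-(1-\sigma)^2\bigr)L_S(\vec{\nu})$, using $L_S(\vec{\nu}')\geq 0$ and $\sum_{\vec{\nu}'\neq\vec{\nu}}N_{\vec{\nu}'\vec{\nu}}=1-N_{\vec{\nu}\vec{\nu}}\leq 1-(1-\sigma)^2$ (the probability of at least one $S\to I$ flip among the two noise locations). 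Summing against the nonnegative weights $w_{\vec{\mu}\vec{\nu}}$ and using $\mathcal{I}\otimes L_S=L_{SS}$ yields $\braket{\mathbf{1},\mathbf{1}}{v_{SI}^{(t,t)}}\leq(1-(1-\sigma)^2)\bra{\mathbf{1},\mathbf{1}}L_{SS}R_0^{(t)}\ket{v^{(t-1)}}$. Finally, the noiseless step does not change the $L_{SS}$-weight, i.e.\ $\bra{\mathbf{1},\mathbf{1}}L_{SS}R_0^{(t)}=\bra{\mathbf{1},\mathbf{1}}L_{SS}$ (a harmonic-function property of the absorbing chain, which follows from $R^{(t)}_{SS}=L_{SS}R_0^{(t)}L_{SS}^{-1}$ being stochastic on its support, \autoref{lem:stochasticmatrices}), so the bound becomes $\braket{\mathbf{1},\mathbf{1}}{v_{SI}^{(t,t)}}\leq(1-(1-\sigma)^2)\braket{\mathbf{1},\mathbf{1}}{v_{SS}^{(t-1)}}$, which is the claim for $t'=t$.

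To handle $t'>t$ I would propagate this forward. Each basis configuration $\ket{\vec{\mu},\vec{\nu}'}$ with nonzero coefficient in $\ket{v_{SI}^{(t,t)}}$ has $|\vec{\mu}|>|\vec{\nu}'|$ and is accessible, hence lies in the support of $L_{SI}$, and the vector is nonnegative there. Both $Q'^{(t'+1)}_{\sigma}Q^{(t'+1)}_{\sigma}$ (column-stochastic) and $R_{SI}^{(t'+1)}$ (stochastic on the support of $L_{SI}$ by \autoref{lem:stochasticmatrices}) map the support of $L_{SI}$ into itself---noise only decreases the $Y$-weight, and $R_{SI}^{(t'+1)}$ has range inside that of $L_{SI}$---so the recursion $\ket{v_{SI}^{(t'+1,t)}}=Q'^{(t'+1)}_{\sigma}Q^{(t'+1)}_{\sigma}R_{SI}^{(t'+1)}\ket{v_{SI}^{(t',t)}}$ preserves nonnegativity and total mass; therefore $\braket{\mathbf{1},\mathbf{1}}{v_{SI}^{(t',t)}}=\braket{\mathbf{1},\mathbf{1}}{v_{SI}^{(t,t)}}$ for all $t'\geq t$, and combining with the previous step finishes the proof. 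For architectures whose gate sequence is randomized (such as the complete-graph architecture), every inequality above holds for each fixed gate sequence and hence for the average.

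The one genuinely delicate step is the configuration-wise bound: $L_{SI}\mathcal{N}_Y-\mathcal{N}_Y L_{SI}$ is a commutator, not manifestly a substochastic matrix, so its entries are not obviously signed consistently. What rescues the argument is precisely the one-directional structure of the noise (it sends $S\to I$ only, so flips strictly lower $|\vec{\nu}|$) together with the monotonicity of $L_S$ in Hamming weight, which force the relevant coefficients to be nonnegative and the escaping mass to be bounded by $1-(1-\sigma)^2$ times the local $L_S$-weight. The remaining ingredients---the commutator reduction, the conservation identity $\bra{\mathbf{1},\mathbf{1}}L_{SS}R_0^{(t)}=\bra{\mathbf{1},\mathbf{1}}L_{SS}$, and the forward propagation---are routine given \autoref{lem:stochasticmatrices} and the absorbing-Markov-chain analysis of Ref.~\cite{dalzell2020anticoncentration}.
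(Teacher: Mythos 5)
Your proposal is correct and follows essentially the same route as the paper's proof: isolate the commutator acting on the noisy copy, bound the redirected mass configuration-by-configuration by $(1-(1-\sigma)^2)L_S(\vec{\nu})$, resum via $L_{SS}$ and the stochasticity of $R_{SS}^{(t)}$ (\autoref{lem:stochasticmatrices}), and propagate to $t'>t$ using stochasticity of the subsequent transition matrices. The only (harmless) difference is that where the paper enumerates the at-most-three error transitions and their explicit probabilities, you bound the off-diagonal column mass of $Q'^{(t)}_\sigma Q^{(t)}_\sigma$ abstractly by $1-(1-\sigma)^2$ and drop the nonnegative $L_S(\vec{\nu}')$ terms, arriving at the same estimate; your extra care about nonnegativity and the support of $L_{SI}$ in the forward propagation is a welcome bonus.
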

\begin{proof}[Proof idea]
The vector $\ket{v_{SI}^{(t,t)}}$ represents the mass that satisfies two conditions: (1) it was destined for the $\ket{S^n} \otimes \ket{S^n}$ fixed point at time step $t-1$, and (2) the noise at time step $t$ caused it to be destined for the $\ket{S^n} \otimes \ket{I^n}$ fixed point at time step $t$. At most $\braket{\mathbf{1},\mathbf{1}}{v_{SS}^{(t-1)}}$ mass qualifies under condition (1). Among that mass, each of the two noise location can only impact a $\sigma$ fraction of the mass, so the fraction of mass that can be re-directed is at most $(1-(1-\sigma)^2)$.
\end{proof}

\subsection{Consequences of anti-concentration}

In all of our rigorous proofs, we assume we have a random quantum circuit architecture that is $h$-regularly connected for some constant $h=O(1)$, and has anti-concentration size equal to $s_{AC}$. Recall that this means that $Z_0$ becomes twice its limiting value at $s_{AC}$. When this is the case, we have the following lemmas. All constants are dependent on $q$ and $h$, but not on $n$ or any noise parameters. 

\begin{lemma}\label{lem:anticoncentrationInNoisySection}
     Suppose the random quantum circuit architecture is regularly connected. There exist constants $\chi_1$ and $\chi_2$ such that for all $t \geq s_{AC}$
    \begin{equation}
         \braket{\mathbf{q},\mathbf{1}}{v^{(t)}} \leq \frac{2q^n}{q^n+1} + \eta_t\,,
    \end{equation}
    where
    \begin{equation}
        \eta_t = \chi_2 \exp\left(-\frac{\chi_1}{n}(t-s_{AC})\right)\,.
    \end{equation}
\end{lemma}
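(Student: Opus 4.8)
The plan is to recognize $\braket{\mathbf{q},\mathbf{1}}{v^{(t)}}$ as a purely \emph{noiseless} quantity and then import the quantitative anti-concentration estimate of Ref.~\cite{dalzell2020anticoncentration}. Since the marginal dynamics of the ``noiseless'' $X$ copy under each $R_\sigma^{(t')}$ is exactly the noiseless transition matrix $P^{(t')}$, the inner product with $\bra{\mathbf{1}}_Y$ marginalizes out the $Y$ copy and removes all $\sigma$-dependence: we get $\braket{\mathbf{q},\mathbf{1}}{v^{(t)}} = \bra{\mathbf{q},\mathbf{1}}\prod_{t'=1}^{t}R_\sigma^{(t')}\ket{\Lambda\Lambda}$, which is precisely the noiseless quantity $\mathcal{Z}_0 = Z_0$ for a circuit of size $t$ (the identity $\mathcal{Z}_0 = \bra{\mathbf{q},\mathbf{1}}\prod_{t'}R_\sigma^{(t')}\ket{\Lambda\Lambda}$ for arbitrary $\sigma$ is stated just above the lemma). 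Write $Z_0(t)$ for this truncated value; for randomized architectures such as the complete graph, $v^{(t)}$ is already the diagram-average, so $Z_0(t)$ is the correspondingly averaged quantity analyzed in Ref.~\cite{dalzell2020anticoncentration}.

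Next I would invoke two facts about the noiseless biased walk. First, $Z_0(t)$ is non-increasing in $t$ and bounded below by its limit $2q^n/(q^n+1)$: monotonicity holds because replacing an $\{I,S\}$-pair at a gate by its $P$-image changes the expected two-site weight from $q$ to $\tfrac{2q^2}{q^2+1}\le q$ (equivalently $(q-1)^2\ge 0$), while the other sites are untouched, so $\EV[q^{|\vec{\gamma}|}]$ can only decrease per gate; the limit is the stationary value obtained from the two fixed points $I^n$ and $S^n$ receiving mass $q^n/(q^n+1)$ and $1/(q^n+1)$. Second, once $t\ge s_{AC}$ the deviation $Z_0(t)-\tfrac{2q^n}{q^n+1}$ contracts by a constant factor $\rho<1$ over every window of $hn$ gates. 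This is the core of the anti-concentration analysis of Ref.~\cite{dalzell2020anticoncentration}: decomposing the mass into its $I$-destined and $S$-destined components (as in \autoref{fig:destinedmass}), the regularly-connected property guarantees that within any $hn$ consecutive steps there is, with probability at least $1/2$ conditioned on the past, a gate that couples one side of any ``mixed'' bipartition to the other, which moves a fixed fraction of the not-yet-equilibrated mass onto the two fixed points and thereby shrinks its contribution to $Z_0$ by a constant factor. Summing the geometric series and using $Z_0(s_{AC}) - \tfrac{2q^n}{q^n+1} \le \tfrac{4q^n}{q^n+1}-\tfrac{2q^n}{q^n+1}\le 2$ (the definition of $s_{AC}$) yields $Z_0(t)-\tfrac{2q^n}{q^n+1}\le \chi_2\exp(-\tfrac{\chi_1}{n}(t-s_{AC}))$ with, e.g., $\chi_2=2/\rho$ and $\chi_1=\ln(1/\rho)/h$, which depend only on $q$ and $h$; rearranging gives the claimed bound.

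The main obstacle is the quantitative contraction in the second step: one needs a contraction factor per $\Theta(n)$ gates that is \emph{uniform in $n$}, not merely qualitative convergence. If the relevant statement in Ref.~\cite{dalzell2020anticoncentration} is phrased in terms of $\text{TVD}(\pideal,\punif)$ or the full-circuit collision probability rather than the truncated quantity $Z_0(t)$ itself, one must re-run the contraction bound, i.e., show that a single regularly-connected ``crossing'' gate at a mixed bipartition absorbs a controlled fraction of the spread-out probability mass (equivalently, bound the subdominant eigenvalue of the diagram-averaged $hn$-step transition operator restricted to the sector orthogonal to the two fixed points $I^n,S^n$). Everything else---the reduction to $Z_0(t)$, the per-gate monotonicity, and assembling the geometric series---is routine.
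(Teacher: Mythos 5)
Your proposal matches the paper's proof essentially step for step: the paper likewise identifies $\braket{\mathbf{q},\mathbf{1}}{v^{(t)}}$ with the truncated noiseless collision-probability quantity $Z^{(t)}$, uses its monotonicity, applies the $h$-regularly-connected property to get a contraction of the non-fixed-point contribution by the constant factor $\frac{1}{2}+\frac{1}{2}\frac{2q}{q^2+1}=\frac{(q+1)^2}{2(q^2+1)}$ per window of $hn$ gates (importing the per-crossing-gate factor $2q/(q^2+1)$ from Ref.~\cite{dalzell2020anticoncentration}, exactly the ingredient you flag as the one thing to re-derive), starts the geometric decay at $s_{AC}$ using $Z^{(s_{AC})}\leq 4q^n/(q^n+1)$, and handles general $t$ by monotonicity, yielding explicit $\chi_1=\frac{1}{h}\log\bigl(2(q^2+1)/(q+1)^2\bigr)$ and $\chi_2=4(q^2+1)/(q+1)^2$. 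Your explicit per-gate monotonicity check ($q\mapsto 2q^2/(q^2+1)\leq q$) is a small addition the paper merely asserts, but the route is the same.
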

\begin{proof}[Proof idea]
The left-hand side is precisely $Z_0$ for a circuit with size $t$. The regularly connected property indicates that for any configuration not at a fixed point, there will be a gate that couples an $I$ with an $S$ roughly once every $O(n)$ gates. When this happens, the difference between $Z_0$ and its infinite-size limit is reduced by a constant factor, leading to the scaling in the lemma. 
\end{proof}

\begin{lemma}\label{lem:ACconvergencetoSn}
     Suppose the random quantum circuit architecture is regularly connected. There exist constants $\chi_3$ and $\chi_4$ such that for all $t$
    \begin{equation}
        \braket{S^n,\mathbf{1}}{v^{(t)}} \geq \frac{1-\eta'_t}{q^n+1}\,,
    \end{equation}
    where
    \begin{equation}
        \eta'_t = \chi_4 \exp\left(-\frac{\chi_3}{n}(t-s_{AC})\right)\,.
    \end{equation}
\end{lemma}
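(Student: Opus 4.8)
The plan is to reduce the claim to a statement purely about the noiseless biased random walk and then to invoke (the natural $S$-fixed-point analogue of) the regularly-connected anti-concentration analysis of Ref.~\cite{dalzell2020anticoncentration}. First observe that $\braket{S^n,\mathbf{1}}{v^{(t)}}$ is simply the probability that the noiseless ($X$) copy of the coupled process sits at the configuration $S^n$ after $t$ steps: since $\bra{S^n,\mathbf{1}} = \bra{S^n}_X \otimes \bra{\mathbf 1}_Y$, the inner product with $\bra{\mathbf 1}$ marginalizes out the $Y$ copy, and the marginal dynamics of the $X$ copy is the noiseless stochastic process (transition matrices $P^{(t)}$), independently of $\sigma$. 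So it is enough to show that, for the noiseless walk started from $\ket{\Lambda}$, the probability mass located exactly at the fixed point $S^n$ after $t$ steps is at least $\frac{1}{q^n+1}(1-\eta'_t)$.

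Next I would use conservation of the noiseless walk's $S$-destined mass. The coefficient of $\ketbra{\vec\nu}$ in $L_S$, namely $Q(|\vec\nu|)=(q^{-2n+2|\vec\nu|}-q^{-2n})/(1-q^{-2n})$, is the probability of absorption at $S^n$ under infinite noiseless evolution, and by the recursion relation solved in Ref.~\cite{dalzell2020anticoncentration} it is a martingale for the noiseless walk; hence the total $S$-destined mass $\braket{\mathbf 1}{L_S \ket{v^{(t)}_X}}$ equals its value at $t=0$, which a short computation (using $\Lambda_{\vec\nu}=q^{n-|\vec\nu|}/(q+1)^n$) shows is exactly $\frac{1}{q^n+1}$. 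All of this mass is eventually absorbed at $S^n$, so writing $\braket{S^n,\mathbf 1}{v^{(t)}} = \frac{1}{q^n+1}-m_t$, the quantity $m_t$ is precisely the $S$-destined mass not yet at $S^n$, i.e. $\frac{1}{q^n+1}$ times the probability that the $S$-conditioned chain (transition matrices $P^{(t)}_S = L_S P^{(t)} L_S^{-1}$, genuinely stochastic by \autoref{lem:stochasticmatrices}, with $S^n$ as unique absorbing state) has not yet hit $S^n$ by time $t$. Thus the lemma is equivalent to an $e^{-\Omega((t-s_{AC})/n)}$ upper bound on this ``not-yet-absorbed'' probability.

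To obtain that bound I would adapt the regularly-connected argument behind anti-concentration. The conditioned chain $P_S$ is strongly $S$-biased: from any configuration whose set $R$ of $I$-positions is a proper nonempty subset of $[n]$, the regularly-connected property guarantees that within $hn$ steps, with probability at least $1/2$, some gate couples an $I$-position with an $S$-position, and conditioned on being $S$-destined such a gate flips the $I$ to $S$ with at least a constant probability (the $Q$-reweighting only enhances that direction), decreasing the number of $I$-positions. Tracking the Hamming weight as a potential — non-decreasing in expectation under $P_S$, and, after a burn-in comparable to $s_{AC}$, contracting its distance to $n$ by a constant factor every $\Theta(n)$ steps — yields decay of the not-yet-absorbed probability like $e^{-\Omega((t-s_{AC})/n)}$; the $s_{AC}$ offset appears because the $S$-destined mass starts spread over configurations of Hamming weight $\Theta(n)$ and needs $\Theta(n\log n)=\Theta(s_{AC})$ steps (a coupon-collector effect, matching the anti-concentration time) for its last few $I$-positions to be gated. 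The main obstacle is precisely this step: making the coupling/potential argument for $P_S$ quantitative enough to give the constant-factor-per-$O(n)$-steps contraction with the correct $s_{AC}$ alignment, rather than merely an $e^{-\Omega(t/n)}$ bound with uncontrolled prefactor. Most of this can be imported from the anti-concentration proof in Ref.~\cite{dalzell2020anticoncentration} (or obtained from it by the $I\leftrightarrow S$ relabeling symmetry of the transition rules), so the genuinely new content is the reduction in the first two paragraphs; and for $t$ well below $s_{AC}$ the claimed bound is vacuous (there $\eta'_t>1$, so the right-hand side is negative), requiring no separate argument.
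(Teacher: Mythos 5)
Your first two paragraphs are correct and consistent with the paper's framework: $\braket{S^n,\mathbf{1}}{v^{(t)}}$ is indeed the probability that the noiseless marginal sits at $S^n$, the total $S$-destined mass is conserved and equals $1/(q^n+1)$, and the lemma is equivalent to an exponential bound on the $S$-destined mass not yet absorbed at $S^n$. The problem is that the one step you flag as "the main obstacle" is the entire content of the lemma, and the route you sketch for it does not obviously work. For a general $h$-regularly connected architecture the definition only guarantees, per window of $hn$ gates and with probability $1/2$, a single gate crossing the boundary of the current $I$-set; under the $S$-conditioned initial distribution the number of $I$-positions is of order $n$ (Binomial$(n,1/(q+1))$), so removing them one boundary event at a time gives an a priori $O(n^2)$ absorption scale, not the coupon-collector $\Theta(n\log n)$ you assert (parallel removal is a feature of specific architectures, not of the regularly connected property). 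Worse, $s_{AC}$ is defined through the collision probability, not through any hitting time, and is only known to lie between $\Omega(n\log n)$ and $O(n^2)$; so aligning your burn-in with $s_{AC}$, with constants compatible with $\eta'_t=\chi_4 e^{-\chi_3(t-s_{AC})/n}$, is precisely the part that is missing and is not importable from Ref.~\cite{dalzell2020anticoncentration} by an $I\leftrightarrow S$ relabeling (the walk is biased toward $I$, so the two fixed points are not symmetric).

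The paper avoids analyzing the conditioned chain $P_S^{(t)}$ altogether. It writes $(\braket{\mathbf{q},\mathbf{1}}{v^{(t)}}-1)/(q^n-1)$ as a sum over configurations, inserts $L_S^{-1}L_S$ on the terms with $\vec\nu\neq I^n,S^n$, and observes that each unit of $S$-destined mass sitting at Hamming weight at most $n-1$ carries a coefficient at least $\tfrac{q(1+q^{-n})}{1+q^{-n+1}}>1$, whereas mass already at $S^n$ carries coefficient exactly $1$. Combined with conservation of the $S$-destined mass at $1/(q^n+1)$, any lagging $S$-destined mass forces the collision probability above its limiting value $2q^n/(q^n+1)$; but \autoref{lem:anticoncentrationInNoisySection} (which uses the definition of $s_{AC}$ directly, plus regular connectivity) caps that excess by $\eta_t=\chi_2 e^{-\chi_1(t-s_{AC})/n}$. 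Rearranging yields the claimed bound with $\eta'_t\le 6\eta_t$, and the $s_{AC}$ offset is inherited for free from the collision-probability bound rather than re-derived from a drift analysis. If you want to salvage your outline, the fix is to replace your third paragraph by this conversion argument; as written, the proposal has a genuine gap at its decisive step.
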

\begin{proof}[Proof idea]
Anti-concentration happens because most of the probability mass makes it to one of the fixed points. This lemma states that after the anti-concentration size, most of the mass destined for the $S^n$ fixed point has already reached it. The fraction that has not yet reached is $\eta_t'$, which decays exponentially with $t/n$. We show that if this were not the case, then the bound in  \autoref{lem:anticoncentrationInNoisySection} could not hold.
\end{proof}

\begin{lemma}\label{lem:ACconvergenceUnderPI}
    Suppose the random quantum circuit architecture is regularly connected. There exist constants $\chi_5$ and $\chi_6$ such that for any non-negative vector $\ket{v}$ that is normalized (i.e.~$\braket{\mathbf{1},\mathbf{1}}{v}=1$), the following holds for any $t_0$ and any $t_1\geq t_0$.
    \begin{align}
        &\bra{\mathbf{q}}\Delta \prod_{t=t_0+1}^{t_1}\left((\mathcal{I} \otimes Q'^{(t)}_{\sigma} Q_\sigma^{(t)}) R_{SI}^{(t)}\right) \ket{v}-1 \nonumber\\
        \leq{}& \left(\bra{\mathbf{q}}\Delta \ket{v}-1\right)  \chi_6 \exp\left(-\frac{\chi_5 (t_1-t_0)}{n}\right) \,.
    \end{align}
\end{lemma}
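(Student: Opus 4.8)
\textbf{Proof proposal for \autoref{lem:ACconvergenceUnderPI}.}

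The plan is to reduce the statement entirely to the single-copy picture via \autoref{lem:WsystemEvolvesByPI} and then invoke the anti-concentration machinery already established. First I would observe that since $\ket{v}$ lies (after projecting, if necessary) in the accessible subspace and the evolution operators $(\mathcal{I}\otimes Q'^{(t)}_\sigma Q^{(t)}_\sigma)R^{(t)}_{SI}$ preserve it, \autoref{lem:WsystemEvolvesByPI} lets me push $\Delta$ through the whole product: applying it repeatedly gives
\begin{equation}
\Delta \prod_{t=t_0+1}^{t_1}\Big((\mathcal{I}\otimes Q'^{(t)}_\sigma Q^{(t)}_\sigma)R^{(t)}_{SI}\Big) = \Big(\prod_{t=t_0+1}^{t_1} P^{(t)}_I\Big)\,\Delta\,.
\end{equation}
Hence, writing $\ket{w}=\Delta\ket{v}$, the left-hand side of the lemma becomes $\bra{\mathbf{q}}\big(\prod_{t=t_0+1}^{t_1}P^{(t)}_I\big)\ket{w}-1$, with $\bra{\mathbf{q}}\Delta\ket{v}-1 = \bra{\mathbf{q}}\ket{w}-1$ on the right-hand side. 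By \autoref{lem:stochasticmatrices}, each $P^{(t)}_I$ is stochastic on its support, and $\ket{w}$ is non-negative and normalized (since $\Delta$ is a stochastic map: each $\Delta_i$ sends basis states to basis states and $\braket{\mathbf{1},\mathbf{1}}{v}=1$ implies $\braket{\mathbf{1}}{w}=1$). So the whole problem is now: for the single-copy noiseless walk \emph{conditioned to end at $I^n$}, show that $\bra{\mathbf{q}}(\text{evolve from }t_0\text{ to }t_1)\ket{w}-1$ contracts by a factor $\exp(-\Omega((t_1-t_0)/n))$ relative to its initial value $\bra{\mathbf{q}}\ket{w}-1$.

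Next I would establish this contraction. The quantity $\bra{\mathbf{q}}\ket{w}-1$ measures how far the $I$-conditioned distribution $\ket{w}$ is from the $I^n$ fixed point: since $\bra{\mathbf{q}}$ assigns weight $q^{|\vec\nu|}$, and the walk under $P^{(t)}_I$ deterministically converges to $I^n$ (weight $1$) as $t\to\infty$, we have $\bra{\mathbf{q}}\ket{w}\to 1$, so $\bra{\mathbf{q}}\ket{w}-1 \geq 0$ is a nonnegative ``distance-to-$I^n$'' functional that is non-increasing under $P^{(t)}_I$ (this monotonicity should follow because $P^{(t)}_I$ is the $I$-conditioned walk, which can only move mass toward lower Hamming weight, i.e.\ toward $I^n$; more carefully, one checks $\bra{\mathbf{q}}P^{(t)}_I = \bra{\mathbf{q}} + (\text{non-positive correction})$ using the explicit form of $P^{(t)}$ and $L_I$, exactly the kind of computation done for the noiseless walk in Ref.~\cite{dalzell2020anticoncentration}). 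For the strict contraction, I would invoke the $h$-regularly connected property just as in the proof sketch of \autoref{lem:anticoncentrationInNoisySection}: for any configuration not at the $I^n$ fixed point, there is at least probability $1/2$ that within any window of $hn$ gates some gate couples an $I$-assigned qudit with an $S$-assigned qudit, and each such coupling drains a constant fraction of the ``excess'' $\bra{\mathbf{q}}\ket{w}-1$ toward zero (the relevant constant coming from the coefficients in Eq.~\eqref{eq:MIS}, adjusted by the $I$-conditioning, which if anything makes the drain faster). Iterating over $\lfloor (t_1-t_0)/(hn)\rfloor$ such windows yields $\bra{\mathbf{q}}\prod P^{(t)}_I\ket{w}-1 \leq (\bra{\mathbf{q}}\ket{w}-1)\cdot C\,\lambda^{(t_1-t_0)/n}$ for constants $C=O(1)$ and $\lambda<1$ depending only on $q$ and $h$, which is exactly the claimed bound with $\chi_5=-\log\lambda$ and $\chi_6=C$. (For the complete-graph architecture, the same argument runs with ``every layer'' replaced by ``every gate,'' using that a random gate couples an $I$ with an $S$ with probability $\Omega(|\vec\nu|(n-|\vec\nu|)/n^2)$, which again gives geometric decay of the excess at rate $\Omega(1/n)$ since the $I$-conditioning keeps the typical Hamming weight small — though actually a cleaner route is to note the bound is architecture-uniform once phrased via $s_{AC}$.)

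The main obstacle I expect is making the ``constant-fraction drain per $O(n)$ gates'' step fully rigorous for the \emph{conditioned} walk $P^{(t)}_I$ rather than the bare walk $P^{(t)}$. Conjugating by $L_I = \sum_{\vec\nu}\frac{1-q^{-2n+2|\vec\nu|}}{1-q^{-2n}}\ketbra{\vec\nu}$ reweights transition probabilities in a Hamming-weight-dependent way, and one must check that (i) $\bra{\mathbf{q}}\ket{w}-1$ remains a valid Lyapunov functional under this reweighted dynamics and (ii) the contraction constant stays bounded away from $1$ uniformly in $n$ (the worry is near $|\vec\nu|=n$, where $L_I$'s coefficient is close to $1$ but the reweighting of the single flip that decreases the weight could be delicate). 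I would handle this by relating $\bra{\mathbf{q}}\ket{w}-1$ directly to $\sum_{\vec\nu} q^{|\vec\nu|}\langle\vec\nu|w\rangle - 1$ and tracking the first moment of the Hamming weight under $P^{(t)}_I$, showing it decreases multiplicatively with each $I$-$S$ coupling; this reduces everything to a one-dimensional birth-death estimate that is essentially the one already used in Ref.~\cite{dalzell2020anticoncentration} and in \autoref{lem:anticoncentrationInNoisySection}, so I would borrow those estimates rather than redo them. Everything else — the push-through of $\Delta$, stochasticity, normalization — is bookkeeping using lemmas already in hand.
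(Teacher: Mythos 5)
Your overall strategy is the paper's: pass to the $W$ register via \autoref{lem:WsystemEvolvesByPI}, and then use $h$-regular connectivity to argue that the excess $\bra{\mathbf{q}}\cdot-1$ contracts by a constant factor once per $O(n)$ gates under the $I$-conditioned walk. However, there is one genuine flaw in the reduction step. The intertwining identity you assert, $\Delta \prod_t\big((\mathcal{I}\otimes Q'^{(t)}_\sigma Q^{(t)}_\sigma)R^{(t)}_{SI}\big)=\big(\prod_t P^{(t)}_I\big)\Delta$, is false: \autoref{lem:WsystemEvolvesByPI} only intertwines $\Delta$ with the \emph{noiseless conditioned} step $R^{(t)}_{SI}$, and the noise factors do not disappear in the $W$ picture. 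Indeed $\Delta$ maps both $X_iY_i=SS$ and $X_iY_i=II$ to $W_i=S$, but $\mathcal{I}\otimes Q_\sigma$ flips $W_i$ to $I$ with probability $\sigma$ in the first case and leaves it untouched in the second, so there is no operator on the $W$ register alone that reproduces $\Delta(\mathcal{I}\otimes Q'^{(t)}_\sigma Q^{(t)}_\sigma)$. Your proposal silently drops the noise from the dynamics, and with it the need to check that noise-induced flips do not spoil the regular-connectivity argument (a noise flip occurring before the designated coupling gate can change whether that gate still couples an $I$ with an $S$).

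The paper avoids this by arguing trajectory-wise on basis states rather than through an operator identity: for a basis state $\ket{\vec\eta,\vec\eta'}$ with $\Delta\ket{\vec\eta,\vec\eta'}=\ket{\vec\nu}$, a direct computation gives $\bra{\mathbf{q}}P^{(t)}_I\ket{\vec\nu}-1\le \frac{2q}{q^2+1}\,(q^{|\vec\nu|}-1)$ whenever the gate couples disagreeing bits (this is the two-line calculation that settles the ``main obstacle'' you flagged — the $I$-conditioning indeed only helps, and no first-moment/birth--death argument is needed), and then observes that the only way the promised coupling within $hn$ gates can fail to fire is if a $Q_\sigma$ flip pre-empted it; but such a flip lowers $|\vec\nu|$ by one and multiplies the excess by $\frac{q^{|\vec\nu|-1}-1}{q^{|\vec\nu|}-1}\le\frac{2q}{q^2+1}$, so either way each $hn$-gate window contracts the excess by the factor $\frac{1}{2}+\frac{1}{2}\frac{2q}{q^2+1}$ with the required probability. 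So your skeleton is salvageable, but you must replace the false commutation by this per-configuration bookkeeping that explicitly accounts for the noise operators.
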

\begin{proof}[Proof idea]
Recall from \autoref{lem:WsystemEvolvesByPI} that if $\ket{v}$ evolves by $R_{SI}^{(t)}$, then $\Delta \ket{v}$ evolves by $P_{I}^{(t)}$. The transition matrix $P_I^{(t)}$ is the matrix that conditions on sending the vector to the $I^n$ fixed point, so it is even more $I$-biased than the transition matrix $P^{(t)}$. Thus, each time a bit is flipped, the Hamming weight is likely to decrease, and the inner product with $\bra{\mathbf{q}}-\bra{\mathbf{1}}$ will be reduced by a constant factor. This will (usually) happen once every $O(n)$ gates if the architecture is regularly connected. The insertion of the $Q_\sigma^{(t)}$ operators will only make the Hamming weight smaller since they can only flip $S \rightarrow I$. 
\end{proof}

\subsection{Exponential clustering of $S$-destined probability mass}

A key step in our analysis is that the $S$-destined mass stays close to the $S^n$ fixed point, as long as $\sigma = O(1/n)$. In fact, the probability of deviating from the fixed point by $x$ bit flips decays exponentially in $x$. Intuitively, this is because the $S$-destined mass is biased to move upward in Hamming weight, and when $\sigma$ is small enough, this upward pressure will be greater than the downward pressure coming from the noise itself. 

We prove this for the $W$ system, which captures the difference between the (noiseless) $X$ and (noisy) $Y$ systems. We cannot directly analyze the $Y$ system because at time step 0, the statement is definitively not true. It takes $s_{AC}$ gates for the $S$-destined mass in the $Y$ system to initially converge. Meanwhile, the $W$ system begins at the $S^n$ fixed point. This is the main reason we introduced the $W$ system in the first place. 

Define the projector
\begin{equation}
    \Pi_w = \sum_{\vec{\nu}: |\vec{\nu}| = w} \ketbra{\vec{\nu}}\,.
\end{equation}
\begin{lemma}\label{lem:expclustering}
There exist constants $\chi_7$, $\chi_8$, $\chi_9$, and $n_0$ such that as long as $\sigma \leq \chi_7/n$ and $n \geq n_0$, the following holds for any $t$ and any integer $w$ with $1 \leq w < n$. 
\begin{equation}
       \frac{\bra{\mathbf{1}}\Pi_w\Delta \ket{v^{(t)}_{SS}}}{\braket{\mathbf{1},\mathbf{1}}{v^{(t)}_{SS}}}\leq n \sigma \xi_{w}\,,
 \end{equation}
where
\begin{equation}
    \xi_{w}= \chi_9(n-w) q^{-(n-w)}e^{-\chi_8 (n-w)}\,.
\end{equation}
\end{lemma}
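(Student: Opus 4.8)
The plan is to track the $W$ system, which records where the noiseless copy $X$ and the noisy copy $Y$ disagree, and to show that inside the $SS$-destined sector the disagreement pattern stays exponentially concentrated at the $S^n$ fixed point. Write $a^{(t)}_w := \bra{\mathbf 1}\Pi_w\Delta\ket{v^{(t)}_{SS}}$ for the amount of $SS$-destined mass whose $W$-configuration has Hamming weight $w$ — equivalently deficit $k := n-w$, the number of sites on which $Y$ disagrees with $X$ — and $N^{(t)} := \braket{\mathbf 1,\mathbf 1}{v^{(t)}_{SS}}$ for the total $SS$-destined mass, so the target is $a^{(t)}_w \le n\sigma\,\xi_w\,N^{(t)}$. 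Two facts anchor the argument. Base case: at $t=0$, $\Delta L_{SS}\ket{\Lambda\Lambda}$ is a scalar multiple of $\ket{S^n}$ (since $\Delta\ket{\Lambda\Lambda}=\ket{S^n}$ and $L_{SS}$ is diagonal in the configuration basis), so $a^{(0)}_w=0$ for all $w<n$ and the bound is vacuous. Recursion: because the correlated noiseless step $R_0^{(t)}$ preserves joint destiny, one has $L_{SS}R_0^{(t)}=R_{SS}^{(t)}L_{SS}$, hence $\ket{v^{(t)}_{SS}} = (\mathcal I\otimes Q'^{(t)}_\sigma Q^{(t)}_\sigma)R_{SS}^{(t)}\ket{v^{(t-1)}_{SS}}$ plus an operator-weight-$O(\sigma)$ commutator correction that merely subtracts off the mass the noise kicks out of the $SS$-destined sector (the $SS$-analogue of the decomposition used for \autoref{lem:11vSI}).

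Next I would pin down the effective $W$-dynamics. The key computation is that the Doob-transformed matrix $R_{SS}^{(t)}$ — the correlated noiseless step conditioned on both copies reaching $S^n$ — acts on $\Delta$ as a \emph{healing-biased} walk: whenever a gate touches a $W$-pair with one disagreement and one agreement, the disagreement returns to agreement with probability $q^2/(q^2+1)$ and spreads to its partner with probability only $1/(q^2+1)$, and, crucially, this holds regardless of whether the neighbouring site is an agreed-$I$ or an agreed-$S$ — conditioning by the harmonic weight $x\mapsto (q^{-2n+2|x|}-q^{-2n})/(1-q^{-2n})$ exactly reverses the $I$-bias of the raw walk into an $S$-bias for the disagreement pattern. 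On the noise side, each of the two noise locations after a gate flips at most one agreed-$S$ site into a disagreement, with probability at most $\sigma$, and cannot act on agreed-$I$ sites or on existing disagreements; so noise inflates the deficit by an additive $O(\sigma)$ amount per gate and can never decrease it.

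With these ingredients I would run a generating-function argument. Fix $\lambda$ with $1<\lambda<q^2$ and set $\Phi^{(t)} := \sum_{k\ge 1}\lambda^{k}a^{(t)}_{n-k}$. The healing step multiplies the contribution of a touched disagreement pair to $\Phi$ by $\lambda^{-1}(q^2+\lambda^2)/(q^2+1)<1$; combined with the regularly-connected / layered / complete-graph structure (every disagreement gets touched against an agreement on an $O(n)$-gate timescale), this gives a per-gate contraction of $\Phi$ by a factor $1-\Theta(1/n)$ absent noise. The noise contributes a multiplicative factor $1+O(\sigma)$ together with an additive term $O(\sigma)N^{(t-1)}$ from deficit-$0$ mass being promoted to deficit $1$. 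Since $\sigma\le \chi_7/n$ with $\chi_7$ small, the healing contraction beats the noise inflation, giving $\Phi^{(t)}\le(1-\Theta(1/n))\Phi^{(t-1)}+O(\sigma)N^{(t-1)}$; iterating from $\Phi^{(0)}=0$ and using \autoref{lem:decayOfvSS} to bound the ratios $N^{(t')}/N^{(t)}$ over the relevant $O(n)$-gate window by $e^{O(n\sigma)}=O(1)$, one obtains $\Phi^{(t)} \le O(\sigma)\cdot O(n)\cdot N^{(t)} = O(n\sigma)\,N^{(t)}$. The Markov bound $a^{(t)}_{n-k}\le \lambda^{-k}\Phi^{(t)}$ then yields $a^{(t)}_{w}\le O(n\sigma)\,\lambda^{-(n-w)}N^{(t)}$, which is exactly the claimed bound on taking $\lambda = q\,e^{\chi_8}$, the polynomial factor $(n-w)$ and the constant $\chi_9$ in $\xi_w$ absorbing the implicit constant and the gap to $\lambda<q^2$.

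The main obstacle I expect is the quantitative bookkeeping that couples the three moving quantities — the potential $\Phi^{(t)}$, the decaying normalization $N^{(t)}$, and the $O(\sigma)$ commutator correction — together with the fact that a single gate touches only $2$ of the $n$ sites, which forces the contraction estimate to be amortized over $\Theta(n)$-gate windows via the regularly-connected property (or read off directly from the averaged transition matrix in the complete-graph case). One must also check carefully that injected deficit-$1$ mass cannot cascade to large deficit before it heals; this is precisely where the hypothesis $\sigma = O(1/n)$ is used, ensuring that disagreements (errors) are scrambled away faster than they are created.
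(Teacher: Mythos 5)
Your route is sound in outline but genuinely different from the paper's. The paper never analyzes the $W$-marginal of the $SS$-destined mass dynamically: it observes that at any fixed configuration the $L_{SS}$- and $L_{SI}$-weights stand in an exact ratio of order $q^{-2(n-w)}$, so $\bra{\mathbf{1}}\Pi_w\Delta\ket{v_{SS}^{(t)}}\leq \frac{q^{-2(n-w)}}{1-q^{-2}}\bra{\mathbf{1}}\Pi_w\Delta\ket{v_{SI}^{(t)}}$; it then decomposes $\ket{v_{SI}^{(t)}}=\sum_{t'}\ket{v_{SI}^{(t,t')}}$ by the time step at which the mass was redirected, bounds each piece via \autoref{lem:11vSI} and \autoref{lem:decayOfvSS}, and uses \autoref{lem:ACconvergenceUnderPI} (which rests on the exact intertwining $\Delta R_{SI}^{(t)}=P_I^{(t)}\Delta$ of \autoref{lem:WsystemEvolvesByPI}) to show each redirected piece drifts down in Hamming weight like $e^{-\Omega((t-t')/n)}$, finally splitting the sum over $t'$ at $t_w = t-\Theta(n(n-w))$. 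You instead run a drift/Lyapunov argument directly on the disagreement pattern inside the $SS$ sector. Your key computation is correct as an inequality: under $R^{(t)}_{SS}$ a disagreement touched against an agreed-$I$ site heals with probability exactly $q^2/(q^2+1)$, and against an agreed-$S$ site with probability at least $q^2/(q^2+1)$ after the Doob reweighting; but the $W$-marginal is \emph{not} an autonomous Markov chain here (the exact probabilities depend on the agreement type and on the global Hamming weight of the noisy copy), so ``exactly reverses'' is an overstatement and you must carry inequalities rather than an identity analogous to \autoref{lem:WsystemEvolvesByPI}. What your approach buys is a conceptually direct argument and, via Markov on $\Phi$, a marginally stronger bound (no $(n-w)$ prefactor); what it costs is that the window-amortized contraction of $\Phi$ interleaved with noise injection is precisely the kind of regularly-connected, supermartingale-style estimate the paper proves once in \autoref{lem:ACconvergenceUnderPI} for the $SI$ sector, and since your contraction concerns the $SS$-conditioned coupled walk you cannot cite that lemma but must reprove its analogue, alongside the substochastic ``conditioned noise'' bookkeeping (the operator $L_{SS}(\mathcal{I}\otimes Q'^{(t)}_{\sigma}Q^{(t)}_{\sigma})L_{SS}^{-1}$ increases the deficit with probability at most $\sigma$ per site and leaks mass to the $SI$ sector) and the normalization by $\braket{\mathbf{1},\mathbf{1}}{v_{SS}^{(t)}}$ via \autoref{lem:decayOfvSS}, as you indicate. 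Those are the genuinely hard steps and they remain sketched rather than executed, but I see no step that would fail.
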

\begin{proof}[Proof idea]
The $S$-destined portion of the mass within the $W$ system starts at the $S^n$ fixed point. When noise acts at time step $t$, some of the mass moves to Hamming weight $n-1$ but continues to be $S$-destined, and some of it is ``redirected'' to become $I$-destined, which is captured in the $\ket{v_{SI}^{(t,t)}}$ vector. The total amount of redirected mass cannot be too large, as we see in \autoref{lem:11vSI}. Moreover, the redirected mass must steadily move downward in Hamming weight (after all, it is $I$-destined), which we quantify with \autoref{lem:ACconvergenceUnderPI}. This is important because for each value of the Hamming weight $w$, the amount of $S$-destined mass divided by the amount of $I$-destined mass at that Hamming weight is precisely $\frac{q^{-2n+2w} - q^{-2n}}{1-q^{-2n+2w}} \approx q^{-2(n-w)}$, so as the $I$-destined mass moves down in Hamming weight, the $S$-destined mass that corresponds to it decreases exponentially. After accounting for each bit of $I$-destined mass by summing over all $\ket{v_{SI}^{(t',t)}}$, we can prove the lemma. 
\end{proof}

\subsection{Relating $\mathcal{Z}_\sigma$ to the amount of $S$-destined probability mass}

The following lemma states that keeping track of the amount of $S$-destined mass is sufficient to get good upper and lower bounds on the quantity $\mathcal{Z}_\sigma$. 

\begin{lemma}\label{lem:Zsigmaupperlowerbound}
The following lower bound always holds
    \begin{equation}
       \mathcal{Z}_\sigma -1  \geq \left(q^n-1\right)\braket{\mathbf{1},\mathbf{1}}{v_{SS}^{(s)}}
    \end{equation}
Moreover, there exist constants $\chi_{10}$, $\chi_{11}$, $\chi_{12}$, $\chi_{13}$, and $n_0$ such that as long as $\sigma \leq \chi_{13}/n$ and $n\geq n_0$, the following upper bound holds.
\begin{equation}
            \mathcal{Z}_\sigma - 1 \leq 
        \left(q^n-1\right)\braket{\mathbf{1},\mathbf{1}}{v_{SS}^{(s)}}\exp\left(1+\chi_{10} n\sigma+\chi_{12}  e^{-\frac{\chi_{11}}{n}(s-s_{AC})+4s\sigma}\right)
\end{equation}

\end{lemma}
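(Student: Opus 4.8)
The plan is to reduce $\mathcal{Z}_\sigma - 1$ entirely to the amount of $S$-destined probability mass $\braket{\mathbf{1},\mathbf{1}}{v_{SS}^{(s)}}$, by decomposing the final distribution into its three ``destined'' pieces and showing the other two pieces are lower order. The starting point is $\mathcal{Z}_\sigma = \braket{\mathbf{1},\mathbf{q}}{v^{(s)}}$ together with $\braket{\mathbf{1},\mathbf{1}}{v^{(s)}}=1$ (the evolution is stochastic), so $\mathcal{Z}_\sigma-1 = \braket{\mathbf{1},\mathbf{q}-\mathbf{1}}{v^{(s)}}$, where $\bra{\mathbf{q}-\mathbf{1}}$ weights a configuration $\vec\nu$ by $q^{|\vec\nu|}-1$ --- in particular $0$ at $I^n$ and $q^n-1$ at $S^n$. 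Then I split $\ket{v^{(s)}}=\ket{v_{II}^{(s)}}+\ket{v_{SS}^{(s)}}+\ket{v_{SI}^{(s)}}$ using $L_{II}+L_{SS}+L_{SI}=\mathcal{I}$; on the accessible subspace (where $|Y|\le|X|$ bitwise) each of these vectors is non-negative, and on that subspace the weight $q^{|\vec\nu|}-1$ on the $Y$ copy is dominated by the corresponding weight $q^{|\vec\mu|}-1$ on the $X$ copy.

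For the \textbf{lower bound}, continue the joint evolution past step $s$ with purely noiseless correlated dynamics $R_0^{(t)}$ and let the number of gates tend to infinity. The $SS$-destined mass converges to $S^n\otimes S^n$ (contributing weight $q^n$ per unit), while all other mass converges to $I^n\otimes I^n$ or $S^n\otimes I^n$ (contributing weight $1$ per unit, since the $Y$ copy sits at $I^n$), so the limit of $\braket{\mathbf{1},\mathbf{q}}{\cdot}$ equals $1+(q^n-1)\braket{\mathbf{1},\mathbf{1}}{v_{SS}^{(s)}}$. It then suffices to observe that a noiseless step can only decrease this weighting: on a two-qudit block $\bra{\mathbf{q}}P\le\bra{\mathbf{q}}$ entrywise (the only nontrivial check being $\tfrac{2q^2}{q^2+1}\le q$, i.e.\ $(q-1)^2\ge0$), which after marginalizing the $X$ copy gives $\braket{\mathbf{1},\mathbf{q}}{R_0^{(t)}w}\le\braket{\mathbf{1},\mathbf{q}}{w}$ for every non-negative $w$. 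Hence $\mathcal{Z}_\sigma-1\ge(q^n-1)\braket{\mathbf{1},\mathbf{1}}{v_{SS}^{(s)}}$.

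For the \textbf{upper bound} I would treat the three terms in turn. The $v_{SS}$ term is at most $(q^n-1)\braket{\mathbf{1},\mathbf{1}}{v_{SS}^{(s)}}$ since $q^{|\vec\nu|}-1\le q^n-1$. For the $v_{II}$ term, move the weight onto the $X$ copy, $\braket{\mathbf{1},\mathbf{q}-\mathbf{1}}{v_{II}^{(s)}}\le\braket{\mathbf{q}-\mathbf{1},\mathbf{1}}{v_{II}^{(s)}}$, which is the $I$-destined part of the noiseless quantity $Z_0-1$; subtracting the $S$-destined part (bounded below by $(q^n-1)\braket{S^n,\mathbf{1}}{v^{(s)}}$ and \autoref{lem:ACconvergencetoSn}) from the upper bound on $Z_0-1$ in \autoref{lem:anticoncentrationInNoisySection} gives $\braket{\mathbf{q}-\mathbf{1},\mathbf{1}}{v_{II}^{(s)}}\le\eta_s+\eta'_s$. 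For the $v_{SI}$ term, use the decomposition $\ket{v_{SI}^{(s)}}=\sum_{t=1}^s\ket{v_{SI}^{(s,t)}}$ of Eq.~\eqref{eq:sumSI}; applying $\Delta$ and using $q^{|Y|}-1\le q^{|W|}-1$, then \autoref{lem:WsystemEvolvesByPI} (so the $W$ copy evolves by $P_I^{(t)}$) together with the contraction of \autoref{lem:ACconvergenceUnderPI}, bounds $\braket{\mathbf{1},\mathbf{q}-\mathbf{1}}{v_{SI}^{(s,t)}}$ by $(q^n-1)\chi_6 e^{-\chi_5(s-t)/n}\braket{\mathbf{1},\mathbf{1}}{v_{SI}^{(t,t)}}$; \autoref{lem:11vSI} then gives $\braket{\mathbf{1},\mathbf{1}}{v_{SI}^{(t,t)}}\le 2\sigma\braket{\mathbf{1},\mathbf{1}}{v_{SS}^{(t-1)}}$ and \autoref{lem:decayOfvSS} gives $\braket{\mathbf{1},\mathbf{1}}{v_{SS}^{(t-1)}}\le(1-\sigma)^{-2(s-t+1)}\braket{\mathbf{1},\mathbf{1}}{v_{SS}^{(s)}}$. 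Summing over $t$, the geometric series $\sum_t e^{(O(\sigma)-\chi_5/n)(s-t)}$ converges to $O(n)$ precisely because $\sigma\le\chi_{13}/n$ with $\chi_{13}$ small makes $O(\sigma)<\chi_5/(2n)$, so this term is $O(n\sigma)\,(q^n-1)\braket{\mathbf{1},\mathbf{1}}{v_{SS}^{(s)}}$. Finally, $\braket{\mathbf{1},\mathbf{1}}{v_{SS}^{(0)}}=1/(q^n+1)$ and \autoref{lem:decayOfvSS} give $(q^n-1)\braket{\mathbf{1},\mathbf{1}}{v_{SS}^{(s)}}\ge\tfrac{q^n-1}{q^n+1}(1-\sigma)^{2s}$, turning $\eta_s+\eta'_s$ into a multiplicative correction of order $e^{-\chi_{11}(s-s_{AC})/n+4s\sigma}$ relative to the main term; collecting all factors, using $1+a\le e^a$, and absorbing residual constant slack into the additive $1$ yields the stated exponential factor. (When $s<s_{AC}$ the corrections already exceed $1$, so the bound is essentially vacuous, and the crude estimates $\mathcal{Z}_\sigma-1\le q^n-1$ and the lower bound above on the main term suffice.)

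The main obstacle is the $v_{SI}$ term: one must carefully track \emph{when} each piece of $I$-destined mass was created, transport it through $\Delta$ to the $W$ system so that the contraction toward $I^n$ of \autoref{lem:ACconvergenceUnderPI} applies, and chain this with the decay estimates of \autoref{lem:11vSI} and \autoref{lem:decayOfvSS}. The quantitative crux is that the resulting sum over error times $t$ converges only when $\sigma=O(1/n)$ --- i.e.\ when scrambling (the $e^{-\chi_5(s-t)/n}$ decay) outpaces error creation --- which is exactly the $O(1/n)$ threshold discussed in the main text.
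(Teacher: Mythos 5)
Your proposal is correct, and its quantitative backbone is the same as the paper's (reduce everything to $\braket{\mathbf{1},\mathbf{1}}{v_{SS}^{(s)}}$, control the not-yet-equilibrated part with \autoref{lem:anticoncentrationInNoisySection} and \autoref{lem:ACconvergencetoSn}, get a relative $1+O(n\sigma)$ factor, and convert the additive $\eta_s+\eta'_s$ correction into the $\chi_{12}e^{-\chi_{11}(s-s_{AC})/n+4s\sigma}$ term via \autoref{lem:decayOfvSS}), but the organization genuinely differs in two places. For the lower bound, the paper simply rewrites $\mathcal{Z}_\sigma-1$ exactly through $L_{SS}^{-1}L_{SS}$, obtaining $\sum_{\vec\nu\neq I^n}(q^n-1)\frac{q^n+1}{q^{|\vec\nu|}+1}\braket{\mathbf{1},\vec\nu}{v_{SS}^{(s)}}$ and dropping the factor $\geq 1$; you instead append hypothetical noiseless gates, use that $\bra{\mathbf{q}}P^{(t)}\leq\bra{\mathbf{q}}$ entrywise so $\braket{\mathbf{1},\mathbf{q}}{\cdot}$ is non-increasing, and identify the limit with $1+(q^n-1)\braket{\mathbf{1},\mathbf{1}}{v_{SS}^{(s)}}$ — valid, but it quietly relies on almost-sure absorption of the appended noiseless walk and on $L_S$ giving the absorption probabilities for any connected continuation (the martingale/gambler's-ruin facts from the prior work), whereas the paper's algebraic route needs no limiting argument. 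For the upper bound, the paper splits by whether the noiseless copy sits at $S^n$ and then invokes the exponential-clustering \autoref{lem:expclustering} to tame the $q^{n-w}$-weighted sum; you split by destination ($II$/$SS$/$SI$) and handle the $SI$ piece by unrolling Eq.~\eqref{eq:sumSI} and chaining \autoref{lem:WsystemEvolvesByPI}, \autoref{lem:ACconvergenceUnderPI}, \autoref{lem:11vSI}, and \autoref{lem:decayOfvSS} over creation times — in effect re-deriving, in $\bra{\mathbf{q}}$-weighted form, the content of \autoref{lem:expclustering} while avoiding its per-Hamming-weight bookkeeping and two-regime time split; your geometric-series convergence condition $\sigma\lesssim\chi_5/n$ is exactly where the $\sigma\leq\chi_{13}/n$ hypothesis enters, as in the paper. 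Two small caveats: your use of \autoref{lem:ACconvergenceUnderPI} tacitly needs $\ket{v_{SI}^{(t,t)}}$ to be non-negative (true, but worth a line — it follows from the matrix elements computed in the proof of \autoref{lem:11vSI}), and your parenthetical claim that the crude estimates ``suffice'' when $s<s_{AC}$ is not actually justified as stated (though the paper's own proof also implicitly works in the regime where \autoref{lem:anticoncentrationInNoisySection} applies, so this does not distinguish the two arguments in the regime where the lemma is used).
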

\begin{proof}[Proof idea]
For each $w$, we know the ratio of the $I$-destined and $S$-destined mass at Hamming weight $w$: for each portion of $S$-destined probability mass, there is roughly $q^{2(n-w)}$ $I$-destined probability mass. This decreases with $w$ like $q^{-2w}$. The contribution of mass at Hamming weight $w$ to $\mathcal{Z}_\sigma$ increases, but at the slower rate of $q^{w}$. Thus, for a fixed amount of $S$-destined mass, $\mathcal{Z}_\sigma$ is minimized when all of it is at the $S^n$ fixed point, leading to our lower bound. On the other hand, we know that the $S$-destined mass is exponentially clustered near the $S^n$ fixed point (\autoref{lem:expclustering}), so this lower bound cannot be too loose, which we leverage into an upper bound. 
\end{proof}

\subsection{Bounding the $S$-destined mass}

Now, all that remains is to compute the amount of $S$-destined mass. Here we show upper and lower bounds on this quantity for layered architectures and for the complete-graph architecture. 

\begin{lemma}\label{lem:boundSdestinedmassLAYERS}
Suppose the random quantum circuit architecture is regularly connected and layered. Let $d_{AC}$ be its anti-concentration depth. Then, for any $d$,
\begin{equation}
    \braket{\mathbf{1},\mathbf{1}}{v_{SS}^{(dn/2)}} \geq \frac{\left(1-\frac{1-(1-\sigma(1-q^{-2}))^n}{1-q^{-2n}}\right)^d }{q^n+1} \,.
\end{equation}
Moreover, there exist constants $a_0$, $a_1$, $a_2$, $a_3$, and $n_0$ such that, as long as $\sigma \leq a_3/n$ and $n \geq n_0$, 
\begin{align}
    \braket{\mathbf{1}, \mathbf{1}}{v_{SS}^{(dn/2)}} \leq \frac{\left(1-\frac{1-(1-\sigma(1-q^{-2}))^n}{1-q^{-2n}}\right)^d }{q^n+1}e^{a_0\sigma^2 dn+ a_1\sigma n d_{AC} +a_2 n\sigma \log(1/(n\sigma))}\,,
\end{align}
where $d_{AC}$ is the anti-concentration depth. 
\end{lemma}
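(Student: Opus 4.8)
The plan is to track the $S$-destined probability mass $\braket{\mathbf{1},\mathbf{1}}{v_{SS}^{(t)}}$ layer by layer, comparing it against the ``idealized'' toy-model decay in which the mass fully re-equilibrates to the fixed points after each layer. For the lower bound, I would proceed as follows. Start from the observation that at time step $0$ the $S$-destined mass is $\braket{\mathbf{1},\mathbf{1}}{v_{SS}^{(0)}} = 1/(q^n+1)$ (this is the amount of probability mass at the $S^n$ fixed point of the noiseless process, which is exactly the $S$-destined component of $\ket{\Lambda\Lambda}$). Within each layer of $n/2$ gates, every qudit is acted upon by exactly one gate (this is where \emph{layered} is used) and each such gate is followed by a pair of noise channels. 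Using \autoref{lem:decayOfvSS}, each of the $n$ noise channels in a layer reduces the $S$-destined mass by at most a factor $(1-\sigma)$, but this crude bound is not tight enough; instead I would argue that within a layer where the mass is concentrated near $S^n$, the noiseless gates followed by noise reproduce exactly the re-equilibration computed in the toy model of \autoref{sec:whitenoisemethodandintuition}, giving a per-layer survival factor of $1 - \frac{1-(1-\sigma(1-q^{-2}))^n}{1-q^{-2n}} = 1-f_\sigma$. The key sub-claim for the lower bound is that the noiseless dynamics \emph{within} the $SS$-destined sector is a genuine stochastic matrix (\autoref{lem:stochasticmatrices}), so it conserves $S$-destined mass, and the only loss comes from the $S\to I$ noise transitions; iterating over $d$ layers gives $\braket{\mathbf{1},\mathbf{1}}{v_{SS}^{(dn/2)}} \geq (1-f_\sigma)^d/(q^n+1)$.

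For the upper bound, the idea is to show that the mass cannot decay \emph{slower} than $(1-f_\sigma)^d$ by more than the error factor $e^{a_0\sigma^2 dn + a_1 \sigma n d_{AC} + a_2 n\sigma \log(1/(n\sigma))}$. Here I would use \autoref{lem:expclustering}: conditioned on being $S$-destined, the mass is exponentially clustered within $O(1)$ bit flips of the $S^n$ fixed point, with the deviation probability at Hamming weight $w$ bounded by $n\sigma \xi_w$. Because the mass is so tightly clustered near $S^n$, when a noise channel acts it flips $S\to I$ with probability essentially $\sigma$ on (almost) all $n$ qudits per layer, so the per-layer decay is at most $(1-f_\sigma)$ up to a correction controlled by the clustering tail. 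The three error terms have distinct origins: $a_0 \sigma^2 dn$ comes from second-order-in-$\sigma$ corrections accumulated over all $\sim \sigma dn$ noise events; $a_1 \sigma n d_{AC}$ accounts for the initial $s_{AC} = n d_{AC}/2$ gates during which the $S$-destined mass has not yet clustered (so \autoref{lem:expclustering} gives a weaker bound there and we just bound the loss crudely using \autoref{lem:decayOfvSS} and \autoref{lem:ACconvergencetoSn}); and $a_2 n\sigma\log(1/(n\sigma))$ comes from summing the clustering tail $\sum_w \xi_w$ against the weighting, reflecting that a small fraction of mass sits at Hamming weight $<n$ where the noise is less effective at removing $S$-destined mass. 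Assembling these across $d$ layers and using $1+x\le e^x$ repeatedly yields the stated bound.

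The main obstacle I anticipate is the upper bound, specifically making rigorous the claim that ``the $S$-destined mass re-clusters near $S^n$ between consecutive noise layers fast enough'' when $\sigma = O(1/n)$. The subtlety is that a configuration with a single $I$ at some position takes $\Omega(n)$ gates before that qudit participates in a gate that can flip it back — so within a single layer, clustering is only \emph{partially} restored, and one must carry the residual non-clustered mass forward across many layers without letting the errors compound. This requires interleaving \autoref{lem:expclustering} (clustering), \autoref{lem:11vSI} (bounded redirection of mass from $S$-destined to $I$-destined per noise event), and \autoref{lem:ACconvergenceUnderPI} (the redirected, now $I$-destined mass drifts down in Hamming weight, so the associated $S$-destined mass it ``left behind'' is exponentially small) in a careful induction on the layer index. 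The regularly-connected hypothesis enters precisely to guarantee the $O(n)$-gate mixing time needed for this re-clustering, and getting the constants to line up so that the upward drift of $S$-destined mass dominates the $S\to I$ leakage (which is exactly the content of the condition $\sigma \le a_3/n$) is the delicate quantitative heart of the argument.
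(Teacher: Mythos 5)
Your upper-bound strategy (layer-by-layer iteration, \autoref{lem:expclustering} for the clustering tail, a crude bound for the first $\Theta(d_{AC})$ layers via \autoref{lem:decayOfvSS} and \autoref{lem:ACconvergencetoSn}, second-order-in-$\sigma$ corrections) is essentially the paper's argument. One structural simplification: you do not need a fresh induction interleaving \autoref{lem:11vSI} and \autoref{lem:ACconvergenceUnderPI} here — that machinery is already packaged inside \autoref{lem:expclustering}, which holds at every time step, so the per-layer recursion can simply invoke it. Also, your attribution of the error terms is slightly off: in the paper the $n\sigma\log(1/(n\sigma))$ term arises from the number of ``given-up'' layers one must wait before the unequilibrated contribution $\eta'_t$ drops below $n\sigma$, while the clustering-tail sum only produces the $O(n\sigma^2)$ per-layer correction feeding $a_0\sigma^2 dn$.

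The genuine gap is in your lower bound. You argue that ``within a layer where the mass is concentrated near $S^n$'' the survival factor equals the toy-model value $1-f_\sigma$, but for a lower bound you cannot assume the $S$-destined mass sits at or near $S^n$, and the claimed inequality carries no error factor, so no approximate-clustering or re-equilibration argument can deliver it. What is needed — and what the paper's proof does — is an exact computation of the retained fraction of $S$-destined mass for an \emph{arbitrary} configuration: by the layered property every site is hit by noise once per layer, and for a configuration of Hamming weight $w$ the retained fraction is exactly $E_w=\bigl((1-\sigma')^{w}-q^{-2w}\bigr)/\bigl(1-q^{-2w}\bigr)$ with $\sigma'=\sigma(1-q^{-2})$ (a binomial sum against the $L_S$ weights), together with the monotonicity $E_w\ge E_n=1-f_\sigma$ for all $1\le w\le n$. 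This monotonicity says the $S^n$ fixed point is the \emph{worst} case for loss of $S$-destined mass, so the clean bound $(1-f_\sigma)^d/(q^n+1)$ follows from stochasticity of $R_{SS}^{(t)}$ and recursion over layers, with no concentration assumption at all. Without identifying $E_w$ and showing it is minimized at $w=n$, your per-layer factor $1-f_\sigma$ is unjustified for the a priori arbitrary configurations the $S$-destined mass may occupy, and the lower bound does not go through as sketched.
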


\begin{lemma}\label{lem:boundSdestinedmassCG}
Suppose the random quantum circuit architecture is the complete-graph architecture. Let $s_{AC}$ be its anti-concentration size. Then, for any $s$,
\begin{equation}
    \braket{\mathbf{1},\mathbf{1}}{v_{SS}^{(s)}} \geq \frac{\left(1-\frac{1-(1-\sigma(1-q^{-2}))^2}{1-q^{-2n}}\right)^s }{q^n+1} \,.
\end{equation}
Moreover, there exist constants $b_0$, $b_1$, $b_2$, $b_3$, and $n_0$ such that, as long as $\sigma \leq b_3/n$ and $n \geq n_0$,
\begin{align}
    \braket{\mathbf{1}, \mathbf{1}}{v_{SS}^{(s)}} \leq \frac{\left(1-\frac{1-(1-\sigma(1-q^{-2}))^2}{1-q^{-2n}}\right)^s }{q^n+1}e^{b_0\sigma^2 s + b_1\sigma s_{AC} +b_2 n\sigma \log(1/(n\sigma))}
\end{align}
\end{lemma}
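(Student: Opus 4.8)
The plan is to mirror the structure of the proof of \autoref{lem:boundSdestinedmassLAYERS} but adapted to the complete-graph architecture, where the natural time unit is a single gate rather than a layer of $n/2$ gates. As in that argument, I would work with the $W$ system, which tracks the discrepancy between the noiseless $X$ copy and the noisy $Y$ copy; recall $\Delta\ket{\Lambda\Lambda} = \ket{S^n}$, so the $W$ system starts exactly at the $S^n$ fixed point and the exponential clustering of \autoref{lem:expclustering} is available throughout. The key quantity to control is $\braket{\mathbf{1},\mathbf{1}}{v_{SS}^{(s)}}$, the total $S$-destined mass after $s$ gates. The lower bound is the easy direction: since a noise location on a single qudit flips $S\to I$ with probability $\sigma$ and affects at most two qudits per gate (one noise channel on $i_t$ and one on $j_t$), and since the noiseless transition $P^{(t)}$ conditioned on the $S$-destined subspace is itself a stochastic matrix (\autoref{lem:stochasticmatrices}), one can argue that each gate multiplies the $S$-destined mass by at least a factor $1 - f'_\sigma$, where $f'_\sigma$ is exactly the quantity in Eq.~\eqref{eq:fprimesigma}: the probability that, starting from the metastable $S^n$ configuration, one of the (at most two) fresh noise-induced $I$ assignments causes the mass to re-equilibrate to $I^n$ rather than back to $S^n$. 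Here the $\frac{1-(1-\sigma(1-q^{-2}))^2}{1-q^{-2n}}$ form arises because with two independent noise channels, the probability that \emph{neither} kicks an $S$ to $I$-and-stays is $(1-\sigma(1-q^{-2}))^2$ up to the $q^{-2n}$ correction from the re-equilibration probabilities derived in Ref.~\cite{dalzell2020anticoncentration}. Iterating over $s$ gates and noting that the initial $S^n$ mass is $1/(q^n+1)$ gives the stated lower bound.

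For the upper bound, the obstacle is that the $S$-destined mass does not fully re-equilibrate to $S^n$ between consecutive gates, so the per-gate decay factor can in principle exceed $1-f'_\sigma$ slightly. The remedy is the exponential-clustering lemma: \autoref{lem:expclustering} guarantees that, conditioned on being $S$-destined, the probability of being $w$ bit flips away from $S^n$ in the $W$ system is at most $n\sigma\xi_w$ with $\xi_w$ decaying exponentially in $n-w$. I would decompose the evolution of $\braket{\mathbf{1},\mathbf{1}}{v_{SS}^{(t)}}$ gate by gate, writing the ratio $\braket{\mathbf{1},\mathbf{1}}{v_{SS}^{(t)}}/\braket{\mathbf{1},\mathbf{1}}{v_{SS}^{(t-1)}}$ and bounding how much larger than $1-f'_\sigma$ it can be. The overshoot at each step is controlled by the fraction of $S$-destined mass sitting at Hamming weight $w < n$ (which is at most $\sum_{w<n} n\sigma\xi_w = O(n\sigma)$ total, but summed with the right weighting gives the per-gate correction), producing a cumulative multiplicative error of the form $e^{b_0\sigma^2 s}$ from the $O(\sigma^2)$-per-gate overshoot once clustering has set in. The $e^{b_1 \sigma s_{AC}}$ term absorbs the first $s_{AC}$ gates, during which the $Y$ system has not yet anti-concentrated and the clustering bound is weaker or inapplicable there directly — though since we work in the $W$ system clustering does hold, the $s_{AC}$ term instead accounts for the fact that the re-equilibration of stray $I$-destined mass (quantified via \autoref{lem:ACconvergenceUnderPI} and \autoref{lem:11vSI}) takes $O(s_{AC})$ gates; each of the first few noise locations can contribute an $O(\sigma)$ correction that we simply bound crudely. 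The $e^{b_2 n\sigma\log(1/(n\sigma))}$ term comes from the tail of the clustering sum $\sum_w (n-w) q^{-(n-w)} e^{-\chi_8(n-w)}$ when $n\sigma$ is not negligibly small, exactly as in the layered case.

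Concretely, the steps in order: (1) express $\braket{\mathbf{1},\mathbf{1}}{v_{SS}^{(t)}}$ recursively using $\ket{v_{SS}^{(t)}} = L_{SS}(\mathcal{I}\otimes Q'^{(t)}_\sigma Q^{(t)}_\sigma) R_0^{(t)}\ket{v^{(t-1)}}$ and average over the uniformly random choice of gate location $\{i_t,j_t\}$, as prescribed for the complete-graph architecture; (2) for the lower bound, use stochasticity of the conditional matrices and a worst-case (all mass at $S^n$) argument to extract the factor $1-f'_\sigma$ per gate; (3) for the upper bound, invoke \autoref{lem:expclustering} to bound the deviation of the $S$-destined mass from $S^n$, translate this into a per-gate bound on the decay factor of the form $(1-f'_\sigma)(1+O(\sigma^2) + \text{(early-time terms)})$; (4) multiply over $s$ gates, carefully separating the $s_{AC}$ burn-in period, and collect the error exponents into $b_0\sigma^2 s + b_1\sigma s_{AC} + b_2 n\sigma\log(1/(n\sigma))$. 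The main obstacle I anticipate is step (3): making the clustering estimate quantitatively precise enough that the accumulated per-gate overshoot is genuinely $O(\sigma^2)$ and not $O(\sigma)$ — this is precisely the point where $\sigma \le b_3/n$ is needed, since a constant fraction of $I$ assignments (which would happen if $\sigma$ were too large relative to the $O(n)$-gate re-equilibration time of the complete graph) would make a noise channel flip $S\to I$ less often and spoil the cancellation, as discussed in the intuition section. The complete-graph-specific simplification — that all configurations of equal Hamming weight are equivalent, reducing the walk to an $(n+1)$-state chain — should make the bookkeeping cleaner here than in the layered case, at the cost of the weaker $s_{AC} = \Theta(n\log n)$ input.
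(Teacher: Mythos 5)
Your outline follows essentially the same route as the paper's proof: a gate-by-gate recursion for $\braket{\mathbf{1},\mathbf{1}}{v_{SS}^{(t)}}$ obtained by averaging over the uniformly random gate location, a worst-case per-gate retention factor $1-f'_\sigma$ for the lower bound, and an upper bound whose per-gate overshoot is controlled by \autoref{lem:expclustering}, with a burn-in of roughly $s_{AC}$ gates handled via the anti-concentration lemmas. Two points are understated in your sketch. First, the lower bound does not follow from stochasticity plus ``the worst case is all mass at $S^n$'' as stated: conditioned on both gate qudits being assigned $S$, the noise retention factor at Hamming weight $w$, namely $\bigl((1-\sigma(1-q^{-2}))^2-q^{-2w}\bigr)/(1-q^{-2w})$, is \emph{smaller} for smaller $w$, so mass away from $S^n$ could a priori decay faster per noise event; the paper must combine this factor with the gate-location probabilities $\phi_{SS,w},\phi_{IS,w},\phi_{II,w}$ and the conditioned-walk transition probabilities $P_{\uparrow,w},P_{\downarrow,w}$, and verify by an explicit case analysis ($w\ge n/2$ versus $w<n/2$) that the averaged per-gate factor is still at least its value at $w=n$. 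This verification is the heart of the lower bound, and it is also where the complete-graph structure (every qudit equally likely to be hit) is genuinely used. Second, the $e^{b_2 n\sigma\log(1/(n\sigma))}$ term does not come from the tail of the clustering sum --- that sum is summable and contributes only the $O(\sigma^2)$-per-gate overshoot; as in the layered case, it arises because the additive term from mass whose \emph{noiseless} copy has not yet reached the $S^n$ fixed point (bounded by $\eta'_t/(q^n+1)$ via \autoref{lem:ACconvergencetoSn}) only drops below $n\sigma$ times the retained mass after $s^*=s_{AC}+\Theta(n\log(1/(n\sigma)))$ gates, and one pays $e^{O(\sigma s^*)}$ for the naive monotonicity bound during this burn-in; this is simultaneously the origin of the $b_1\sigma s_{AC}$ term (your first attribution, before you second-guessed it, was the right one). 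With these two repairs your plan matches the paper's argument.
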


\begin{proof}[Proof idea for \autoref{lem:boundSdestinedmassLAYERS} and \autoref{lem:boundSdestinedmassCG}]
When a portion of $S$-destined mass is at the $S^n$ fixed point, and noise acts to move it to Hamming weight $n-1$, we have a good understanding of what fraction remains $S$-destined. Specifically, there is a $\frac{q^{-2}-q^{-2n}}{1-q^{-2n}}$ chance that it re-equilibrates to $S^n$. We also know the chance that it will make the transition in the first place; the transition from $S \rightarrow I$ happens with probability precisely $\sigma$. This scenario gives the maximum amount of lost $S$-destined mass, and gives rise to our lower bound. However, if the portion of $S$-destined mass is not at the $S^n$ fixed point, then this is complicated in two ways. First, the probability of re-equilibrating back to $S^n$ is a slightly different expression, and, more importantly, the noise will not cause a transition as often, as there is a chance it acts on a bit that is already 0. If the configuration has Hamming weight $w$ and the noise acts on a random bit, the chance of a transition is $\frac{n-w}{n}\sigma$ so a smaller amount of $S$-destined mass is lost at each step. Luckily, we know that the $S$-destined mass is exponentially clustered near $w=n$ (\autoref{lem:expclustering}), so the corrections are small, which gives rise to the upper bound.

We utilize the layered architecture property to be able to say that \textit{every} qudit is acted upon by noise after each layer, and thus, from the perspective of the amount of $S$-destined mass, all that matters is the Hamming weight of the configuration prior to the noise. The same is true for the complete-graph case because the gates are chosen randomly and each qudit is equally likely to participate. However, we do not believe this property is necessary for our result to be true.  
\end{proof}

\subsection{Deferred proofs of lemmas}\label{app:deferredproofslemmas}

\subsubsection{Proof of \autoref{lem:stochasticmatrices}}

\begin{proof}
    We demonstrate this for $P^{(t)}_I$ and leave the others to be verified in a similar fashion. First of all, since $P^{(t)}$ is a stochastic matrix, its matrix elements are non-negative. Since $L_I$ and $L_I^{-1}$ are diagonal matrices with non-negative entries, $P^{(t)}_I = L_I P^{(t)} L_I^{-1}$ also has non-negative matrix elements. The support of $P_I$ is the entire vector space except for the span of $\ket{S^n}$. Consider another basis state $\ket{\vec{\nu}}$. Since gate $t$ acts on qudits $\{i_t,j_t\}$, if $\nu_{i_t} = \nu_{j_t}$ then it is a $+1$ eigenvector of $\ket{P^{(t)}}$ and
    \begin{align}
        \bra{\mathbf{1}}P^{(t)}_I\ket{\vec{\nu}} &= \sum_{\vec{\mu}} \bra{\vec{\mu}} L_I P^{(t)} L_I^{-1} \ket{\vec{\nu}} \\
        &= \sum_{\vec{\mu}} \frac{1-q^{-2n+2|\vec{\mu}|}}{1-q^{-2n+2|\vec{\nu}|}} \bra{\vec{\mu}} P^{(t)} \ket{\vec{\nu}} \\
        &=\sum_{\vec{\mu}} \frac{1-q^{-2n+2|\vec{\mu}|}}{1-q^{-2n+2|\vec{\nu}|}} \braket{\vec{\mu}}{\vec{\nu}}  = 1\,.
    \end{align}
    If $\nu_{i_t} \neq \nu_{j_t}$, then $P^{(t)}$ sends $\ket{\vec{\nu}}$ to a basis state with Hamming weight reduced by 1 with probability $q^2/(q^2+1)$, and to Hamming weight increased by 1 with probability $1/(q^2+1)$, so
    \begin{align}
        \bra{\mathbf{1}}P^{(t)}_I\ket{\vec{\nu}} 
        &= \sum_{\vec{\mu}} \frac{1-q^{-2n+2|\vec{\mu}|}}{1-q^{-2n+2|\vec{\nu}|}} \bra{\vec{\mu}} P^{(t)} \ket{\vec{\nu}} \\
        &=\left(\frac{q^2}{q^2+1}\frac{1-q^{-2n+2|\vec{\nu}|-2}}{1-q^{-2n+2|\vec{\nu}|}}+\frac{1}{q^2+1}\frac{1-q^{-2n+2|\vec{\nu}|+2}}{1-q^{-2n+2|\vec{\nu}|}} \right) = 1\,.
    \end{align}
    This demonstrates $P_I^{(t)}$ is a stochastic matrix when restricted to its support. 
\end{proof}

\subsubsection{Proof of \autoref{lem:WsystemEvolvesByPI}}
\begin{proof}
    We consider the action of both sides of the equation on an input state $\ket{\vec{\nu},\vec{\mu}}$. Let $a$ and $b$ be the number of $1$ entries in $\vec{\nu}$ and $\vec{\mu}$, excluding the positions $\{i_t,j_t\}$, respectively, and let $c$ be the number of entries on which $\vec{\nu}$ and $\vec{\mu}$ agree. Since we are restricting to the accessible subspace, we have $c=n-2-a+b$. Since $\Delta$ is a tensor product across all bits $i \in \{0,\ldots,n-1\}$, and both $P_I^{(t)}$ and $R_{SI}^{(t)}$ modify only bits $i_t$ and $j_t$, it is sufficient to consider the transitions among just bits $i_t$ and $j_t$. First, define
\begin{align}
    c_0 &= \frac{1-q^{-2n+2c}}{1-q^{-2n+2c+2}}\frac{q^2}{q^2+1}  \\
    c_1 &= \frac{1-q^{-2n+2c+4}}{1-q^{-2n+2c+2}}\frac{1}{q^2+1} \,.
\end{align}
    Let the four bits below be ordered $X_{i_t}X_{j_t}\,,\,Y_{i_t}Y_{j_t}$. The right-hand side has the following effect, where the first arrow is application of $\Delta$ and the second is application of $P_I^{(t)}$. 
    \begin{align}
        \ket{SS,SS} &\rightarrow \ket{SS} \rightarrow \ket{SS}\nonumber\\
        \ket{SS,SI} &\rightarrow \ket{SI} \rightarrow c_0\ket{II} + c_1\ket{SS} \nonumber\\
        \ket{SS,IS} &\rightarrow \ket{IS} \rightarrow c_0\ket{II} + c_1\ket{SS} \nonumber\\
        \ket{SS,II} &\rightarrow \ket{II} \rightarrow \ket{II}\nonumber\\
        \ket{SI,SI} &\rightarrow \ket{SS} \rightarrow \ket{SS}\nonumber\\
        \ket{SI,II} &\rightarrow \ket{IS} \rightarrow c_0\ket{II} + c_1\ket{SS} \nonumber\\
        \ket{IS,IS} &\rightarrow \ket{SS} \rightarrow \ket{SS} \nonumber\\
        \ket{IS,II} &\rightarrow \ket{SI} \rightarrow c_0\ket{II} + c_1\ket{SS} \nonumber\\ 
        \ket{II,II} & \rightarrow \ket{SS} \rightarrow \ket{SS}\,. \nonumber
    \end{align}

    Now, we can do the same for the left-hand side. For example, consider the input state $\ket{SS,SI}$. Action by $R_{SI}^{(t)}$ sends it to
    \begin{align}
        \ket{SS,SI} \rightarrow{}& \frac{q^{-2n+2a+4}-q^{-2n+2b}}{q^{-2n+2a+4} - q^{-2n+2b+2}}\frac{q^2}{q^2+1}\ket{SS,II} + \frac{q^{-2n+2a+4}-q^{-2n+2b+4}}{q^{-2n+2a+4} - q^{-2n+2b+2}}\frac{1}{q^2+1}\ket{SS,SS} \\
        ={}& c_0 \ket{SS,II} + c_1 \ket{SS,SS}\,,
    \end{align}
    where the last line follows by recalling the relation $c=n-2-a+b$. Action by $\Delta$ then yields the state $c_0 \ket{II} + c_1 \ket{SS}$. We can now list this calculation for each input state, where the first arrow is action by $R_{SI}^{(t)}$ and the second by $\Delta$. 
    \begin{align}
        \ket{SS,SS} \rightarrow{}& \ket{SS,SS} \rightarrow \ket{SS}\\
        \ket{SS,SI} \rightarrow{}& c_0\ket{SS,II} + c_1\ket{SS,SS} \rightarrow c_0\ket{II} + c_1 \ket{SS}  \\
        \ket{SS,IS} \rightarrow{}& c_0\ket{SS,II} + c_1\ket{SS,SS} \rightarrow c_0\ket{II} + c_1 \ket{SS}  \\
        \ket{SS,II} \rightarrow{}& \ket{SS,II} \rightarrow \ket{II}\\
        \ket{SI,SI} \rightarrow{}&  \frac{q^{-2n+2a}-q^{-2n+2b}}{q^{-2n+2a+2}-q^{-2n+2b+2}}\frac{q^2}{q^2+1}\ket{II,II} +\frac{q^{-2n+2a+4}-q^{-2n+2b+4}}{q^{-2n+2a+2}-q^{-2n+2b+2}}\frac{1}{q^2+1}\ket{SS,SS} \\
        \rightarrow{}& \ket{SS}\\
        \ket{SI,II} \rightarrow{}& c_1\ket{II,II} + c_0 \ket{SS,II} \rightarrow c_0\ket{II} + c_1 \ket{SS} \\
        \ket{IS,IS} \rightarrow{}& \frac{q^{-2n+2a}-q^{-2n+2b}}{q^{-2n+2a+2}-q^{-2n+2b+2}}\frac{q^2}{q^2+1}\ket{II,II}+\frac{q^{-2n+2a+4}-q^{-2n+2b+4}}{q^{-2n+2a+2}-q^{-2n+2b+2}}\frac{1}{q^2+1}\ket{SS,SS} \\
        \rightarrow{}& \ket{SS}\\
        \ket{IS,II} \rightarrow{}& c_1\ket{II,II} + c_0 \ket{SS,II} \rightarrow c_0\ket{II} + c_1 \ket{SS} \\\\ 
        \ket{II,II} \rightarrow{}& \ket{II,II} \rightarrow \ket{SS} \,,
    \end{align}
    which verifies that the left-hand and right-hand sides are equal. 
\end{proof}

\subsubsection{Proof of \autoref{lem:decayOfvSS}}
\begin{proof}
    \begin{align}
        &\braket{\mathbf{1}, \mathbf{1}}{v_{SS}^{(t)}} =\bra{\mathbf{1}, \mathbf{1}}L_{SS}R_\sigma^{(t)}\ket{v^{(t-1)}} =  \bra{\mathbf{1}, \mathbf{1}}L_{SS}R^{(t)}_\sigma L_{SS}^{-1}\ket{v_{SS}^{(t-1)}}\\
        ={}&\sum_{\substack{\vec{\mu}\\\vec{\nu} \neq I^n}} \bra{\mathbf{1}, \mathbf{1}}L_{SS} \ketbra{\mathbf{1},\vec{\mu}}R^{(t)}_\sigma  \ketbra{\mathbf{1},\vec{\nu}} L_{SS}^{-1}\ket{v_{SS}^{(t-1)}} \\
        ={}& \sum_{\substack{\vec{\mu}\\\vec{\nu} \neq I^n}} \frac{q^{-2n+2|\vec{\mu}|}-q^{-2n}}{q^{-2n+2|\vec{\nu}|}-q^{-2n}} \bra{\mathbf{1},\vec{\mu}}R^{(t)}_\sigma  \ket{\mathbf{1},\vec{\nu}}\braket{\mathbf{1},\vec{\nu}}{v_{SS}^{(t-1)}}\\
        ={}& \sum_{\substack{\vec{\mu}\\\vec{\nu} \neq I^n}} \frac{q^{-2n+2|\vec{\mu}|}-q^{-2n}}{q^{-2n+2|\vec{\nu}|}-q^{-2n}} \bra{\vec{\mu}}Q'^{(t)}_\sigma Q^{(t)}_\sigma P^{(t)}  \ket{\vec{\nu}}\braket{\mathbf{1},\vec{\nu}}{v_{SS}^{(t-1)}}\\
        ={}& \sum_{\substack{\vec{\mu}\\\vec{\nu}, \vec{\zeta} \neq I^n}} E_{\vec{\mu}\vec{\zeta}} G_{\vec{\zeta}\vec{\nu}} \braket{\mathbf{1},\vec{\nu}}{v_{SS}^{(t-1)}}
    \end{align}
    where
    \begin{align}
        E_{\vec{\mu}\vec{\zeta}} &= 
         \frac{q^{-2n+2|\vec{\mu}|}-q^{-2n}}{q^{-2n+2|\vec{\zeta}|}-q^{-2n}} \bra{\vec{\mu}}Q'^{(t)}_\sigma Q^{(t)}_\sigma \ket{\vec{\zeta}}\\
        G_{\vec{\zeta}\vec{\nu}} &= \frac{q^{-2n+2|\vec{\zeta}|}-q^{-2n}}{q^{-2n+2|\vec{\nu}|}-q^{-2n}}\bra{\vec{\zeta}}P^{(t)}  \ket{\vec{\nu}} = \bra{\vec{\zeta}}P_S^{(t)}  \ket{\vec{\nu}}
    \end{align}
    However, note that $E_{\vec{\zeta}\vec{\zeta}} \geq (1-\sigma)^2$ (with equality when $\zeta_{i_t} = \zeta_{j_t} = 1$), and all $E_{\vec{\mu}\vec{\zeta}}$ are non-negative. Moreover, note that
    \begin{equation}
        \sum_{\vec{\zeta}} G_{\vec{\zeta}\vec{\nu}} = 1\,,
    \end{equation}
    owing to the fact that $P_S^{(t)}$ is stochastic. 
    Thus $\braket{\mathbf{1}, \mathbf{1}}{v_{SS}^{(t)}} \geq (1-\sigma)^2 \braket{\mathbf{1}, \mathbf{1}}{v_{SS}^{(t-1)}}$, and by recursion, the statement holds. 
\end{proof}

\subsubsection{Proof of \autoref{lem:11vSI}}

\begin{proof}
    Recall that $L_{SI} = \mathcal{I} \otimes L_I - L_I \otimes \mathcal{I}$, but the second term commutes with $\mathcal{I} \otimes Q'^{(t)}_\sigma Q_\sigma^{(t)}$, thus we may ignore it in the following calculation.
    \begin{align}
        \braket{\mathbf{1},\mathbf{1}}{v_{SI}^{(t,t)}}  &= \sum_{\vec{\mu},\vec{\nu}} \bra{\vec{\mu}} L_IQ'^{(t)}_{\sigma} Q_\sigma^{(t)}-Q'^{(t)}_{\sigma} Q_\sigma^{(t)}L_I\ket{\vec{\nu}}\bra{\mathbf{1},\vec{\nu}}R_0^{(t)}\ket{v^{(t-1)}} \\
        &=  \sum_{\vec{\mu},\vec{\nu}}\frac{q^{-2n+2|\vec{\nu}|}-q^{-2n+2|\vec{\mu}|}}{1-q^{-2n}} \bra{\vec{\mu}} Q'^{(t)}_{\sigma} Q_\sigma^{(t)}\ket{\vec{\nu}}\bra{\mathbf{1},\vec{\nu}}R_0^{(t)}\ket{v^{(t-1)}} 
    \end{align}
    If $\vec{\mu} = \vec{\nu}$ the factor gives 0. For each $\vec{\nu}$ there are at most three possible $\vec{\mu} \neq \vec{\nu}$ for which the matrix element $\bra{\vec{\mu}} Q'^{(t)}_{\sigma} Q_\sigma^{(t)}\ket{\vec{\nu}} \neq 0$, corresponding to a single error on either qudit or an error on both at once. In those cases, the matrix element is $\sigma(1-\sigma)$ (for single error) or $\sigma^2$ (for double error). The double error is only possible if $|\vec{\nu}| \geq 2$, but note that we may assume $|\vec{\nu}| \neq 1$ since action by $R_0^{(t)}$ will leave the two bits it acts on equal, and cannot lead to a configuration with Hamming weight 1. We have
    \begin{align}
        &\sum_{\vec{\mu}}\frac{q^{-2n+2|\vec{\nu}|}-q^{-2n+2|\vec{\mu}|}}{1-q^{-2n}} \bra{\vec{\mu}} Q'^{(t)}_{\sigma} Q_\sigma^{(t)}\ket{\vec{\nu}} \nonumber \\
        \leq{}& 2\sigma(1-\sigma)\frac{q^{-2n+2|\vec{\nu}|}-q^{-2n+2|\vec{\nu}|-2}}{1-q^{-2n}}+\sigma^2\frac{q^{-2n+2|\vec{\nu}|}-q^{-2n+2|\vec{\nu}|-4}}{1-q^{-2n}} \label{eq:lemvSIttintermediate}\\
        ={}& \left(\frac{q^{-2n+2|\vec{\nu}|}-q^{-2n}}{1-q^{-2n}}\right)\frac{2\sigma(1-\sigma)(1-q^{-2}) + \sigma^2(1-q^{-4})}{1-q^{-2|\vec{\nu}|}} \\
        \leq{}& \left(\frac{q^{-2n+2|\vec{\nu}|}-q^{-2n}}{1-q^{-2n}}\right)\left(2\sigma-\sigma^2\right)\,.
    \end{align}
    This lets us say
    \begin{align}
         \braket{\mathbf{1},\mathbf{1}}{v_{SI}^{(t,t)}} &\leq \sum_{\vec{\nu}}\left(\frac{q^{-2n+2|\vec{\nu}|}-q^{-2n}}{1-q^{-2n}}\right)\left(2\sigma-\sigma^2\right)\bra{\mathbf{1},\vec{\nu}}R_0^{(t)}\ket{v^{(t-1)}}  \\
         &=\sum_{\vec{\nu}}\left(2\sigma-\sigma^2\right)\bra{\mathbf{1},\vec{\nu}}L_{SS}R_0^{(t)}\ket{v^{(t-1)}} \\
         &=\sum_{\vec{\nu}}\left(2\sigma-\sigma^2\right)\bra{\mathbf{1},\vec{\nu}}R_{SS}^{(t)}L_{SS}\ket{v^{(t-1)}} \\
         &=\left(2\sigma-\sigma^2\right)\sum_{\vec{\nu}}\bra{\mathbf{1},\vec{\nu}}R_{SS}^{(t)}\ket{v_{SS}^{(t-1)}} \\
         &= (1-(1-\sigma)^2) \braket{\mathbf{1},\mathbf{1}}{v_{SS}^{(t-1)}}\,,
    \end{align}
    where the last equality follows because $R_{SS}$ is stochastic. 
    
    The fact that this is also true for $\ket{v^{(t',t)}}$ with $t' >t$ follows from the fact that $\ket{v^{(t',t)}}$ is related to $\ket{v^{(t,t)}}$ by a sequence of stochastic matrices, which preserves the left-hand side of the lemma statement. 
\end{proof}

\subsubsection{Proof of \autoref{lem:anticoncentrationInNoisySection}}
\begin{proof}
    This proof is similar to the proof of the general upper bound on the collision probability in Ref.~\cite{dalzell2020anticoncentration}. Define $Z^{(t')} = \braket{\mathbf{q},\mathbf{1}}{v^{(t')}}$. If the anti-concentration size is $s_{AC}$, this means that 
    \begin{equation}
        Z^{(s_{AC})} \leq 2q^n Z_H = \frac{4q^n}{q^n+1}\,.
    \end{equation}
    where $Z_H=2/(q^n+1)$ is the limiting value of the collision probability studied in Ref.~\cite{dalzell2020anticoncentration}.
    Note that $Z^{(t')}$ is monotonically non-increasing with $t'$ (i.e.,~collision probability only decreases as more gates are applied). Recall that for architectures where the circuit diagram is random, $\ket{v^{(t')}}$ represents an average over choice of circuit diagram. The $h$-regularly connected property says that, no matter what the circuit diagram has looked like up to time step $t'$, given any partition of the qudits into two parts, there is at least a $1/2$ probability that the next $hn$ gates in the circuit diagram will include at least one gate that couples qudits from opposite parts. Conditioned on coupling the two parts, the portion of the collision probability associated with configurations not already at a fixed point will decrease by a factor $2q/(q^2+1)$, as was seen in the general upper bound on the collision probability in Ref.~\cite{dalzell2020anticoncentration}.
    Thus for all $t'$, 
    \begin{align}
        Z^{(t'+rn)}-\frac{2q^n}{q^n+1} &\leq \left(\frac{1}{2} + \frac{1}{2}\frac{2q}{q^2+1}\right) \left(Z^{(t')}-\frac{2q^n}{q^n+1}\right) \\
        &=\frac{(q+1)^2}{2(q^2+1)}\left( Z^{(t')} -\frac{2q^n}{q^n+1}\right)\,.
    \end{align}
    Applying the above recursively, we have
    \begin{equation}
        Z^{(s_{AC}+zhn)} -\frac{2q^n}{q^n+1} \leq \left(\frac{(q+1)^2}{2(q^2+1)}\right)^z \frac{2q^n}{q^n+1} \leq 2\left(\frac{(q+1)^2}{2(q^2+1)}\right)^z\,.
    \end{equation}
    Now we ensure something similar holds for every value of $t$ and not just $t=s_{AC} +zhn$ for integers $z$.  Let $t_0$ be the maximum integer for which $t_0 \leq t$, and $t_0=s_{AC}+z_0hn$ for some integer $z_0$. So $t-t_0 \leq hn$ and $z_0 \geq (t-s_{AC})/(hn)-1$. Moreover, by monotonicity, we have $Z^{(t)} \leq Z^{(t_0)}$. Together, this implies
    \begin{align}
        Z^{(t)} &\leq \frac{2q^n}{q^n+1} + 2\left(\frac{(q+1)^2}{2(q^2+1)}\right)^{z_0} ={} \frac{2q^n}{q^n+1} + 2\left(\frac{(q+1)^2}{2(q^2+1)}\right)^{\frac{t-s_{AC}}{hn}-1} \\
        &= \frac{2q^n}{q^n+1} + \chi_2 e^{-\chi_1 (t-s_{AC})/n}\,,
    \end{align}
    where $\chi_2 = 4(q^2+1)/(q+1)^2$ and $\chi_1 = \frac{1}{h}\log(2(q^2+1)/(q+1)^2)$.
\end{proof}

\subsubsection{Proof of \autoref{lem:ACconvergencetoSn}}
\begin{proof}
We have 
\allowdisplaybreaks
\begin{align}
\frac{\braket{\mathbf{q},\mathbf{1}}{v^{(t)}}-1}{q^n-1} 
={}& \sum_{\vec{\nu}}\frac{q^{|\vec{\nu}|}-1}{q^n-1} \braket{\vec{\nu},\mathbf{1}}{v^{(t)}} \\
={}& \braket{S^n,\mathbf{1}}{v^{(t)}} +\sum_{\vec{\nu} \neq I^n,S^n} \frac{q^{|\vec{\nu}|}-1}{q^n-1} \braket{\vec{\nu},\mathbf{1}}{v^{(t)}} \\
={}& \braket{S^n,\mathbf{1}}{v^{(t)}} +\sum_{\vec{\nu} \neq I^n,S^n} \frac{q^{|\vec{\nu}|}-1}{q^n-1} \bra{\vec{\nu},\mathbf{1}}(L_S^{-1}L_S \otimes \mathcal{I})\ket{v^{(t)}} \\
={}& \braket{S^n,\mathbf{1}}{v^{(t)}} +\sum_{\vec{\nu} \neq I^n,S^n}\frac{\left(1-q^{-2n}\right)\left(q^{|\vec{\nu}|}-1\right)}{(q^{-2n+2|\vec{\nu}|} - q^{-2n})(q^n-1)} \bra{\vec{\nu},\mathbf{1}}L_S \otimes \mathcal{I}\ket{v^{(t)}} \\
\geq{}&  \braket{S^n,\mathbf{1}}{v^{(t)}} +\frac{\left(1-q^{-2n}\right)\left(q^{n-1}-1\right)}{(q^{-2} - q^{-2n})(q^n-1)}\sum_{\vec{\nu} \neq I^n,S^n} \bra{\vec{\nu},\mathbf{1}}L_S \otimes \mathcal{I}\ket{v^{(t)}} \\
={}& \braket{S^n,\mathbf{1}}{v^{(t)}} +\frac{q\left(1+q^{-n}\right)}{1 + q^{-n+1}}\sum_{\vec{\nu} \neq I^n,S^n} \bra{\vec{\nu},\mathbf{1}}L_S \otimes \mathcal{I}\ket{v^{(t)}} \\
%
%
={}&-\left(\frac{q\left(1+q^{-n}\right)}{1 + q^{-n+1}}-1\right)\braket{S^n,\mathbf{1}}{v^{(t)}} +\frac{q\left(1+q^{-n}\right)}{1 + q^{-n+1}}\sum_{\vec{\nu} \neq I^n} \bra{\vec{\nu},\mathbf{1}}L_S \otimes \mathcal{I}\ket{v^{(t)}} \\
={}&-\frac{q-1}{1+q^{-n+1}}\braket{S^n,\mathbf{1}}{v^{(t)}} +\frac{q\left(1+q^{-n}\right)}{1 + q^{-n+1}}\bra{\mathbf{1},\mathbf{1}}L_S \otimes \mathcal{I}\ket{v^{(t)}} \\
={}& -\frac{q-1}{1+q^{-n+1}}\braket{S^n,\mathbf{1}}{v^{(t)}} +\frac{q\left(1+q^{-n}\right)}{1 + q^{-n+1}}\frac{1}{q^n+1}\,,
\end{align}
\allowdisplaybreaks[0]
where the last line follows because the total amount of $S$-destined mass for the noiseless copy is exactly $1/(q^n+1)$.
From \autoref{lem:anticoncentrationInNoisySection}, we have
\begin{align}
     \frac{\braket{\mathbf{q},\mathbf{1}}{v^{(t)}}-1}{q^n-1} &\leq \frac{1}{q^n+1} + \frac{\eta_t}{q^n-1}\,.
\end{align}
Combining the above, we have
\begin{align}
    \braket{S^n,\mathbf{1}}{v^{(t)}}\frac{q-1}{1+q^{-n+1}} \geq \frac{1}{q^n+1}\left(\frac{q\left(1+q^{-n}\right)}{1 + q^{-n+1}}-1\right) - \frac{\eta_t}{q^n-1}\,,
\end{align}
and hence
\begin{align}
    \braket{S^n,\mathbf{1}}{v^{(t)}}\geq \frac{1-\eta'_t}{q^n+1}\,,
\end{align}
where 
\begin{equation}
    \eta'_t = \eta_t\frac{(q^n+1)(1+q^{-n+1})}{(q-1)(q^n-1)} \leq 6\eta_t = 6\chi_2e^{-\frac{\chi_1}{n}(t-s_{AC})}\,.
\end{equation}
The inequality above is true for all $n \geq 1$ and $q \geq 2$. We choose $\chi_4 = 6\chi_2$ and $\chi_3 = \chi_1$, and the lemma is proved. 
\end{proof}

\subsubsection{Proof of \autoref{lem:ACconvergenceUnderPI}}

\begin{proof}
The gate at time step $t$ acts on bits $i_{t}$ and $j_t$. Suppose for some configuration $\vec{\nu}$ these bits disagree, i.e.~$\nu_{i_t} \neq \nu_{j_t}$. 
Consider a state $\ket{\vec{\eta},\vec{\eta}'}$ for which $\Delta \ket{\vec{\eta},\vec{\eta}'} = \ket{\vec{\nu}}$. Then consider the quantity
\begin{align}
\bra{\mathbf{q}}\Delta R_{SI}^{(t)}\ket{\vec{\eta},\vec{\eta}'}-1 &=  \bra{\mathbf{q}}P_I^{(t)}\Delta\ket{\vec{\eta},\vec{\eta}'}-1 = \bra{\mathbf{q}}P_I^{(t)}\ket{\vec{\nu}}-1\\
    &= \sum_{\vec{\mu}} (q^{|\vec{\mu}|}-1)\bra{\vec{\mu}}  L_I P^{(t)} L_I^{-1}\ket{\vec{\nu}} \\
    &=\sum_{\vec{\mu}}\frac{(q^{|\vec{\mu}|}-1)(1-q^{-2n+2|\vec{\mu}|})}{1-q^{-2n+2|\vec{\nu}|}} \bra{\vec{\mu}}P^{(t)}\ket{\vec{\nu}}\,.
\end{align}
The action of $P^{(t)}$ on $\ket{\vec{\nu}}$ will force a bit flip, so there are only two possible $\vec{\mu}$ that lead to a non-zero contribution, one for which $|\vec{\mu}|=|\vec{\nu}|+1$ and one for which $|\vec{\mu}|=|\vec{\nu}|-1$. The matrix element (probability) of the former is $1/(q^2+1)$ and the matrix element for the latter is $q^2/(q^2+1)$. Thus, we have
\begin{align}
    \bra{\mathbf{q}}P_I^{(t)}\ket{\vec{\nu}}-1 &= \frac{q^2(q^{|\vec{\nu}|-1}-1)(1-q^{-2n+2|\vec{\nu}|-2})}{(q^2+1)(1-q^{-2n+2|\vec{\nu}|})} + \frac{(q^{|\vec{\nu}|+1}-1)(1-q^{-2n+2|\vec{\nu}|+2})}{(q^2+1)(1-q^{-2n+2|\vec{\nu}|})} \\
    &=\frac{2q}{q^2+1}\frac{q^{|\vec{\nu}|}-\frac{q+q^{-1}}{2} - q^{-2n+2|\vec{\nu}|}\left(q^{|\vec{\nu}|}\frac{q^2+q^{-2}}{2}-\frac{q+q^{-1}}{2}\right)}{1-q^{-2n+2|\vec{\nu}|}}\\
    &\leq \frac{2q}{q^2+1}(q^{|\vec{\nu}|}-1) = \frac{2q}{q^2+1} \left(\braket{\mathbf{q}}{\vec{\nu}}-1\right)\,.
\end{align}
The above is true for all $\vec{\nu}$, and demonstrates that each time disagreeing bits are coupled, the total contribution under inner product with $(\bra{\mathbf{q}}-\bra{\mathbf{1}})\Delta$ decreases by a constant factor. 

Now consider the sequence $\prod_{t=t_0+1}^{t_1}\left(\mathcal{I} \otimes Q'^{(t)}_{\sigma} Q_\sigma^{(t)}\right) R_{SI}^{(t)}$ acting on $\ket{\vec{\eta},\vec{\eta}'}$.  Since the architecture is $h$-regularly connected, for any $t$ there is at least a 1/2 chance that there will be some pair $(i_{t'},j_{t'})$ with $t < t' \leq t+hn$ for which $\nu_{i_{t'}} \neq \nu_{j_{t'}}$ (assuming $\vec{\nu}$ is not a fixed point). The first time this happens, it will lead to a decrease in inner product with $(\bra{\mathbf{q}}-\bra{\mathbf{1}})\Delta$ by the factor $2q/(q^2+1)$. The only way this would not happen is if one of the bits $\nu_{i_{t'}}$ or $\nu_{j_{t'}}$ was flipped already by action by one of the operators $Q^{(t'')}$. However, since the $Q^{(t)}_\sigma$ operators act only on the noisy $Y$ copy, they can only flip a bit of $\vec{\eta}'$ from a 1 to a 0, which would also induce a bit flip in $\vec{\nu}$ from a 1 to a 0. In this case, the Hamming weight decreases by 1 and the inner product with $(\bra{\mathbf{q}}-\bra{\mathbf{1}})\Delta$ would decrease by a factor of $\frac{q^{|\vec{\nu}|-1}-1}{q^{|\vec{\nu}|}-1}$ which is less than $2q/(q^2+1)$.

Thus, if $z_0$ is the largest integer such that $t_0 + z_0 hn \leq t_1$, then 
\begin{align}
    \bra{\mathbf{q}}\Delta \prod_{t=t_0+1}^{t_1}\left((\mathcal{I} \otimes Q'^{(t)}_{\sigma} Q_\sigma^{(t)}) R_{SI}^{(t)}\right) \ket{v}-1 
    \leq{}& \left(\frac{1}{2} + \frac{1}{2}\frac{2q}{q^2+1}\right)^{z_0}\left(\bra{\mathbf{q}}\Delta \ket{v}-1\right) \\
    \leq{}& \left(\frac{1}{2} + \frac{1}{2}\frac{2q}{q^2+1}\right)^{\frac{t_1-t_0}{hn}-1}\left(\bra{\mathbf{q}}\Delta \ket{v}-1\right) \\
    ={}& \chi_6\exp\left(-\frac{\chi_5 (t_1-t_0)}{n}\right) \left(\bra{\mathbf{q}}\Delta \ket{v}-1\right)
\end{align}
for appropriate choice of $\chi_5$ and $\chi_6$.
\end{proof}

\subsubsection{Proof of \autoref{lem:expclustering}}

\begin{proof}
    When probability mass is redirected from $S$-destined at time step $t-1$ to $I$-destined at time step $t'$, it may begin with Hamming weight as large as $n-1$. But since it is $I$-destined, it will quickly move down in Hamming weight. We wish to quantify this phenomenon. First of all, 
    \begin{align}
        \bra{\mathbf{1}}\Pi_{ w}\Delta \ket{v_{SI}^{(t,t')}} &= \sum_{\vec{\mu}:|\vec{\mu}| = w} \bra{\vec{\mu}}\Delta \ket{v_{SI}^{(t,t')}} = \frac{\sum_{\vec{\mu}:|\vec{\mu}| = w}(q^{|\vec{\mu}|}-1) \bra{\vec{\mu}}\Delta \ket{v_{SI}^{(t,t')}}}{q^w-1} \\
        &\leq \frac{\bra{\mathbf{q}}\Delta \ket{v_{SI}^{(t,t')}}-\braket{\mathbf{1}}{v_{SI}^{(t,t')}}}{q^w-1}\,.
    \end{align}
    Now, note that $\ket{v_{SI}^{(t,t')}} = \prod_{t''=t'+1}^{t}\left((\mathcal{I} \otimes Q'^{(t'')}_\sigma Q_\sigma^{(t'')}) R_{SI}^{(t'')}\right) \ket{v_{SI}^{(t',t')}}$, so we can invoke \autoref{lem:ACconvergenceUnderPI}.
    \begin{align}
        \bra{\mathbf{1}}\Pi_{w}\Delta \ket{v_{SI}^{(t,t')}} &\leq \frac{\bra{\mathbf{q}}\Delta \ket{v_{SI}^{(t',t')}}-\braket{\mathbf{1}}{v_{SI}^{(t,t')}}}{q^w-1}\chi_6 \exp\left(-\frac{\chi_5 (t-t')}{n}\right) \\
        &\leq \frac{q^{n}-1}{q^w-1}\braket{\mathbf{1},\mathbf{1}}{v_{SI}^{(t',t')}}\chi_6 \exp\left(-\frac{\chi_5 (t-t')}{n}\right)\,,
    \end{align}
    where the second line follows because $q^n$ is the maximum entry in $\bra{\mathbf{q}}$, and the quantity $\braket{\mathbf{1}}{v_{SI}^{(t,t')}}$ does not change as $t$ increases (it evolves by stochastic transformations). 
    
    We now invoke \autoref{lem:11vSI} (in the first line) and \autoref{lem:decayOfvSS} (in the second line) to say
    \begin{align}
        &\bra{\mathbf{1}}\Pi_{w}\Delta \ket{v_{SI}^{(t,t')}}\leq{} \frac{q^{n}-1}{q^w-1}(2\sigma-\sigma^2)\braket{\mathbf{1},\mathbf{1}}{v_{SS}^{(t'-1)}}\chi_6 e^{-\frac{\chi_5 (t-t')}{n}} \\
        \leq{}& \frac{q^{n}-1}{q^w-1}(2\sigma-\sigma^2)(1-\sigma)^{-2(t-t'+1)}\braket{\mathbf{1},\mathbf{1}}{v_{SS}^{(t)}}\chi_6 e^{-\frac{\chi_5 (t-t')}{n}} \\
        \leq{}& \sigma(4\chi_6 q^{n-w})  \exp\left(-\frac{\chi_5 (t-t')}{n}+ 2(t-t'+1)\log\left(\frac{1}{1-\sigma}\right)  \right)\braket{\mathbf{1},\mathbf{1}}{v_{SS}^{(t)}}\,, \label{eq:boundFilterExpDecay}
    \end{align}
    where the extra factor of $2$ comes from a very crude bound $(q^n-1)/(q^w-1) \leq 2q^{n-w}$.
    As long as $\chi_5/n$ is greater than $2\log(1/(1-\sigma))$, the above is exponentially decaying in $t$. This will be the case whenever $\sigma \leq 1-\exp(-\chi_5/2n))$. There is an $n_0$ and $\chi_7$ such that $\sigma \leq \chi_7/n$ whenever $n \geq n_0$ is a weaker condition. 
    Alternatively, we could make a simpler bound by invoking \autoref{lem:11vSI} and \autoref{lem:decayOfvSS}, but not \autoref{lem:ACconvergenceUnderPI}.
    \begin{align}
        \bra{\mathbf{1}}\Pi_{w}\Delta \ket{v_{SI}^{(t,t')}} &\leq \braket{\mathbf{1},\mathbf{1}} {v_{SI}^{(t,t')}} \leq 2\sigma \braket{\mathbf{1},\mathbf{1}}{v_{SS}^{(t'-1)}} \\
        &\leq 2\sigma (1-\sigma)^{-2(t-t'+1)} \braket{\mathbf{1},\mathbf{1}}{v_{SS}^{(t)}} \label{eq:boundFilterNaive}
    \end{align}
    Both Eq.~\eqref{eq:boundFilterExpDecay} and Eq.~\eqref{eq:boundFilterNaive} will be useful. 
    
    Now, we connect $\ket{v^{(t)}_{SS}}$ to $\ket{v_{SI}^{(t,t')}}$. First we note
    \begin{align}
        \bra{\mathbf{1}}\Pi_{w}\Delta \ket{v^{(t)}_{SS}}
        ={}& \bra{\mathbf{1}}\Pi_{w} \Delta L_{SS}\ket{v^{(t)}}  
        ={}\sum_{\vec{\mu},\vec{\nu}} \bra{\mathbf{1}} \Pi_{w} \Delta  \ketbra{\vec{\mu},\vec{\nu}}L_{SS}\ket{v^{(t)}} \\
        ={}& \sum_{\substack{\vec{\mu},\vec{\nu} \\|\vec{\mu}| = |\vec{\nu}| +n-w }}\bra{\vec{\mu},\vec{\nu}}L_{SS}\ket{v^{(t)}} ={} \sum_{\substack{\vec{\mu},\vec{\nu} \\|\vec{\mu}| = |\vec{\nu}|+n-w }}\frac{q^{-2n+2|\vec{\nu}|}-q^{-2n}}{1-q^{-2n}}\braket{\vec{\mu},\vec{\nu}}{v^{(t)}} \\
        ={}& \sum_{\substack{\vec{\mu},\vec{\nu} \\|\vec{\mu}| = |\vec{\nu}|+n-w }}\frac{q^{2|\vec{\nu}|}-1}{q^{2|\vec{\mu}|}-q^{2|\vec{\nu}|}}\frac{q^{-2n+2|\vec{\mu}|}-q^{-2n+2|\vec{\nu}|}}{1-q^{-2n}}\braket{\vec{\mu},\vec{\nu}}{v^{(t)}} \\
        ={}& \sum_{\substack{\vec{\mu},\vec{\nu} \\|\vec{\mu}| = |\vec{\nu}|+n-w }}q^{-2(n-w)}\frac{1-q^{-2|\vec{\nu}|}}{1-q^{-2(n-w)}}\bra{\vec{\mu},\vec{\nu}}L_{SI}\ket{v^{(t)}} \\
        \leq {}&\frac{q^{-2(n-w)}}{1-q^{-2}}\sum_{\substack{\vec{\mu},\vec{\nu} \\|\vec{\mu}| =  |\vec{\nu}|+n-w }}\braket{\vec{\mu},\vec{\nu}}{v_{SI}^{(t)}} = \frac{q^{-2(n-w)}}{1-q^{-2}}\bra{\mathbf{1}}\Pi_w \Delta \ket{v_{SI}^{(t)}}\,.
        \end{align}
    This allows us to use Eq.~\eqref{eq:sumSI} and assert
    \begin{equation}\label{eq:1piwdeltaIntermediate}
        \bra{\mathbf{1}}\Pi_w\Delta \ket{v^{(t)}_{SS}} = \frac{q^{-2(n-w)}}{1-q^{-2}}\sum_{t' = 1}^{t}\bra{\mathbf{1}}\Pi_w \Delta \ket{v_{SI}^{(t,t')}}\,.
    \end{equation}
    
    Let $t_w = t-\lceil n(n-w)\log(q)/\chi_5 \rceil$. For $t' > t_w$, we will bound $\ket{v_{SI}^{(t,t')}}$ with Eq.~\eqref{eq:boundFilterNaive}, and for $t' \leq t_w$, we will use Eq.~\eqref{eq:boundFilterExpDecay}. Let us examine these sums separately. For the $t' > t_w$ portion, we make the substitution $a=t'-t_w-1$, and we have 
    \begin{align}
\sum_{t'=t_w+1}^{t}\bra{\mathbf{1}}\Pi_{w}\Delta \ket{v_{SI}^{(t,t')}} \leq{}& \sum_{t'=t_w+1}^{t} 2\sigma (1-\sigma)^{-2(t-t'+1)} \braket{\mathbf{1},\mathbf{1}}{v_{SS}^{(t)}} \\
={}& \braket{\mathbf{1},\mathbf{1}}{v_{SS}^{(t)}}2\sigma(1-\sigma)^{-2(t-t_w)}\sum_{a=0}^{t-t_w-1}  (1-\sigma)^{2a} \\
={}&\braket{\mathbf{1},\mathbf{1}}{v_{SS}^{(t)}}2\sigma(1-\sigma)^{-2(t-t_w)}\frac{1-(1-\sigma)^{2(t-t_w-1)}}{2\sigma-\sigma^2} \\
\leq{}&\braket{\mathbf{1},\mathbf{1}}{v_{SS}^{(t)}}(1-\sigma)^{-2(t-t_w)}\left(4\sigma(t-t_w)\right) \\
\leq{}&\braket{\mathbf{1},\mathbf{1}}{v_{SS}^{(t)}}(1-\sigma)^{-2\lceil n(n-w)\log(q)/\chi_5 \rceil}\left(4\sigma\lceil n(n-w)\log(q)/\chi_5 \rceil\right) \\
\leq{}& \braket{\mathbf{1},\mathbf{1}}{v_{SS}^{(t)}}q^{-2n(n-w)\log(1-\sigma)/\chi_5}\chi_5'n\sigma(n-w)\, \label{eq:sumabovetwresult}
\end{align}
for some constant $\chi_5'$ slightly larger than $4\log(q)/\chi_5$ to account for dropping the ceiling in the last line. Note that in the third-to-last line, the extra factor of 2 comes from the bound $2\sigma /(2\sigma - \sigma^2) \leq 2$. 

 For the $t \leq t_w$ portion, we use the substitution $a = t_w-t'$ and find (assuming $\chi_5/n \geq 2\log(1/(1-\sigma))$)
\begin{align}
    \sum_{t'=1}^{t_w}\bra{\mathbf{1}}\Pi_{w}\Delta \ket{v_{SI}^{(t,t')}} \leq{}& \sum_{t'=1}^{t_w} \sigma(4\chi_6 q^{n-w})  e^{-\frac{\chi_5 (t-t')}{n}+ 2(t-t'+1)\log\left(\frac{1}{1-\sigma}\right)  }\braket{\mathbf{1},\mathbf{1}}{v_{SS}^{(t)}} \\
    ={}& \braket{\mathbf{1},\mathbf{1}}{v_{SS}^{(t)}}\sigma(4\chi_6 q^{n-w})\sum_{a=0}^{t_w-1}e^{-\frac{\chi_5 (t-t_w+a)}{n}+ 2(t-t_w+a+1)\log\left(\frac{1}{1-\sigma}\right)  } \\
    \leq{}& \braket{\mathbf{1},\mathbf{1}}{v_{SS}^{(t)}}\sigma(4\chi_6 q^{n-w})\sum_{a=0}^{\infty}e^{-\frac{\chi_5 (t-t_w+a)}{n}+ 2(t-t_w+a+1)\log\left(\frac{1}{1-\sigma}\right)  } \\
    ={}& \braket{\mathbf{1},\mathbf{1}}{v_{SS}^{(t)}}\sigma(4\chi_6 q^{n-w})\frac{\exp\left(-\lceil n(n-w)\log(q)/\chi_5 \rceil (\frac{\chi_5}{n}+2\log(1-\sigma))\right)}{ 
    (1-e^{-\chi_5/n-2\log(1-\sigma)})(1-\sigma)^2} \\
    \leq{}& \braket{\mathbf{1},\mathbf{1}}{v_{SS}^{(t)}}\sigma(4\chi_6)\frac{\exp\left(-2\lceil n(n-w)\log(q)/\chi_5 \rceil \log(1-\sigma)\right)}{ 
    (1-e^{-\chi_5/n-2\log(1-\sigma)})(1-\sigma)^2} \\
    \leq{}& \braket{\mathbf{1},\mathbf{1}}{v_{SS}^{(t)}}\sigma\chi_6'q^{-2n(n-w)\log(1-\sigma)/\chi_5} \label{eq:sumbelowtwresult}
\end{align}
for some constant $\chi_6'$. Plugging the bounds on the two parts of the sum into Eq.~\eqref{eq:1piwdeltaIntermediate}, we find
\begin{align}
    \frac{\bra{\mathbf{1}}\Pi_w\Delta \ket{v^{(t)}_{SS}}}{\braket{\mathbf{1},\mathbf{1}}{v_{SS}^{(t)}}}&\leq \frac{q^{-2(n-w)}}{1-q^{-2}}n\sigma q^{-2n(n-w)\log(1-\sigma)/\chi_5}\left(\chi_5'(n-w)+\frac{\chi_6'}{n}\right)\\
    &\leq \chi_9 q^{-2(n-w)}n\sigma (n-w) q^{c'(n-w)} \\
    &= \chi_9n\sigma(n-w) q^{-(n-w)}q^{-(1-c')(n-w)}
\end{align}
for some constants $\chi_9$ and $c'$ which is less than 1 whenever $\sigma \leq \chi_7/n$ and $n \geq n_0$ hold. Thus we may define $\chi_8=(1-c')\log(q)$ and the lemma is proved. 
\end{proof}

\subsubsection{Proof of \autoref{lem:Zsigmaupperlowerbound}}

\begin{proof}
Recall that $\mathcal{Z}_\sigma = \braket{\mathbf{1},\mathbf{q}}{v^{(s)}}$. 
and that $\ket{v^{(t)}_{SS}} = L_{SS} \ket{v^{(t)}}$. The matrix $L_{SS}^{-1}$ is defined to be the Moore-Penrose pseudo-inverse of $L_{SS}$, and note that the null space of $L_{SS}$ is the space spanned by $\ket{\vec{\nu},I^n}$ for all $\vec{\nu}$. The projector onto this subspace is $\ketbra{\mathbf{1},I^n}$. Thus,
\begin{align}
    \ket{v^{(s)}} &= \mathcal{I} \ket{v^{(s)}} = (\ketbra{\mathbf{1},I^n} + L_{SS}^{-1}L_{SS}) \ket{v^{(s)}}\\
    &= \ket{\mathbf{1},I^n}\braket{\mathbf{1},I^n}{v^{(s)}}+ L_{SS}^{-1}\ket{v_{SS}^{(s)}}  \,.
\end{align}
The lower bound is shown as follows:
\begin{align}
    \mathcal{Z}_\sigma-1 &= \sum_{\vec{\nu}}\left(q^{|\vec{\nu}|}-1\right)\braket{\mathbf{1},\vec{\nu}}{v^{(s)}} =\sum_{\vec{\nu} \neq I^n}\left(q^{|\vec{\nu}|}-1\right)\braket{\mathbf{1},\vec{\nu}}{v^{(s)}}\\
    &=\sum_{\vec{\nu} \neq I^n}
    \left(q^{|\vec{\nu}|}-1\right)\bra{\mathbf{1},\vec{\nu}}\left(\ket{\mathbf{1},I^n}\braket{\mathbf{1},I^n}{v^{(s)}}+ L_{SS}^{-1}\ket{v_{SS}^{(s)}}  \right)\\
    &=\sum_{\vec{\nu} \neq I^n}
    \left(q^{|\vec{\nu}|}-1\right)\bra{\mathbf{1},\vec{\nu}} L_{SS}^{-1}\ket{v_{SS}^{(s)}}\\
    &=\sum_{\vec{\nu} \neq I^n}
    \left(q^{|\vec{\nu}|}-1\right)
    \bra{\vec{\nu}}\frac{1-q^{-2n}}{q^{-2n+2|\vec{\nu}|}-q^{-2n}}\ket{v_S^{(s)}}\\
    &=\sum_{\vec{\nu} \neq I^n}\left(q^n-1\right)
    \left(\frac{1+q^n}{1+q^{|\vec{\nu}|}}\right)\braket{\mathbf{1},\vec{\nu}}{v_{SS}^{(s)}}\\
    &\geq \sum_{\vec{\nu} \neq I^n}\left(q^n-1\right)
    \braket{\mathbf{1},\vec{\nu}}{v_{SS}^{(s)}}\\
    &= (q^n-1) \braket{\mathbf{1},\mathbf{1}}{v_{SS}^{(s)}}\,.
\end{align}
Now, we will show the upper bound.

\begin{align}
    \mathcal{Z}_\sigma-1 &= \sum_{\vec{\nu}}\left(q^{|\vec{\nu}|}-1\right)\braket{\mathbf{1},\vec{\nu}}{v^{(s)}} \\
    &= \sum_{\vec{\nu}}\left(q^{|\vec{\nu}|}-1\right)
    \left(\bra{S^n,\vec{\nu}} + \sum_{\vec{\mu} \neq S^n}\bra{\vec{\mu},\vec{\nu}}\right)
    \ket{v^{(s)}}\\
    &= \sum_{\vec{\nu}}\left(\left(q^{|\vec{\nu}|}-1\right)\braket{S^n,\vec{\nu}}{v^{(s)}}+ 
    \sum_{\vec{\mu} \neq S^n}\left(q^{|\vec{\nu}|}-1\right)\braket{\vec{\mu},\vec{\nu}}{v^{(s)}}\right)\\
    &\leq \sum_{\vec{\nu}}\left(\left(q^{|\vec{\nu}|}-1\right)\braket{S^n,\vec{\nu}}{v^{(s)}}+ 
    \sum_{\vec{\mu} \neq S^n}\left(q^{|\vec{\mu}|}-1\right)\braket{\vec{\mu},\vec{\nu}}{v^{(s)}}\right)\\
    &= \sum_{\vec{\nu}}\left(\left(q^{|\vec{\nu}|}-1\right)\braket{S^n,\vec{\nu}}{v^{(s)}}\right)+ Z_0-1 - (q^n-1)\braket{S^n,\mathbf{1}}{v^{(s)}}\,,
\end{align}
where we have used $Z_0 = \sum_{\vec{\nu}}\sum_{\vec{\mu}}q^{|\vec{\mu}|}\braket{\vec{\mu},\vec{\nu}}{v^{(s)}}$. Now we invoke \autoref{lem:ACconvergencetoSn}, to say
\begin{align}
    \mathcal{Z}_\sigma-1 &\leq
    \sum_{\vec{\nu}}\left(q^{|\vec{\nu}|}-1\right)\braket{S^n,\vec{\nu}}{v^{(s)}}
    + Z_0-1 - \frac{q^n-1}{q^n+1}(1-\eta'_s) \\
    &= \sum_{\vec{\nu}}
    \left(q^{|\vec{\nu}|}-1\right)\braket{S^n,\vec{\nu}}{v^{(s)}}
    + \left(Z_0 - \frac{2q^n}{q^n+1}\right) + \eta'_s\left(\frac{q^n-1}{q^n+1}\right) \\
    &\leq \sum_{\vec{\nu}}\left(q^{|\vec{\nu}|}-1\right)\braket{S^n,\vec{\nu}}{v^{(s)}}+ \left(Z_0 - \frac{2q^n}{q^n+1}\right) + \eta'_s \,.
\end{align}
Now we invoke \autoref{lem:anticoncentrationInNoisySection} to bound $Z_0 - 2q^{n}/(q^n+1)$ in the first step below, and continue on. Denote $\eta_s'' = \eta_s + \eta'_s$.
\allowdisplaybreaks
\begin{align}
    \mathcal{Z}_\sigma-1
    \leq{}& \sum_{\vec{\nu}}\left(q^{|\vec{\nu}|}-1\right)\braket{S^n,\vec{\nu}}{v^{(s)}}+ \eta'_s + \eta_s \\
    ={}& \sum_{\vec{\nu}}\left(q^{|\vec{\nu}|}-1\right)\bra{S^n,\vec{\nu}}L_{SS}^{-1}L_{SS}\ket{v^{(s)}}+ \eta_s'' \\
    ={}& \sum_{\vec{\nu}}\left(\frac{(q^{|\vec{\nu}|}-1)(1-q^{-2n})}{q^{-2n+2|\vec{\nu}|}-q^{-2n}}\right)\braket{S^n,\vec{\nu}}{v_{SS}^{(s)}}+ \eta_s''  \\
    ={}&\sum_{\vec{\nu}}(q^n-1)\left(\frac{q^{n}+1}{q^{|\vec{\nu}|}+1}\right)\braket{S^n,\vec{\nu}}{v_{SS}^{(s)}}+ \eta_s'' \\
    \leq{}& \eta_s'' +(q^n-1)\sum_{\vec{\nu}}q^{n-|\vec{\nu}|}\braket{S^n,\vec{\nu}}{v_{SS}^{(s)}}\\
    ={}&\eta_s'' +(q^n-1)\sum_{\vec{\nu}}q^{n-|\vec{\nu}|}\bra{\vec{\nu}}\Delta\ket{v_{SS}^{(s)}}\\
     ={}&\eta_s'' +(q^n-1)\braket{S^n,S^n}{v_{SS}^{(s)}}+(q^n-1)\sum_{w=1}^{n-1}q^{n-w}\bra{\mathbf{1}}\Pi_w\Delta\ket{v_{SS}^{(s)}}\\
     \leq{}& \eta_s'' +(q^n-1)\braket{S^n,S^n}{v_{SS}^{(s)}}+(q^n-1)\sum_{w=1}^{n-1}q^{n-w}n\sigma \xi_w\braket{\mathbf{1},\mathbf{1}}{v_{SS}^{(s)}}\\
     %
     %
     %
     \leq{}& \eta_s'' +(q^n-1)\braket{\mathbf{1},\mathbf{1}}{v_{SS}^{(s)}}\left(1+\chi_9 n \sigma \sum_{w=1}^{n-1}(n-w) e^{-\chi_8(n-w)}\right)\,,
\end{align}
\allowdisplaybreaks[0]
where in the second-to-last line we have invoked \autoref{lem:expclustering}, which requires $\sigma \leq \chi_7/n$ and $n \geq n_0$ (leading to our requirements in this lemma that $\sigma \leq \chi_{13}/n$ and $n \geq n_0$). Now, we make the choice of $\chi_{10}=\chi_9\sum_{w=1}^{n-1}(n-w)e^{-\chi_8 (n-w)}\leq \chi_9\sum_{w=1}^{\infty}w e^{-\chi_8 w} = O(1)$, which yields the following. (In line 2, we invoke \autoref{lem:decayOfvSS}.)
\begin{align}
    \mathcal{Z}_\sigma-1 &\leq (q^n-1)\braket{\mathbf{1},\mathbf{1}}{v_{SS}^{(s)}}\left(1+\chi_{10} n\sigma+\frac{\eta''_s}{(q^n-1)\braket{\mathbf{1},\mathbf{1}}{v_{SS}^{(s)}}}\right)  \\
    &\leq (q^n-1)\braket{\mathbf{1},\mathbf{1}}{v_{SS}^{(s)}}\left(1+\chi_{10} n\sigma+\frac{q^n+1}{q^n-1}\eta''_s(1-\sigma)^{-2s}\right) \\
    &\leq (q^n-1)\braket{\mathbf{1},\mathbf{1}}{v_{SS}^{(s)}}\left(1+\chi_{10} n\sigma+3\eta''_s(1-\sigma)^{-2s}\right) \\
    &\leq (q^n-1)\braket{\mathbf{1},\mathbf{1}}{v_{SS}^{(s)}}\left(1+\chi_{10} n\sigma+\chi_{12} e^{-\frac{\chi_{11}}{n}(s-s_{AC})+4s\sigma}\right)\\
    &\leq (q^n-1)\braket{\mathbf{1},\mathbf{1}}{v_{SS}^{(s)}}\exp\left(1+\chi_{10} n\sigma+\chi_{12}  e^{-\frac{\chi_{11}}{n}(s-s_{AC})+4s\sigma}\right)\,,
\end{align}
where the third-to-last line is true for all $q\geq 2$ and $n \geq 1$, and the second-to-last line plugs in the equations for $\eta_s$ and $\eta'_s$, chooses constants $\chi_{11}$ and $\chi_{12}$ appropriately, and asserts $(1-\sigma)^{2s} \leq e^{-4\sigma s}$, which is true whenever $\sigma \leq 0.79$, so it is certainly true under the assumption $\sigma \leq \chi_7/n$ for sufficiently large $n$. 
\end{proof}

\subsubsection{Proof of \autoref{lem:boundSdestinedmassLAYERS}}

\begin{proof}
Recall that $\braket{\mathbf{1},\mathbf{1}}{v_{SS}^{(0)}} = 1/(q^n+1)$. Let $t_0 = dn/2$. 
\allowdisplaybreaks
    \begin{align}
        \braket{\mathbf{1}, \mathbf{1}}{v_{SS}^{(t_0+n/2)}}
        ={}&\bra{\mathbf{1}, \mathbf{1}}L_{SS}\prod_{t=t_0+1}^{t_0+n/2}R_\sigma^{(t)}\ket{v^{(t_0)}}\\
        ={}& \bra{\mathbf{1}, \mathbf{1}}L_{SS}\prod_{t=t_0+1}^{t_0+n/2}R^{(t)}_\sigma L_{SS}^{-1}\ket{v_{SS}^{(t_0)}}\\
        ={}& \bra{\mathbf{1}, \mathbf{1}}L_{SS}\prod_{t=t_0+1}^{t_0+n/2}(\mathcal{I} \otimes Q'^{(t)}_{\sigma} Q_\sigma^{(t)})\prod_{t=t_0+1}^{t_0+n/2}(\mathcal{I} \otimes R_0^{(t)}) L_{SS}^{-1}\ket{v_{SS}^{(t_0)}}\\
        ={}&  \bra{\mathbf{1}, \mathbf{1}}L_{SS}\prod_{t=t_0+1}^{t_0+n/2}(\mathcal{I} \otimes Q'^{(t)}_{\sigma} Q_\sigma^{(t)})L_{SS}^{-1}\prod_{t=t_0+1}^{t_0+n/2} R_{SS}^{(t)}\ket{v_{SS}^{(t_0)}}\\
        ={}&  \bra{\mathbf{1}, \mathbf{1}}(\mathcal{I} \otimes L_S)\prod_{t=t_0+1}^{t_0+n/2}(\mathcal{I} \otimes Q'^{(t)}_{\sigma} Q_\sigma^{(t)})(\mathcal{I} \otimes L_{S}^{-1})\prod_{t=t_0+1}^{t_0+n/2} R_{SS}^{(t)}\ket{v_{SS}^{(t_0)}}\\
        ={}&\bra{\mathbf{1}} L_S\prod_{t=t_0+1}^{t_0+n/2}( Q'^{(t)}_{\sigma} Q_\sigma^{(t)}) L_{S}^{-1}(\bra{\mathbf{1}}\otimes \mathcal{I})\prod_{t=t_0+1}^{t_0+n/2}R_{SS}^{(t)}\ket{v_{SS}^{(t_0)}}\\
        ={}&\sum_{\vec{\nu}\neq I^n}\bra{\mathbf{1}} L_S\prod_{t=t_0+1}^{t_0+n/2}( Q'^{(t)}_{\sigma} Q_\sigma^{(t)}) L_{S}^{-1}\ket{\vec{\nu}}\bra{\mathbf{1},\vec{\nu}}\prod_{t=t_0+1}^{t_0+n/2}R_{SS}^{(t)}\ket{v_{SS}^{(t_0)}}\,. \label{eq:SSprobIntermediate}
        \end{align}
        \allowdisplaybreaks[0]
    We now examine the quantity
    \begin{align}
        \bra{\mathbf{1}} L_S\prod_{t=t_0+1}^{t_0+n/2}( Q'^{(t)}_{\sigma} Q_\sigma^{(t)}) L_{S}^{-1} \ket{\vec{\nu}} ={}&\sum_{\vec{\mu}} \bra{\vec{\mu}}L_S\prod_{t=t_0+1}^{t_0+n/2}( Q'^{(t)}_{\sigma} Q_\sigma^{(t)}) L_{S}^{-1} \ket{\vec{\nu}} \\
        ={}& \sum_{\vec{\mu}} 
        \frac{q^{-2n+2|\vec{\mu}|}-q^{-2n}}{q^{-2n+2|\vec{\nu}|}-q^{-2n}}\bra{\vec{\mu}}
        \prod_{t=t_0+1}^{t_0+n/2}( Q'^{(t)}_{\sigma} Q_\sigma^{(t)})  \ket{\vec{\nu}}\,.
    \end{align}
    Note that, because of the layered property, all $n$ qudits are acted upon by one of the $Q_\sigma^{(t)}$ or $Q'^{(t)}_{\sigma} $. This can cause some $S$ bits to flip to $I$ bits (with probability $\sigma$). For a configuration $\vec{\mu}$ to have non-zero contribution in the above sum, it must have $\mu_i \leq \nu_i$ for all $i$ (under the ordering $I < S$), a condition we denote by $\vec{\mu} \leq \vec{\nu}$, and in this case we have
    \begin{equation}
        \bra{\vec{\mu}}
        \prod_{t=t_0+1}^{t_0+n/2}( Q'^{(t)}_{\sigma} Q_\sigma^{(t)})  \ket{\vec{\nu}} = (1-\sigma)^{|\vec{\mu}|}\sigma^{|\vec{\nu}| - |\vec{\mu}|}\,.
    \end{equation}
    Note also the following sum formula, which holds for any real number $z$.
    \begin{equation}
        \sum_{\vec{\mu}\leq \vec{\nu}}q^{z|\vec{\mu}|} (1-\sigma)^{|\vec{\mu}|}\sigma^{|\vec{\nu}| - |\vec{\mu}|} = \sum_{x=0}^{|\vec{\nu}|} \binom{|\vec{\nu}|}{x}q^{zx}(1-\sigma)^{x}\sigma^{|\vec{\nu}| - x} = (\sigma+q^z(1-\sigma))^{|\vec{\nu}|}\,.
    \end{equation}
    We find
    \begin{align}
        \bra{\mathbf{1}} L_S\prod_{t=t_0+1}^{t_0+n/2}( Q'^{(t)}_{\sigma} Q_\sigma^{(t)}) L_{S}^{-1} \ket{\vec{\nu}}
        ={}& \frac{1}{q^{2|\vec{\nu}|}-1}\sum_{\vec{\mu}\leq \vec{\nu}} 
        (q^{2|\vec{\mu}|}-1)(1-\sigma)^{|\vec{\mu}|}\sigma^{|\vec{\nu}| - |\vec{\mu}|} \\
        ={}&\frac{(\sigma + q^2(1-\sigma))^{|\vec{\nu}|}-1}{q^{2|\vec{\nu}|}-1} = \frac{(1-\sigma')^{|\vec{\nu}|}-q^{-2|\vec{\nu}|}}{1-q^{-2|\vec{\nu}|}} \,,
    \end{align}
    where $\sigma' = \sigma(1-q^{-2})$. 
     Denote this final expression by
\begin{equation}
    E_w = \frac{(1-\sigma')^{w}-q^{-2w}}{1-q^{-2w}}\,,
\end{equation}
which allows us to rewrite Eq.~\eqref{eq:SSprobIntermediate} as
\begin{equation}
    \braket{\mathbf{1}, \mathbf{1}}{v_{SS}^{(t_0+n/2)}} = \sum_{\vec{\nu}\neq I^n}E_{|\vec{\nu}|}\bra{\mathbf{1},\vec{\nu}}\prod_{t=t_0+1}^{t_0+n/2}R_{SS}^{(t)}\ket{v_{SS}^{(t_0)}}\label{eq:SSprobIntermediate2}\,.
\end{equation}
    
    Now we claim that, for any $|\vec{\nu}| \neq 0$,
    \begin{equation}
        E_n \leq E_{|\vec{\nu}|} \,.
    \end{equation}
    We can prove the statement above by noting that it holds for $|\vec{\nu}| = n$ and observing that the derivative with respect to $|\vec{\nu}|$ is always negative (in this verification, note that $(1-\sigma') \geq 1/q$ holds for all $ \sigma \leq 1$). 
    
    Collecting these observations, we have
    \begin{align}
    \braket{\mathbf{1}, \mathbf{1}}{v_{SS}^{(t_0+n/2)}} &\geq \sum_{\vec{\nu} \neq I^n}E_n \bra{\mathbf{1},\vec{\nu}}\prod_{t=t_0+1}^{t_0+n/2}R_{SS}^{(t)}\ket{v_{SS}^{(t_0)}}\\
    &=E_n \bra{\mathbf{1},\mathbf{1}}\prod_{t=t_0+1}^{t_0+n/2}R_{SS}^{(t)}\ket{v_{SS}^{(t_0)}}\\
    &=E_n\braket{\mathbf{1},\mathbf{1}}{v_{SS}^{(t_0)}}\,.
    \end{align}
Hence, the lower bound in the lemma statement follows by recursively applying the above conclusion for increasing $d$. 

To show the upper bound, we return to Eq.~\eqref{eq:SSprobIntermediate2}.
Note that $E_w \leq 1$. We can restate what we know and divide the mass into whether or not the noiseless copy has reached the $S^n$ fixed point, and if it has, what value $w$ for the Hamming weight the noisy copy ends up at. 
    \begin{align}
        \braket{\mathbf{1}, \mathbf{1}}{v_{SS}^{(t_0+n/2)}} 
        &= A_{not} + \sum_{w=1}^n E_w A_w\,, \label{eq:intermediateAwDecomp}
    \end{align}
where 
\begin{align}
    A_{not} &= \sum_{\vec{\nu}\neq I^n,\vec{\mu} \neq S^n}E_{|\vec{\nu}|}\bra{\mathbf{1},\vec{\nu}}\prod_{t=t_0+1}^{t_0+n/2}R_{SS}^{(t)}\left(\ketbra{\vec{\mu}}\otimes \mathcal{I}\right)\ket{v_{SS}^{(t_0)}} \\
    A_{w} &= \sum_{\vec{\nu}:|\vec{\nu}|=w}\bra{\mathbf{1},\vec{\nu}}\prod_{t=t_0+1}^{t_0+n/2}R_{SS}^{(t)}\left(\ketbra{S^n}\otimes \mathcal{I}\right)\ket{v_{SS}^{(t_0)}}\,.
\end{align}
Since $E_{|\vec{\nu}|} \leq 1$, we may directly apply \autoref{lem:ACconvergencetoSn} and bound $A_{not} \leq \eta_{t_0}'/(q^n+1)$. 

To bound $A_w$, we will need to use \autoref{lem:expclustering}. 
Applying the layer of $R_{SS}^{(t)}$ from $t=t_0+1$ to $t=t_0+n/2$ can at most double the number of $I$-assigned bits, since each qudit participates in at most one gate. So, in order to land at a configuration with Hamming weight $w$ at time step $t_0+n/2$, the configuration at time step $t_0$ must have Hamming weight at most $\lfloor \frac{n+w}{2}\rfloor$. In other words,
\begin{align}
    A_w \leq \sum_{w'=1}^{\lfloor \frac{n+w}{2}\rfloor} \sum_{\vec{\mu}:|\vec{\mu}| = w'}\braket{S^n,\vec{\mu}}{v_{SS}^{(t_0)}}\,.
\end{align}
When $w < n$, the right-hand side of the above is then bounded with \autoref{lem:expclustering}, which requires $\sigma \leq \chi_7/n$ and $n \geq n_0$ (and thus the upper bound portion of lemma inherits these requirements).
\begin{align}
    %
    A_w &\leq \sum_{w'=1}^{\lfloor \frac{n+w}{2}\rfloor} \sum_{\vec{\mu}:|\vec{\mu}| = w'}\bra{\vec{\mu}}\Delta \ket{v_{SS}^{(t_0)}} = \sum_{w'=1}^{\lfloor \frac{n+w}{2}\rfloor} \bra{\mathbf{1}}\Pi_{w'}\Delta \ket{v_{SS}^{(t_0)}} \\
    &\leq \sum_{w'=1}^{\lfloor \frac{n+w}{2}\rfloor} n \sigma \chi_9 (n-w')q^{-(n-w')}e^{-\chi_8(n-w')}\braket{\mathbf{1},\mathbf{1}}{v_{SS}^{(t_0)}} \\
    &\leq  n\sigma\chi_9 \braket{\mathbf{1},\mathbf{1}}{v_{SS}^{(t_0)}} \sum_{a=\lceil \frac{n-w}{2}\rceil}^\infty a e^{-a(\chi_8+\log(q))}\,,
\end{align}
where we have used the substitution $a=n-w'$.
For any $c$, there is a constant $c''$ such that $\sum_{a=a_0}^{\infty} ae^{-ca}$ is bounded by $c''e^{-ca_0}$. Thus, there is a constant $c''$ such that 
\begin{align}
    A_w &\leq \braket{\mathbf{1},\mathbf{1}}{v_{SS}^{(t_0)}} n\sigma\chi_9c''e^{-(\chi_8+\log(q))\lceil \frac{n-w}{2}\rceil} \leq  \braket{\mathbf{1},\mathbf{1}}{v_{SS}^{(t_0)}}n\sigma\chi_9c''e^{-(\chi_8+\log(q)) \frac{n-w}{2}} \\
    &= \braket{\mathbf{1},\mathbf{1}}{v_{SS}^{(t_0)}}f n \sigma e^{-f'(n-w)}\,,
\end{align}
with the definitions $f = \chi_9c'' = O(1)$ and $f' = \chi_8+\log(q) = O(1)$. 
Note also that by construction $\sum_{w=1}^n A_w \leq \braket{\mathbf{1},\mathbf{1}}{v_{SS}^{(t_0)}}$. Thus,
\begin{equation}
    \sum_{w=1}^n E_w A_w = \sum_{w=1}^n (E_n+E_w-E_n) A_w  \leq  E_n\braket{\mathbf{1},\mathbf{1}}{v_{SS}^{(t_0)}} + \sum_{w=1}^{n-1} (E_w-E_n)A_w \,,
\end{equation}
which we can insert into Eq.~\eqref{eq:intermediateAwDecomp}, along with the bounds on $A_w$, giving
\begin{align}
     &\braket{\mathbf{1}, \mathbf{1}}{v_{SS}^{(t_0+n/2)}} \leq \braket{\mathbf{1},\mathbf{1}}{v_{SS}^{(t_0)}}\left(E_n+n\sigma\sum_{w=1}^{n-1}(E_{w}-E_n)fe^{-f' (n-w)}\right)+\frac{\eta_{t_0}'}{q^n+1}\label{eq:eqInPars}
\end{align}
We also have
\begin{equation}
    \frac{E_w}{E_n} = \frac{1-q^{-2n}}{1-q^{-2w}}\frac{(1-\sigma')^{w}-q^{-2w}}{(1-\sigma')^{n}-q^{-2n}} \leq (1-\sigma')^{-(n-w)}\,,
\end{equation}
which can be verified by observing that the quantity 
\begin{equation}
    \frac{E_w}{E_n}(1-\sigma')^{n-w}=\frac{(1-q^{-2n})(1-(q\sqrt{1-\sigma'})^{-2w})}{(1-q^{-2w})(1-(q\sqrt{1-\sigma'})^{-2n})}
\end{equation} 
achieves its maximum with respect to $\sigma'$ when $\sigma'=0$, where it equals 1. The quantity in parentheses in Eq.~\eqref{eq:eqInPars} is now at most
\begin{align}
     \left(E_n+n\sigma E_n\sum_{w=1}^{n-1}fe^{-f'(n-w)}(e^{-\log(1-\sigma') (n-w)}-1)\right)
     \leq{}&\left(E_n+n\sigma E_n\sum_{w=1}^{n-1}fe^{-f'(n-w)}\tau\sigma(n-w)\right)\\
     \leq {}& E_n\left(1+f''n\sigma^2 \right)\,,
\end{align}
where in the first line, we bound $e^{-x\log(1-\sigma)}-1$ by $\tau \sigma x$ for some constant $\tau$, which holds for $x$ sufficiently small, as is the case when $\sigma \leq O(1/n)$ with $n$ sufficiently large; in the second line, we choose the appropriate constant $f''$ as a bound for the sum $f\tau\sum_{a=1}^{n-1} ae^{-a}$. 
This gives us the recursion relation
\begin{equation}\label{eq:layeredUBrecursionrelationA}
\braket{\mathbf{1}, \mathbf{1}}{v_{SS}^{(t_0+n/2)}} \leq \braket{\mathbf{1},\mathbf{1}}{v_{SS}^{(t_0)}}E_n(1+f''n\sigma^2)+\frac{\eta_{t_0}'}{q^n+1}\,.
\end{equation}

For the first few layers, before anti-concentration has been reached and $\eta_{t_0}'$ has become small, we will just use the simpler naive bound $\braket{\mathbf{1}, \mathbf{1}}{v_{SS}^{(t_0+n/2)}} \leq \braket{\mathbf{1},\mathbf{1}}{v_{SS}^{(t_0)}}$. Define the anti-concentration depth as $d_{AC} = 2s_{AC}/n$. Then we have
\begin{align}
    \frac{\eta_{dn/2}'}{q^n+1} &\leq \frac{\chi_4}{q^n+1} e^{-\chi_3(d-d_{AC})/2}\leq \frac{\chi_4'}{q^n+1} E_n e^{-\chi_3(d-d_{AC})/2-n\log(1-\sigma)} \\
    &\leq \chi_4' E_n \braket{\mathbf{1},\mathbf{1}}{v_{SS}^{(dn/2)}} e^{-\chi_3(d-d_{AC})/2-n\log(1-\sigma)-dn\log(1-\sigma)} \\
    &\leq E_n \braket{\mathbf{1},\mathbf{1}}{v_{SS}^{(dn/2)}} n \sigma e^{-\chi_3'(d-d^*)}\,,
\end{align}
where in line 1,  we refer back to the definition of $E_n$ and choose $\chi_4'$ slightly larger than $\chi_4$, in line 2, we use \autoref{lem:decayOfvSS}, and in line 3 we choose 
\begin{equation}\label{eq:dstar}
    d^* = d_{AC}\chi_3/2\chi'_3 + f''' + \log(1/n\sigma)/\chi_3'
\end{equation} 
for some constant $f'''$ that is $O(1)$ whenever $-n\log(1-\sigma)$ is $O(1)$. Note that this also requires $n\log(1-\sigma) \leq \chi_3$. We can choose the constant $a_3$ such that the condition $\sigma \leq a_3/n$ implies these requirements hold.  Note we also must choose a weaker exponential decay constant $\chi_3'$.  Thus our recursion relation is
\begin{equation}\label{eq:layeredUBrecursionrelation}
\braket{\mathbf{1}, \mathbf{1}}{v_{SS}^{(t_0+n/2)}} \leq \braket{\mathbf{1},\mathbf{1}}{v_{SS}^{(t_0)}}E_n(1+f''n\sigma^2 + n\sigma e^{-\chi_3'(d-d^*)})\,.
\end{equation}
Iterating this equation starting at $d=d^*$, we get
\begin{align}
    \braket{\mathbf{1}, \mathbf{1}}{v_{SS}^{(dn/2)}} &\leq \frac{E_n^{d-d^*}}{q^n+1}\prod_{d'=d^*+1}^{d}(1+f''n\sigma^2 + n\sigma e^{-\chi_3'(d'-d^*)}) \\
    &\leq \frac{E_n^{d-d^*}}{q^n+1}\exp\left(\sum_{d'=d^*+1}^{d}(f''n\sigma^2 + n\sigma e^{-\chi_3'(d'-d^*)})\right) \\
    &\leq \frac{E_n^{d-d^*}}{q^n+1}\exp\left((d-d^*)(f''n\sigma^2) + n\sigma \chi_3''\right)
\end{align}
for some choice of $\chi_3''$ (the exponentially decaying sum is bounded). Now, we note from the definition of $E_n$ that as long as $\sigma \leq O(1/n)$, there is a constant $g$ (slightly larger than 1) such that $E_n \geq \exp(-g n\sigma')$, allowing us to say
\begin{align}
    \braket{\mathbf{1}, \mathbf{1}}{v_{SS}^{(dn/2)}} \leq \frac{E_n^d}{q^n+1}\exp\left(f''n\sigma^2d+ gn\sigma'd^* +  n\sigma\chi_3''\right)\,,
\end{align}
which, recalling the definition of $d^*$ in Eq.~\eqref{eq:dstar}, implies the lemma statement for appropriate choices of $a_0$, $a_1$, and $a_2$. 
\end{proof}

\subsubsection{Proof of \autoref{lem:boundSdestinedmassCG}}

\begin{proof}
In the layered case (proof of \autoref{lem:boundSdestinedmassLAYERS}), we considered the action of all $n/2$ gates in a layer at once. For complete-graph, we can treat each gate individually. Following the layered derivation to Eq.~\eqref{eq:SSprobIntermediate}, for complete-graph we have
\begin{align}
        \braket{\mathbf{1}, \mathbf{1}}{v_{SS}^{(t)}} \nonumber
        ={}&\sum_{\vec{\nu}\neq I^n}\bra{\mathbf{1}} L_S Q'^{(t)}_{\sigma} Q_\sigma^{(t)} L_{S}^{-1}\ket{\vec{\nu}}\bra{\mathbf{1},\vec{\nu}}R_{SS}^{(t)}\ket{v_{SS}^{(t-1)}} \,. \label{eq:SSprobIntermediateCG}
\end{align}
Here the $t$th gate acts on two qudits $i_t$ and $j_t$, but in forming $\ket{v_{SS}^{(t)}}$ from $\ket{v_{SS}^{(t-1)}}$, we take the average over all possible choices of $\{i_t, j_t\}$, as the complete-graph architecture chooses the pair of qudits to act on uniformly at random. After action by $R_{SS}^{(t)}$ the values assigned at position $i_t$ and $j_t$ must be set equal. If they are assigned $S$, then errors can send the new configuration to one of four possible configurations, corresponding to errors on none, one, or both qudits. If they are assigned $I$ then no errors are possible. If we assume $\nu_{i_t} = \nu_{j_t} = S$, then zero errors occurs with probability $(1-\sigma)^2$, one error with probability $2\sigma(1-\sigma)$, and two errors with probability $\sigma^2$. Thus, we have
\begin{align}
    \bra{\mathbf{1}} L_S Q'^{(t)}_{\sigma} Q_\sigma^{(t)} L_{S}^{-1}\ket{\vec{\nu}} 
    ={}& (1-\sigma)^2+ 2\sigma(1-\sigma) \frac{q^{-2n+2|\vec{\nu}|-2}-q^{-2n}}{q^{-2n+2|\vec{\nu}|}-q^{-2n}} +\sigma^2 \frac{q^{-2n+2|\vec{\nu}|-4}-q^{-2n}}{q^{-2n+2|\vec{\nu}|}-q^{-2n}} \\
    ={}&\frac{(1-\sigma')^2 - q^{-2|\vec{\nu}|}}{1 - q^{-2|\vec{\nu}|}}\,,
\end{align}
where $\sigma' = \sigma(1-q^{-2})$. Define the final expression as
\begin{equation}
     J_{w} = \frac{(1-\sigma')^2 - q^{-2w}}{1 - q^{-2w}}\,.
\end{equation}
The quantity $J_w$ is monotonically increasing in $w$ and satisfies $J_w \leq J_n$ for all $w$. 
Meanwhile, if $\nu_{i_t} = \nu_{j_t} = I$, then $\bra{\mathbf{1}} L_S Q'^{(t)}_{\sigma} Q_\sigma^{(t)} L_{S}^{-1}\ket{\vec{\nu}} = 1$. 

Recall the marginal dynamics of $R^{(t)}_{SS}$ on the noisy
copy are simply $P_S^{(t)}$. Suppose the noisy copy starts at a configuration $\ket{\vec{\eta}}$. If $|\vec{\eta}| = w$, then let $\phi_{SS,w}$ be the probability that the qudits $i_t$ and $j_t$ are both assigned $S$, $\phi_{IS,w}$ be the probability one is assigned $S$ and one is assigned $I$, and $\phi_{II,w}$ be the probability both are assigned $I$. 
\begin{align}
    \phi_{SS,w} &= \frac{w(w-1)}{n(n-1)} \\
    \phi_{IS,w} &= \frac{2w(n-w)}{n(n-1)} \\
    \phi_{II,w} &= \frac{(n-w)(n-w-1)}{n(n-1)}\,.
\end{align}
Note that $\phi_{SS,w} + \phi_{IS,w} + \phi_{II,w} = 1$. 
In the case where one is $I$ and one is $S$, the $I$ is flipped to $S$ by $P_S^{(t)}$ with probability $P_{\uparrow,w}$ and the $S$ is flipped to $I$ with probability $P_{\downarrow,w}$, where
\begin{align}
    P_{\uparrow,w} &= \frac{1}{q^2+1}\frac{q^{-2n + 2w + 2}-q^{-2n}}{q^{-2n + 2w}-q^{-2n}} \\
    P_{\downarrow,w} &= 1-P_{\uparrow,w} \,,
\end{align}
which increases or decreases the Hamming weight of $w$ by 1. 
Note the following equalities and inequalities:
\begin{align}
    P_{\downarrow,w} &= \frac{1}{q^2+1}\frac{1-q^{-2w+2}}{1-q^{-2w}} \geq \frac{1}{q^2+1}-q^{-2w} \\
    1-J_w &= \frac{1-(1-\sigma')^2}{1-q^{-2w}} =\frac{2\sigma'-\sigma'^2}{1-q^{-2w}}  \\
J_n-J_w &= \frac{(1-(1-\sigma')^2)(q^{-2w}-q^{-2n})}{(1-q^{-2n})(1-q^{-2w})} \leq \frac{q^{-2w}(2\sigma'-\sigma'^2)}{1-q^{-2w}} \\
\phi_{II,w} + \phi_{IS,w}P_{\downarrow,w} &\geq 
\begin{cases}
\frac{n-w}{n-1}\left(\frac{1}{q^2+1}-q^{-2w}\right)  \geq \frac{n-w}{n-1}\frac{1}{q^2+1} - q^{-2w}& \text{if } w \geq n/2 \\
\frac{1}{4}& \text{if } w < n/2
\end{cases}\,,
\end{align}
where the last inequality follows because, when $w \geq n/2$, $\phi_{IS,w} \geq \frac{n-w}{n-1}$, and when $w < n/2$, $\phi_{II,w} \geq \frac{1}{4}$.

We may now define $G_w$ by the following equation, where $|\vec{\eta}|=w$,
\begin{align}
    G_w ={}& \sum_{\vec{\nu}\neq I^n}\bra{\mathbf{1}} L_S Q'^{(t)}_{\sigma} Q_\sigma^{(t)} L_{S}^{-1}\ket{\vec{\nu}}\bra{\vec{\nu}}P_{S}^{(t)}\ket{\vec{\eta}} \\
    ={}& \phi_{SS,w}J_w + \phi_{IS,w}(P_{\uparrow,w}J_{w+1} + P_{\downarrow,w}) + \phi_{II,w} \,.
\end{align}
We want to lower bound this quantity. If $n=2$, then $G_1=G_2=J_2$. If $n > 2$, we have 
\begin{align}
  G_w \geq{}& \phi_{SS,w}J_w + \phi_{IS,w}(P_{\uparrow,w}J_{w} + P_{\downarrow,w}) + \phi_{II,w} \\
  ={}& J_n + (1-J_w)(\phi_{II,w} + P_{\downarrow,w}\phi_{IS,w}) - (J_n-J_w) \\
    %
    \geq{}& J_n + \frac{2\sigma'-\sigma'^2}{1-q^{-2w}}\begin{cases}
    \frac{n-w}{n-1}\frac{1}{q^2+1}- 2q^{-2w} & \text{if } w \geq n/2 \\ 
    \frac{1}{4}- q^{-2w} & \text{if } w < n/2 
    \end{cases} \,.
\end{align}
By inspection of the final equation, we see that $G_w \geq J_n$ for every combination $n>2$, $w\geq 1$ (since $q>2$) except when $w = n$, but for $w=n$, $G_w=J_n$ by definition, so $G_w \geq J_n$ also holds.

This immediately gives us
\begin{align}
    \braket{\mathbf{1}, \mathbf{1}}{v_{SS}^{(t)}}
={}&\sum_{w=1}^n G_w\sum_{\vec{\eta}:|\vec{\eta}|=w}\braket{\mathbf{1},\vec{\eta}}{v_{SS}^{(t-1)}}  \geq J_n \sum_{w=1}^n \sum_{\vec{\eta}:|\vec{\eta}|=w}\braket{\mathbf{1},\vec{\eta}}{v_{SS}^{(t-1)}} = J_n\braket{\mathbf{1},\mathbf{1}}{v_{SS}^{(t-1)}}\,,
\end{align}
which proves the lower bound by recursion on increasing $t$ and the fact that $\braket{\mathbf{1},\mathbf{1}}{v_{SS}^{(0)}}=1/(q^n+1)$.

To show the upper bound, we first observe
\begin{align}
    G_w \leq J_n + (1-J_n)(\phi_{II,w} + P_{\downarrow,w}\phi_{IS,w})\,.
\end{align}
We have the inequalities
\begin{align}
    1-J_n &= \frac{2\sigma'-\sigma'^2}{1-q^{-2n}} \leq 2\sigma \\
    \phi_{II,w} + P_{\downarrow,w}\phi_{IS,w} &\leq \phi_{II,w} + \frac{1}{2}\phi_{IS,w} = \frac{n-w}{n}\,.
\end{align}
Moreover, there exists a constant $b$ such that $J_n \geq 1/b$ as long as $n \geq 2$ and $\sigma \leq 0.5$. 
and thus
\begin{align}
    G_w \leq J_n(1 + 2b\sigma \frac{n-w}{n})\,.
\end{align}

Similar to the proof of \autoref{lem:boundSdestinedmassLAYERS}, we can split the initial weight into parts for which the noiseless copy has reached the $S^n$ fixed point, and a part that has not. 
\begin{align}
        \braket{\mathbf{1}, \mathbf{1}}{v_{SS}^{(t)}} 
        &= A_{not} + \sum_{w=1}^n G_w A_w\,,
    \end{align}
where 
\begin{align}
    A_{not} &= \sum_{\vec{\eta},\vec{\mu} \neq S^n}G_{|\vec{\eta}|}\braket{\vec{\mu},\vec{\eta}}{v_{SS}^{(t-1)}} \\
    A_{w} &= \sum_{\vec{\eta}:|\vec{\eta}|=w}\braket{S^n,\vec{\eta}}{v_{SS}^{(t-1)}}\,.
\end{align}
Since $G_{|\vec{\eta}|} \leq 1$ by definition, we may directly apply \autoref{lem:ACconvergencetoSn} and bound $A_{not} \leq \eta_{{t-1}}'/(q^n+1)$. 

When $w<n$, we also have
\begin{align}
    A_w &\leq \sum_{\vec{\eta}:|\vec{\eta}|=w}\bra{\vec{\eta}}\Delta\ket{v_{SS}^{(t-1)}} \leq n\sigma (n-w) q^{-(n-w)} \chi_9 e^{-\chi_8(n-w)} \braket{\mathbf{1},\mathbf{1}}{v^{(t-1)}_{SS}}
\end{align}
by \autoref{lem:expclustering}. This requires $\sigma \leq \chi_7/n$ and $n \geq n_0$, so the upper bound inherits these requirements. Meanwhile by definition $\sum_{w=1}^n A_w \leq \braket{\mathbf{1},\mathbf{1}}{v_{SS}^{(t-1)}}$.  

Thus we have
\begin{align}
    \sum_{w=1}^n G_w A_w  & = G_n\sum_{w=1}^n A_w  + \sum_{w=1}^n(G_w-G_n)A_w\\
    &\leq \braket{\mathbf{1},\mathbf{1}}{v_{SS}^{(t-1)}}\left( G_n + \sum_{w=1}^{n-1} (G_w-G_n) n\sigma (n-w) q^{-(n-w)} \chi_9 e^{-\chi_8(n-w)} \right)\\
    &\leq \braket{\mathbf{1},\mathbf{1}}{v_{SS}^{(t-1)}}J_n\left(1 + \sum_{w=1}^{n-1} 2b\sigma \frac{n-w}{n} n\sigma (n-w) q^{-(n-w)} \chi_9 e^{-\chi_8(n-w)} \right)\\
    &\leq \braket{\mathbf{1},\mathbf{1}}{v_{SS}^{(t-1)}}J_n(1 + f\sigma^2)
\end{align}
for some constant $f$, since $\sum_{a=1}^\infty a^2 e^{-ca}$ is bounded by a constant. 

This gives us the recursion relation
\begin{equation}\label{eq:CGrecursionrelation}
    \braket{\mathbf{1},\mathbf{1}}{v_{SS}^{(t)}} \leq \braket{\mathbf{1},\mathbf{1}}{v_{SS}^{(t-1)}}J_n(1 + f\sigma^2) + \frac{\eta'_{t-1}}{q^n+1}\,.
\end{equation}
However, for the first roughly $s_{AC}$ gates, we will use the naive recursion relation $\braket{\mathbf{1},\mathbf{1}}{v_{SS}^{(t)}} \leq \braket{\mathbf{1},\mathbf{1}}{v_{SS}^{(t-1)}}$. We will begin to use Eq.~\eqref{eq:CGrecursionrelation} once $\eta_{t-1}'$ is small. We have
\begin{align}
    \frac{\eta_{t-1}'}{q^n+1} &\leq \frac{\chi_4}{q^n+1} e^{-\chi_3(t-1-s_{AC})/n}\leq \frac{\chi_4}{q^n+1} J_n e^{-\chi_3(t-1-s_{AC})/n-2\log(1-\sigma)} \\
    &\leq \chi_4 J_n \braket{\mathbf{1},\mathbf{1}}{v_{SS}^{(t-1)}} e^{-\chi_3(t-1-s_{AC})/n-2\log(1-\sigma)-2(t-1)\log(1-\sigma)} \\
    &\leq J_n n\sigma \braket{\mathbf{1},\mathbf{1}}{v_{SS}^{(t-1)}} e^{-\chi_3'(t-s^*)/n}\,,
\end{align}
where in the first line we used the fact that $J_n \geq (1-\sigma)^2$, in the second line we invoked \autoref{lem:decayOfvSS}, and in the third line we have defined
\begin{equation}\label{eq:sstar}
    s^* = s_{AC} + n\log(1/n\sigma)/\chi_3' + f'' + n \log(\chi_4)/\chi_3'
\end{equation}
for an appropriate constant $f''$ and a weaker exponential decay coefficient $\chi_3'$. This requires $-2\log(1-\sigma) < \chi_3/n$, which will hold as long as $\sigma \leq b_3/n$ for a properly chosen constant $b_3$. 
This gives us
\begin{equation}
        \braket{\mathbf{1},\mathbf{1}}{v_{SS}^{(t)}} \leq \braket{\mathbf{1},\mathbf{1}}{v_{SS}^{(t-1)}}J_n(1 + f\sigma^2 + n\sigma e^{-\chi_3'(s-s^*)/n})\,.
\end{equation}
Iterating this equation starting at $t=s^*$, and recalling that $\braket{\mathbf{1},\mathbf{1}}{v_{SS}^{(s^*)}} \leq 1/(q^n+1)$, 
\begin{align}
    \braket{\mathbf{1},\mathbf{1}}{v_{SS}^{(t)}} &\leq \frac{J_n^{t-s^*}}{q^n+1}\prod_{t'=s^*+1}^t\left(1 + f\sigma^2 + n\sigma e^{-\chi_3'(t'-s^*)/n}\right) \\
    &\leq \frac{J_n^{t-s^*}}{q^n+1}\exp\left(\sum_{t'=s^*+1}^t\left( f\sigma^2 + n\sigma e^{-\chi_3'(t'-s^*)/n}\right) \right) \\
    &\leq \frac{J_n^{t-s^*}}{q^n+1}e^{ft\sigma^2 + \chi_3'' n \sigma}
\end{align}
for some choice of $\chi_3'' = O(1)$ (the exponentially decaying sum is bounded). Now, we note that $J_n \geq \exp(-g\sigma')$ for a constant $g$ slightly larger than 2 (when $\sigma$ is beneath some constant), allowing us to say
\begin{equation}
    \braket{\mathbf{1},\mathbf{1}}{v_{SS}^{(t)}}  \leq \frac{J_n^{t}}{q^n+1}e^{ft\sigma^2 + \chi_3'' n \sigma + g\sigma's^*}\,,
\end{equation}
which, recalling the definition of $s^*$ in Eq.~\eqref{eq:sstar}, implies the lemma statement for appropriate choices of $b_0$, $b_1$, and $b_2$. Note that the $O(n\sigma)$ term can be collected with the $O(s_{AC}\sigma)$ term since $s_{AC} \geq \Omega(n \log(n))$. 
\end{proof}

\section{Complexity theory of the white-noise sampling problem}\label{app:complexitytheorywhitenoise}

Recent experiments on superconducting qubit devices \cite{Arute2019GoogleQuantumSupremacy, USTC2021StrongQCompAdv,USTC2021Zhuchongzhi2.1} have claimed that the output distribution $\pnoisy$ sampled by their device would be intractable to sample on a classical computer. This claim is motivated by progress in complexity theory on showing that sampling the outputs of quantum computations is hard, but ultimately these claims must rely on conjecture.

The argument that quantum computations should be hard to simulate classically begins with the observation that an efficient classical algorithm for sampling $\pideal$ exactly with probability 1 over choice of $U$ (i.e.~in the worst case) would lead to a contradiction of the widely believed conjecture that the polynomial hierarchy (\PH) does not collapse \cite{Bremner2011IQPCollapse}. The main problem with this result in practice is that noisy quantum devices cannot sample exactly from $\pideal$. It has been conjectured that the task of \textit{approximately} sampling $\pideal$ with high probability over circuit instance cannot be efficiently classically performed, assuming the \PH{} does not collapse. Here ``approximate'' means that the sampled distribution $\pnoisy$ is close to $\pideal$ in total variation distance. Henceforth we refer to this task as \textit{approximate Random Circuit Sampling (RCS)}.

In the following, when we say a task is \PH-hard, we mean that there is a level of the polynomial hierarchy for which granting access to an oracle that performs the task would imply that that level contains the entire \PH. Thus a polynomial time algorithm for the task would imply that the \PH{} is contained within one of its levels and collapses. 
\begin{conjecture}[Approximate RCS is \PH-hard]\label{con:apprRCS}
    There exists a choice of $\varepsilon =O(1)$ and $\delta \geq 1/\poly(n)$ such that the task of sampling from a distribution $\pnoisy$ for which $\frac{1}{2}\lVert \pnoisy-\pideal \rVert_1 \leq \varepsilon$ for at least a $1-\delta$ fraction of random quantum circuit instances is \PH-hard.
\end{conjecture}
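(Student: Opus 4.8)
Because \autoref{con:apprRCS} is a conjecture rather than a theorem, what follows is a research program rather than a complete argument; the plan is to follow the now-standard template for basing hardness of \emph{approximate} sampling on the non-collapse of the polynomial hierarchy, and to be explicit about the one ingredient that remains open (and which is the reason the statement is phrased as a conjecture and used only to contextualize the reduction in \autoref{thm:hardnessreduction}). The first step is the worst-case \emph{exact} hardness, which is already known: for a worst-case circuit $U$, computing $\pideal(0^n) = |\bra{0^n}U\ket{0^n}|^2$ exactly is $\SharpP$-hard, since an arbitrary polynomial-size computation (e.g.\ a $\SharpP$-complete quantity such as a permanent) can be embedded into this amplitude. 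Combining this with Stockmeyer's approximate-counting theorem, an efficient classical algorithm that samples $\pideal$ exactly in the worst case would place $\SharpP$ inside $\BPP^{\NP}$, collapsing $\PH$ to its third level. This handles the $\varepsilon = 0$ case.

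The second step --- the heart of the matter --- is to upgrade ``exact'' to ``approximate'' through a worst-to-average-case reduction. One writes $\bra{0^n}U\ket{0^n}$ (or a closely related quantity) as a low-degree polynomial in the entries of the gates, argues via the local Haar structure that a random circuit instance evaluates this polynomial at a near-uniformly-random point of a suitable algebraic variety, and then uses polynomial interpolation in the style of \cite{Bouland2019RCSComplexity,Movassagh2019QSandRQC,Kondo2021ImprovedRobustness} to reconstruct the worst-case value from average-case oracle calls. The goal of this step is the conclusion: computing $\pideal(x)$ to additive error $O(2^{-n}/\poly(n))$ for a $1-1/\poly(n)$ fraction of instances remains $\SharpP$-hard. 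Existing reductions achieve this for \emph{exact} average-case evaluation and for a limited notion of small relative error, but the error amplification inherent to interpolation has so far prevented them from reaching the additive-error regime implied by a realistic approximate sampler. Closing this gap is precisely the main obstacle of the whole program.

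Granting such a robust average-case hardness statement, the third step invokes anti-concentration to convert closeness in total variation distance into a per-outcome additive approximation. By \cite{dalzell2020anticoncentration}, the 1D and complete-graph architectures anti-concentrate after $\Theta(n\log n)$ gates, so $\sum_x \pideal(x)^2 = O(2^{-n})$ with high probability over $U$, whence a typical outcome $x$ has $\pideal(x) = \Theta(2^{-n})$. Simulating an $\varepsilon$-approximate classical sampler with an $\NP$ oracle (Stockmeyer again) then yields, for most $x$, an estimate of $\pnoisy(x)$ --- hence of $\pideal(x)$, up to the sampler's total-variation budget --- to additive error $O(\varepsilon\, 2^{-n}/\poly(n))$, after discarding the $\delta$-fraction of bad instances and a small fraction of bad outcomes via Markov's inequality. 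Feeding this estimator into the average-case hardness of step two contradicts the non-collapse of $\PH$, completing the program.

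To summarize the division of labor: step one (worst-case exact hardness and the Stockmeyer collapse) is essentially textbook, and step three (the anti-concentration-plus-Stockmeyer translation) is routine given the anti-concentration bounds already established in \cite{dalzell2020anticoncentration}. The genuine obstacle --- and the reason \autoref{con:apprRCS} is stated as a conjecture --- is step two: a worst-to-average-case reduction for \emph{additively} approximating $\pideal(x)$ that is robust enough to survive the error of a plausible sampler. The partial progress of \cite{Kondo2021ImprovedRobustness} has narrowed this gap but not closed it, and the structural concerns raised in \cite{Napp2019SEBD,Bouland2021NoiseQuantumSupremacy} suggest that genuinely new ideas will be needed.
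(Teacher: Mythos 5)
This statement is left as a conjecture in the paper---there is no proof of it anywhere in the text---so there is nothing to check your argument against, and you are right not to claim one. Your outline matches the paper's own framing almost exactly: the paper cites the worst-case exact-sampling collapse of \cite{Bremner2011IQPCollapse} (your step one), points to the worst-to-average-case reductions of \cite{Bouland2019RCSComplexity,Movassagh2019QSandRQC,Bouland2021NoiseQuantumSupremacy,Kondo2021ImprovedRobustness} as ``weak evidence,'' and identifies the same two gaps you do (strong versus weak simulation, and error tolerance), with \cite{Napp2019SEBD,Bouland2021NoiseQuantumSupremacy} cited for the structural caveats. One quantitative imprecision in your step two is worth fixing: for a sampler with \emph{constant} total-variation error $\varepsilon$, the Stockmeyer argument in step three only yields per-outcome estimates with additive error on the order of $\varepsilon\,q^{-n}$ (for most outcomes, by Markov), not $\varepsilon\, q^{-n}/\poly(n)$; consequently the average-case hardness one must establish is robustness to additive error $\Theta(q^{-n})$, which is strictly stronger than the $q^{-n}/\poly(n)$ target you state (the latter would only support the conjecture for $\varepsilon \leq 1/\poly(n)$). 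This distinction is precisely why the gap between existing interpolation-based reductions (which tolerate only exponentially small error) and what the conjecture requires is so large.
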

This conjecture mirrors similar conjectures for random linear optical networks and random ``instantaneous'' quantum (IQP) circuits in Refs.~\cite{Aaronson2011BosonSampling,Bremner2016AverageCaseIQP}. There is weak evidence for these conjectures in the form of worst-to-average case reductions for \textit{computing} the entries of $\pideal$ with very small error tolerance \cite{Aaronson2011BosonSampling,Bouland2019RCSComplexity,Movassagh2019QSandRQC,Bouland2021NoiseQuantumSupremacy,Kondo2021ImprovedRobustness,Dalzell2020HowManyQubits}, but these results are multiple steps away from proving \autoref{con:apprRCS} because they concern computing probabilities (strong simulation) as opposed to sampling (weak simulation), and furthermore they cannot tolerate errors of size $O(1)$ in total variation distance. 

However, another issue with applying the conjecture in practice is that actual devices are unlikely to be able to sample from a distribution with such small total variation distance from ideal, as doing so requires error rates to be exceedingly small. Sampling from a distribution $\pnoisy$ that is close in total variation distance to $\pwhitenoise$ (for some non-negligible choice of $F$) is potentially much more tractable in the near term; indeed, the experiments from Refs.~\cite{Arute2019GoogleQuantumSupremacy,USTC2021StrongQCompAdv, USTC2021Zhuchongzhi2.1} claim to have performed this task (although note that their random circuits were not Haar random, but rather chosen from some other discrete random ensemble). We refer to this task as \textit{white-noise RCS}. 
\begin{conjecture}[White-noise RCS is \PH-hard]\label{con:whitenoiseRCS}
    There exists a choice of $\varepsilon =O(1)$ and $\delta \geq 1/\poly(n)$ such that whenever the fidelity $F$ satisfies $F \geq 1/\poly(n)$, the task of sampling from a distribution $\pnoisy$ for which $\frac{1}{2}\lVert \pwhitenoise-\pnoisy\rVert_1 \leq \varepsilon F$ for at least a $1-\delta$ fraction of random quantum circuit instances is \PH-hard.
\end{conjecture}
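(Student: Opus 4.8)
The plan is to deduce \autoref{con:whitenoiseRCS} from \autoref{con:apprRCS} by exhibiting the reduction underlying \autoref{thm:hardnessreduction}: an efficient classical solver for white-noise RCS would yield an efficient classical solver for (the milestone that implies hardness of) approximate RCS, at the cost of only a $\poly(n)/F$ overhead in running time. Since the hypothesis includes $F \geq 1/\poly(n)$, this overhead is polynomial, so the contrapositive of \autoref{con:apprRCS} gives \autoref{con:whitenoiseRCS}. The engine is the affine identity $\pwhitenoise = F\pideal + (1-F)\punif$ from Eq.~\eqref{eq:whitenoisedefinition} combined with Stockmeyer's approximate-counting theorem, the observation being that the $\varepsilon F$ error budget of white-noise RCS, once \emph{renormalized} by dividing through by $F$, becomes an $O(\varepsilon)$ error budget for $\pideal$ --- the factor $F$ cancels in the achievable precision and survives only in the runtime.

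Concretely, I would first recall Stockmeyer's result: a classical algorithm running in time $T$ that samples from a distribution $D$ over $[q]^n$ gives a $\BPP^{\NP}$ procedure running in time $\poly(n,T,1/\alpha)$ that on input $x$ outputs $\widetilde{D}(x)$ with $|\widetilde{D}(x)-D(x)| \leq \alpha D(x)$. Apply this to the hypothesized white-noise sampler, whose output $\pnoisy$ obeys $\frac12\lVert \pnoisy - \pwhitenoise\rVert_1 \leq \varepsilon F$. By Markov's inequality, $|\pnoisy(x)-\pwhitenoise(x)| = O(\varepsilon F q^{-n})$ for a $1-o(1)$ fraction of $x$, while $\pwhitenoise(x) = \Theta(q^{-n})$ for a $1-O(F)$ fraction of $x$ (the lower bound $\pwhitenoise(x) \geq (1-F)q^{-n}$ is automatic, and the upper bound follows from $\E_x[\pideal(x)] = q^{-n}$ by Markov; anti-concentration of $\pideal$ after $\Omega(n\log(n))$ gates then enters to guarantee that the additive error $O(\varepsilon)q^{-n}$ derived below is a \emph{small relative} error, i.e.\ that the milestone is non-trivial). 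Setting $\alpha$ to be inverse-polynomial and forming $\widehat{p}(x) := (\widetilde{\pnoisy}(x) - (1-F)q^{-n})/F$, the key cancellation is $|\widehat{p}(x)-\pideal(x)| = F^{-1}|\widetilde{\pnoisy}(x)-\pwhitenoise(x)| = O(\varepsilon q^{-n}) + O(q^{-n}/(F\poly(n)))$; choosing the Stockmeyer precision parameter equal to $1/(F\cdot\poly(n))$ --- still inverse-polynomial because $F \geq 1/\poly(n)$ --- absorbs the second term. After clipping $\widehat{p}$ to $[0,1]$, the white-noise solver therefore produces exactly the average-case estimate of $\pideal(x)$, to additive error $O(\varepsilon)q^{-n}$ on a $1-o(1)$ fraction of $x$ and a $1-O(\delta)$ fraction of circuits, that an approximate-RCS solver with error parameter $\Theta(\varepsilon)$ produces; feeding this through the standard chain packaged by \autoref{con:apprRCS} completes the reduction. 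For optimality of the factor $F$, I would record the trivial reverse direction: post-composing any approximate-$\pideal$ sampler with the map ``output the sample with probability $F$, else output a uniform string'' yields a distribution within total variation distance $\frac12 F\lVert\cdot-\pideal\rVert_1$ of $\pwhitenoise$ at no extra cost, so white-noise RCS is never harder than approximate RCS and the $1/F$ overhead is tight.

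The main obstacle is not the one-line arithmetic of the cancellation but the bookkeeping required to make the reduction genuinely compose at \emph{constant} error $\varepsilon = O(1)$: one must simultaneously track the $1-\delta$ ``good circuit'' fraction, the $1-o(1)$ ``good $x$'' fraction, and the slack lost at each Markov step, and argue that these combine without driving $\varepsilon$ or $\delta$ into the inverse-polynomial regime where the conclusion becomes vacuous. This is precisely the barrier that keeps \autoref{con:apprRCS} itself a conjecture (the known worst-to-average-case reductions for \emph{computing} $\pideal(x)$ tolerate only $q^{-n}/\poly(n)$-size error, not $O(1)\cdot q^{-n}$), so the honest outcome is that \autoref{con:whitenoiseRCS} is \emph{equivalent} to \autoref{con:apprRCS} rather than unconditionally established --- and the reduction above, together with its free converse, is exactly what proves that equivalence.
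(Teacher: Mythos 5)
Your overall strategy is the paper's: the statement is a conjecture and is not proved unconditionally; instead the paper establishes (\autoref{thm:hardnessreduction} and its corollary) that \autoref{con:apprRCS} and \autoref{con:whitenoiseRCS} are equivalent, with the forward direction given by a Stockmeyer-based reduction whose only cost is a factor $F^{-1}\poly(n)$ in runtime, and the converse given by the trivial mix-with-uniform sampler. Your renormalization identity $\overline{p}(x)=(\pnoisy(x)-(1-F)q^{-n})/F$ and the observation that the $\varepsilon F$ budget becomes an $O(\varepsilon)$ budget for $\pideal$ are exactly the paper's starting point, and your converse matches the paper's proof of optimality of the factor $F$.

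However, there is a genuine gap at the final step. \autoref{con:apprRCS} asserts hardness of a \emph{sampling} task, so to invoke it you must show that a white-noise sampler yields (inside \PH, e.g.\ with an \NP{} oracle) an actual sampler whose output distribution is within $O(\varepsilon)$ total variation distance of $\pideal$. Your argument stops at producing Stockmeyer estimates $\widehat{p}(x)\approx\pideal(x)$ with additive error $O(\varepsilon)q^{-n}$ and then appeals to ``the standard chain packaged by \autoref{con:apprRCS}''; but no such chain is packaged there --- indeed the paper stresses that average-case \emph{estimation} of $\pideal(x)$ to $O(1)\cdot q^{-n}$ additive error is not known to be \PH-hard, which is precisely why \autoref{con:apprRCS} remains a conjecture, so estimation power alone cannot trigger it. The missing ingredient is the approximate rejection-sampling step that constitutes the bulk of the paper's proof of \autoref{thm:hardnessreduction}: draw $x$ uniformly, accept with probability $\overline{\pideal}'(x)q^n/(2k)$ whenever $\overline{\pideal}'(x)\leq 2kq^{-n}$, and use anti-concentration ($\EV_U[\sum_x \pideal(x)^2]\leq zq^{-n}$, plus Markov over instances) to bound both the truncated heavy-output mass $\sum_{x:\,\pideal(x)>kq^{-n}}\pideal(x)\leq z'/k$ and the acceptance rate $\Omega(1/k)$, giving a sampler with total variation error $4\varepsilon+1/\poly(n)$ in expected time $F^{-1}\poly(n)$ with an \NP{} oracle. (Note also that this is the role anti-concentration actually plays in the reduction --- controlling the rejection-sampling truncation --- rather than the ``non-triviality of the estimate'' role you assign it.) With that step supplied, your bookkeeping of $\nu$, $\mu$, $\delta$, and the good-instance fractions goes through essentially as in the paper, and the conclusion is, as you say, an equivalence of the two conjectures rather than an unconditional proof.
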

Note that exact worst-case white-noise sampling is \PH-hard (as long as $F$ is at least inverse polynomial). A version of this statement, which further claims that the exact worst-case white-noise task can be at most a factor of $F$ easier for classical computers than the exact worst-case noiseless task, appears in the Supplementary Material of Ref.~\cite{Arute2019GoogleQuantumSupremacy}. However, allowing error of size $\varepsilon F$ was not explicitly considered. Here we show that this is not an issue, and that approximate white-noise RCS and approximate RCS are essentially equivalent in this context, up to a linear factor in $F$, whenever the underlying random quantum circuits have the anti-concentration property.

\begin{theorem}\label{thm:hardnessreduction}
    Consider a random quantum circuit architecture that has the anti-concentration property. That is, there is a constant $z$ such that $\EV_U[\sum_x\pideal(x)^2] \leq z q^{-n}$. Define an oracle $\mathcal{O}$ as follows. On input $(U,b)$, where $U$ is a description of a $n$-qudit circuit with $\poly(n)$ gates drawn randomly from the architecture, and $b$ is a string of $\poly(n)$ uniformly random bits, $\mathcal{O}$ produces an output $x$ from a distribution $\pnoisy$ for which $\frac{1}{2}\lVert \pnoisy - \pwhitenoise \rVert_1 \leq \varepsilon F$ holds for a certain (known) constant $F$ on at least $1-\delta$ fraction of random circuit instances $U$. 
    
    Then, given access to $\mathcal{O}$ and an \NP{} oracle, there is an algorithm with runtime $F^{-1}\poly(n)$ that produces samples from a distribution $p$ for which $\frac{1}{2}\lVert p-\pideal \rVert_1 \leq \varepsilon'$ on at least $1-\delta'$ fraction of circuit instances, with
    \begin{align}
        \varepsilon' &= 4\varepsilon + 1/\poly(n) \\
        \delta' &= \delta + 1/\poly(n)
    \end{align}
\end{theorem}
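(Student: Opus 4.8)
The plan is to run a rejection-sampling procedure that ``peels off'' the uniform component of $\pwhitenoise = F\pideal + (1-F)q^{-n}$, using Stockmeyer approximate counting (relative to the oracle $\mathcal{O}$, via the $\NP$ oracle) to estimate the probabilities $\pnoisy(x)$ that the accept/reject step needs. Fix a circuit instance $U$ on which $\mathcal{O}$ is good, i.e.\ $\tfrac12\|\pnoisy - \pwhitenoise\|_1 \le \varepsilon F$. The algorithm repeats the following loop: (i) draw a fresh string $b$ of uniform bits and set $x = \mathcal{O}(U,b)$, so $x\sim\pnoisy$; (ii) use Stockmeyer approximate counting with the $\NP$ oracle to compute a multiplicative $(1\pm\mu)$-approximation $\hat P(x)$ of $\pnoisy(x) = 2^{-m}\,|\{b' : \mathcal{O}(U,b') = x\}|$, where $m=|b|$ and $\mu$ is a parameter we choose with $\mu \ll F$ (possible since $F$ is known and $F\ge 1/\poly(n)$); (iii) accept $x$ with probability $a(x) := \max\!\bigl(0,\, 1 - (1-F)q^{-n}/\hat P(x)\bigr)$, otherwise discard and repeat. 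On acceptance, output $x$; truncate the loop after $\Theta(F^{-1}\poly(n))$ iterations and output an arbitrary string on failure.

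Next I would analyze the output distribution. Ignoring the Stockmeyer error for a moment (pretend $\hat P(x)=\pnoisy(x)$), the accepted sample has distribution $p(x)\propto \pnoisy(x)a(x)=\max(0,\,\pnoisy(x)-(1-F)q^{-n})$. Set $\hat\pideal(x):=F^{-1}\max(0,\,\pnoisy(x)-(1-F)q^{-n})$. Since $\pwhitenoise(x)-(1-F)q^{-n}=F\pideal(x)\ge 0$ and $t\mapsto\max(0,t)$ is $1$-Lipschitz, $\sum_x|\hat\pideal(x)-\pideal(x)|\le F^{-1}\|\pnoisy-\pwhitenoise\|_1\le 2\varepsilon$; in particular $\sum_x\hat\pideal(x)\in[1-2\varepsilon,1+2\varepsilon]$, so renormalizing $\hat\pideal$ to $p$ adds at most $2\varepsilon$ more in $\ell_1$. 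Thus $\tfrac12\|p-\pideal\|_1\le 2\varepsilon$ in the idealized case. Reinstating the $(1\pm\mu)$ error, each summand of $\sum_x\pnoisy(x)|a(x)-a_{\hat P}(x)|$ is at most $(1-F)q^{-n}\mu/(1-\mu)$, so the sum is $O(\mu)$, and the true output distribution is within $\ell_1$ distance $O(\mu)=1/\poly(n)$ of $p$; with the $1/\poly(n)$ from truncation this gives $\varepsilon'=4\varepsilon+1/\poly(n)$ with room to spare. For the runtime, the per-iteration acceptance probability is exactly $\sum_x\pnoisy(x)a_{\hat P}(x)$, which is within $O(\mu)$ of $Z:=\sum_x\max(0,\,\pnoisy(x)-(1-F)q^{-n})\ge \sum_x\bigl(\pwhitenoise(x)-(1-F)q^{-n}\bigr)-\|\pnoisy-\pwhitenoise\|_1 = F-2\varepsilon F=\Omega(F)$ (using $\varepsilon=O(1)$ small), so the expected number of iterations is $O(1/F)$ and truncation fails with probability $1/\poly(n)$. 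Each iteration costs one oracle call plus one $\NP^{\mathcal{O}}$-aided approximate count, i.e.\ $\poly(n)$; the exponentially small intrinsic error of the Stockmeyer step is negligible over the $\poly(n)$ iterations. The anti-concentration hypothesis enters in ensuring that for typical $U$ the mass of $\pnoisy$ lies on outcomes with $\pnoisy(x)=\Theta(q^{-n})$, so the subtraction of $(1-F)q^{-n}$ in $a(x)$ is not a badly conditioned difference of large numbers; combined with $\mu\ll F$ this keeps the relative error on the ``signal'' $F\pideal(x)$ controlled. A union bound over the $\ge 1-\delta$ good $U$ and the $1/\poly(n)$-fraction on which truncation is likely to fire yields $\delta'=\delta+1/\poly(n)$.

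The step I expect to need the most care is justifying Stockmeyer approximate counting in the oracle-relativized model: one must argue that $|\{b:\mathcal{O}(U,b)=x\}|$ is approximable to a $(1\pm 1/\poly(n))$ factor given an $\NP$ oracle with access to $\mathcal{O}$, which relies on $\mathcal{O}$ being (or being made, by supplying its coins inside $b$) a deterministic function of $(U,b)$, and then tracking how the resulting multiplicative error propagates through the nonlinear accept/reject rule and the renormalization into the final total-variation and runtime bounds. A secondary nuisance is the set of outcomes $x$ with $\pnoisy(x)$ at or below the threshold $(1-F)q^{-n}$, where $a(x)=0$ even though $\pideal(x)$ may be positive; these are harmlessly absorbed into the $2\varepsilon$ term by the same Lipschitz estimate.
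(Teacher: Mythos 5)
Your distributional analysis is essentially sound, and it takes a genuinely different route from the paper: you propose samples from $\pnoisy$ itself (by calling $\mathcal{O}$) and filter them with acceptance probability $\max\bigl(0,\,1-(1-F)q^{-n}/\hat P(x)\bigr)$, whereas the paper proposes $x$ \emph{uniformly at random} and accepts with probability proportional to the Stockmeyer estimate of $\overline{\pideal}(x)=(\pnoisy(x)-(1-F)q^{-n})/F$, truncated at a threshold $2kq^{-n}$ with $k=\poly(n)$. Your Lipschitz argument correctly gives total variation error $2\varepsilon$ plus $1/\poly(n)$ terms (within the theorem's $4\varepsilon$ budget), and as a bonus your route does not actually use anti-concentration at all: your remark about conditioning of the subtraction is not needed, while in the paper anti-concentration is exactly what bounds (via Markov's inequality) the ideal mass above the cap $kq^{-n}$ introduced by the truncation.

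The genuine gap is the runtime. For typical $x$ the quantity $1-(1-F)q^{-n}/\hat P(x)$ is $\Theta(F)$, so a multiplicative error $\mu$ in $\hat P$ produces an $O(\mu/F)$ error in the normalized output distribution; hence, as you yourself require, $\mu$ must be taken $O(\varepsilon F)$ or smaller, and each Stockmeyer call then costs $F^{-1}\poly(n)$ (its cost scales linearly in the inverse relative error), not $\poly(n)$ as asserted in your runtime paragraph. Since your per-iteration acceptance probability is $\Theta(F)$, you need $\Theta(1/F)$ iterations in expectation, so the total expected cost of your scheme is $F^{-2}\poly(n)$. That still suffices for the corollary equating the two conjectures (where $F\geq 1/\poly(n)$), but it does not prove the theorem as stated: the $F^{-1}\poly(n)$ scaling is the point, emphasized in the paper as the optimal linear factor and relevant when $F$ is subpolynomial. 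The paper's design avoids compounding the two $1/F$ factors precisely by using uniform proposals with acceptance probability $\overline{\pideal}'(x)q^{n}/(2k)$ (only when $\overline{\pideal}'(x)\leq 2kq^{-n}$), so the acceptance rate is $\Omega(1/k)=1/\poly(n)$ independent of $F$ and only the Stockmeyer precision carries the $1/F$; the price is the threshold, which is where anti-concentration becomes necessary. (A minor further point: to speak of a single accepted distribution you should either fix the Stockmeyer randomness once and union-bound over all $x$, as the paper does, or average the acceptance function over the estimator's internal randomness; either repair works.)
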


\begin{corollary}\label{cor:equivalentConjectures}
    For a random quantum circuit architecture with the anti-concentration property, \autoref{con:apprRCS} is true if and only if \autoref{con:whitenoiseRCS} is true. 
\end{corollary}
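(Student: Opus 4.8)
The plan is to obtain \autoref{cor:equivalentConjectures} directly from \autoref{thm:hardnessreduction} together with one elementary reduction running in the opposite direction, proving the two implications separately. The anti-concentration hypothesis in the corollary is exactly what is needed to invoke \autoref{thm:hardnessreduction}; the reverse reduction will need no assumptions at all.

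For the direction \autoref{con:whitenoiseRCS} $\Rightarrow$ \autoref{con:apprRCS}, I would use a trivial ``coin-flip'' reduction showing that white-noise RCS is no harder than approximate RCS. Given an oracle that, on input a circuit description $U$ and random bits $b$, outputs $x$ from a distribution $p$ with $\tfrac{1}{2}\lVert p-\pideal\rVert_1\le \varepsilon$ on a $1-\delta$ fraction of instances, I build a sampler that first flips a coin with bias $F$: with probability $F$ it calls the oracle and returns its sample, and with probability $1-F$ it returns a uniformly random string in $[q]^n$. The output distribution is $p'=Fp+(1-F)\punif$, and since $\pwhitenoise=F\pideal+(1-F)\punif$ we get $\tfrac12\lVert p'-\pwhitenoise\rVert_1 = F\cdot\tfrac12\lVert p-\pideal\rVert_1 \le \varepsilon F$ on the same $1-\delta$ fraction of instances. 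This introduces no extra nondeterministic quantifiers and runs in polynomial time, so if $(\varepsilon_\star,\delta_\star)$ are the parameters witnessing \autoref{con:whitenoiseRCS}, then an oracle for approximate RCS with parameters $(\varepsilon_\star,\delta_\star)$ solves white-noise RCS with parameters $(\varepsilon_\star,\delta_\star,F)$ (for any admissible $F$), which is \PH-hard by assumption; hence approximate RCS with $(\varepsilon_\star,\delta_\star)$ is \PH-hard, which is \autoref{con:apprRCS}.

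For the converse \autoref{con:apprRCS} $\Rightarrow$ \autoref{con:whitenoiseRCS}, I would invoke \autoref{thm:hardnessreduction}. Suppose approximate RCS with parameters $(\varepsilon_0,\delta_0)$ is \PH-hard. Choosing $\varepsilon_1 = \varepsilon_0/5$ and $\delta_1 = \delta_0/2$ guarantees that for $n$ large the parameters $4\varepsilon_1 + 1/\poly(n)$ and $\delta_1 + 1/\poly(n)$ produced by \autoref{thm:hardnessreduction} are at most $\varepsilon_0$ and $\delta_0$. Then for any fixed $F\ge 1/\poly(n)$, an oracle $\mathcal{O}$ solving white-noise RCS with parameters $(\varepsilon_1,\delta_1,F)$ yields, via \autoref{thm:hardnessreduction} together with an \NP{} oracle and $F^{-1}\poly(n)=\poly(n)$ additional time, a sampler for approximate RCS with parameters $(\varepsilon_0,\delta_0)$; since that task is \PH-hard, $\mathcal{O}$ together with an \NP{} oracle is \PH-hard, and absorbing the \NP{} oracle raises the relevant level of \PH{} by at most one, so $\mathcal{O}$ alone is \PH-hard. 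As $F$ was arbitrary subject to $F\ge 1/\poly(n)$, this is \autoref{con:whitenoiseRCS} with $\varepsilon=\varepsilon_1$, $\delta=\delta_1$.

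I do not expect a deep obstacle here, since the substantive work is entirely inside \autoref{thm:hardnessreduction}. The points needing care are: (i) getting the direction of the two reductions right so that they compose into a genuine ``if and only if''; (ii) tracking the constant-factor blowup of $\varepsilon$ and the additive $1/\poly(n)$ losses in $\varepsilon$ and $\delta$, so that the weaker white-noise parameters still imply the stronger approximate-RCS task; and (iii) checking that prepending an \NP{} oracle and paying $F^{-1}\poly(n)$ time overhead does not change \PH-hardness, which holds precisely because $F$ is at least inverse polynomial.
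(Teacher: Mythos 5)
Your proposal is correct and follows essentially the same route as the paper: the trivial mixing-with-uniform reduction shows white-noise RCS is no harder than approximate RCS, and \autoref{thm:hardnessreduction} plus the observation that an \NP{} oracle can be absorbed into the polynomial hierarchy gives the converse. Your explicit parameter bookkeeping ($\varepsilon_1=\varepsilon_0/5$, $\delta_1=\delta_0/2$ to absorb the factor-of-$4$ and $1/\poly(n)$ losses) is just a more detailed version of the choice the paper makes implicitly.
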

\begin{proof}[Proof of \autoref{cor:equivalentConjectures}]
 It is straightforward to show that \autoref{con:whitenoiseRCS} implies \autoref{con:apprRCS} simply by reduction from the white-noise RCS task to the approximate RCS task: suppose one could efficiently classically produce samples from a distribution $\pnoisy$ for which $\frac{1}{2}\lVert \pnoisy-\pideal \rVert_1 \leq \varepsilon$. Then, for any choice of $F$, one can design another algorithm that samples from a distribution $\pnoisy'$ by producing a uniformly random output with probability $1-F$ and an output drawn from $\pnoisy$ with probability $F$. Then we have $\frac{1}{2}\lVert \pnoisy'-\pwhitenoise \rVert_1 \leq \varepsilon F$. Thus, whenever approximate RCS can be performed efficiently, white-noise RCS can also be performed efficiently with the same $(\varepsilon, \delta)$ parameters, and if the latter is \PH-hard then the former is also \PH-hard. 
 
 The fact that \autoref{con:apprRCS} implies \autoref{con:whitenoiseRCS} is a direct implication of \autoref{thm:hardnessreduction}.  Given a target $(\varepsilon',\delta')$ pair for which approximate RCS is hard, we can choose $\varepsilon = O(1)$ and $\delta \geq 1/\poly(n)$ such that if a white-noise sampler exists with those parameters, there is also an approximate sampler with parameters $(\varepsilon',\delta')$ that runs in $\poly(n)$ time and requires access to an \NP{} oracle. However, since \NP{} lies within the \PH, this would still imply a collapse of the \PH{} to one of its levels.
\end{proof}

The part of the proof of \autoref{cor:equivalentConjectures} that shows \autoref{con:whitenoiseRCS} implies \autoref{con:apprRCS} also illustrates why a linear factor of $F$ is optimal. To simulate a white-noise output, one need only produce an output from $\pideal$ an $F$ fraction of the time, so producing $T$ samples requires only $FT$ queries to a sampler for $\pideal$. If sampling from $\pideal$ is a hard classical task, sampling from $\pwhitenoise$ is thus at least a factor of $F$ easier. \autoref{thm:hardnessreduction} shows that, in a sense, it is also \textit{at most} a factor of $F$ easier. 

This observation essentially puts the low-fidelity and high-fidelity noise regimes on the same theoretical footing when it comes to hardness of sampling, as long as the fidelity is at least inverse polynomial in $n$. One might object that $F \geq 1/\poly(n)$ is unrealistic in an asymptotic sense, and in many cases, this may be true. However, one way to achieve $F \geq 1/\poly(n)$ is to run circuits with Pauli error rate $\epsilon = \Theta(1/n)$ and circuit size $s = \Theta(n\log(n))$, which, conveniently, is precisely the size required to achieve the anti-concentration property, as shown in Ref.~\cite{dalzell2020anticoncentration}. Moreover, when the fidelity is inverse exponential in $n$ (but larger than $2^{-n}$), there is still a sense in which the low-fidelity regime can be at most a factor of $F$ easier for a classical computer than the high-noise regime.

\begin{proof}[Proof of \autoref{thm:hardnessreduction}]
    
    The idea behind our reduction is to combine approximate rejection sampling with the ability to efficiently estimate $\pnoisy(x)$ up to  $1/\text{poly}(n)$ relative error for any fixed instance $U$ using an $\NP$ oracle (Stockmeyer's approximate counting algorithm \cite{Stockmeyer1983ApproximateCounting}). 
    To be precise, for any $\nu$, any $\mu$, and any $x$, there is a randomized algorithm (with access to $\NP$ oracle) that produces a number, denoted $p'$ such that with probability at least $1-\mu$, 
    \begin{equation}\label{eq:apprestimate}
        |\pnoisy(x) - p'| \leq 2\nu \pnoisy(x), 
    \end{equation}
    and the algorithm runs in time $\nu^{-1} *\poly(n, \log(1/\mu))$. For the linear dependence on $\nu^{-1}$, see the Supplementary Material of Ref.~\cite{Arute2019GoogleQuantumSupremacy} or the lecture notes in Ref.~\cite{Trevisan2002ComputationalComplexityLectureNotes}.  For a fixed $\nu$ and $\mu$, we may take $\mu' = q^{-n}\mu$ and note that $\log(1/\mu') = \poly(n) + \log(1/\mu)$. Now fix a set of random bits $\omega$ to feed into the randomized algorithm above. If we feed the same bits $\omega$ for every choice of $x$ with parameters $\nu$ and $\mu'$, then we have a fixed set of outputs $\pnoisy'(x)$ for each possible $x$, and by the union bound, these values satisfy
    \begin{equation}
        |\pnoisy(x) - \pnoisy'(x)| \leq 2\nu \pnoisy(x)
    \end{equation}
    for every $x$ simultaneously with probability at least $1-\mu$ over the choice of $\omega$. 
    On any particular $x$, the algorithm still runs in time $\nu^{-1}\poly(n, \log(1/\mu))$.
    When this is the case,
    \begin{equation}
        \frac{1}{2}\lVert \pnoisy(x) - \pnoisy'(x)\rVert_1 \leq \nu \,.
    \end{equation}
    Also, let
    \begin{equation}
        \overline{\pideal}(x) = 
         \frac{\pnoisy(x)-(1-F)q^{-n}}{F}
    \end{equation}
    and
    \begin{equation}
        \overline{\pideal}'(x) = \begin{cases}
         \frac{\pnoisy'(x)-(1-F)q^{-n}}{F} & \text{if } \pnoisy'(x) > (1-F)q^{-n}\\
         0 & \text{otherwise}
        \end{cases} \,,
    \end{equation}
    so that, as long as the instance $U$ is among the $1-\delta$ fraction for which $\mathcal{O}$ succeeds, the following hold:
    \begin{align}
        \frac{1}{2}\lVert \overline{\pideal} - \pideal \rVert_1 &\leq \varepsilon \\
        \frac{1}{2}\lVert \overline{\pideal} - \overline{\pideal}' \rVert_1 &\leq \nu/F \label{eq:distpidealpidealprime} \,,
    \end{align}
    and by the triangle inequality
    \begin{equation}
        \frac{1}{2}\lVert \pideal - \overline{\pideal}' \rVert_1 \leq \nu/F + \varepsilon \,.
    \end{equation}
    Note that in general the function $\overline{\pideal}'$ as defined does not describe a probability distribution since it is not necessarily normalized. 
    
    Now let $k > 1$ and consider the following approximate rejection sampling algorithm, similar to that in the Supplementary Information of Ref.~\cite{Neville2017ClassicalBosonSampling}. 
    \begin{enumerate}
        \item Choose a set of random bits $\omega$, which implicitly determines a function $\pnoisy'$.
        \item Choose an $x$ uniformly at random, and use the estimation algorithm with bits $\omega$ to produce $\pnoisy'(x)$, from which $\overline{\pideal}'(x)$ can be determined.
        \item Generate a random real number $0 \leq \eta \leq 1$
        \item If $ \overline{\pideal}'(x) \leq 2k q^{-n}$ and if $\eta \leq \overline{\pideal}'(x)q^n/(2k)$, output $x$ (accept); otherwise, return to step 2 (reject). 
    \end{enumerate}. 
    
    Following the observations in Ref.~\cite{Neville2017ClassicalBosonSampling}, we first analyze the output distribution, denoted by $p_\omega$, of the above algorithm for a certain choice of $\omega$ in step 1. We see that  $p_\omega$ is precisely the distribution $\overline{\pideal}'$ conditioned on $x \in W$ where $W$ is the set of $x$ for which $ \overline{\pideal}'(x) \leq 2k q^{-n}$. Define \begin{align}
        \mathcal{M} &= \sum_x \overline{\pideal}'(x) \\
        \mathcal{N} &= \sum_{x\in W} \overline{\pideal}'(x) \,.
    \end{align}
    Then,
    \begin{equation}
        p_\omega(x) =
        \begin{cases} 
        \mathcal{N}^{-1}\overline{\pideal}'(x) & \text{if } x \in W \\
        0 & \text{otherwise}
        \end{cases} \,.
    \end{equation}
    Hence,
    \begin{align}
        \frac{1}{2} \lVert p_\omega - \overline{\pideal}'(x) \rVert_1 &= \frac{1}{2}\sum_{x \in W} |\mathcal{N}^{-1}\overline{\pideal}'(x)-\overline{\pideal}'(x)| + \frac{1}{2}\sum_{x\not\in W} \overline{\pideal}'(x)\\
        &= \frac{1}{2}|1-\mathcal{N}| + \frac{1}{2}(\mathcal{M}-\mathcal{N})\\\
        &\leq \frac{1}{2}|1-\mathcal{M}| + (\mathcal{M}-\mathcal{N}) \,.
    \end{align}
    Note that $|1-\mathcal{M}| \leq 2\nu/F$ is an implication of Eq.~\eqref{eq:distpidealpidealprime}. Also note that the values of $\overline{\pideal}$ sum to 1 (although some can in principle be negative). To handle the quantity $\mathcal{M}-\mathcal{N} = \sum_{x \not \in W} \overline{\pideal}'(x)$, we invoke \autoref{lem:TVDthreshT}, with $p_1 = \overline{\pideal}'$, $p_2 = \pideal$ and $T = 2kq^{-n}$. It shows that 
    \begin{equation}
        \mathcal{M}-\mathcal{N} \leq 4\varepsilon + 4\nu/F + \sum_{x: \pideal(x) > kq^{-n}} \pideal(x) \,,
    \end{equation} 
    and thus
    \begin{equation}
        \frac{1}{2} \lVert p_\omega - \overline{\pideal}'(x) \rVert_1 \leq 5\nu/F + 4\varepsilon + \sum_{x: \pideal(x) > kq^{-n}} \pideal(x) \,.
    \end{equation}
    This is progress because the right-hand side only has dependence on the ideal distribution $\pideal$, and not the approximate distribution output by the estimator. 
    
    Now, recall that we assume that $\EV_U[\sum_x \pideal(x)^2] \leq z q^{-n}$. By Markov's inequality, for any $z'$, $\sum_x \pideal(x)^2 \leq z' q^{-n}$ for at least $1-z/z'$ fraction of instances $U$. Suppose we have such an instance. Then 
    \begin{align}
        \sum_{x: \pideal(x) > kq^{-n}} \pideal(x) &= \sum_{x: \pideal(x) > kq^{-n}} \frac{\pideal(x)^2}{\pideal(x)} \leq  \sum_{x: \pideal(x) > kq^{-n}} \frac{\pideal(x)^2}{kq^{-n}} \leq z'/k \,.
    \end{align}
    
    We conclude that the algorithm produces outputs from a distribution $p_\omega$ for which
    \begin{equation}
        \frac{1}{2}\lVert p_\omega - \pideal \rVert_1 \leq 5\nu/F + 4\varepsilon + z'/k
    \end{equation}
    (with probability at least $1-\mu$ over its internal randomness) and succeeds on at least $1-\delta'$ fraction of circuit instances, where 
    \begin{equation}
        \delta' = \delta+ z/z' \,.
    \end{equation}
    The $\delta'$ fraction of failed instances arise either because the underlying white-noise sampler also fails on those instances or because the output distribution is not sufficiently anti-concentrated. Either way, whether an instance is among this $\delta'$ fraction is independent of the  choice of $\omega$. Thus, we may note that in the $\mu$ chance that the total variation distance bound is not satisfied for the random choice of $\omega$, it will be at most its maximal value of 1, and thus, for any of the $1-\delta'$ successful instances, the overall total variation distance of the sampler is at most $\varepsilon'$, where
    \begin{align}
        \varepsilon' &= 5\nu/F + 4\varepsilon + z'/k + \mu \,.
    \end{align}
    
    Now, we analyze the algorithm's runtime. Each random choice of $x$ and subsequent calculation of $\overline{\pideal}'(x)$ takes at most $\nu^{-1}\poly(n, \log(1/\mu))$ time, but sometimes this step must be repeated.  Each time the algorithm returns to step 2, it will end up accepting on step 4 with probability $\mathcal{N}/2k$. By the above analysis, 
    \begin{align}
        |\mathcal{N}-1| \leq |\mathcal{M}-1| + (\mathcal{M}-\mathcal{N}) \leq 4\varepsilon + 6\nu/F + z'/k \,.
    \end{align}
    Thus, as long $4\varepsilon + 6\nu/F + z'/k \leq 1/2$, then the acceptance probability will be at least $1/4k$, and the expected number of repetitions required to produce an output is at most $4k$. 
    
    Recall that $z = O(1)$. Then we may choose $z' = \poly(n)$ sufficiently large, $k = \poly(n)$ even larger, $\nu^{-1} = F^{-1}*\poly(n)$ sufficiently large, and $\mu^{-1} = \poly(n)$ sufficiently large that the algorithm runs in expected\footnote{To make the runtime bounded, we could impose a cap on the number of times the algorithm returns to step 2 of $4k\cdot \text{polylog}(n)$ which, if hit, results in a uniformly random output. This would increase the total variation distance $\varepsilon'$ by only $1/\poly(n)$ and can thus be ignored.} time $F^{-1}\poly(n)$ and solves the approximate RCS task with parameters $\varepsilon' = 4\varepsilon + 1/\poly(n)$ and $\delta' = \delta + 1/\poly(n)$. It is likely the factor of 4 could be optimized.  
\end{proof}

\begin{lemma}\label{lem:TVDthreshT}
    Suppose $p_1$ and $p_2$ are two real functions on $[q]^n$ for which 
    \begin{equation}
        \frac{1}{2}\lVert p_1-p_2 \rVert_1 \leq \varepsilon \,.
    \end{equation}
    Let $\mathbf{1}(\cdot)$ be the indicator function. Then for any threshold $T>0$, we have
    \begin{align}
        \sum_x p_1(x) \mathbf{1}(p_1(x) > T) \leq 4\varepsilon + \sum_x p_2(x) \mathbf{1}(p_2(x) > T/2) \,.
    \end{align}
\end{lemma}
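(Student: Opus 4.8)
\textbf{Proof plan for Lemma~\ref{lem:TVDthreshT}.}
The plan is to split each side into contributions from regions defined by the thresholds $T$ and $T/2$, and to exploit the fact that $p_1$ and $p_2$ can only disagree by a total of $2\varepsilon$ in $\ell_1$. First I would partition the set $\{x : p_1(x) > T\}$ into two parts according to whether $p_2(x) > T/2$ or $p_2(x) \le T/2$. On the first part, $p_1(x)$ is within $|p_1(x)-p_2(x)|$ of $p_2(x)$, so $\sum_{x: p_1(x)>T,\, p_2(x)>T/2} p_1(x) \le \sum_{x: p_2(x) > T/2} p_2(x) + \sum_x |p_1(x)-p_2(x)| \le \sum_x p_2(x)\mathbf{1}(p_2(x)>T/2) + 2\varepsilon$, which already accounts for part of the bound.

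The more delicate part is the region $B = \{x : p_1(x) > T \text{ and } p_2(x) \le T/2\}$. Here I want to show $\sum_{x \in B} p_1(x) \le 2\varepsilon$ (or at worst $4\varepsilon$ minus the $2\varepsilon$ already used, so that the two pieces combine to $4\varepsilon$). On $B$ we have $p_1(x) - p_2(x) > T - T/2 = T/2 > 0$, and moreover $p_1(x) \le 2(p_1(x) - p_2(x))$ on this region since $p_2(x) \le T/2 < p_1(x)/2 \cdot (T/(p_1(x))) $... more cleanly: $p_2(x) \le T/2 \le p_1(x)/2$ is \emph{not} automatic, so instead I would argue $p_1(x) - p_2(x) \ge p_1(x) - T/2 \ge p_1(x) - p_1(x)/2 = p_1(x)/2$ using $T < p_1(x)$, hence $p_1(x) \le 2(p_1(x) - p_2(x)) \le 2|p_1(x)-p_2(x)|$. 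Summing over $x \in B$ gives $\sum_{x\in B} p_1(x) \le 2 \sum_{x \in B} |p_1(x)-p_2(x)| \le 2\lVert p_1-p_2\rVert_1 = 4\varepsilon$. Combining the two regions yields $\sum_x p_1(x)\mathbf{1}(p_1(x)>T) \le \sum_x p_2(x)\mathbf{1}(p_2(x)>T/2) + 2\varepsilon + 4\varepsilon$; tightening the bookkeeping (the $2\varepsilon$ from the first region and the $4\varepsilon$ from $B$ both draw on the same total $\ell_1$ budget on disjoint regions, so one can bound their sum by $4\varepsilon$ directly by treating the whole region $\{p_1 > T\}$ at once) gives exactly the claimed $4\varepsilon$.

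Concretely, the cleanest single-shot argument I would write: on $\{x: p_1(x) > T\}$, write $p_1(x) = p_2(x)\mathbf{1}(p_2(x) > T/2) + \big(p_1(x) - p_2(x)\mathbf{1}(p_2(x)>T/2)\big)$. The correction term is at most $|p_1(x) - p_2(x)|$ when $p_2(x) > T/2$, and equals $p_1(x) \le 2|p_1(x)-p_2(x)|$ when $p_2(x) \le T/2$ (by the estimate above); in either case it is bounded by $2|p_1(x)-p_2(x)|$. Summing over all $x$ with $p_1(x) > T$ and then extending the sum of $2|p_1(x)-p_2(x)|$ to all $x$ gives $\sum_{x: p_1(x)>T} p_1(x) \le \sum_{x: p_2(x)>T/2} p_2(x) + 2\lVert p_1 - p_2\rVert_1 = \sum_x p_2(x)\mathbf{1}(p_2(x)>T/2) + 4\varepsilon$, which is the statement. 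The main (very minor) obstacle is just getting the constant right in the case $p_2(x) \le T/2$: one must use $p_1(x) > T$ rather than any assumption on $p_1$ being a probability distribution, since these are arbitrary real functions, and verify $p_1(x) \le 2(p_1(x) - p_2(x))$ carefully when $p_2(x)$ could even be negative (in which case $p_1(x) - p_2(x) \ge p_1(x) \ge p_1(x)/2$ trivially, so the bound only improves).
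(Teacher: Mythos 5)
Your proposal is correct and follows essentially the same route as the paper's proof: both split according to the thresholds $p_1(x) > T$ and $p_2(x) > T/2$ and charge the crossing region $\{p_1 > T,\ p_2 \le T/2\}$ to the $\ell_1$ budget, using that there $|p_1(x)-p_2(x)| \ge T/2 \ge p_1(x)/2$. The only cosmetic difference is bookkeeping: the paper bounds that region by counting its size ($|A_1\cap\overline{A}_3| \le 4\varepsilon/T$, each point costing $T/2$), while you bound it pointwise via $p_1(x) \le 2|p_1(x)-p_2(x)|$; both yield the same constant $4\varepsilon$.
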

\begin{proof}
Let $A_1$ be the subset of $[q]^n$ for which $p_1(x) > T$, $A_2$ be the subset for which $p_2(x) > T$, and $A_3$ be the subset for which $p_2(x) > T/2$. For a subset $X$ let $\overline{X}$ denote its complement.
    \begin{align}
        \sum_x p_1(x) \mathbf{1}(p_1(x) > T)={}& \sum_{x \in A_1} p_1(x)
         ={} \sum_{x \in A_1} (p_1(x)-p_2(x))  + \sum_{x \in A_1} p_2(x)\\
         \leq{}& 2\varepsilon + \sum_{x \in A_1} p_2(x)\\ 
         ={}& 2\varepsilon +  \sum_{x \in A_1 \cap \overline{A}_3} p_2(x)+\sum_{x \in A_1 \cap A_3} p_2(x) \\
         \leq{}& 2\varepsilon+ \sum_{x \in A_1 \cap \overline{A}_3} p_2(x)+ \sum_{x \in A_3} p_2(x) \\
         \leq{}& 2\varepsilon + (T/2)|A_1 \cap \overline{A}_3|+ \sum_{x \in A_3} p_2(x) \\
         \leq{}& 2\varepsilon + (T/2)\frac{2\varepsilon}{T/2}+ \sum_{x \in A_3} p_2(x) \label{eq:1234}\\
         ={}& 4\varepsilon + \sum_x p_2(x) \mathbf{1}(p_2(x) > T/2)\,,
    \end{align}
    where the second-to-last line follows because any element of $A_1 \cap \overline{A}_3$ must contribute at least $T/2$ toward the $2\varepsilon$ total allowed deviation between the two functions. 
\end{proof}

\bibliographystyle{utphys}
\bibliography{references}

\end{document}